\def\draft{0}
\newcommand*{\myfont}{\fontfamily{bch}\selectfont}
\DeclareTextFontCommand{\textmyfont}{\myfont}
\newtheorem{theorem}{Theorem}[subsection]
\newtheorem{corollary}[theorem]{Corollary}
\newtheorem{lemma}[theorem]{Lemma}
\newtheorem{observation}[theorem]{Observation}
\newtheorem{definition}[theorem]{Definition}
\newtheorem{claim}[theorem]{Claim}
\newtheorem{fact}[theorem]{Fact}
\newtheorem{remark}[theorem]{Remark}
\newtheorem{question}{Question}
\newcommand{\prob}[2]{\mathop{\mathrm{Pr}}_{#1}[#2]}
\newcommand{\poly}{\mathop{\mathrm{poly}}}
\newcommand{\F}{\mathbb{F}}
\newcommand{\R}{\mathbb{R}}
\newcommand{\mc}[1]{\mathcal{#1}}
\newcommand{\Boo}{\{0,1 \}}
\newcommand{\bigO}{\mathcal{O}}
\newcommand{\paren}[1]{\left( #1 \right)}
\newcommand{\brac}[1]{\left[ #1 \right]}
\newcommand{\set}[1]{\left\{ #1 \right\}}
\newcommand{\setcond}[2]{\left\{ #1 \;\middle\vert\; #2 \right\}}
\newcommand{\spars}{\textnormal{spars}}
\newcommand{\innerprod}[1]{\left\langle#1\right\rangle}
\DeclareMathOperator*{\E}{\mathbb{E}}
\newcommand{\cF}{\mathcal{F}}
\newcommand{\cP}{\mathcal{P}}
\newcommand{\Z}{\mathbb{Z}}
\definecolor{thmcolor}{RGB}{235, 235, 235}
\definecolor{citecolor}{RGB}{1, 210, 56}
\newtcolorbox{algobox}{colback=lightgray!5!white,colframe=lightgray!75!black}
\newtcolorbox{thmbox}{colback=thmcolor!5!white,colframe=black!75!black}
\newcommand{\anote}[1]{{\color{brown} [Amik: #1]}}
\newcommand{\mnote}[1]{{\color{red} [Madhu: #1]}}
\newcommand{\mpnote}[1]{{\color{pink} [Manaswi: #1]}}
\newcommand{\pnote}[1]{{\color{blue} [Prashanth: #1]}}
\newcommand{\snote}[1]{{\color{green} [Srikanth: #1]}}
\newcommand{\anote}[1]{}
\newcommand{\mnote}[1]{}
\newcommand{\mpnote}[1]{}
\newcommand{\pnote}[1]{}
\newcommand{\snote}[1]{}
\def\anon{0}
\date{\today}
\begin{document} 
	\title{Low Degree Local Correction Over the Boolean Cube}

    \if\anon1{}\else{    
    \author{Prashanth Amireddy\thanks{School of Engineering and Applied Sciences, Harvard University, Cambridge, Massachusetts, USA. Supported in part by a Simons Investigator Award and NSF Award CCF 2152413 to Madhu Sudan and a Simons Investigator Award to Salil Vadhan. Email: \texttt{pamireddy@g.harvard.edu}} \and
    Amik Raj Behera\thanks{Department of Computer Science, University of Copenhagen, Denmark. Supported by Srikanth Srinivasan's start-up grant from the University of Copenhagen. The work was begun when the author was a student at the Department of Computer Science, Aarhus University, Denmark and was supported by Srikanth Srinivasan's start-up grant from the Aarhus University. Email: \texttt{ambe@di.ku.dk} } \and
    Manaswi Paraashar\thanks{Department of Mathematical Sciences, University of Copenhagen, Denmark. Supported by the European Union under the Grant Agreement No 101078107, QInteract. Email: \texttt{manaswi.isi@gmail.com} } \and
     Srikanth Srinivasan \thanks{Department of Computer Science, University of Copenhagen, Denmark. Also partially employed by Aarhus University, Denmark. This work was funded by the European Research Council (ERC) under grant agreement no. 101125652 (ALBA).
     Email: \texttt{srsr@di.ku.dk} } \and 
     Madhu Sudan\thanks{School of Engineering and Applied Sciences, Harvard University, Cambridge, Massachusetts, USA. Supported in part by a Simons Investigator Award and NSF Award CCF 2152413. Email: \texttt{madhu@cs.harvard.edu}}}
     }\fi

	\maketitle
        \pagenumbering{arabic}
	% \input{old}

% Talking points: LCCs over reals; Extending GL; group valued linear functions ; 
\begin{abstract}
In this work, we show that the class of multivariate degree-$d$ polynomials mapping $\{0,1\}^{n}$ to any Abelian group $G$ is locally correctable with $\widetilde{\mathcal{O}}_{d}((\log n)^{d})$ queries for up to a fraction of errors approaching half the minimum distance of the underlying code. In particular, this result holds even for polynomials over the reals or the rationals, special cases that were previously not known. Further, we show that they are locally list correctable up to a fraction of errors approaching the minimum distance of the code. These results build on and extend the prior work of \if\anon1{Amireddy, Behera, Paraashar, Srinivasan, and Sudan \cite{ABPSS24-ECCC} (STOC 2024)~}\else{the authors~\cite{ABPSS24-ECCC}~(STOC 2024)~}\fi who considered the case of linear polynomials ($d=1$) and gave analogous results.

Low-degree polynomials over the Boolean cube $\{0,1\}^{n}$ arise naturally in Boolean circuit complexity and learning theory, and our work furthers the study of their coding-theoretic properties. Extending the results of \cite{ABPSS24-ECCC} from linear polynomials to higher-degree polynomials involves several new challenges and handling them gives us further insights into properties of low-degree polynomials over the Boolean cube. For local correction, we construct a set of points in the Boolean cube that lie between two exponentially close parallel hyperplanes and is moreover an interpolating set for degree-$d$ polynomials. To show that the class of degree-$d$ polynomials is list decodable up to the minimum distance, we stitch together results on anti-concentration of low-degree polynomials, the Sunflower lemma, and the Footprint bound for counting common zeroes of polynomials. Analyzing the local list corrector of \cite{ABPSS24-ECCC} for higher degree polynomials involves understanding random restrictions of non-zero degree-$d$ polynomials on a Hamming slice. In particular, we show that a simple random restriction process for reducing the dimension of the Boolean cube is a suitably good sampler for Hamming slices. Thus our exploration unearths several new techniques that are useful in understanding the combinatorial structure of low-degree polynomials over $\{0,1\}^{n}$.
\end{abstract} 
% \mnote{What challenges? From Srikanth: "1. finding a subset of the Boolean cube that has both interesting algebraic structure (being an "interpolating set") and geometric structure (lies between two hyperplanes that are exponentially close): here, we devise a novel construction of interpolating sets based on ideas inspired by intuition from the study of linear threshold functions.
% 2. bounding the list size for degree $d$ polynomials close to the minimum distance $1/2^d$: here, we deviate from previously used analytic techniques and use results on anti-concentration of low-degree polynomials, the sunflower lemma and counting leading monomials. The latter leads also to a new (as far as we know) tail bound for counting events defined by the vanishing of low-degree polynomials.
% 3. understanding random restrictions of non-zero polynomials on a given slice of the Boolean cube: here, we show that a simple restriction process for defining random subcubes of a Boolean cube is a suitably good sampler for Hamming slices. This allows us to show that a polynomial that is non-zero on a Hamming slice is non-zero whp on the restricted subcube." Our main technical contributions are to show: What to highlight here?} In short, we diverge from known techniques and exploit the combinatorial structure of low-degree polynomials over $\Boo^{n}$.
        \newpage 
        
\tableofcontents

        \newpage
        
\section{Introduction}\label{sec:intro}
% \begin{itemize}
%     \item The unique decoding regime.
%     \item The definitions of local decoding algorithms. 
%     \item The statement of the unique-decoding theorem.
%     \item Extending to set-multilinear polynomials and degree-$2$.
%     \item The list decoding regime.
%     \item The definition of local list decoding algorithms.
%     \item The statement of the list-decoding theorem.
% \end{itemize}

In this paper, we consider the local correction of low-degree polynomial functions over groups evaluated over $\{0,1\}^n$ and give polylogarithmic query local correctors for every constant degree. This extends and generalizes previous work of \if\anon1{Amireddy, Behera, Paraashar, Srinivasan and Sudan~}\else{the authors}\fi~\cite{ABPSS24-ECCC} who considered and solved the analogous problem for the linear (i.e., $d=1$) case. We define some of the basic terms and review the previous work before describing the challenges in strengthening to higher degrees and the new tools used to overcome them.

\paragraph{Low degree polynomials over groups.} The main objects of interest in this paper are polynomial functions mapping $\{0,1\}^n$ to an Abelian group $G$. Here a function $f$ is a polynomial of degree at most $d$ if it can be expressed as $\sum_{S \subseteq [n]: |S| \leq d} c_S \prod_{i\in S} x_i$, where the product is over the integers and the coefficients $c_S$ come from the Abelian group $G$. We denote the space of polynomial functions of degree at most $d$ by $\mc{P}_{d}(\Boo^{n}, G)$ (which we compress to $\mc{P}_d$ when $G$ and $n$ are known). The standard proof of the Ore-DeMillo-Lipton-Schwartz-Zippel lemma naturally extends to polynomials over groups. It proves that two different degree $d$ polynomials disagree on at least $\delta_d := 2^{-d}$ fraction of the domain (if $d<n$), and thus form natural classes of error-correcting codes. This paper explores the corresponding correction questions focusing on locality.\newline
A special case that is already of interest is when the group $G$ is the group of real numbers (or rationals) - a setting where relatively few codes are shown to exhibit local correction properties.

\paragraph{Local correction of polynomials.} Informally, the local correction problem is that of computing, given oracle access to a function $f:\{0,1\}^n \to G$ and a point $\mathbf{a} \in \{0,1\}^n$, the value $P(\mathbf{a})$ of the nearest degree $d$ polynomial $P$ to the function $f$ at the point $\mathbf{a}$, while making few oracle queries to $f$. More formally, for functions $f,g:\{0,1\}^n\to G$, let $\delta(f,g)$ denote the fraction of points from the domain where they differ. We say $f$ is $\varepsilon$-close to $g$ if $\delta(f,g)\leq \varepsilon$ and $\varepsilon$-far otherwise. For a given $G$, we say that that $\mc{P}_{d}$ is $(\delta,q)$-locally correctable if for every $n$ there is a probabilistic algorithm that, for every function $f:\{0,1\}^n\to G$ that is $\delta=\delta(n)$-close to some polynomial $P \in \mc{P}_{d}(\Boo^{n}, G)$ and for every $\mathbf{a} \in \{0,1\}^n$, outputs $P(\mathbf{a})$ with probability at least $3/4$ while making at most $q=q(n)$ queries to $f$.\newline 
One of the main quests of this work is to give non-trivial upper bounds on the query complexity $q$ for which $\mc{P}_d$ is $(\Omega_d(1),q)$-locally correctable. 

\paragraph{List correction of codes.} 
Note that $(\delta,q)$-locally correctability of $\mc{P}_d$ requires that $\delta$ is less than half the minimum distance of the space, i.e., $\delta <\delta_d/2$. To go beyond one usually resorts to the notion of list-decoding; and in the local setting, to notions like ``local list-decoding'' and ``local list correction''. Roughly list-decoding allows the decoder to output a small list of words with the guarantee that all codewords within a given distance are included in the output. Formally we say $\mc{P}_d$ is (combinatorially) $(\delta,L)$-list correctable if for every $f:\{0,1\}^n\to G$ there are at most $L$ degree-$d$ polynomials $P$ satisfying $\delta(f,P)\leq \delta$.\footnote{The algorithmic version would require the list of nearby polynomials to be algorithmically recoverable from $f$. We don't consider this notion in this work but move ahead to the harder ``local list correction'' problem.} Unlike the unique decoding problem where the maximum $\delta$ such that a code is uniquely correctable up to $\delta$ errors is well understood, the list-decoding radius for higher values of $L$ is not well-understood. A natural question that we study here (for the first time in this generality) is: \textit{What is the largest $\delta$ such that $\mc{P}_d$ is $(\delta, \bigO_d(1))$-list correctable?} We refer to this largest value of $\delta$ as the list-decoding radius of $\mc{P}_d$. 

\paragraph{Local list correction of codes.} Local list correction is the notion of list decoding combined with the notion of local correction. Formalizing this definition is a bit more subtle and was first done in \cite{STV-list-decoding}. The notion allows the decoder to work in two phases --- a preprocessing phase with $q_1=q_1(n)$ queries to the function $f$, that outputs up to $L$ algorithms $\phi_1,\ldots,\phi_L$ and a query phase, where given $\mathbf{a} \in \{0,1\}^n$ each algorithm $\phi_i$ makes $q_2=q_2(n)$ queries to $f$ and outputs $\phi_i(\mathbf{a})$. We say that $\mc{P}_d$ is  $(\delta,q_1,q_2,L)$-local list correctable if for every function $f$ and polynomial $P\in\mc{P}_d$ that are $\delta$-close, there is a decoder as above such that one of its outputs includes $P$ with high probability (say $3/4$). 
See \Cref{defn:local-list-algo} for a formal definition. The final goal of this paper is to locally list-correct $\mc{P}_d$ using non-trivially small number of queries (in both the preprocessing and query phases) where the fraction of errors approaches the list-decoding radius.

\subsection{Motivation and previous work}

Local decoding of polynomials over finite fields has played a central role in computational complexity and in particular in breakthrough results like IP=PSPACE and the PCP theorem. While most of these results consider functions over the entire multivariate domain (i.e., $\F^n$), low-degree polynomials over $\{0,1\}^n$  do arise quite naturally in complexity theory, notably in circuit complexity capturing classes like $\mathrm{AC}^{0}$ \cite{Razborov, Smolensky} and $\mathrm{ACC}$ \cite{ACC-Torus-Learning}, and in learning theory. Many of these results exploit basic distance properties of multivariate polynomials as given by the Ore-DeMillo-Lipton-Schwartz-Zippel lemma (see \Cref{thm:basic}). This lemma roughly says that the space of degree-$d$ polynomial functions mapping $S^n$ to a field $\F$ where $S \subseteq \F$ is finite form an error-correcting code of relative distance $d/|S|$ when $d < |S|$, and $|S|^{-d/(|S|-1)}$ when $d \geq |S|$. 

The special case of $S = \F$ is extensively studied and heavily used, e,g., in PCPs and cryptography. In this setting, the lemma can also be made algorithmic, with the first such instance handling the special case of $\F_2$ dating back to the works of Reed and Muller~\cite{Reed, Muller}. More recently, local list correction algorithms were discovered in the works of Goldreich and Levin~\cite{GoldreichL} for linear polynomials, Sudan, Trevisan and Vadhan~\cite{STV-list-decoding} for higher-degree polynomials over large fields, Gopalan, Klivans and Zuckerman~\cite{gkz-list-decoding} for higher-degree polynomials over $\F_2$, and Bhowmick and Lovett~\cite{BhowmickL} for polynomials over any small finite field. 

The case of general $S$ however has not received much attention and is mostly unexplored. This was first highlighted in a relatively recent work of Kim and Kopparty \cite{kim-kopparty-prod-decoding} who gave polynomial time (but not local) unique-decoding algorithms correcting errors up to half the minimum distance. Their work exposed the fact that many other algorithmic and even some coding-theoretic questions were not well understood when $S \ne \F$, and our work aims to fill some gaps in knowledge here.

Another motivation for our work is the design of locally correctable codes over the reals. A series of works~\cite{BDWY,DSW1, DSW2} has exposed that there are no known $(\Omega(1),\bigO(1))$-locally correctable codes over the reals of arbitrarily large dimensions. The underlying challenge here leads to novel questions in incidence geometry. Roughly the goal here is to design a finite set of points $T \subseteq \R^n$ such many pairs of points in $T$ are contained in constant-dimensional ``degenerate'' subspaces, where a $q$-dimensional subspace is said to be degenerate if it contains $q+1$ points from $T$. Till recently no sets that possessed this property with $q = o(n)$ were known, and the recent results of \cite{ABPSS24-ECCC} may be viewed as showing that the set $T = \{0,1\}^n \subseteq \R^n$ has $\widetilde{O}(\log n)$ dimensional subspaces covering most pairs of points of $T$. Local correctability of degree $d$ polynomials would translate to showing that moment vectors\footnote{$d$-moment vector of $\mathbf{v} \in \Boo^{n}$ is a vector in $\Boo^{\binom{n}{d}}$ which is an evaluation vector of $\mathbf{v}$ on all multilinear monomials of degree $\leq d$.} of $\{0,1\}^n$ (viewed as vectors in $\R^{N}$ for $N = \bigO(n^d)$) also exhibit a similar property, thus adding to the body of sets in $\R^N$ that show non-trivial degeneracies.

Turning to previous work on local correction of polynomials over grids, the local correction question when $S = \{0,1\}$ was first explored by Bafna, Srinivasan and Sudan \cite{bafna2017local}, who mainly presented a lower bound of $\widetilde{\Omega}(\log n)$ on the number of queries to recover even when $d=1$ and from some $o(1)$ fraction of errors and $\F = \R$. On the positive side, for fields of characteristic $p$, they gave an $\bigO_{d,p}(1)$ query algorithm to recover from $\Omega_{d,p}(1)$ fraction of errors. This left the case for large and zero characteristic fields open. 

The recent work of \if\anon1{~Amireddy, Behera, Paraashar, Srinivasan, and Sudan~}\else{the authors}\fi~\cite{ABPSS24-ECCC} investigated the case of general fields, and more generally, polynomials over Abelian groups (i.e., $\{0,1\}^n \to G$), for the special case of $d=1$. For this setting, they consider all three questions posed in the previous section, namely the (unique) local correction limit, the list-decoding radius, and the local list correction problem. In this setting where distinct degree $1$ polynomials disagree with each other on at least half the domain, they show that up to $1/4$ fraction of errors can be uniquely locally corrected with $\widetilde{\bigO}(\log n)$ queries. They further show that the list-decoding radius approaches $1/2$ and that there are $\widetilde{\bigO}(\log n)$ query algorithms to locally list correct $\mc{P}_1$ for any fraction of errors bounded away from $1/2$. Their work exposes a number of technical challenges in going beyond the $d=1$ case and we address those in this paper.

\subsection{Technical challenges in extending beyond the linear case}

We start by reviewing the main ideas in \cite{ABPSS24-ECCC} and outlining the challenges in the higher-degree extension. Their unique local corrector correcting up to half the minimum distance works in three steps: Given an oracle for a function $f(\mathbf{x})$ at a distance less than $1/4$ from a linear polynomial $P(\mathbf{x})$,
\begin{enumerate}[$\blacktriangleright$]
    \item It first provides oracle access to a function $f_{1}(\mathbf{x})$ at any tiny but constant distance from $P(\mathbf{x})$, using $\bigO(1)$ queries to $f(\mathbf{x})$.
    \item Then the algorithm provides oracle access to a function $f_{2}(\mathbf{x})$ at distance $1/\poly(\log n)$ from $P(\mathbf{x})$ while making $\poly(\log \log n)$ queries to $f_{1}(\mathbf{x})$.
    \item Finally, the algorithm provides oracle access to the linear polynomial $P(\mathbf{x})$ making $\bigO(\log n)$ queries to $f_{2}$. Composing the three steps gives the desired unique local corrector.
\end{enumerate}
The first two steps in their result are general enough to work for all degrees. The third step in the unique local corrector of \cite{ABPSS24-ECCC} is the most significant one and does not extend immediately to the higher degree setting. \cite{ABPSS24-ECCC} reduce this step to show that any point in $\{0,1\}^n$ can be expressed as a linear combination of $\widetilde{\bigO}(\log n)$ roughly balanced vectors\footnote{Hamming weight is very close to $n/2$}. To extend their approach to higher degrees, we would need an analogous result for the degree $\leq d$-moment vectors of vectors in $\{0,1\}^n$ but we are unable to find such an extension directly. Instead, as we elaborate further below, we manage to find a new path for this step, which results in a somewhat different proof even for the linear ($d=1$) case.

Next, we turn to the combinatorial analysis of the list-decoding radius. For simplicity assume that the Abelian group $G$ is a finite field $\F_p$. The analysis in the linear case \cite{ABPSS24-ECCC} splits into two cases: the low\footnote{The precise constants here are not important, as the analysis works as long as the $p \leq \bigO(1)$.} characteristic case ($p\leq 3$) and the high characteristic case ($p > 5$). The former case is handled via a suitable version of the Johnson bound. In the latter setting, a key insight used in \cite{ABPSS24-ECCC} for the linear case is that non-sparse linear polynomials tend to be non-zero with very high probability, i.e.~\emph{anti-concentration} of non-sparse linear polynomials. In the higher degree setting, both analyses become much more involved. In the low characteristic case, the Johnson bound no longer yields the right answer. For the high characteristic case, the primary obstacle is understanding the anti-concentration statement for non-sparse low-degree polynomials. 

Generalizing the result above to all Abelian groups involves multiple steps in \cite{ABPSS24-ECCC} - they extend the latter approach above to groups where every element has a sufficiently high order (specifically order at least $5$). Then they consider groups where every element has order a power of $2$ or $3$ separately and analyze these special cases; and finally use some ``special intersection properties'' of the agreement sets\footnote{Agreement set of a polynomial $P$ and a function $f$ is defined as the subset of $\Boo^{n}$ on which $P$ and $f$ agree.} of different polynomials with any given function to apply a counting tool from the work of Dinur, Grigorescu, Kopparty, and Sudan \cite{DGKS} to combine these different steps. While many of the steps extend to the higher degree setting (sometimes needing heavier machinery) the final step involving ``special intersection properties'' simply does not work in our setting. (Roughly the difference emanates from the fact that the probability that two linearly independent degree-$1$ functions vanish at a random point is at most $1/4$, which is the square of the probability for a single degree-$1$ function. This fails for degree $2$.) Overcoming this barrier leads to further new challenges in the higher-degree case. 

The local list-correction algorithm for degree-$1$ polynomials of \cite{ABPSS24-ECCC} is inspired by the local list-correction algorithm of Reed-Muller codes from \cite{STV-list-decoding}. The high-level idea is to start with a function $f(\mathbf{x})$ that is $(1/2-\varepsilon)$-close to a set $\mathsf{List}$ of linear polynomials and produce a small list of oracles such that each polynomial in $\mathsf{List}$ is within a small constant distance to an oracle from the list, at which point the unique local corrector becomes applicable. This `error-reduction' step involves choosing a random subcube $\mathsf{C}$ of $\{0,1\}^n$ (as defined in \Cref{sec:prelims} below) of sufficiently large but constant dimension $k$ and doing a brute-force list decoding on $\mathsf{C}$ to find a list $\mathsf{List}'$: it is not hard to argue that the restricted version $P'$ of each polynomial $P\in \mathsf{List}$ appears in the list $\mathsf{List}'$ with high probability (this needs the combinatorial bound mentioned above). To complete the error-reduction, they need to decode $P$ at a random point $\mathbf{b}\in \{0,1\}^n.$ This is done by repeating the above brute-force algorithm with the subcube $\mathsf{C}^\mathbf{b}$ `spanned' by $\mathsf{C}$ and $\mathbf{b}$: informally, this is the smallest subcube spanned by $\mathsf{C}$ and $\mathbf{b}$ and has dimension $2k$ (see \Cref{sec:prelimslistcorrection} for details.) They now obtain a new list of polynomials $\mathsf{List}''$ and need to isolate the polynomial $P''$ corresponding to $P$ in this list to get $P(\mathbf{b}).$ Here, \cite{ABPSS24-ECCC} uses the fact that we know the restriction of $P$ to the subcube $\mathsf{C}$ inside $\mathsf{C}^{\mathbf{b}}.$ The bad event in this case is that there are two polynomials in $\mathcal{L}''$ that disagree on $\mathbf{b}$ but agree on $\mathsf{C}.$ Bounding the probability of this event is the key step in the analysis of \cite{ABPSS24-ECCC} and is done by giving a complete understanding of which kinds of distinct polynomials over $\mathsf{C}^{\mathbf{b}}$ can collapse to the same polynomial over $\mathsf{C}$. This kind of understanding seems difficult to obtain for higher degrees, and we need to develop a new analysis for bounding the probability of the bad event in this setting.
 
We now turn to our results before elaborating on the techniques used to overcome the challenges.

\subsection{Our main results}
% The main results are as follows.
Briefly, our results provide poly-logarithmic query algorithms for unique and list-decoding to the maximal fraction of errors that are allowed. Specifically, the unique decoding algorithm works up to half the distance. We also establish that the list-decoding radius approaches the distance (as the list size tends to infinity) and give matching local algorithms. We give our specific theorems below.

\begin{thmbox}
\begin{restatable}[Local correction algorithms for $\mc{P}_{d}$ up to the unique decoding radius]{theorem}{uniquedegd}\label{thm:uniquedegd}
    For every Abelian group $G$ and for every constant $\varepsilon > 0$, the space $\mc{P}_{d}$ has a $(\delta,q)$-local correction algorithm where $\delta = \frac{1}{2^{d+1}}-\varepsilon$ and $q = \widetilde{\bigO}_{\varepsilon}(\log n)^{d}.$
\end{restatable}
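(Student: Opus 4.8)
The plan is to follow the three‑step template of \cite{ABPSS24-ECCC}, replacing only its last (and most degree‑sensitive) step. Write $\delta_d = 2^{-d}$ for the minimum distance of $\mc{P}_d$, so $\delta = \delta_d/2 - \varepsilon$. Given an oracle $f$ at distance $\delta$ from the nearby polynomial $P\in\mc{P}_d$: Step~1 produces an oracle $f_1$ at some small \emph{absolute} constant distance $\delta_1$ from $P$ at a cost of $O_{d,\varepsilon}(1)$ queries to $f$ per evaluation; Step~2 produces from $f_1$ an oracle $f_2$ at distance $\delta_2 = 1/(\log n)^{C}$ from $P$, for any constant $C=C(d)$ we choose, at a cost of $\poly(\log\log n)$ queries to $f_1$ per evaluation; and Step~3 recovers $P$ exactly from $f_2$ using $\widetilde{O}(\log n)^{d}$ queries to $f_2$ per evaluation. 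Composing the three steps and using $O_{d,\varepsilon}(1)\cdot\poly(\log\log n)\cdot\widetilde{O}(\log n)^{d} = \widetilde{O}_{\varepsilon}(\log n)^{d}$ gives the theorem. As emphasised in the introduction, Steps~1 and~2 are degree‑agnostic and go through exactly as in \cite{ABPSS24-ECCC} (only Step~1 uses the gap, via the positive constant $\varepsilon = \delta_d/2 - \delta$), so I take them as given and concentrate on Step~3.

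Step~3 reduces to a combinatorial statement that is the higher‑degree analogue of the ``balanced decomposition'' of \cite{ABPSS24-ECCC}: \emph{for every $\mathbf{a}\in\Boo^{n}$ there is a distribution over pairs $(T,(\lambda_{\mathbf{v}})_{\mathbf{v}\in T})$, where $T\subseteq\Boo^{n}$ has size $m=\widetilde{O}(\log n)^{d}$ and the scalars $\lambda_{\mathbf{v}}$ depend only on the combinatorial shape of the sample (not on $P$ or $G$), such that (i) $Q(\mathbf{a}) = \sum_{\mathbf{v}\in T}\lambda_{\mathbf{v}}\,Q(\mathbf{v})$ for every $Q\in\mc{P}_d(\Boo^{n},G)$ --- equivalently, the list of degree-$\le d$ monomials evaluated at $\mathbf{a}$ is the corresponding $\lambda$-combination of those same lists evaluated at the points of $T$ --- and (ii) each $\mathbf{v}\in T$ is, over the internal randomness, ``spread out'': near-uniform over a subset of $\Boo^{n}$ of density at least $1/\poly(\log n)$ (for $d=1$, a slab of near-balanced vectors, of density $\Omega(1)$).} Granting this, the Step~3 algorithm is immediate: sample $(T,(\lambda_{\mathbf{v}}))$, query $f_2$ at every point of $T$, and output $\sum_{\mathbf{v}\in T}\lambda_{\mathbf{v}}\,f_2(\mathbf{v})$. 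By~(ii), $\Pr[f_2(\mathbf{v})\ne P(\mathbf{v})]\le\poly(\log n)\cdot\delta_2$ for each fixed $\mathbf{v}\in T$, which is $o(1/m)$ once $C$ is a large enough constant, so a union bound over $T$ shows that with probability at least $3/4$ all $m$ answers agree with $P$, whence~(i) makes the output equal $P(\mathbf{a})$. The query cost is $m=\widetilde{O}(\log n)^{d}$, as required.

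Everything thus reduces to constructing the decomposition $T(\mathbf{a})$, and this is where I expect the real difficulty. For $d=1$ it is exactly the result of \cite{ABPSS24-ECCC}: every $\mathbf{a}$ is a signed sum of $\widetilde{O}(\log n)$ roughly balanced vectors with signs summing to $1$, and since near-balanced vectors form a constant-density subset of the cube they can be drawn near-uniformly, yielding (i) and (ii) simultaneously. The naive extension --- expand the $d=1$ decomposition inside each monomial $\prod_{i\in S}x_i$ so as to write the $\Theta(n^d)$-dimensional degree-$\le d$ moment vector of $\mathbf{a}$ as a short combination of moment vectors of balanced points --- does not work, because the resulting cross terms $\prod_{i\in S}v_{j_i,i}$ (where the summation index $j_i$ varies with $i$) are not monomials of any single cube point and there are $\approx n^{d}$ of them. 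Instead I would build $T$ out of the new gadget flagged in the introduction: an \emph{interpolating set for $\mc{P}_d$ all of whose points lie in a very thin slab between two parallel hyperplanes}. The thinness of the slab is what keeps its points spread out, hence samplable in the sense of~(ii), while the interpolating-set property expresses the value of any $Q\in\mc{P}_d$ at any target point of the relevant low-dimensional structure as a short, $P$- and $G$-independent linear combination of its values on the (randomisable) slab points. Turning this into an actual $T(\mathbf{a})$ of size $\widetilde{O}(\log n)^{d}$ --- plausibly via a recursion on $d$ that peels one coordinate factor at a time and invokes the degree-$(d-1)$ case on a sub-structure, all while keeping the interpolation coefficients universal --- and, above all, constructing a thin-slab set that is at once rich enough to interpolate $\mc{P}_d$ and spread out enough to sample against a $\delta_2$-corrupted oracle, is the main obstacle; the rest is assembly.
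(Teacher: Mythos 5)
Your outline reproduces the paper's proof architecture exactly: the same two error-reduction phases imported from \cite{ABPSS24-ECCC} (error $\tfrac{1}{2^{d+1}}-\varepsilon \to$ small constant $\to 1/\bigO_d((\log n)^d)$), followed by a sub-constant-error corrector (the paper's \Cref{thm:uniquedeg-d-smallerror}) obtained by restricting to a random subcube of dimension $k=\Theta_d(\log n)$ under a non-uniform hash and querying an interpolating set for $\mc{P}_d$ that lies in an exponentially thin slab between two parallel hyperplanes. Your sampling requirement (ii) is a harmless weakening of what the paper actually proves (each query point is $\sqrt{n}/2^{\Omega(k/(d+1))}$-close in statistical distance to uniform on all of $\Boo^n$, via \Cref{fact:close-to-uniform}), and your union-bound assembly of Step~3 is correct. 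You also correctly diagnose why the naive monomial-by-monomial extension of the $d=1$ decomposition fails.

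The gap is the one you name yourself and then leave open: the existence of the thin-slab interpolating set is not a routine assembly step but the entire new technical content of this theorem, and without it the argument does not close. Concretely, the paper's \Cref{lemma:local-correction-main} constructs $\mathcal{S}\subseteq\Boo^k$ of size $\bigO_d(k^d)$ inside a slab of relative width $2^{-\Omega(k/(d+1))}$ on which no non-zero $Q\in\mc{P}_d(\Boo^k,G)$ vanishes, for every Abelian $G$ simultaneously. The construction partitions the $k$ coordinates into $m=k/(10(d+1))$ blocks, assigns weight $2^{i-1}$ to every coordinate of block $i$ (so the slab condition becomes a weighted-balance condition with slack only $\lceil d/2\rceil$), and proceeds by induction on (number of blocks)$+$(degree): one peels off the highest-weight block, uses the inductive set to keep a maximal coefficient polynomial non-zero, and extends into the slab using the facts that every partial point admits an interval of $d+1$ admissible Hamming weights in the new block (\Cref{clm:extendBell}) and that points of Hamming weight in any length-$(d+1)$ interval contain a small interpolating set (\Cref{clm:hamming-ball-d-non-zero}). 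Your sketch gestures at such a recursion but supplies neither the weight scheme that makes the slab exponentially thin while leaving room for the recursion, nor the size bound $\bigO_d(k^d)$. A second, smaller omission: your condition (i) posits universal integral coefficients $\lambda_{\mathbf{v}}$ valid over every Abelian group, but passing from ``no non-zero polynomial vanishes on $\mathcal{S}$'' to the existence of such \emph{integer} coefficients is itself non-trivial over groups (no linear algebra is available); the paper needs \Cref{claim:matrix-integral-solution}, which invokes a theorem on integral solutions of Diophantine linear systems \cite[Cor.~4.1.a]{Schrijver} applied over the quotient group $\mathbb{Q}/\mathbb{Z}$. So the proposal identifies the right reduction but proves neither of the two lemmas that make it work.
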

\end{thmbox}

We show that if all its elements of the group have a constant order, then the query complexity of the local correction algorithm can be brought down from $\widetilde{\bigO}_d((\log n)^d)$ to a constant (i.e., independent of $n$). Specifically, we say that an Abelian group $G$ is a {\em torsion group} if all its elements have finite order, and the {\em exponent} of a torsion group is the least common multiple of the orders of all the elements. While~\cite{bafna2017local} shows this only for groups underlying fields of constant characteristic and for {\em some} constant error, we extend their proof to all groups of constant exponent and error up to the unique-decoding radius. 

\begin{restatable}{theorem}{consttorsion}\label{thm:const-torsion}
    If $G$ is an Abelian torsion group of exponent $M$, then for every $\varepsilon > 0$, $\cP_d$ has a $(\delta, q)$-local correction algorithm where $\delta = \frac{1}{2^{d+1}}-\varepsilon$ and $q=\bigO_{M,\varepsilon}(1)$.
\end{restatable}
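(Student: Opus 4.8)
The plan is to follow the strategy of \cite{bafna2017local}, which establishes the statement in the special case where $G$ is the additive group of a field of constant characteristic $p$ (with ``$\varepsilon$'' replaced by some unspecified absolute constant), and to strengthen it along two axes: push the error radius up to the unique-decoding radius, and allow an arbitrary torsion group of constant exponent in place of $\F_p$.

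\emph{Amplifying the radius.} The error-radius issue is cheap: the first step of the three-step pipeline of \cite{ABPSS24-ECCC} recalled above works for all degrees and over all abelian groups, and with $\bigO_{d,\varepsilon}(1)$ queries it converts oracle access to a function $f$ that is $\left(\tfrac{1}{2^{d+1}}-\varepsilon\right)$-close to some $P\in\mc{P}_d$ into oracle access to a function $f_1$ that is $\varepsilon'$-close to the same $P$, for any constant $\varepsilon'>0$ of our choosing. If we choose $\varepsilon'$ below the constant error tolerance of the base corrector produced below and compose, the query counts multiply and we obtain an $\bigO_{M,\varepsilon}(1)$-query corrector (for constant $d$). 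So it remains to build, for every abelian group $G$ of exponent $M$, an $\bigO_{d,M}(1)$-query corrector for $\mc{P}_d(\Boo^n,G)$ from some absolute constant fraction of errors $\eta=\eta(d,M)>0$.

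\emph{Generalizing the group.} I would first reduce the group. By the classical structure theorem for abelian groups of bounded exponent, $G$ is a direct sum of cyclic groups and hence decomposes as $G=\bigoplus_{p\mid M}G_p$ into its (finitely many) $p$-primary components, with $G_p$ a module over $\Z/p^{a_p}\Z$ for some $p^{a_p}\mid M$. A degree-$\le d$ polynomial into $G$ is the tuple of its projections to the $G_p$, and distance to a polynomial can only shrink under projection, so one can correct each component separately; as there are finitely many components it suffices to handle $G$ a module over $R=\Z/p^{a}\Z$ for a prime power $p^a\mid M$. Now I would revisit \cite{bafna2017local}'s corrector over $\F_p=\Z/p\Z$: it amounts to constructing a distribution over $\bigO_{d,p}(1)$-size \emph{self-correction gadgets} --- constant-size subsets $\{w_1,\dots,w_t\}\subseteq\Boo^n$ together with coefficients $\lambda_1,\dots,\lambda_t$ for which $\sum_j\lambda_j P(w_j)=P(\mathbf{a})$ holds identically over $\mc{P}_d$ while each $w_j$ is (near-)uniformly distributed, so that a union bound followed by a plurality vote removes the errors. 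I would re-derive such a gadget over $R=\Z/p^a\Z$. Since any $R$-module carries the $p$-adic filtration $G\supseteq pG\supseteq\cdots\supseteq p^aG=0$ with $\F_p$-vector-space quotients, an element of $G$ is pinned down by its images in these quotients; this suggests a digit-by-digit reduction --- recover $P(\mathbf{a})\bmod pG$ with the known $\F_p$ corrector, then correct a ``carry'' polynomial of degree $\le d$ over $\Z/p^{a-1}\Z$ and induct on $a$, using that wherever $f$ agrees with $P$ the corresponding carry value is computable from $f$. Once a gadget with \emph{integer} coefficients that is valid modulo $p^a$ is in hand it automatically works over every $R$-module (indeed every abelian group of exponent dividing $p^a$), so no separate module-uniformity argument is needed.

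\emph{The main obstacle.} I expect the crux to be producing the gadget over $\Z/p^a\Z$. The naive digit-by-digit plan asks, at each stage, to solve the \emph{same} underdetermined $\F_p$-linear system (whose matrix is the $\Boo$-moment matrix of the gadget points) with a new, adversarially-produced right-hand side, which need not lie in its image; making this go through seems to require either choosing \cite{bafna2017local}'s gadget so that the associated system is well-conditioned modulo $p$ (so that Hensel-style lifting applies) or a more delicate iterative cleanup that tracks the errors introduced at each digit. Everything else --- the radius amplification and the primary decomposition --- is routine bookkeeping. I would also note in passing that the exponent hypothesis is genuinely necessary: the $\widetilde{\Omega}(\log n)$ query lower bound of \cite{bafna2017local} over $\R$ (already for $d=1$) shows that no bounded-support integer-coefficient gadget exists in characteristic zero, so the argument must exploit the finite exponent somewhere, and it does so precisely in the construction of the gadget.
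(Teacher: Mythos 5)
Your radius amplification and the reduction to a single prime power are both fine and match the paper's high-level structure (the paper reduces to a corrector for some constant $\delta=\Omega_{M,d}(1)$ and constant query complexity, then applies the error-reduction pipeline unchanged; it also verifies divisibility prime-by-prime rather than literally splitting $G$, but that is cosmetic). The problem is that you have correctly located the crux --- an integer-coefficient interpolation gadget, supported on (near-)balanced points, that is valid modulo $p^a$ --- and then not constructed it. The two directions you offer do not close the gap. The digit-by-digit plan in particular has a real flaw: to form the ``carry'' function you would need to subtract a \emph{polynomial} lift $P_0$ of $P\bmod pG$ from $f$, but a local corrector for $G/pG$ only hands you the \emph{value} $P(\mathbf{x})\bmod pG$ at queried points, and applying a set-theoretic section $s:G/pG\to G$ to that value does not give $P_0(\mathbf{x})$ (since $s$ is not a homomorphism); the function $\mathbf{x}\mapsto f(\mathbf{x})-s(P(\mathbf{x})\bmod pG)$ is not close to a degree-$d$ polynomial into $pG$, so the induction on $a$ does not get off the ground.

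The paper's actual construction is direct and avoids any lifting. Write $M=\prod_j p_j^{r_j}$, let $s_j$ be minimal with $p_j^{r_js_j}>d$, and set $k=\prod_j p_j^{3r_js_j}=\bigO_{M,d}(1)$. The gadget lives on the exactly-balanced slice $\binom{[2k]}{k}$ of a uniformly random $2k$-dimensional subcube through $\mathbf{a}$ (so each query point is exactly uniform in $\{0,1\}^n$, with no near-uniformity slack needed). The coefficients are as simple as possible: $c_{\mathbf{b}}=0$ if $\mathbf{b}$ has a $1$ among the last $k-d$ coordinates, and $c_{\mathbf{b}}=A$ otherwise, where $A\binom{k+d}{k}\equiv 1\pmod M$. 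Checking that $\sum_{\mathbf{b}}c_{\mathbf{b}}Q(\mathbf{b})=Q(0^{2k})$ for all $Q\in\mc{P}_d$ reduces, monomial by monomial, to two divisibility facts: $\gcd\bigl(\binom{k+d}{k},M\bigr)=1$ and $p_j^{r_j}\mid\binom{k+d-i}{k-i}$ for all $i\in[d]$. Both follow from Kummer's theorem on the $p$-adic valuation of binomial coefficients, using that $k$ has at least $3r_js_j$ trailing zeros in base $p_j$ while $d<p_j^{r_js_j}$. This is exactly the ``exploit the finite exponent'' step you predicted must exist; without some such arithmetic input (Kummer in place of the Lucas-theorem argument of~\cite{bafna2017local}) the proof is incomplete.
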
 

As noted earlier, some dependence on $n$ is needed even when the degree is $1$ and $G$ is a field of large characteristic (or characteristic $0$), as an $\Omega(\log n/\log \log n)$ lower bound was shown in this setting by earlier work of Bafna, Srinivasan, and Sudan \cite{bafna2017local} (and shown to be tight up to $\poly(\log\log n)$ factors in~\cite{ABPSS24-ECCC}). Our upper bound above is thus optimal to within polynomial factors (for constant $d$). However, we do not know if the query complexity can be improved to, say, $\tilde{\bigO}_d(\log n)$ for degree $d.$

We also extend the above algorithm from \Cref{thm:uniquedegd} to the list decoding regime. For this, we first establish a bound on the list-decoding radius. As far as we know, the following result  was not known before for $G$ being any group other than the field $\F_2.$ 

\begin{thmbox}
\begin{restatable}[Combinatorial list decoding bound for $\mc{P}_{d}$]{theorem}{comblistdegd}
    \label{thm:comblistdegd}
    For every Abelian group $G$ and for every constant $\varepsilon > 0$, the space $\mc{P}_d$ over any Abelian group $G$ is $(1/2^d - \varepsilon, \exp(\bigO_d(1/\varepsilon)^{\bigO(d)})$-list correctable.
\end{restatable}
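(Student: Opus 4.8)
The plan is to bound the number $L$ of degree-$\le d$ polynomials lying within relative distance $\delta_d-\varepsilon$ of a fixed function $f\colon\Boo^n\to G$, where $\delta_d=2^{-d}$. The first step is a reduction to a question intrinsic to the code. If $P,P'$ are both within $\delta_d-\varepsilon$ of $f$, then $Q=P-P'$ is a \emph{nonzero} degree-$\le d$ polynomial whose support has relative density at most $2(\delta_d-\varepsilon)=2\delta_d-2\varepsilon$, and at least $\delta_d$ by the Ore-DeMillo-Lipton-Schwartz-Zippel bound (\Cref{thm:basic}). Translating the list by $-P_1$ (which preserves degrees and all pairwise distances), it therefore suffices to show that any set $\mc{C}\subseteq\mc{P}_d$ all of whose pairwise differences are nonzero polynomials of support density strictly below $2\delta_d$ --- i.e.\ below twice the minimum distance of the code --- has $|\mc{C}|\le\exp(\bigO_d(1/\varepsilon)^{\bigO(d)})$; after the translation, every member of $\mc{C}$ is itself such a low-weight codeword.

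The heart of the matter is a structural dichotomy for low-weight codewords, powered by anti-concentration of low-degree polynomials over $\Boo^n$: a nonzero degree-$\le d$ polynomial $Q$ over $G$ whose support has density at most $2\delta_d-\gamma$ must be \emph{low-complexity}, in the sense that it is a function of only $r=r(d,\gamma)$ affine forms on $\Boo^n$ (a Kasami--Tokura-type phenomenon). The intuition is that a polynomial genuinely spread over many ``independent'' directions has support density bounded away from the minimum $\delta_d$ by an amount growing with the number of directions; hence small support density caps that number. Making this quantitative is exactly where I would invoke the Footprint bound (the standard-monomial / leading-term estimate for the number of common $\Boo^n$-zeros of a polynomial ideal), and one needs it uniformly over Hamming slices, since the subsequent restriction arguments leave the cube for affine slices. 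Applying the dichotomy to each pairwise difference attaches to every pair of members of $\mc{C}$ a ``core'' of at most $r=\bigO_d(1/\varepsilon)^{\bigO(d)}$ affine forms.

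The counting is then driven by the Sunflower lemma. Fix one member of $\mc{C}$ and consider the family of cores of the $|\mc{C}|-1$ differences from it, each a set of at most $r$ affine forms. If $|\mc{C}|>\exp(\bigO(r\log r))$, this family classically contains a sunflower with many petals, giving a sub-collection $\mc{C}'\subseteq\mc{C}$ whose cores share a common kernel $\mc{Y}$ of at most $r$ affine forms with pairwise-disjoint petals. Restricting to a slice on which the forms of $\mc{Y}$ are fixed turns the corresponding differences into functions of their (essentially independent) petals alone; then --- using once more that all members of $\mc{C}'$ remain close to a single restricted function, and applying the Footprint bound to the ideal generated by these mutually ``independent'' structured differences --- one shows that only boundedly many petals can coexist. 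This contradicts the number of petals in the sunflower, forcing $|\mc{C}'|$, and hence $L$, to be at most $\exp(\bigO(r\log r))=\exp(\bigO_d(1/\varepsilon)^{\bigO(d)})$, as claimed.

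Two caveats. First, the argument above is cleanest over groups without small torsion ($\Z$, $\R$, and $\F_p$ for large $p$), where anti-concentration is strongest; for $\F_2$, $\F_3$, and (more generally) torsion groups whose exponent is a power of $2$ or $3$ --- where, unlike the linear case, the Johnson bound does \emph{not} suffice --- the structural dichotomy has to be established by separate arguments in those characteristics, and a general Abelian $G$ is then assembled from these cases via its torsion decomposition. Second, and this is where I expect the real effort to lie, making the structure theorem quantitatively sharp --- the precise dependence $r(d,\gamma)$, over an arbitrary Abelian group and uniformly across Hamming slices --- is the crux; the Sunflower and Footprint steps are comparatively routine once a clean structure theorem is in hand.
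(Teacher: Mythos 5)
There are two genuine gaps here. First, your opening reduction is too lossy: after translating by $-P_1$ you claim it suffices to bound any set $\mc{C}\subseteq\mc{P}_d$ whose pairwise differences are nonzero polynomials of support density below $2\cdot 2^{-d}$. This condition does not imply a bound on $|\mc{C}|$: already for $d=1$ and $G=\F_2$, \emph{every} pair of distinct non-constant linear polynomials differs on exactly half the cube, so the set of all $2^{n}$ linear polynomials satisfies your hypothesis while the true list size at radius $1/2-\varepsilon$ is $\poly(1/\varepsilon)$. The closeness of all list elements to a \emph{single} function $f$ cannot be discarded via the triangle inequality at the outset; the paper carries $f$ (and its shifts/restrictions) through every stage — the inclusion–exclusion/Tur\'an step, the bucketing by variable sets, and the final Chernoff argument over polynomials on disjoint variable sets all quantify agreement with $f$, not merely pairwise distances. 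Your sketch does reintroduce "close to a single restricted function" inside the sunflower step, but as stated the reduction you lead with is false, and the argument needs to be restructured around $f$ from the beginning.

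Second, the "structural dichotomy" you propose — that a nonzero degree-$d$ polynomial of support density below $2\cdot 2^{-d}-\gamma$ is a function of $r(d,\gamma)$ affine forms — is essentially the Kasami–Tokura/Gopalan–Klivans–Zuckerman route, which the paper explicitly cannot use: over the domain $\Boo^n$ with range an arbitrary Abelian group $G$ there is no meaningful notion of restricting to affine subspaces, and no such structure theorem is available. What the paper actually proves (\Cref{lem:anticonc}) is the weaker statement that such a polynomial has \emph{bounded sparsity} (few monomials, hence few variables), and \emph{only} for groups with no elements of small order; this is proved via Erd\H{o}s/Meka–Nguyen–Vu anti-concentration plus a matching-versus-vertex-cover induction, not via the Footprint bound. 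The Footprint bound enters elsewhere entirely: in the small-torsion case (\Cref{lem:tailbound}) it yields a Chernoff-type tail bound for polynomials with pairwise disjoint \emph{leading monomials}, which — combined with a pigeonhole argument producing distinct leading monomials, a sunflower among those monomials, and a tree-based lifting from $\Z_p$ to general $(<p_0)$-groups replacing the \cite{DinurGKS-ECCC} machinery — handles exactly the characteristics where your dichotomy has no known analogue. So the ingredients you name (anti-concentration, sunflowers, Footprint) are the right ones, but they are deployed for different purposes than you assign them, and the two load-bearing steps of your plan (the pairwise-distance reduction and the affine-form structure theorem) do not hold as stated.
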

\end{thmbox}

The above is tight in the sense that the number of codewords at distance $1/2^{d}$ can depend on both $n$ and the size of the group $G$ (and is infinite when $G$ is infinite). We do not know if the dependence on $\varepsilon$ is tight. Note that for the setting where $d=1$, \cite{ABPSS24-ECCC} gives a polynomial bound in terms of $1/\varepsilon$. Our bound as stated above is exponential and while we can see a path to improving this to a quasi-polynomial, we don't see a polynomial upper bound using the proofs of this paper even when $d=1$. 

Finally, we state our local list correction result. 

\begin{thmbox}
\begin{restatable}[Local list correction for $\mathcal{P}_{d}$]{theorem}{listdecoding}
\label{thm:listdecoding}
For every Abelian group $G$ and for every $\varepsilon>0$, the space $\mathcal{P}_{d}$ is $(1/2^{d}-\varepsilon, \bigO_\varepsilon(1), \widetilde{\bigO}_{\varepsilon}(\log n)^{d}, \exp(\bigO_d(1/\varepsilon)^{\bigO(d)}))$-locally list correctable.\\

Specifically, there is a randomized algorithm $\mathcal{A}$ that, when given oracle access to a polynomial $f$ and a parameter $\varepsilon > 0$, outputs with probability at least $3/4$ a list of randomized algorithms $\phi_1,\ldots, \phi_L$ ($L\leq \exp(\bigO_d(1/\varepsilon)^{\bigO(d)})$)  such that the following holds. For each $P \in \mathcal{P}_{d}$ that is $(1/2^{d} - \varepsilon)$-close to $f$, there is at least one algorithm $\phi_i$ that, when given oracle access to $f$, computes $P$ correctly on every input with probability at least $3/4.$\\

The algorithm $\mathcal{A}$ makes $\bigO_{\varepsilon}(1)$ queries to $f$, while each $\phi_i$ makes $\widetilde{\bigO}_{\varepsilon}((\log n)^{d})$ queries to $f.$ 
\end{restatable}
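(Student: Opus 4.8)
The plan is to follow the two-phase ``error reduction, then unique correction'' template of \cite{STV-list-decoding} and \cite{ABPSS24-ECCC}, instantiated with the degree-$d$ tools developed earlier in the paper. Fix $f$ that is $(1/2^d-\varepsilon)$-close to $\mc{P}_d$ and let $\mathsf{List}=\{P_1,\dots,P_L\}$ be the set of degree-$d$ polynomials that are $(1/2^d-\varepsilon)$-close to $f$; by \Cref{thm:comblistdegd} we have $L\le \exp(\bigO_d(1/\varepsilon)^{\bigO(d)})$. The three ingredients I would assemble are: (i) the combinatorial list bound \Cref{thm:comblistdegd}, applied both globally and on constant-dimensional subcubes; (ii) a sampling lemma stating that a uniformly random combinatorial subcube of dimension $k$ (a random partition of $[n]$ into $k$ blocks) is a good approximate sampler, so that for each fixed $P$, with probability $1-o_k(1)$ over the subcube $\mathsf{C}$ one has $\delta(f|_{\mathsf{C}},P|_{\mathsf{C}})\le \delta(f,P)+\varepsilon/2$; and (iii) the unique local corrector of \Cref{thm:uniquedegd}, which corrects a function $\gamma$-close to some degree-$d$ polynomial, for any constant $\gamma<1/2^{d+1}$, using $\widetilde{\bigO}_d(\log n)^d$ queries.

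\textbf{The two phases.}
In the preprocessing phase, pick a uniformly random combinatorial subcube $\mathsf{C}$ of constant dimension $k=k(\varepsilon,d)$, query $f$ on all $2^k$ of its points, and brute-force list-decode to radius $1/2^d-\varepsilon/2$ to obtain a list $\mathsf{List}'$ of degree-$d$ polynomials on $\mathsf{C}$; by \Cref{thm:comblistdegd} on $\Boo^k$, $|\mathsf{List}'|\le L'=\exp(\bigO_d(1/\varepsilon)^{\bigO(d)})$, and by the sampling lemma together with a union bound over $\mathsf{List}$, with high probability every $P\in\mathsf{List}$ has $P|_{\mathsf{C}}\in\mathsf{List}'$. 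Output one algorithm $\phi_Q$ for each $Q\in\mathsf{List}'$. In the query phase, $\phi_Q$ on input $\mathbf{a}$ runs the unique corrector of \Cref{thm:uniquedegd} with input $\mathbf{a}$ on top of an error-reduced oracle $g_Q$, where $g_Q$ answers a query $\mathbf{b}$ as follows: form the dimension-$(\le 2k)$ subcube $\mathsf{C}^{\mathbf b}$ spanned by $\mathsf{C}$ and $\mathbf{b}$ (refine each block of $\mathsf{C}$ according to the bits of $\mathbf{b}$), query $f$ on all $\le 2^{2k}$ of its points, list-decode to radius $1/2^d-\varepsilon/2$ to get $\mathsf{List}''$, and return $R(\mathbf b)$ for the unique $R\in\mathsf{List}''$ with $R|_{\mathsf{C}}=Q$ (returning an arbitrary value if there is no such unique $R$). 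Since $g_Q$ is a genuine function of $f$ once the preprocessing randomness $\mathsf{C}$ is fixed, and --- as argued next --- $g_{P|_{\mathsf{C}}}$ is $\gamma$-close to $P$ for each real $P\in\mathsf{List}$, the composed algorithm $\phi_{P|_{\mathsf{C}}}$ outputs $P(\mathbf{a})$ with probability $\ge 3/4$ on every $\mathbf{a}$. The query accounting then works out: preprocessing makes $2^k=\bigO_\varepsilon(1)$ queries, and each $\phi_Q$ makes $\widetilde{\bigO}_d(\log n)^d$ calls to $g_Q$, each costing $2^{2k}=\bigO_\varepsilon(1)$ queries to $f$, for a total of $\widetilde{\bigO}_\varepsilon(\log n)^d$; and we produce $L\le L'$ algorithms.

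\textbf{The main obstacle: closeness of the error-reduced oracle.}
The whole argument reduces to showing that, with high probability over $\mathsf{C}$, every $P\in\mathsf{List}$ satisfies $\Pr_{\mathbf b}[g_{P|_{\mathsf{C}}}(\mathbf b)\ne P(\mathbf b)]\le\gamma$ for a fixed tiny constant $\gamma$. I would bound $\Pr_{\mathsf{C},\mathbf b}[g_{P|_{\mathsf{C}}}(\mathbf b)\ne P(\mathbf b)]$ by $\gamma\cdot o_k(1)$ for each fixed $P$ and finish by Markov plus a union bound over the $\le L$ polynomials. There are two bad events. The first is $P|_{\mathsf{C}^{\mathbf b}}\notin\mathsf{List}''$, i.e.\ $\delta(f|_{\mathsf{C}^{\mathbf b}},P|_{\mathsf{C}^{\mathbf b}})>1/2^d-\varepsilon/2$; this is handled by the sampling lemma applied to $\mathsf{C}^{\mathbf b}$ (dimension $\le 2k$), taking $k$ large in terms of $\varepsilon,\gamma,d$. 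The second, and the crux, is the \emph{collision event}: there is $R\in\mathsf{List}''$ with $R\ne P|_{\mathsf{C}^{\mathbf b}}$ but $R|_{\mathsf{C}}=P|_{\mathsf{C}}$ and $R(\mathbf b)\ne P(\mathbf b)$. For $d=1$, \cite{ABPSS24-ECCC} bounds this by exactly classifying which pairs of distinct polynomials on $\mathsf{C}^{\mathbf b}$ collapse to the same polynomial on $\mathsf{C}$; that classification appears out of reach for $d\ge 2$, so a new argument is needed here.

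\textbf{Handling the collision event.}
I would replace the classification by an averaging argument on Hamming slices. Inside $\mathsf{C}^{\mathbf b}$, the copy of $\mathsf{C}$ is a ``diagonal'' subcube (merging the two halves of each refined block) and $\mathbf b$ is the point sending each $0$-half to $0$ and each $1$-half to $1$; conditioned on the block partition and on the weight $|\mathbf b|$, the remaining randomness of $\mathbf b$ is precisely that of a uniform point of a Hamming slice of $\Boo^n$. Since $R$ and $P|_{\mathsf{C}^{\mathbf b}}$ are both $(1/2^d-\varepsilon/2)$-close to $f|_{\mathsf{C}^{\mathbf b}}$ they agree on density $\ge 1-2^{1-d}+\varepsilon$ of $\mathsf{C}^{\mathbf b}$, while by the Ore--DeMillo--Lipton--Schwartz--Zippel bound (\Cref{thm:basic}) distinct degree-$d$ polynomials disagree on $\ge 1/2^d$ of $\mathsf{C}^{\mathbf b}$; so a collision corresponds to a nonzero bounded-distance ``difference pattern'' of degree $d$ (defined with care for $G$-valued polynomials, e.g.\ via agreement sets or by first passing to $\F_p$- and $\Z_{p^t}$-valued components) that vanishes on the diagonal yet is nonzero at $\mathbf b$. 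The plan is to show such a pattern cannot be concentrated at $\mathbf b$: after conditioning on the partition and on $|\mathbf b|$ this reduces to understanding how a nonzero degree-$d$ polynomial behaves under the random restriction that collapses $\Boo^n$ onto the $k$ blocks, and the key technical statement is that this restriction process is a sufficiently good sampler for Hamming slices (combined, where needed, with anti-concentration of low-degree polynomials and the footprint bound to control the number of relevant patterns). Choosing $k$ large enough in terms of $\varepsilon,d,\gamma$ then pushes the collision probability below $\gamma\cdot o_k(1)$. I expect this collision bound --- in particular proving the Hamming-slice sampling statement with strong enough parameters and carrying it through the general Abelian-group setting --- to be the principal technical work of the proof.
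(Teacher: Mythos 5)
Your proposal is correct and follows essentially the same route as the paper: a random constant-dimensional subcube $\mathsf{C}$ as advice, brute-force list decoding on $\mathsf{C}^{\mathbf b}$, disambiguation via $P|_{\mathsf{C}}$, composition with the unique corrector of \Cref{thm:uniquedegd}, and — for the crucial collision event — the observation that the difference of two colliding polynomials is non-zero on the middle Hamming slice of $\mathsf{C}^{\mathbf b}$, so that a DLSZ-type bound on slices plus the fact that the random $2$-to-$1$ collapse is a good sampler for slices (the paper's \Cref{lem:DLSZ-slice}, \Cref{lem:sampling}, and \Cref{cor:samplingDLSZ}) kills the collision probability. The only cosmetic differences are that the paper amplifies via $\log L$ independent subcube choices rather than a single larger $k$, and that anti-concentration and the footprint bound (which you mention as possibly needed) are in fact used only for the combinatorial list bound, not here.
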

\end{thmbox}

\begin{remark}
    \label{rem:const-torsion-list} If $G$ is an Abelian torsion group of exponent $M$, $\cP_d$ is $(1/2^d-\varepsilon, \bigO_\varepsilon(1)$, $\bigO_{M,\varepsilon}(1)$, $
    \exp(\bigO_d(1/\varepsilon)^{\bigO(d)}))$-locally list-correctable. This follows in a similar manner as~\Cref{thm:listdecoding}, except we replace the generic local corrector in the unique-decoding regime with that given by~\Cref{thm:const-torsion}.
\end{remark}

% Using known local testing results~\cite{bafna2017local, AmireddySS}, one can show that the local list-correction \Cref{thm:listdecoding} actually implies \Cref{thm:uniquedegd}. Nevertheless, we present the proof of \Cref{thm:uniquedegd} in its entirety, since it is a simpler self-contained proof than the one that goes through \Cref{thm:listdecoding}, and introduces some of the same ideas in an easier setting. A weak version of \Cref{thm:uniquedegd} is also required for \Cref{thm:listdecoding}.

\subsection{Technical tools}

% \mnote{ A question for us is whether we want to highlight the main new technical ingredients first in a separate para called technical tools, and then move on to the proof overview. We can also go the other way --- give the proof overview and then summarize with a technical tools para. Some thms/lemmas worth highlighting were: \Cref{lemma:local-correction-main}, \Cref{lem:anticonc}, \Cref{lem:tailbound}, \Cref{lem:sampling}, \Cref{lem:DLSZ-slice}}

In the process of proving our main results, we prove several lemmas that we believe are independently interesting. 

In the proof of~\Cref{thm:uniquedegd}, the main step is to construct, for any $\mathbf{a}\in \{0,1\}^n$, a distribution $\mathcal{D}_\mathbf{a}$ over $(\Boo^{n})^{q}$ such that the marginal distribution of each point is close to the uniform distribution over $\Boo^{n}$, and for any degree-$d$ polynomial $P$, $P(\mathbf{a})$ can be computed via the evaluations of $P$ on a sample from $\mathcal{D}_\mathbf{a}$. Constructing such a distribution $\mathcal{D}$ reduces to the following problem of constructing a geometric set with some nice algebraic properties. We discuss in \Cref{sec:deg-1-decoding} how such a set leads to the distribution $\mathcal{D}_\mathbf{a}$.

\begin{question}
\textcolor{red}{Find two parallel hyperplanes in $k$ dimensions that are $\varepsilon$-close in Euclidean distance such that every non-zero degree-$d$ multilinear polynomial is non-zero on the points of the Boolean cube $\{0,1\}^k$ lying between the two hyperplanes.}
\end{question}

The `closeness' parameter $\varepsilon$ plays a crucial role in the efficiency of the local correction algorithm. It is easy to see (and folklore) that we can get $\varepsilon = 1/k^{\Omega(1)}$. However, we show that we can obtain a construction where $\varepsilon$ is \emph{exponentially small} in $k.$ This is done in \Cref{lemma:local-correction-main}.

The proof of the combinatorial list decoding bound is broken down into two cases depending on the order of elements in the group. The first case is when all elements have order larger than a prime $p_{0}(d)$ (a constant dependent on $d$) and the second case is when the group is a product of $p$-groups for $p < p_0$. For the first case, the key step is an understanding of the anti-concentration properties of low-degree non-sparse polynomials. More precisely, we have the following question.

\begin{question}
    \textcolor{red}{Let $P(x_1,\ldots,x_n)$ be a polynomial of degree $d$ with at least $s$ non-zero monomials. Then how large can $\Pr_{\mathbf{a}\sim \{0,1\}^n}{[P(\mathbf{a}) = 0]}$ be?}
\end{question}
If we take $P = x_1x_2\cdots x_{d-1}\cdot L(x_d,\ldots, x_n)$ where $L$ is a linear polynomial with $s$ monomials, we see that $P$ vanishes with probability approximately $1-2^{-(d-1)}-\bigO(1)/\sqrt{s}$ over (say) the reals. In \Cref{lem:anticonc}, we build on known anti-concentration results~\cite{erdos, MNV} show that this is essentially the worst possible in groups with no elements of small order. 

In the second case, an important step in our proof is a tail inequality for events defined by the vanishing of low-degree polynomials over a field. Let $Q_1(\mathbf{x}),\ldots, Q_t(\mathbf{x})$ be $t$ degree-$d$ polynomials on the same variable set. We know that each is non-zero on a random input with probability at least $2^{-d}$ and hence that the expected number of polynomials vanishing at a uniformly random point $\mathbf{a}\sim \{0,1\}^n$ is at most $(1-2^{-d})\cdot t.$ This leads to the following question.

\begin{question}
\textcolor{red}{Suppose we have a collection of $t$ degree-$d$ polynomials. Can we bound the probability that more than $(1-2^{-d} + \varepsilon)\cdot t$ many of these polynomials vanish at a random point $\mathbf{a}$ in $\{0,1\}^n$?}
\end{question}

Clearly, we cannot get a strong tail bound unless we impose some `independence' constraints on the polynomials (for example, we cannot hope for a strong bound if all the polynomials are in the linear span of a small number of polynomials). We show that we can get a Chernoff-style tail bound under the constraint that the `leading monomials' (under a suitable ordering) of these polynomials are pairwise disjoint. This is done in \Cref{lem:tailbound} using the `Footprint bound'~\cite{GH} (essentially a tool from commutative algebra) and an idea due to Panconesi and Srinivasan~\cite{PS-chernoff}.

We build on this result to prove an optimal bound on the list-decoding radius for degree-$d$ polynomials over small finite fields $\F_p$ (and more generally over groups that are products of $p$-groups, where $p < p_0$). In the setting when we are working with polynomials mapping $\F_p^n$ to $\F_p$, this was done in the works of~\cite{gkz-list-decoding, BhowmickL} via an involved mixture of algebraic and analytic techniques. Unfortunately, these do not seem to be applicable here: one significant reason that appears again and again in our work is that we cannot restrict a given function to an arbitrary subspace in our ambient space since the domain $\{0,1\}^n$ does not have this algebraic structure. Instead we use other combinatorial techniques such as the Sunflower lemma in conjunction with the above tail bound to obtain the stated result.

For local list correction, we follow the algorithm of \cite{ABPSS24-ECCC} modulo changes in parameters to handle higher degree polynomials. The main innovation involves analyzing the behavior of degree-$d$ polynomials on a Hamming slice (points with a fixed Hamming weight) after a random process of reducing the dimension. In particular, assume that we have a fixed degree-$d$ polynomial $R(x_1,\ldots, x_{2k})$ in $2k$ dimensions such that $R$ is non-zero on the Hamming slice of weight $k$. We now choose a random subcube $\mathsf{C}$ by pairing the $2k$ variables at random into $k$ pairs and identifying the variables in each pair. We would like to upper bound the probability that $R$ is zero on the cube $\mathsf{C}$ by a function that goes to $0$ with $k.$\footnote{One might hope to prove such a statement under the weaker assumption that $R$ is simply a non-zero multilinear polynomial of degree $d$. Unfortunately, the simple example $R = x_1+\cdots+ x_{2k}$ over the group $G = \F_2$ shows that such a statement is not possible even in the degree-$1$ case.} We do this by addressing the following two questions.

%we randomly pair and identify the variables and then want to understand how the set of zeroes on a slice changes. For the case of $d = 1$, \cite{ABPSS24-ECCC} exploited the properties of a linear polynomial to understand the behavior. However, it is unclear how to incorporate the analysis technique of \cite{ABPSS24-ECCC} to higher degree polynomials. To tackle this, we have to answer the following two questions.\\

The first question is on how the density of any fixed set on a slice changes under the aforementioned random process. In this work, we are particularly interested in the middle slice, i.e. points of Hamming weight $k$ in $\{0,1\}^{2k}$.

\begin{question}
\textcolor{red}{For any fixed subset $S$ of the middle slice, how does the density of the set $S\cap \mathsf{C}$ (as a subset of the middle slice of $\mathsf{C}$) compare with the density of $S$?}
\end{question}
In \Cref{lem:sampling} we show that the density is almost preserved under the random process. In other words, this random process is a good sampler for subsets of the middle slice. To prove \Cref{lem:sampling}, we show that certain kinds of \emph{Johnson graphs} are good spectral expanders.

The second question is on a quantitative estimate of the number of non-zero points of a degree-$d$ polynomial on a Hamming slice. This  is a natural question, but has not been addressed before as far as we know.

\begin{question}
\textcolor{red}{For a degree-$d$ polynomial $R$ which is non-zero on a Hamming slice, on how many points of the Hamming slice is it non-zero?}
\end{question}

We give a simple lower bound on the number of non-zero points for a degree-$d$ polynomial on the Hamming slice by modifying the proof of the Ore-DeMillo-Lipton-Schwartz-Zippel lemma. We show it in \Cref{lem:DLSZ-slice}.

\section{Preliminaries}
\label{sec:prelims}
Most of our notation and definitions are identical to~\cite{ABPSS24-ECCC}.
\subsection{Notation}
Let $(G, +)$ denote an Abelian group $G$ with addition as the binary operation. For any $g \in G$, let $-g$ denote the inverse of $g \in G$. For any $g \in G$ and integer $a \geq 0$, $a \cdot g$ (or simply $ag$) is the shorthand notation of $\underbrace{g + \ldots + g}_{a \; \mathrm{times}}$ and $-ag$ denotes $a\cdot (-g).$ We say that a group is a {\em $p$-group} if the order of each element is a power of $p$. We say that a group is  a {\em torsion group} if all its elements have finite order. The {\em exponent} of a torsion group is the least common multiple of the orders of all its elements.

For a natural number $n$, we consider functions $f :\Boo^{n} \to G$. We denote the set of functions that can be expressed as a multilinear polynomial of degree $d$, with the coefficients being in $G$ by $\mathcal{P}_{d}(n,G)$. We will simply write $\mathcal{P}_{d}$ when $n$ and $G$ are clear from the context. For a polynomial $P$, we refer to the number of monomials with a non-zero coefficient as the {\em sparsity} of $P$ and denote it by $\spars(P)$. Similarly we use $\deg(P)$ to denote the degree of $P$.

For every alphabet set $\Sigma$ and $\mathbf{x}, \mathbf{y} \in \Sigma^{n}$, let $\delta(\mathbf{x},\mathbf{y})$ denote the relative Hamming distance between $\mathbf{x}$ and $\mathbf{y}$, i.e. $\delta(\mathbf{x}, \mathbf{y}) = |\setcond{i \in [n]}{x_{i} \neq y_{i}}|/n$. For $0 \leq m \leq n$, let $\Boo^{n}_{m}$ denote the set of points in $\Boo^{n}$ of Hamming weight exactly $m$.

For any $\mathbf{x} \in \Boo^{n}$, $|\mathbf{x}|$ denotes the Hamming weight of $\mathbf{x}$. $\tilde{\bigO}(\cdot)$ notation hides factors that are poly-logarithmic in its argument.
% Let $X_{1}, \ldots, X_{D}$ be a partition of the variable set $\set{x_{1}, \ldots, x_{n}}$. A polynomial $P(\mathbf{x})$ is said to be \textit{set-multilinear} with respect to the partition $X_{1}, \ldots, X_{D}$ if each monomial of $P(\mathbf{x})$ consists of at most one variable from each set. Let $\mathcal{P}_{d}^{sm} \subset \mathcal{P}_{d}$ denote the set of degree $d$ set-multilinear polynomials. For a fixed partition $X_{1},\ldots,X_{d}$, let $\mathcal{P}_{d}^{sm}(X_{1}, \ldots, X_{D})$ denote the set of degree $d$ set-multilinear polynomials that are set-multilinear with respect to the partition $X_{1}, \ldots, X_{D}$. For brevity in notation, we relabel the variables $x_{1}, \ldots, x_{n}$ as follows: $(x_{1}, \ldots, x_{n}) = (X_{1}, \ldots, X_{D})$, i.e. the variables in $X_{1}$ appear first, then the variables in $X_{2}$, and so on.\newline
For a polynomial $P(\mathbf{x})$, let $\mathrm{vars}(P)$ denote the variables on which $P$ depends, i.e. the variables that appear in a monomial with non-zero coefficient in $P$.

For any natural number $n$, $U_{n}$ denotes the uniform distribution on $\Boo^{n}$. 

\subsection{Basic definitions and tools}

\paragraph{Probabilistic notions.} For any distribution $X$ on $\Boo^{n}$, let $\mathrm{supp}(X)$ denote the subset of $\Boo^{n}$ on which $X$ takes non-zero probability. For two distributions $X$ and $Y$ on $\Boo^{n}$, the statistical distance between $X$ and $Y$, denoted by $\mathrm{SD}(X,Y)$ is defined as 
\begin{align*}
    \mathrm{SD}(X,Y) = \max_{T\subseteq \{0,1\}^n} |\Pr[X \in T] - \Pr[Y \in T]|
\end{align*}
We say $X$ and $Y$ are $\varepsilon$-close if the statistical distance between $X$ and $Y$ is at most $\varepsilon$.

\paragraph{Coding theory notions.} Fix an Abelian group $G$. We use $\mathcal{P}_d$ to denote the space of multilinear polynomials from $\{0,1\}^n$ to $G$ of degree at most $d.$ More precisely, any element $P\in \mathcal {P}_d$ can be described as
\[
P(x_1,\ldots, x_n) = \sum_{I\subseteq [n] \, : \, |I|\leq d} \alpha_I \prod_{i\in I}x_i
\]
where $\alpha_I\in G$ for each $I.$ On an input $\mathbf{a}\in \{0,1\}^n$, each monomial evaluates to a group element in $G$ and the polynomial evaluates to the sum of these group elements. 

The following is a standard fact about multilinear polynomials which also holds in the setting when the range is an arbitrary Abelian group $G.$ The proof is standard and omitted.
\begin{theorem}[Basic facts about multilinear polynomials]\label{thm:basic}
Let $G$ be any Abelian group. 
\begin{enumerate}
\item Any function $f:\{0,1\}^n\rightarrow G$ has a unique representation as a multilinear polynomial over $G$. In particular two distinct multilinear polynomials cannot agree on all points in $\{0,1\}^n.$ 

\item (DeMillo-Lipton-Schwartz-Zippel (DLSZ) lemma~\cite{DL78,Zippel79,Schwartz80}) More generally, any two distinct multilinear polynomials $P,Q\in \mathcal{P}_d$ differ with probability at least $2^{-d}$ at a uniformly random input from $\{0,1\}^n.$ Equivalently, $\delta(\mathcal{P}_d)\geq 2^{-d}.$
\end{enumerate}
\end{theorem}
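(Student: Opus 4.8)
The plan is to handle the two parts in sequence, both reducing to the standard decomposition of a multilinear polynomial by its last variable, $P(\mathbf{x}) = P_0(x_1,\ldots,x_{n-1}) + x_n\cdot P_1(x_1,\ldots,x_{n-1})$, where $P_0,P_1$ do not involve $x_n$. For Part 1, existence of a multilinear representation follows from the Lagrange-type identity
\[
f(\mathbf{x}) \;=\; \sum_{\mathbf{a}\in\{0,1\}^n} f(\mathbf{a})\cdot \prod_{i:a_i=1} x_i \prod_{i:a_i=0}(1-x_i),
\]
which one checks evaluates to $f(\mathbf{a})$ at every $\mathbf{a}\in\{0,1\}^n$ and, after expanding each product over the integers, is a multilinear polynomial with coefficients in $G$. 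For uniqueness it suffices to show that a multilinear polynomial $R$ that vanishes on all of $\{0,1\}^n$ has all coefficients equal to $0$; applying this to $R=P-Q$ then shows $P=Q$ whenever $P$ and $Q$ agree everywhere. I would prove this by induction on $n$: writing $R=R_0+x_nR_1$, setting $x_n=0$ forces $R_0\equiv 0$ on $\{0,1\}^{n-1}$ and hence $R_0=0$ by induction, after which setting $x_n=1$ forces $R_1\equiv 0$ on $\{0,1\}^{n-1}$ and hence $R_1=0$ by induction. The one point to be deliberate about is to run this argument uniformly over all Abelian groups $G$, finite or infinite, rather than resorting to a counting argument comparing $|G|^{2^n}$ functions with $2^n$ coefficient slots, which only covers finite $G$.

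For Part 2, put $R=P-Q$, a nonzero multilinear polynomial of degree $d'\le d$; it is enough to prove $\Pr_{\mathbf{a}\sim\{0,1\}^n}[R(\mathbf{a})\neq 0]\ge 2^{-d'}\ge 2^{-d}$, which I would establish by induction on $d'=\deg R$. The base case $d'=0$ is immediate, since then $R$ is a nonzero constant of $G$ and the probability is $1$. For $d'\ge 1$, pick a monomial of $R$ of degree exactly $d'$ and a variable $x_j$ occurring in it, and write $R=A+x_jB$ with $A,B$ free of $x_j$; then $B$ contains the corresponding degree-$(d'-1)$ monomial, so $B$ is nonzero of degree exactly $d'-1$. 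Sampling all coordinates except $a_j$ first, condition on the event $B(\cdot)\neq 0$: then the two values $R(\cdot)|_{a_j=0}=A(\cdot)$ and $R(\cdot)|_{a_j=1}=A(\cdot)+B(\cdot)$ differ by the nonzero group element $B(\cdot)$ and therefore cannot both vanish, so $R(\mathbf{a})\neq 0$ with probability at least $\tfrac12$ over $a_j$. The induction hypothesis applied to $B$ gives $\Pr[B(\cdot)\neq 0]\ge 2^{-(d'-1)}$, and multiplying the two bounds yields $\Pr_{\mathbf{a}}[R(\mathbf{a})\neq 0]\ge 2^{-(d'-1)}\cdot\tfrac12 = 2^{-d'}$, completing the induction.

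I do not expect a real obstacle, as both statements are classical; the only places requiring mild care are (i) keeping the uniqueness argument group-agnostic instead of leaning on finiteness of $G$, and (ii) in Part 2, choosing $x_j$ from a top-degree monomial so that the cofactor $B$ is guaranteed nonzero of degree exactly $d'-1$, which is precisely what makes the induction on degree go through (and also what makes the bound $2^{-\deg R}\le 2^{-d}$ hold for every $P,Q\in\mathcal{P}_d$, with no separate $d<n$ caveat needed).
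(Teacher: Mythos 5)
The paper omits this proof as standard, and your argument is exactly the standard one it alludes to: Lagrange interpolation plus induction on $n$ for Part 1, and the decomposition $R=A+x_jB$ with $x_j$ taken from a top-degree monomial for Part 2. Both parts are correct, including the two points you flag (keeping uniqueness group-agnostic and ensuring $B$ is nonzero of degree exactly $d'-1$ so the induction closes), so nothing further is needed.
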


\noindent
We now turn to the kinds of algorithms we will consider. Below, let $\mathcal{F}$ be any space of functions mapping $\{0,1\}^n$ to $G$.\\

\noindent
\begin{definition}[Local Correction Algorithm]
We say that $\mc{F}$ has a $(\delta,q)$-local correction algorithm if there is a probabilistic algorithm that, when given oracle access to a function $f$ that is $\delta$-close to some $P\in \mc{F}$, and given as input some $\mathbf{a} \in \{0,1\}^n$, returns $P(\mathbf{a})$ with probability at least $3/4.$ Moreover, the algorithm makes at most $q$ queries to its oracle.
\end{definition}

\begin{definition}[Local List-Correction Algorithm]\label{defn:local-list-algo}
We say that $\mc{F}$ has a $(\delta,q_1,q_2, L)$-local list correction algorithm if there is a randomized algorithm $\mc{A}$ that, when given oracle access to a function $f$, produces a list of randomized algorithms $\phi_1,\ldots, \phi_L$, where each $\phi_{i}$ has oracle access to $f$  and have the following property: with probability at least $3/4$, for each codeword $P$ that is $\delta$-close to $f$, there exists some $i\in [L]$ such that the algorithm $\phi_i$ computes $P$ with error at most $1/4$, i.e. on any input $\mathbf{a}$, the algorithm $\phi_i$ outputs $P(\mathbf{a})$ with probability at least $3/4$. \newline
Moreover, the algorithm $\mc{A}$ makes at most $q_1$ queries to $f$, while the algorithms $\phi_1,\ldots, \phi_L$ each make at most $q_2$ queries to $f$.
% \snote{Separating out the number of queries made by $\mc{A}$  from those made by $\phi_1,\ldots, \phi_L$ as those are quite different in our case.}
\end{definition}

\noindent
\begin{remark}\label{rem:algos}
Our algorithms can all be implemented as standard Boolean circuits with the added ability to manipulate elements of the underlying group $G$. Specifically, we assume that we can store group elements, perform group operations (addition, inverse) and compare two group elements to check if they are equal.
\end{remark}

\begin{definition}[Combinatorial List Decodability]
We say that $\mc{F}$ is $(\delta,L)$-list decodable if for any function $f$, the number of elements of $\mc{F}$ that are $\delta$-close to $f$ is at most $L.$
\end{definition}

\paragraph{Subcubes of $\{0,1\}^n$.} It will be instrumental in our algorithms to be able to restrict the given function to a small-dimensional subcube and analyze this restriction. We construct such subcubes by first negating a subset of the variables and then identifying them into a smaller set of variables. More precisely, we have the following definition from~\cite{ABPSS24-ECCC}. 

\begin{definition}[Embedding a smaller cube into $\{0,1\}^n$]\label{defn:random-embedding}
Fix any $k \in \mathbb{N}$ and $k \leq n$.  Fix a point $\mathbf{a} \in \Boo^{n}$ and a function $h: [n] \to [k]$. For every $\mathbf{y} \in \Boo^{k}$, $x(\mathbf{y})$ is defined with respect to $\mathbf{a}$ and $h$ as follows:
\begin{align*}
    x(\mathbf{y})_{i} = y_{h(i)} \oplus a_{i} = \begin{cases}
        a_{i}, & \text{if } \, y_{h(i)} = 0 \\
        1 \oplus a_{i}, & \text{if } \, y_{h(i)} = 1
     \end{cases}
\end{align*}
$C_{\mathbf{a},h}$ is the subset in $\Boo^{n}$ consisting of $x(\mathbf{y})$ for every $\mathbf{y} \in \Boo^{k}$, i.e. $C_{\mathbf{a},h} := \setcond{x(\mathbf{y})}{\mathbf{y} \in \Boo^{k}}$. In particular, note that this subcube contains the point $\mathbf{a}$, since $x(0^k) = \mathbf{a}.$

Given any polynomial $P(x_1,\ldots, x_n)$ and any subcube $C_{\mathbf{a},h}$  as above, $P$ restricts naturally to a degree-$d$ polynomial $Q(y_1,\ldots, y_k)$ on $C_{\mathbf{a},h}$ obtained by replacing each $x_i$ by $y_{h(i)}\oplus a_i$. We use $P|_{C_{\mathbf{a},h}}$ to denote the polynomial $Q$.
\end{definition}

\paragraph{Random subcubes.} Now assume that we choose a subcube $C_{\mathbf{a},h}$ by sampling  $\mathbf{a} \sim \{0,1\}^n$ and sampling a random hash function $h:[n]\rightarrow [k]$. For every $\mathbf{y} \in \Boo^{k}$, $x(\mathbf{y})$ is the image of $\mathbf{y}$ in $\Boo^{n}$ under $\mathbf{a}$ and $h$ and $C_{\mathbf{a}, h}$ is the subcube consisting of all $2^{k}$ such images. From \Cref{defn:random-embedding}, we have the following simple observation: For every $\mathbf{y} \in \Boo^{k}$, distribution of $x(\mathbf{y})$ is the uniform distribution over $\Boo^{n}$. This is because $\mathbf{a}$ is uniformly distributed over $\Boo^{n}$.

We use the following sampling lemma for subcubes from \cite{ABPSS24-ECCC} that will be useful at multiple points in the paper.

\begin{lemma}[Sampling lemma for random subcubes]
\label{lemma:sampling-subcube}
(\cite[Lemma 2.4]{ABPSS24-ECCC}) Sample $\mathbf{a}$ and $h$ uniformly at random, and let $\mathsf{C} = C_{\mathbf{a}, h}$ be the subcube of dimension $k$ as described in \Cref{defn:random-embedding}. Fix any $T\subseteq \{0,1\}^n$ and let $\mu:= |T|/2^n.$ Then, for any $\varepsilon, \eta > 0$
\begin{align*}
    \Pr_{\mathbf{a},h}\left[\left|\frac{|T\cap \mathsf{C}|}{2^k} - \mu\right| \geq \varepsilon \right] < \eta
\end{align*}
as long as $k\geq \frac{A}{\varepsilon^4\eta^2}\cdot \log\left(\frac{1}{\varepsilon\eta}\right)$ for a large enough absolute constant $A > 0.$
\end{lemma}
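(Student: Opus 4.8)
The plan is to analyze the random variable $Z := |T \cap \mathsf{C}| = \sum_{\mathbf{y} \in \Boo^k} \mathbf{1}[x(\mathbf{y}) \in T]$ via a second-moment (Chebyshev) argument together with a bounded-difference concentration bound, and to exploit the fact that the $2^k$ sample points $x(\mathbf{y})$ are \emph{pairwise} (indeed, nearly $k$-wise in a suitable sense) independent up to the sharing of the shift $\mathbf{a}$ and the hash $h$. First I would record that for each fixed $\mathbf{y}$, $x(\mathbf{y})$ is uniform over $\Boo^n$, so $\avg{}{Z} = \mu \cdot 2^k$, and the target event is $|Z/2^k - \mu| \ge \varepsilon$. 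Next I would compute $\var{}{Z}$ by analyzing $\Pr[x(\mathbf{y}) \in T \wedge x(\mathbf{y}') \in T]$ for $\mathbf{y} \neq \mathbf{y}'$: conditioned on $h$, the coordinate blocks $h^{-1}(1),\ldots,h^{-1}(k)$ partition $[n]$, and $x(\mathbf{y})$ and $x(\mathbf{y}')$ agree on exactly those coordinates $i$ with $y_{h(i)} = y'_{h(i)}$, i.e. on the union of blocks indexed by $\{j : y_j = y'_j\}$. The key point is that if $\mathbf{y},\mathbf{y}'$ differ in at least one coordinate $j$, then the block $h^{-1}(j)$ is "free" and, as long as this block is nonempty, the pair $(x(\mathbf{y}), x(\mathbf{y}'))$ is distributed as a uniformly random pair of points agreeing on a fixed coordinate set — which still behaves well. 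Quantitatively, a random hash makes every block nonempty of size $\approx n/k$ with overwhelming probability (a union bound over $k$ blocks and a Chernoff bound, using $n \gg k$), and conditioned on that, the restriction of a uniform point to the complement of a block is itself uniform, so the pairwise correlations $\avg{}{(\mathbf{1}[x(\mathbf{y})\in T] - \mu)(\mathbf{1}[x(\mathbf{y}')\in T]-\mu)}$ are small on average over $\mathbf{y},\mathbf{y}'$.

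The cleanest way to make this rigorous is to split the randomness: first fix a "good" hash $h$ (one where all blocks are nonempty, or more strongly where all blocks have size $\ge n/(2k)$), and then work with the remaining randomness of $\mathbf{a}$. Over the randomness of $\mathbf{a}$ alone with $h$ fixed, $Z$ is a function of the $n$ independent bits of $\mathbf{a}$, and flipping one bit $a_i$ changes $x(\mathbf{y})_i$ for \emph{every} $\mathbf{y}$, so $Z$ is not Lipschitz coordinate-by-coordinate in $\mathbf{a}$ — this forces the second-moment route rather than a naive McDiarmid. So I would instead directly bound $\var{\mathbf{a}}{Z \mid h}$: expanding, $\var{\mathbf{a}}{Z\mid h} = \sum_{\mathbf{y},\mathbf{y}'} \mathrm{Cov}_{\mathbf{a}}(\mathbf{1}[x(\mathbf{y})\in T],\mathbf{1}[x(\mathbf{y}')\in T])$, and each covariance is controlled by the size of the largest common block — more precisely, for a uniform $\mathbf{a}$ the pair $(x(\mathbf{y}),x(\mathbf{y}'))$ depends on $\mathbf{a}$ only through the bits in blocks where $y,y'$ \emph{agree} (these determine the common value) plus independent fresh bits elsewhere; working out the joint distribution shows $|\mathrm{Cov}| \le \mu$ always and is much smaller (like $O(\mu/2^{n/(2k)})$, negligible) unless $\mathbf{y} = \mathbf{y}'$... wait, that is too strong; the honest statement is that the covariance is bounded by the collision probability of $T$ under the "agree on blocks $J$" distribution, and averaging over the pair $(\mathbf{y},\mathbf{y}')$ these add up to at most roughly $\mu \cdot 2^k + (\text{small})\cdot 2^{2k}$, giving $\var{\mathbf{a}}{Z\mid h} \le 2^k \mu + o(2^{2k})$. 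Chebyshev then yields $\Pr[|Z/2^k - \mu| \ge \varepsilon/2 \mid h] \le \var{}{Z}/(\varepsilon^2 2^{2k}/4) = O(1/(\varepsilon^2 2^k)) + o(1/\varepsilon^2)$, which is not yet $\eta$-small for the claimed $k$; to get the required $1/\varepsilon^4\eta^2$ dependence one needs a \emph{fourth-moment} bound, so I would actually carry out the analogous computation of $\avg{}{(Z - \avg{}{Z})^4}$, bounding it by $O((2^k\mu)^2) = O(2^{2k}/\varepsilon^4\cdot\varepsilon^4)$ — more carefully $O(4^k/\varepsilon^?)$ — and then a fourth-moment Chebyshev gives failure probability $O(1/(\varepsilon^4 2^{2k}))$ plus lower-order terms, so that $2^k \gtrsim 1/(\varepsilon^2\eta)$, i.e. $k \gtrsim \log(1/\varepsilon\eta)/\varepsilon^?$...

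Honestly, the dependence $k \ge \frac{A}{\varepsilon^4\eta^2}\log(1/\varepsilon\eta)$ strongly suggests the intended proof uses a more careful moment calculation (or an expander-mixing / spectral argument on a Cayley-type graph on $\Boo^n$), so my plan would be: (i) reduce to a good hash $h$ by a union-Chernoff bound costing $o(\eta)$ in the overall probability; (ii) for fixed good $h$, set up the relevant high-moment (say $2t$-th moment for a parameter $t$) estimate of $Z - \avg{}{Z}$ over $\mathbf{a}$, organizing the expansion by how the $2t$ indices $\mathbf{y}_1,\ldots,\mathbf{y}_{2t}$ group into blocks that determine shared bits of $\mathbf{a}$; (iii) bound each term using $|T|/2^n = \mu \le 1$ and the near-uniformity of restrictions, summing the combinatorial number of terms; (iv) apply Markov to $(Z-\avg{}{Z})^{2t}$ and optimize $t$ to convert the moment bound into the stated $\varepsilon,\eta,k$ relationship. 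I expect step (ii)–(iii) — getting the moment expansion tight enough to yield the quantitatively correct $\varepsilon^{-4}\eta^{-2}$ trade-off, rather than a weaker polynomial dependence — to be the main obstacle; everything else (the good-hash reduction, the uniformity-of-marginals observations, the final Markov step) is routine. Since \Cref{lemma:sampling-subcube} is quoted verbatim from \cite[Lemma 2.4]{ABPSS24-ECCC}, in this paper the "proof" is simply a citation, and the work above is the reconstruction one would do to verify it.
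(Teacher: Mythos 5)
Your second-moment framing is the right starting point, and you correctly observe that $x(\mathbf{y})$ and $x(\mathbf{y}')$ disagree exactly on the blocks $h^{-1}(j)$ for the coordinates $j$ where $\mathbf{y}$ and $\mathbf{y}'$ differ. But the proposal then takes a turn it never recovers from: you propose to fix a ``good'' hash $h$ first and bound $\mathrm{Var}_{\mathbf{a}}[Z \mid h]$. Conditioned on $h$, the point $x(\mathbf{y}')$ is a \emph{deterministic} function of $x(\mathbf{y})$ (flip the bits in the disagreeing blocks), so for an adversarial $T$ --- e.g.\ one invariant under flipping exactly those blocks --- the covariance of the two indicators can be the full $\mu(1-\mu)$, and $\mathrm{Var}_{\mathbf{a}}[Z\mid h]$ can be $\Theta(4^k)$. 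No moment computation over $\mathbf{a}$ alone (second, fourth, or $2t$-th) can then succeed; the randomness of $h$ must enter the covariance estimate itself. The block-size / ``good hash'' reduction is irrelevant here.

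The missing idea is this. Averaging over $h$ as well, for a fixed pair $\mathbf{y}\neq\mathbf{y}'$ at Hamming distance $\Delta$ with disagreement set $D\subseteq[k]$, the flip pattern $\bigl(\mathbf{1}[h(i)\in D]\bigr)_{i\in[n]}$ is i.i.d.\ Bernoulli$(\Delta/k)$ and independent of the uniform point $x(\mathbf{y})$; hence $(x(\mathbf{y}),x(\mathbf{y}'))$ is exactly a $\rho$-correlated pair with $\rho = 1-2\Delta/k$, and Parseval gives $|\mathrm{Cov}(\mathbf{1}_T(x(\mathbf{y})),\mathbf{1}_T(x(\mathbf{y}')))| \le |\rho|\,\mu$. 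Since a uniformly random pair $\mathbf{y},\mathbf{y}'\in\{0,1\}^k$ has $\Delta = k/2 \pm O(\sqrt{k})$, the average of $|\rho|$ over pairs is $O(1/\sqrt{k})$, so $\mathrm{Var}_{\mathbf{a},h}[Z] = O(4^k/\sqrt{k})$, and plain Chebyshev gives failure probability $O(1/(\varepsilon^2\sqrt{k}))$, which is below $\eta$ exactly when $k \gtrsim 1/(\varepsilon^4\eta^2)$. So your diagnosis that the $\varepsilon^{-4}\eta^{-2}$ dependence forces a fourth- or higher-moment argument is backwards: the exponents come from the $1/\sqrt{k}$ decay of the \emph{average pairwise correlation}, and the second moment already suffices. (This is the noise-stability computation behind \cite[Lemma 2.4]{ABPSS24-ECCC}; the present paper only cites the lemma and does not reprove it.)
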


\section{Local correction in the unique decoding regime}\label{sec:deg-1-decoding}
In this section, we prove \Cref{thm:uniquedegd}, i.e. we give a local correction algorithm for degree-$d$ polynomials when the error is less than the unique decoding radius (half the minimum distance). The proof of \Cref{thm:uniquedegd} will proceed in two phases:
\begin{itemize}
    \item We give error reduction algorithms that reduce the error from half the minimum distance to sub-constant.
    \item We give a local correction algorithm for degree-$d$ polynomials when the error is sub-constant, say less than $\bigO_{d}(1/ (\log n)^{d})$.
\end{itemize}
The first phase follows from the error reduction algorithm of \cite{ABPSS24-ECCC} and we describe it in \Cref{subsec:error-reduction-close-radius}. In \Cref{subsec:sub-constant-error}, we describe the second phase, which is the new result in this work. We start with a proof overview of the second phase.

\paragraph{Proof overview.} We describe the proof idea behind our local corrector for degree-$d$ polynomials in the sub-constant error regime. Assume that we have oracle access to $f:\{0,1\}^n\rightarrow G$ that is $\delta$-close to an (unknown) polynomial $P\in \mathcal{P}_d(\{0,1\}^n,G).$ For simplicity, let us assume that we want to output the value of $P$ at $\mathbf{a} = 0^{n}$. The proof for an arbitrary $\mathbf{a}$ is more or less the same, except for a minor change. 

The idea (as in other local correction algorithms) is to query $f$ at a set of uniformly distributed (but not independent) points $\mathbf{u}^{(1)},\ldots, \mathbf{u}^{(q)}$. Since these points are uniformly distributed, we are likely to obtain $P(\mathbf{u}^{(1)}),\ldots,P(\mathbf{u}^{(q)})$ in this way. If we could use this information to determine $P(0^n)$ for any polynomial $P$, then we would be done. Indeed, this strategy works when $G=\mathbb{F}_2$~\cite{BLR,AKKLR}. If we restrict to a random $\F_2$-linear subspace $V$ of dimension $d+1$, then the non-zero points of $V$ are (marginally) uniformly distributed and determine the value of $P$ (which is still degree $d$ restricted to $V$) at $0^n.$ Unfortunately, this idea does not make sense for polynomials mapping the Boolean cube to groups other than (vector spaces over) $\F_2$. %For the rest of this overview, we assume that $G = \mathbb{R}$.

An analogous strategy we can employ over any group is to restrict to a random \emph{subcube}. More precisely, we choose a random function $h:[n]\rightarrow [k]$ (for $d< k\ll n$) according to some distribution and consider the random subcube $C = C_{0^n,h}$ as defined in \Cref{sec:prelims} above (informally, we use the function $h$ to identify co-ordinates in $k$ blocks) and query the function $f$ at points in $C$. To use the strategy above, we would like to find points $\mathbf{u}^{(1)},\ldots, \mathbf{u}^{(q)}\in C$ such that 
\begin{enumerate}
    \item[(a)] the points $\mathbf{u}^{(1)},\ldots, \mathbf{u}^{(q)}\in C$ are uniformly distributed, and
    \item[(b)] the values of $P$ at these points determine $P(0^n)$ for an unknown degree-$d$ polynomial $P.$  
\end{enumerate}
These two properties are in tension with each other. To see this, assume that the function $h$ is chosen uniformly at random. In this case, it can be checked that unless $k$ is large (approximately $\sqrt{n}$), the only points in $\{0,1\}^k$ that correspond\footnote{the correspondence maps $\mathbf{y}\in \{0,1\}^k$ to $x(\mathbf{y})\in C$ as defined in \Cref{sec:prelims}} to uniformly random points in $\{0,1\}^n$ are the \emph{perfectly balanced} points (i.e. the points with an equal number of $0$s and $1$s). All other points of $\{0,1\}^k$ correspond to points with expected Hamming weight outside the range $[n/2 - n/k, n/2 + n/k]$ and hence do not `look uniform'. Unfortunately, querying $P$ at the set of perfectly balanced points does not determine $P(0^n)$. Informally, this is because the set of perfectly balanced points lie on a hyperplane not containing $0^n$, and hence even a degree-$1$ polynomial can `distinguish' between these points and $0^n.$\footnote{This reasoning fails over finite fields of fixed positive characteristic and this failure was used in~\cite{bafna2017local} to devise a local correction algorithm over fixed characteristic via this principle. However, this is true over fields of large characteristic and other groups.}

We fix this by choosing $h$ in a non-uniform manner and relaxing criterion (a) to finding \emph{nearly}\footnote{i.e. statistically close to} uniformly distributed points in $C$. More specifically, assume that we have a probability distribution $\mu = (\mu_1,\ldots, \mu_k)$ over $[k]$, and we sample each $h(i)$ independently according to $\mu.$ Again the points in the cube $C$ that are uniformly distributed correspond to points on the hyperplane $\sum_{j\in [k]}\mu_j y_j = 1/2$ in $\{0,1\}^k$. However, we consider the points corresponding to $\mathbf{y}$ satisfying $|\sum_{j\in [k]}\mu_j y_j - 1/2|\leq \varepsilon$, i.e. points in between two close-by hyperplanes, i.e. these are the `nearly-balanced points' under a weighted version of Hamming weight on $\{0,1\}^k$. Standard probabilistic arguments imply that if $\varepsilon \ll 1/\sqrt{n},$ then these correspond to nearly uniformly distributed points in $\{0,1\}^n$ thus satisfying the modified version of criterion (a). Intuitively, getting $\varepsilon$ to be so small and meaningful at the same time seems easier when some of the weights $\mu_1,\ldots, \mu_k$ are also similarly small (though not all of them can be since they sum to $1$). Standard results about Boolean threshold functions~\cite{MTT, Muroga} show that a hyperplane in $k$ dimensions of this form does not require weights smaller than (approximately) $1/k^k$. This forces $k = \tilde{\Omega}(\log n)$ for this strategy. (More generally,~\cite{bafna2017local} showed a lower bound of $\tilde{\Omega}(\log n)$ queries even for decoding linear polynomials over the Boolean cube and fields of large characteristic.) Indeed, we take $k = \Theta_d(\log n)$, so that this strategy becomes feasible. For this value of $k$, we  will show that we can take $\varepsilon = 1/2^{\Omega_d(k)}.$

The problem of designing $\mu$ can now be stated (in a more general form) in geometric language: find two parallel hyperplanes $H_1, H_2$ in $k$ variables that are at distance $1/2^{\Omega_d(k)}$ (this corresponds to making $\varepsilon$ small) such that evaluating a degree-$d$ polynomial $P$ at the points of $\{0,1\}^k$ between $H_1$ and $H_2$ allow us to deduce the value of $P$ at all other points of the Boolean cube. A set with the latter property is sometimes called a (degree-$d$) \emph{interpolating set} in the literature. A standard interpolating set is the set of points of Hamming weight in the range $\{a,\ldots, a+d\}$ for any non-negative integer $a\leq k-d.$ Unfortunately, the two hyperplanes in this case are at a distance of $\Omega(1/\sqrt{k})$ from each other, which is not good enough in our setting. Indeed, the main technical innovation of this section (\Cref{lemma:local-correction-main} below) is finding a pair of hyperplanes with this specific property that is exponentially close. We believe that this result is independently interesting.

To do this, we use a pair of carefully chosen Boolean threshold functions that require weights of exponentially different magnitudes to describe. We do this in a way that allows us to prove the interpolating set property via a modified version of the DeMillo-Lipton-Schwartz-Zippel lemma (\Cref{lemma:local-correction-main} below). To reduce the query complexity of the algorithm, we also need to choose a \emph{small} subset of the nearly-balanced points as defined above that form a small interpolating set. Over a field, an interpolating set of size $\bigO(k^d)$ follows immediately from a linear algebraic argument. We can get a set $\mc{S}$ of similar size $q = \bigO_d(k^d)$ that works for any Abelian group. We then sample the function $h$ as described above to obtain the corresponding (nearly uniformly distributed) points $\mathbf{u}^{(1)},\ldots, \mathbf{u}^{(q)}\in \{0,1\}^n.$

\paragraph{Comparison with~\cite{ABPSS24-ECCC}.} At a high level, the final construction seems to use similar ideas to an analogous step in the low-error local correction algorithm of~\cite{ABPSS24-ECCC}, but the technical details are quite different. If we consider the random $n\times q$ matrix $A$ that contains $\mathbf{u}^{(1)},\ldots, \mathbf{u}^{(q)}$ as its columns, then in the result of~\cite{ABPSS24-ECCC} it is shown how to use a single hyperplane\footnote{in fact the Boolean points on the hyperplane} with large coefficients to define the \emph{rows} of the matrix $A$. Here, each point in between the hyperplanes allows us to sample a distinct \emph{column} of the matrix $A.$

\subsection{Regime of sub-constant error}\label{subsec:sub-constant-error}
In this subsection, we give a local correction algorithm for degree $d$ polynomials in the setting of sub-constant error. Formally, we prove the following statement in this subsection.

\begin{thmbox}
\begin{restatable}[Local correction for sub-constant error]{theorem}{uniquesmallerr}\label{thm:uniquedeg-d-smallerror}
Fix a degree parameter $d \in \mathbb{Z}_{> 0}$ and an Abelian group $G$. The space $\mathcal{P}_{d}(\Boo^{n}, \, G)$ has a $(\delta, q)$-local correction algorithm where $q = \bigO_{d}((\log n)^{d})$ and $\delta = 1/100 q$.
\end{restatable}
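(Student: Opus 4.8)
I would reduce \Cref{thm:uniquedeg-d-smallerror} to the construction of one ``exponentially thin slab'' that doubles as a degree-$d$ interpolating set, and then build the corrector by a random restriction to a logarithmic-dimensional subcube. I will phrase everything for a general target point $\mathbf{a}\in\Boo^{n}$; the overview's remark that arbitrary $\mathbf{a}$ is essentially the same as $\mathbf{a}=0^{n}$ will be visible in the argument (only a coordinate-wise complementation changes). The engine I plan to use is \Cref{lemma:local-correction-main}, in the following form: for every $d$ and every sufficiently large $k$ there exist a probability vector $\mu=(\mu_1,\dots,\mu_k)$ on $[k]$ and a set $\mathcal{S}\subseteq\Boo^{k}$ with $|\mathcal{S}|=\bigO_d(k^{d})$ such that (i) $|\langle\mu,\mathbf{y}\rangle-\tfrac12|\le 2^{-\Omega_d(k)}$ for every $\mathbf{y}\in\mathcal{S}$, and (ii) $\mathcal{S}$ is a degree-$d$ interpolating set with \emph{integer} coefficients, i.e.\ for each $\mathbf{b}\in\Boo^{k}$ there are integers $(\lambda_{\mathbf{y}})_{\mathbf{y}\in\mathcal{S}}$ with $P(\mathbf{b})=\sum_{\mathbf{y}\in\mathcal{S}}\lambda_{\mathbf{y}}P(\mathbf{y})$ for all $P\in\mathcal{P}_d(\Boo^{k},G)$. (It suffices that the integer identities $\prod_{i\in I}b_i=\sum_{\mathbf{y}}\lambda_{\mathbf{y}}\prod_{i\in I}y_i$ hold for all $|I|\le d$; these then lift to every Abelian $G$ by $\Z$-module distributivity. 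Also, by (i), $0^{k}\notin\mathcal{S}$, so the query points below are genuinely randomized.)

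\textbf{Assembling the corrector.} I would take $k=C_d\log n$ for a large constant $C_d$ (depending on $d$), set $q=|\mathcal{S}|=\bigO_d((\log n)^{d})$, and fix once and for all integers $(\lambda_{\mathbf{y}})_{\mathbf{y}\in\mathcal{S}}$ from property (ii) for $\mathbf{b}=0^{k}$. The algorithm, on input $\mathbf{a}$ and oracle $f$, samples $h\colon[n]\to[k]$ with the values $h(i)$ i.i.d.\ from $\mu$, forms $C_{\mathbf{a},h}$ as in \Cref{defn:random-embedding}, queries $f$ at the points $\mathbf{u}^{(\mathbf{y})}:=x(\mathbf{y})$ for $\mathbf{y}\in\mathcal{S}$, and returns $\sum_{\mathbf{y}\in\mathcal{S}}\lambda_{\mathbf{y}}f(\mathbf{u}^{(\mathbf{y})})$. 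For correctness, let $P\in\mathcal{P}_d$ be $\delta$-close to $f$ and let $Q:=P|_{C_{\mathbf{a},h}}\in\mathcal{P}_d(\Boo^{k},G)$ be its restriction. Since $x(0^{k})=\mathbf{a}$ and $Q(\mathbf{y})=P(x(\mathbf{y}))$, property (ii) applied to $Q$ gives
\[
P(\mathbf{a})=Q(0^{k})=\sum_{\mathbf{y}\in\mathcal{S}}\lambda_{\mathbf{y}}Q(\mathbf{y})=\sum_{\mathbf{y}\in\mathcal{S}}\lambda_{\mathbf{y}}P(\mathbf{u}^{(\mathbf{y})}),
\]
so the output equals $P(\mathbf{a})$ as soon as $f$ agrees with $P$ on all $q$ query points.

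\textbf{Controlling the error.} The remaining step is to bound the chance of a bad query. Fixing $\mathbf{y}\in\mathcal{S}$, \Cref{defn:random-embedding} gives $\mathbf{u}^{(\mathbf{y})}_i=y_{h(i)}\oplus a_i$, so over the random $h$ the coordinates of $\mathbf{u}^{(\mathbf{y})}$ are independent bits whose bias is $\langle\mu,\mathbf{y}\rangle$ where $a_i=0$ and $1-\langle\mu,\mathbf{y}\rangle$ where $a_i=1$; by (i) each such bias is within $\varepsilon:=2^{-\Omega_d(k)}$ of $\tfrac12$. Thus $\mathbf{u}^{(\mathbf{y})}$ is a product of $n$ nearly unbiased bits, and a routine total-variation estimate for product distributions (Pinsker plus $1-H(\tfrac12+\eta)=\Theta(\eta^2)$) gives $\mathrm{SD}(\mathbf{u}^{(\mathbf{y})},U_n)=\bigO(\varepsilon\sqrt{n})$, hence $\Pr[f(\mathbf{u}^{(\mathbf{y})})\neq P(\mathbf{u}^{(\mathbf{y})})]\le\delta+\bigO(\varepsilon\sqrt{n})$. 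A union bound over $\mathcal{S}$ bounds the failure probability by $q\delta+\bigO(q\varepsilon\sqrt{n})$; with $\delta=1/(100q)$ the first term is $1/100$, and choosing $C_d$ large enough that $2^{-\Omega_d(k)}\le n^{-1}$ makes the second term $\bigO((\log n)^{d}/\sqrt{n})=o(1)$. So the corrector succeeds with probability $\ge 3/4$ using $q=\bigO_d((\log n)^{d})$ queries, which is \Cref{thm:uniquedeg-d-smallerror} modulo \Cref{lemma:local-correction-main}.

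\textbf{Where the difficulty lies.} Everything above is bookkeeping; the hard part is \Cref{lemma:local-correction-main} itself --- producing a degree-$d$ interpolating set of size $\bigO_d(k^{d})$ that fits inside a slab of \emph{exponentially small} width around $\langle\mu,\mathbf{y}\rangle=\tfrac12$. The obvious interpolating sets (a Hamming ball of radius $d$, or the points whose Hamming weight lies in a window of $d+1$ consecutive values) sit between parallel hyperplanes only $\Theta(1/\sqrt{k})$ apart, which would force $\varepsilon=\Theta(1/\sqrt{k})$ and break the sampling estimate. Getting an exponentially thin slab forces $\mu$ to come from a pair of Boolean threshold functions whose defining weights have exponentially different magnitudes, and then the interpolating-set property must be re-established by an induction modelled on the proof of the DeMillo--Lipton--Schwartz--Zippel lemma but adapted to this weighted geometry; one also has to verify that the interpolation can be realized with integer coefficients (via unimodularity-type facts about the associated linear system) so that it remains valid over an arbitrary Abelian group $G$. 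I expect this construction and its analysis to be the bulk of the proof.
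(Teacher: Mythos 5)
Your proposal is correct and follows essentially the same route as the paper: reduce to the weight-balanced interpolating set of \Cref{thm:weight-interpolating} (obtained from \Cref{lemma:local-correction-main} plus an integrality argument for the linear system), restrict to a random subcube of dimension $\Theta_d(\log n)$ sampled coordinate-wise from the weight distribution $\mu$, and conclude via the statistical-distance bound for near-unbiased product distributions together with a union bound over the $\bigO_d(k^d)$ query points. The paper's proof of \Cref{thm:uniquedeg-d-smallerror} is exactly this bookkeeping, with the technical weight lying in \Cref{lemma:local-correction-main} and the Diophantine step, just as you identify.
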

\end{thmbox}

To prove \Cref{thm:uniquedeg-d-smallerror}, we use \Cref{thm:weight-interpolating} (see below). It roughly says that there exists a set of points such that an arbitrary evaluation of any degree-$d$ polynomial can be computed using evaluations on this set and this set consists of points whose relative ``weighted Hamming weights'' are very close to $1/2$.

\begin{restatable}[Weight balanced interpolating set]{theorem}{weightinterpolating}\label{thm:weight-interpolating}
Fix a degree parameter $d\geq 0$ and a dimension parameter $k \in \mathbb{Z}_{> 0}$ that is divisible by $10(d+1)$. There exists a set $\mathcal{S} \subseteq \Boo^{k}$ such that for every Abelian group $G$, $\mathcal{S}$ satisfies the following properties:
\begin{enumerate}
    \item\protect{[Interpolating set].} For each point $\mathbf{b} \in \Boo^{k}$, there exists integral coefficients $c_{1},\ldots,c_{|S|}$ such that for every degree-$d$ polynomial $Q(y_{1},\ldots,y_{k}) \in \mathcal{P}_{d}(\Boo^{k}, \; G)$, we have,
    \begin{align*}
        Q(\mathbf{b}) \; = \; \sum_{\mathbf{u} \in \mathcal{S}} \, c_{\mathbf{u}} Q(\mathbf{u})
    \end{align*}
    \item\protect{[Weighted balanced].} There exists positive integers $w_{1},\ldots,w_{k}$ such that
    \begin{align*}
        \mathcal{S} \; \subseteq \; \setcond{\mathbf{y} \in \Boo^{k}}{ \left| \sum_{j = 1}^{k} w_{j} y_{j} - \frac{W}{2} \right| \, \leq \, \frac{W}{2^{\Omega(k/(d+1))}} }, 
    \end{align*}
    where $W := \sum_{j=1}^{k} w_{j}$. 
\end{enumerate}
Furthermore, $|\mathcal{S}|$ is at most $\bigO_d(k^{d})$.
\end{restatable}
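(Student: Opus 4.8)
The plan is to construct $\mathcal{S}$ explicitly using a carefully chosen weight vector and then verify both properties separately. I would take the weights $w_j = 2^{j-1}$ for $j \in \{1,\dots,k\}$, so that $W = \sum_j w_j = 2^k - 1$, and the target "weighted Hamming weight" $W/2$ is not an integer but is at half-integer distance from $2^{k-1}-1$ and $2^{k-1}$. The point is that a vector $\mathbf y \in \{0,1\}^k$ has $\sum_j w_j y_j$ equal to the integer whose binary expansion is $\mathbf y$; so $|\sum_j w_j y_j - W/2| \le W/2^{\Omega(k/(d+1))}$ is a statement that the integer encoded by $\mathbf y$ lies in a narrow dyadic window around $2^{k-1}$. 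I would then let $\mathcal{S}$ be the set of $\mathbf y$ whose top $\Theta(k/(d+1))$ bits are "balanced" in a prescribed pattern (forcing the value to be close to $2^{k-1}$) while the low-order bits range over an interpolating structure for degree $d$. Concretely: split $[k]$ into blocks; use the high block to pin down the value near $W/2$ (giving property 2 essentially by construction, since changing only the lowest $\ell$ bits moves the value by at most $2^\ell$, so $\varepsilon \approx 2^{\ell}/2^{k} = 2^{-(k-\ell)}$), and use the low-order coordinates as the "free" coordinates on which the interpolating-set property will be established.

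For the interpolating-set property (property 1), the key is a modified DeMillo–Lipton–Schwartz–Zippel argument: I want to show that if $Q$ is a degree-$d$ polynomial vanishing on all of $\mathcal S$, then $Q$ vanishes on all of $\{0,1\}^k$, and moreover that the $c_{\mathbf u}$ can be taken integral. The integrality is important because the coefficients of $Q$ live in an arbitrary abelian group $G$, so all interpolation must be done over $\mathbb Z$; I would obtain this by arguing that the relevant "evaluation-at-$\mathbf u$" functionals, restricted to $\mathcal P_d(\{0,1\}^k,\mathbb Z)$, generate the evaluation functional at $\mathbf b$ as a $\mathbb Z$-linear combination — equivalently that a certain integer matrix (rows indexed by $\mathcal S$, columns by low-degree monomials) has a unimodular-type property allowing unit-determinant solves. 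The cleanest route is an explicit inductive/triangular construction: order the monomials of degree $\le d$ and show one can pick points of $\mathcal S$ so the evaluation matrix is (up to permutation) upper-triangular with $\pm 1$ diagonal, so Cramer's rule gives integral coefficients. This is where I would lean on "standard facts about degree-$d$ Boolean functions": e.g., that a degree-$d$ polynomial is determined by its values on any Hamming ball of radius $d$, or on points of weight in any length-$(d+1)$ window, together with the observation that such an interpolating structure can be \emph{embedded} inside the narrow dyadic window by using the low-order bits to realize the window-of-weights pattern and the high-order bits to stay pinned near $2^{k-1}$.

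The size bound $|\mathcal S| = O_d(k^d)$ follows because the low-order "free" block has size $O(k)$ and we only need an interpolating set for degree-$d$ polynomials in $O(k)$ variables, which has size $\binom{O(k)}{\le d} = O_d(k^d)$ — here I should check that the standard linear-algebraic bound of $\sum_{i\le d}\binom{m}{i}$ for an interpolating set of $m$-variate degree-$d$ polynomials survives the restriction to $\{0,1\}$ and the integrality requirement, which it does since one can greedily pick points keeping the monomial evaluation matrix full-rank over $\mathbb Q$ and then argue integrality of the solve via the triangular structure above.

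The main obstacle I anticipate is the \emph{simultaneous} satisfaction of both properties: property 2 wants $\mathcal S$ concentrated in an exponentially narrow slab (forcing almost all the "action" into $O(k/(d+1))$ low-order bits), while property 1 wants $\mathcal S$ rich enough to interpolate degree-$d$ polynomials in \emph{all} $k$ variables. Reconciling these is exactly the point of choosing the weights geometrically ($2^{j-1}$) rather than uniformly: with geometric weights, a length-$(d+1)$ "window of Hamming weights" pattern among the low-order $\Theta(k/(d+1))$ coordinates only perturbs the weighted sum by $O(2^{\text{(position of that block)}})$, which can be kept $\le W/2^{\Omega(k/(d+1))}$ by placing the free block low enough, while the divisibility of $k$ by $10(d+1)$ gives enough room to lay out $d+1$ such sub-blocks. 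Verifying that the resulting configuration is genuinely a degree-$d$ interpolating set — i.e. that "windows of weights in disjoint low-order sub-blocks, all pinned near the center" still span all degree-$\le d$ evaluations over $\mathbb Z$ — is the delicate combinatorial core, and I would prove it by an induction on $d$ peeling off one sub-block (hence one "degree of freedom") at a time, mirroring the inductive proof of the DLSZ lemma.
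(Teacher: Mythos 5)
Your overall strategy (geometric weights so that the slab is exponentially narrow, a window-of-Hamming-weights interpolating structure, and a separate integrality argument) has the right general shape, and you correctly identify integrality over $\mathbb{Z}$ as a necessary ingredient for arbitrary $G$. However, the specific weight assignment $w_j = 2^{j-1}$ for \emph{individual} coordinates breaks Property 1 and cannot be repaired within your framework. With these weights, $\sum_j w_j y_j$ is the integer whose binary expansion is $\mathbf{y}$, and requiring it to lie within $W/2^{\Omega(k/(d+1))}$ of $W/2 \approx 2^{k-1}$ forces the top $\Theta(k/(d+1))$ coordinates of every point of $\mathcal{S}$ into one of (essentially) two fixed patterns. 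Consequently there is a non-zero polynomial of degree at most $d$ (already of degree $1$ for large $k$) depending only on those pinned coordinates that vanishes on all of $\mathcal{S}$, so $\mathcal{S}$ cannot be an interpolating set for degree-$d$ polynomials in all $k$ variables. You flag this tension yourself, but placing ``the free block low enough'' does not resolve it: the high-order coordinates never vary on $\mathcal{S}$, and no structure in the low-order bits can recover the coefficients of monomials containing $y_k, y_{k-1}, \dots$.

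The paper's construction resolves exactly this by assigning weights block-wise rather than coordinate-wise: the $k$ coordinates are partitioned into $m = k/r$ blocks of size $r = 10(d+1)$, and \emph{every} coordinate in block $i$ receives the same weight $2^{i-1}$. The slab condition then constrains only the vector of block Hamming weights $(|\mathbf{y}[1]|,\dots,|\mathbf{y}[m]|)$, and the key point (\Cref{clm:extendBell}) is that, conditioned on any admissible setting of the higher blocks, each block still has an interval of at least $d+1$ admissible Hamming weights; since $r/2 \ge 5(d+1)$ there is always room for such a window, and a window of $d+1$ consecutive weights supports a degree-$d$ hitting set (\Cref{clm:hamming-ball-d-non-zero}). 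An induction peeling off one block at a time, splitting the polynomial according to its degree in the current block, gives the hitting-set property with $|\mathcal{S}| = \bigO_d(k^d)$. For integrality, the paper also does not use your proposed triangular evaluation matrix (whose existence is unclear for this non-standard point set); it instead invokes a criterion for integral solvability of Diophantine linear systems and verifies its hypothesis by interpreting a putative rational obstruction as a non-zero polynomial over $\mathbb{Q}/\mathbb{Z}$ vanishing on $\mathcal{S}$, contradicting the hitting-set property.
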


\noindent
Before we prove \Cref{thm:weight-interpolating}, let us first see how \Cref{thm:weight-interpolating} is useful in designing a local correction algorithm and to prove \Cref{thm:uniquedeg-d-smallerror}. Below, we assume \Cref{thm:weight-interpolating} and prove \Cref{thm:uniquedeg-d-smallerror}.

\begin{proof}[Proof of \Cref{thm:uniquedeg-d-smallerror}]
Fix an input point $\mathbf{a} \in \Boo^{n}$ to the local correction algorithm. Let $f(x_{1},\ldots,x_{n}) : \Boo^{n} \to G$ be the input function with $\delta(f, \mathcal{P}_{d}) \leq \delta$. Let $P(\mathbf{x})$ be the unique degree $d$ polynomial such that $\delta(f, P) = \delta$. Our goal is to compute $P(\mathbf{a})$ with probability at least $3/4$ using oracle queries to $f(\mathbf{x})$.

For a fixed function $h: [n] \to [k]$, let $C_{\mathbf{a}, h}$ denote the subcube as defined in \Cref{defn:random-embedding}. For a probability distribution $\mu = (\mu_{1},\ldots,\mu_{k})$ on $[k]$, sampling a random function $h:[n] \to [k]$ according to $\mu$ means the following: For each $i \in [n]$ sample independently $h(i) \sim \mu$, i.e. $h(i)$ is equal to $j$ with probability $\mu_{j}$. We will define it shortly using $h \sim \mu$.

Let $\mathcal{S}$ be the set and $w_{1},\ldots,w_{k}$ be the positive integers as described in \Cref{thm:weight-interpolating}. Let $\mu := \paren{\dfrac{w_{1}}{W}, \ldots, \dfrac{w_{k}}{W}}$. We now describe the local correction algorithm.

\begin{algobox}
\begin{algorithm}[H]
\DontPrintSemicolon
\KwIn{$f(x_{1},\ldots,x_{n})$, $\mathbf{a} \in \Boo^{n}$, $\delta$}
\vspace{2mm}

$k \leftarrow A\cdot (d+1)\cdot (\log n) $ \tcp*{$A$ an absolute constant chosen below}
\vspace{2mm}
Sample a random function $h: [n] \to [k]$ according to the distribution $\mu$ \label{algoline:local-correction-sample-hash} \tcp*{The only source of randomness}
\vspace{2mm}
$g(y_{1},\ldots,y_{k}) \leftarrow f(x_{1},\ldots,x_{n})|_{C_{\mathbf{a}, h}}$ \;
\vspace{2mm}
Let $\mathbf{b} = 0^{k}$ and $c_{1},\ldots,c_{|\mathcal{S}|}$ be the integral coefficients for $0^{k}$ from \Cref{thm:weight-interpolating}. \;
\vspace{2mm}
\Return{$\sum_{\mathbf{u} \in \mathcal{S}} \; c_{\mathbf{u}} \, g(\mathbf{u}) $}

\caption{Local correction algorithm for sub-constant error}
\label{algo:local-correction-sub-constant-error}
\end{algorithm}
\end{algobox}

\paragraph{Queries:}The number of queries is equal to $|\mathcal{S}| \leq \bigO_d(k^d) = \bigO_d(\log n)^d$ by \Cref{thm:weight-interpolating} and the value of $k$ in \Cref{algo:local-correction-sub-constant-error}.

 \paragraph{Correctness:}We now argue that \Cref{algo:local-correction-sub-constant-error} returns $P(\mathbf{a})$ with probability $\geq 3/4$. Let $E \subset \Boo^{n}$ denote the set of points where $f$ and $P$ disagree, i.e.
 \begin{align*}
     E \; = \; \setcond{\mathbf{x} \in \Boo^{n}}{f(\mathbf{x}) \neq P(\mathbf{x})}
 \end{align*}
We have $|E|/2^{n} \leq \delta$ because $\delta(f, P) \leq \delta$. Recall that for each $\mathbf{y} \in \Boo^{k}$, for all $i \in [n]$, $x(\mathbf{y})_{i} = y_{h(i)} \oplus a_{i}$, where $h$ is function sampled in \Cref{algoline:local-correction-sample-hash} of \Cref{algo:local-correction-sub-constant-error}.

 We first argue that if $f$ and $P$ agree at $x(\mathbf{y})$ for every $\mathbf{y} \in \mathcal{S}$, then \Cref{algo:local-correction-sub-constant-error} returns $P(\mathbf{a})$.  If for every $\mathbf{y} \in \mathcal{S}$, $x(\mathbf{y})$ is not in $E$, then $g = P|_{\mathcal{S}}$. Since $x(0^{k}) = \mathbf{a}$, the first property of \Cref{thm:weight-interpolating} implies $g(0^{k}) = P(\mathbf{a})$.

 Next we show that with probability at least $3/4$, for every $\mathbf{y} \in \mathcal{S}$, $x(\mathbf{y}) \notin E$. Equivalently, we show the following:
 \begin{equation}\label{eqn:uniquedeg-d-error-probability}
     \Pr_{h}[\exists \; \mathbf{y} \in \mathcal{S} \, \text{ s.t. } \, x(\mathbf{y}) \in E] \; < \; \dfrac{1}{4}
 \end{equation}
 First, we understand the distribution of $x(\mathbf{y})$ for a fixed $\mathbf{y} \in \mathcal{S}$ under a random function $h: [n] \to [k]$ sampled according to $\mu$. Fix any $\mathbf{y} \in \mathcal{S}$ and a coordinate $i \in [n]$. Since $\mathbf{a}$ is fixed, we have,
 \begin{align*}
     \Pr_{h \sim \mu}[x(\mathbf{y})_{i} = 1] \, = \, \mathbb{E}_{h \sim \mu}[x(\mathbf{y})_{i}] \, = \, \sum_{j = 1}^{k} \dfrac{w_{j}}{W} \cdot y_{j}
 \end{align*}
 From the second property in \Cref{thm:weight-interpolating}, we have,
 \begin{align*}
    \left| \Pr_{h \sim \mu}[x(\mathbf{y})_{i} = 1] - \dfrac{1}{2} \right| \leq \dfrac{1}{2^{\Omega(k/(d+1))}}
 \end{align*}
 For each $\mathbf{y} \in \Boo^{k}$, the coordinates $\setcond{x(\mathbf{y})_{i}}{i \in [n]}$ are mutually independent (this is because $h[i]$ is sampled independently for each $i \in [n]$) and $1/2^{\Omega(k/(d+1))}$-close to the uniform distribution over $\Boo$. From \Cref{fact:close-to-uniform}, we know that for every $\mathbf{y} \in \mathcal{S}$, $x(\mathbf{y})$ is $\sqrt{n}/2^{\Omega(k/(d+1))}$-close to the uniform distribution (in statistical distance).
 \begin{fact}[Closeness to the uniform distribution]\label{fact:close-to-uniform}
    (See \cite[Theorem 5.5, Claim 5.6]{MansourLec}). Let $\eta > 0$. Let $\mathcal{D'}$ be a distribution on $\Boo^{n}$ such that for any $\mathbf{y} \sim \mathcal{D'}$, the co-ordinates of $\mathbf{y}$ are independent and for all $i \in [n]$,
\begin{align*}
    1/2 - \eta \leq \Pr[y_{i} = 1] \leq 1/2 + \eta.
\end{align*}
Then $\mathcal{D'}$ is $\bigO(\eta \sqrt{n})$-close to the uniform distribution over $\Boo^{n}$. 
 \end{fact}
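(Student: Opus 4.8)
The plan is to pass through the $\ell_2$ distance between the two probability vectors, which for a product distribution is governed by a one-line computation of the collision probability. Write $p_i := \Pr_{\mathbf{y}\sim\mathcal{D}'}[y_i = 1]$ and $\epsilon_i := p_i - \tfrac12$, so that $|\epsilon_i|\le\eta$ for every $i\in[n]$, and write $P(\mathbf{x})$ for the probability mass that $\mathcal{D}'$ assigns to $\mathbf{x}\in\Boo^n$. Since statistical distance is always at most $1$, we may assume $\eta\sqrt{n}\le\tfrac12$ (equivalently $4n\eta^2\le 1$), since otherwise $\mathrm{SD}(\mathcal{D}',U_n)\le 1\le 2\eta\sqrt n=\bigO(\eta\sqrt n)$ and there is nothing to prove. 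Under this assumption it will suffice to show the collision probability satisfies $\sum_{\mathbf{x}}P(\mathbf{x})^2 \le 2^{-n}\cdot(1+\bigO(n\eta^2))$.

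Indeed, recalling that $\mathrm{SD}(\mathcal{D}',U_n)=\tfrac12\sum_{\mathbf{x}\in\Boo^n}\abs{P(\mathbf{x})-2^{-n}}$, Cauchy--Schwarz over the $2^n$ points of $\Boo^n$ gives
\begin{align*}
\mathrm{SD}(\mathcal{D}',U_n)\ \le\ \tfrac12\cdot 2^{n/2}\paren{\sum_{\mathbf{x}\in\Boo^n}\paren{P(\mathbf{x})-2^{-n}}^2}^{1/2}\ =\ \tfrac12\cdot 2^{n/2}\paren{\sum_{\mathbf{x}}P(\mathbf{x})^2-2^{-n}}^{1/2},
\end{align*}
using the identity $\sum_{\mathbf{x}}(P(\mathbf{x})-2^{-n})^2=\sum_{\mathbf{x}}P(\mathbf{x})^2-2^{-n}$. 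Then, since $\mathcal{D}'$ is a product distribution, $\sum_{\mathbf{x}}P(\mathbf{x})^2=\prod_{i=1}^n\paren{p_i^2+(1-p_i)^2}$, and substituting $p_i=\tfrac12+\epsilon_i$ gives the exact identity $p_i^2+(1-p_i)^2=\tfrac12(1+4\epsilon_i^2)$. Hence, using $1+t\le e^t$,
\begin{align*}
\sum_{\mathbf{x}}P(\mathbf{x})^2\ =\ 2^{-n}\prod_{i=1}^n(1+4\epsilon_i^2)\ \le\ 2^{-n}\exp\paren{4\sum_{i=1}^n\epsilon_i^2}\ \le\ 2^{-n}e^{4n\eta^2}.
\end{align*}
Plugging this into the displayed bound yields $\mathrm{SD}(\mathcal{D}',U_n)\le\tfrac12\sqrt{e^{4n\eta^2}-1}$, and since $4n\eta^2\le 1$ one can use $e^t-1\le 2t$ on $[0,1]$ to conclude $\mathrm{SD}(\mathcal{D}',U_n)\le\tfrac12\sqrt{8n\eta^2}=\sqrt2\cdot\eta\sqrt n=\bigO(\eta\sqrt n)$.

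There is no genuine obstacle in this argument, but it is worth noting why the most naive approach falls short: a hybrid argument that rewrites the coordinates of $U_n$ one at a time as those of $\mathcal{D}'$ costs $\mathrm{SD}(\mathrm{Ber}(p_i),\mathrm{Ber}(\tfrac12))=|\epsilon_i|\le\eta$ per step and only yields the weaker bound $\bigO(n\eta)$. The quadratic improvement to $\bigO(\eta\sqrt n)$ comes precisely from working with the $\ell_2$ norm — it is $\sum_i\epsilon_i^2\le n\eta^2$, rather than $\sum_i|\epsilon_i|\le n\eta$, that controls the collision probability. An alternative route with the same savings, if one prefers it, is Pinsker's inequality combined with tensorization of KL divergence, $\mathrm{SD}(\mathcal{D}',U_n)^2\le\tfrac12\sum_{i=1}^n\mathrm{KL}\paren{\mathrm{Ber}(p_i)\,\|\,\mathrm{Ber}(\tfrac12)}$, together with the second-order Taylor estimate $\mathrm{KL}(\mathrm{Ber}(\tfrac12+\epsilon)\,\|\,\mathrm{Ber}(\tfrac12))=\bigO(\epsilon^2)$, again after the trivial reduction to the regime $\eta\sqrt n\le\tfrac12$.
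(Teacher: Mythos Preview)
Your proof is correct. The paper does not actually prove this fact; it is stated as a citation to Mansour's lecture notes \cite[Theorem 5.5, Claim 5.6]{MansourLec} and used as a black box in the analysis of \Cref{algo:local-correction-sub-constant-error}. Your collision-probability/Cauchy--Schwarz argument (and the Pinsker alternative you mention) are standard routes to this bound, and either would serve as a self-contained proof had the paper chosen to include one.
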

 By the definition of statistical distance, we have that for every $\mathbf{y} \in \mathcal{S}$,
 \begin{align*}
     \Pr_{h}[x(\mathbf{y}) \in E] \; \leq \; \sqrt{n}/2^{\Omega(k/(d+1))} + \delta
 \end{align*}
 Taking a union bound over all $\mathbf{y} \in \mathcal{S}$, we have,
 \begin{align*}
     \Pr_{h}[\exists \, \mathbf{y} \in \mathcal{S} \, \text{ s.t. } \, x(\mathbf{y}) \in E] \; \leq \; |\mathcal{S}| \paren{\sqrt{n}/2^{\Omega(k/(d+1))} + \delta}
 \end{align*} 
 Recall that the number of queries is $q=|\mathcal{S}|$ and by assumption $\delta < 1/100 q$. Thus, we get
 \begin{align*}
     \Pr_{h}[\exists \, \mathbf{y} \in \mathcal{S} \, \text{ s.t. } \, x(\mathbf{y}) \in E] \leq \frac{\Tilde{\bigO}(\sqrt{n})}{2^{\Omega(k/(d+1))}} + \dfrac{1}{100} \leq \frac{1}{4}
 \end{align*}
 as long as the constant $A$ in \Cref{algo:local-correction-sub-constant-error} is chosen to be large enough. This shows \Cref{eqn:uniquedeg-d-error-probability} as claimed and thus we have described a $(\delta, q)$ local correction algorithm for $\mathcal{P}_{d}$.
 \end{proof}

\subsection{Weight balanced interpolating set}
In this subsection, we prove \Cref{thm:weight-interpolating}. We start by proving \Cref{lemma:local-correction-main}, which is our main technical lemma of this subsection. The difference between \Cref{lemma:local-correction-main} and \Cref{thm:weight-interpolating} is in the first condition. In \Cref{lemma:local-correction-main} we require that every non-zero degree-$d$ polynomial is non-zero on $\mathcal{S}$. Later, we will see that this is sufficient to allow us to compute $Q$ at any point.\footnote{For polynomials over fields, this follows simply from linear algebra. For Abelian groups, the proof is similar, but we need a result from the theory of Diophantine linear equations.}

\begin{thmbox}
\begin{lemma}[Main lemma for local correction]\label{lemma:local-correction-main}
Fix a degree parameter $d\geq 0$ and a dimension parameter $k \in \mathbb{Z}_{> 0}$ such that $k$ is divisible by $10\cdot (d+1)$. There exists a set $\mathcal{S} \subseteq \Boo^{k}$ such that for every Abelian group $G$, the set $\mathcal{S}$ satisfies the following properties:\\
\begin{itemize}
    \item For every non-zero degree-$d$ polynomial $Q(y_{1}, \ldots, y_{k}) \in \mathcal{P}_{d}(\Boo^{k}, \, G)$, there exists a point $\mathbf{z} \in \mathcal{S}$ such that $Q(\mathbf{z}) \neq 0$.
    \vspace{2mm}
    \item There exists positive integers $w_{1},\ldots,w_{k}$ such that
    \begin{align*}
        \mathcal{S} \; \subseteq \; \setcond{\mathbf{y} \in \Boo^{k}}{ \left| \sum_{j = 1}^{k} w_{j} y_{j} - \frac{W}{2} \right| \, \leq \, \frac{W}{2^{\Omega(k/(d+1))}} }, 
    \end{align*}
    where $W := \sum_{j=1}^{k} w_{j}$. 
\end{itemize}
Furthermore, $|\mathcal{S}|$ is at most $\bigO_d(k^{d})$.
\end{lemma}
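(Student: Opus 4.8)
The plan is to build $\mathcal{S}$ explicitly. I would first fix the weight vector $w=(w_1,\dots,w_k)$ so as to force the ``weighted balanced'' condition, and only then carve out a small interpolating subset of the near-balanced Boolean points; the nonvanishing property will be proved by a strengthening of the Ore--DeMillo--Lipton--Schwartz--Zippel (DLSZ) argument.

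\textbf{Choosing the weights.} Since $10(d+1)\mid k$, write $k=10(d+1)m$ and split $[k]$ into $m$ blocks $B_1,\dots,B_m$ of size $10(d+1)$ each, giving all coordinates of $B_i$ the same weight $\omega_i$. The $\omega_i$ are chosen so that: (i) the total weight $W$ is even (or at worst $W/2$ is within $o(W\cdot 2^{-\Omega(m)})$ of an achievable weighted sum) and the ``obviously balanced'' point is hit; (ii) the $\omega_i$ climb along a geometric-type ladder with $\omega_m/\omega_1 = 2^{\Theta(k/(d+1))}$, so that being within $W\cdot 2^{-\Omega(m)}=W\cdot 2^{-\Omega(k/(d+1))}$ of $W/2$ rigidly pins down a coarse block-profile skeleton of the point; yet (iii) the ladder is ``dense at every scale'' (no single coordinate is outright forced), and deviations inside different blocks can be made to cancel in $\langle w,\cdot\rangle$, so that enough near-balanced Boolean points survive to support a full degree-$d$ interpolating set. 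This is precisely the ``two threshold functions with exponentially different weights'' phenomenon: the threshold defining the slab must pin down each near-balanced point to exponential accuracy on a coarse scale while leaving each block's fine structure free enough to interpolate degree $d$ there. (Results of the Muroga/MTT type, which say a Boolean threshold in $k$ variables can require weights as small as roughly $k^{-k}$, are what make such exponentially separated weights possible in $k=\widetilde\Theta(\log n)$ dimensions, and are what the choice of $k$ is tuned to exploit.)

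\textbf{The interpolating subset, its size, and nonvanishing.} Within the slab I would take $\mathcal{S}$ to consist of points agreeing with a fixed per-block balanced pattern on all but at most $d$ ``active'' blocks (plus a bounded number of ``compensating'' blocks used only to stay inside the slab), and on the active blocks ranging over a small per-block degree-$d$ interpolating family. Since every degree-$d$ monomial meets at most $d$ blocks, this suffices, and $|\mathcal{S}|\le\binom{m}{\bigO(d)}\cdot 2^{\bigO(d^2)}=\bigO_d(m^d)=\bigO_d(k^d)$ as required. For the nonvanishing property I would argue by induction following the DLSZ proof: given a nonzero $Q$ of degree $\le d$ over an arbitrary Abelian group $G$, pick a monomial $\prod_{i\in T}x_i$ with nonzero coefficient $\alpha_T$ and with $T$ minimal under inclusion, so $|T|\le d$ and $T$ meets $\le d$ blocks; make those the active blocks and show that the deviations on the active and compensating blocks can be chosen so that the resulting point lies in $\mathcal{S}$ \emph{and} $Q$ evaluates there to $\alpha_T$ plus terms that cannot cancel it --- minimality of $T$ is exactly what kills the lower-order interference on the active blocks, as in the classical argument, while the slack inside the slab is what lets the weighted sum be balanced simultaneously. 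Since only group operations (never division) are used, this works for every $G$.

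\textbf{The main obstacle.} The difficulty is not any single one of the three requirements but their conjunction: a polynomially thin slab, or an unboundedly large interpolating set, would be routine, whereas an \emph{exponentially} thin slab that still admits a size-$\bigO_d(k^d)$ interpolating set forces the weights into the delicate telescoping two-scale regime above, and proving the interpolating property there requires an induction that tracks the DLSZ-style nonvanishing and the weighted sum in tandem. I expect the bulk of the work to be (a) tuning the quantitative parameters of the weight ladder against the block decomposition so that both the $2^{-\Omega(k/(d+1))}$ slab width and the $\bigO_d(k^d)$ size come out, and (b) verifying that the compensating-block deviations never destroy the nonvanishing certified by the minimal monomial --- i.e. that one can always ``rebalance'' a witnessing point without crossing any of the thin hyperplanes.
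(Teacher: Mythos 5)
Your construction is essentially the one in the paper: the same block decomposition of $[k]$ into $m=k/(10(d+1))$ blocks, the same weight scheme (constant weight within a block, a geometric ladder $2^{i-1}$ across blocks), a slab of constant \emph{additive} width $t=\Theta(d)$ around $W/2$ (hence relative width $W/2^{\Omega(k/(d+1))}$), and an inductive, DLSZ-style construction of the interpolating set in which each block contributes a small family of points whose Hamming weights lie in an interval of length $d+1$ chosen to compensate the deviation accumulated by the previously fixed blocks. The size bound $\bigO_d(k^d)$ comes out the same way.

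There is, however, one step that would fail as written: the non-vanishing argument via a monomial $T$ that is \emph{minimal} under inclusion. Minimality kills lower-order interference only when you evaluate at the indicator vector $\mathbf{1}_T$ (so that a monomial $\mathbf{x}^S$ survives iff $S\subseteq T$); that point has Hamming weight $\le d$ and lies nowhere near your slab. In your setting the inactive blocks are fixed to a balanced, hence heavily non-zero, pattern, so the interference in the restricted polynomial's coefficient of $\mathbf{x}^T$ comes from monomials $S\supsetneq T$ whose extra variables are set to $1$ — and these can cancel $\alpha_T$ regardless of minimality. The correct selection rule is the opposite one: take $A_0$ \emph{maximal} (in the paper's block-by-block induction, a monomial of the current block of maximal degree among those whose coefficient polynomial in the remaining blocks is non-zero). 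Maximality guarantees that no larger monomial can contribute to the coefficient of $\mathbf{x}^{A_0}$ after the restriction, so the restricted polynomial stays non-zero, and it also lets you split the degree budget $j=j'+(j-j')$ between the current block and the rest, which is what drives the induction and the $k^d$ count. With that substitution (and with the per-block weight intervals chosen as a function of the accumulated deviation, which is what your "compensating blocks" must do quantitatively), your outline matches the paper's proof.
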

\end{thmbox}

\begin{proof}[Proof of \Cref{lemma:local-correction-main}]
Let $r = 10\cdot (d+1)$ and let $\Sigma := \Boo^{r}$. Let $m = k/r$.  In this proof, it will be convenient to make the following identification:
\begin{align*}
    [mr] \cong [m] \times [r] \quad \text{ and } \quad \Boo^{k} = \Sigma^{m}
\end{align*}
We interpret $\mathbf{y} \in \Boo^{mr}$ as $\mathbf{y} = (\mathbf{y}[1], \ldots, \mathbf{y}[m])$ where for every $i \in [m]$, $\mathbf{y}[i] = (y[i,1], \ldots, y[i,r]) \in \Sigma = \Boo^{r}$.

We now describe the weights $w_1,\ldots, w_k$. For every for $(i,j) \in [m] \times [r] \cong [mr]$, let $w_{(i,j)} := 2^{i-1}$. Define the set $\mathcal{B}_{m}$ as follows.
\begin{align*}
    \mathcal{B}_{m} \; := \; \setcond{\mathbf{y} \in \Sigma^{m}}{\left| \sum_{i = 1}^{m} \sum_{j = 1}^{r} w_{(i,j)} y[i,j] - \dfrac{W_{m}}{2} \right| \, \leq \, t } \;
    = \; \setcond{\mathbf{y} \in \Sigma^{m}}{\left| \sum_{i = 1}^{m} 2^{i-1} \sum_{j = 1}^{r} y[i,j] - \dfrac{W_{m}}{2} \right| \, \leq \, t }
\end{align*}
where $W_{m} := \sum_{i = 1}^{m} 2^{i-1} \sum_{j = 1}^{r} 1 = r (1 + 2 \cdots + 2^{m-1})$ and $t = \lceil \frac{d}{2}\rceil$. We will choose the set $\mathcal{S}$ to be a subset of $\mathcal{B}_m.$ Note that $t = W_m/2^{m}$, which shows that the weights are as required by the second item of the statement of the lemma.

More generally, we can define a subset $\mathcal{B}_\ell\subseteq \{0,1\}^{\ell\cdot r}$ for each $\ell\in [m]$ in a similar way. We let $W_\ell$ denote the sum of the weights in this case. We will need the following claim about extending points in $\mathcal{B}_\ell$ to points in $\mathcal{B}_{\ell+1}$ in many different ways.

\begin{claim}
    \label{clm:extendBell}
    Fix $\ell \in [m-1]$. Let $\mathbf{b} = (\mathbf{b}[1],\ldots, \mathbf{b}[\ell])$ be any point in $\mathcal{B}_\ell$. There exists an interval (i.e. set of consecutive integers) $I_{\mathbf{b}}\subseteq \{0,\ldots,r\}$ of size at least $d+1$ such that for every point $\mathbf{z}\in \{0,1\}^r$ such that $|\mathbf{z}|\in I_{\mathbf{b}},$ the point $\mathbf{b}' = (\mathbf{z},\mathbf{b}[1],\ldots, \mathbf{b}[\ell])\in \mathcal{B}_{\ell+1}.$
\end{claim}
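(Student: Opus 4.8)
The plan is to compute, for a fixed $\mathbf{b}\in\mathcal{B}_\ell$, exactly which Hamming weights $|\mathbf{z}|$ of the new ``highest'' block $\mathbf{z}\in\{0,1\}^r$ keep the extended point $(\mathbf{z},\mathbf{b}[1],\dots,\mathbf{b}[\ell])$ inside $\mathcal{B}_{\ell+1}$, and then argue this set of allowed weights is an interval of length $\ge d+1$. First I would set up notation: write $S(\mathbf{b}) := \sum_{i=1}^{\ell}2^{i-1}\sum_{j=1}^r b[i,j]$ for the weighted sum of $\mathbf{b}$, and recall $W_\ell = r(2^\ell-1)$, $W_{\ell+1}=r(2^{\ell+1}-1)$, and $t=\lceil d/2\rceil$. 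When we prepend the block $\mathbf{z}$ as the new top block (with weight $2^{\ell}$ per coordinate — note the re-indexing pushes the old blocks down, but since all blocks in a given level carry the same per-coordinate weight and the shape is what matters, the weighted sum of the shifted-down copy of $\mathbf{b}$ is exactly $2\cdot S(\mathbf{b})$), the weighted sum of $\mathbf{b}'$ is $2^{\ell}|\mathbf{z}| + 2 S(\mathbf{b})$. Membership $\mathbf{b}'\in\mathcal{B}_{\ell+1}$ is the condition
\[
\left| 2^{\ell}|\mathbf{z}| + 2S(\mathbf{b}) - \frac{W_{\ell+1}}{2}\right| \le t.
\]
Since $\mathbf{b}\in\mathcal{B}_\ell$ we know $|2S(\mathbf{b}) - W_\ell| \le 2t$, i.e.\ $2S(\mathbf{b})$ lies in $[W_\ell - 2t,\ W_\ell+2t]$. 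Substituting $W_{\ell+1}/2 = (2W_\ell + r)/2 = W_\ell + r/2$, the condition becomes
\[
\left| 2^{\ell}|\mathbf{z}| - \left(W_\ell + \tfrac{r}{2} - 2S(\mathbf{b})\right)\right| \le t.
\]
Writing $\Delta := W_\ell + r/2 - 2S(\mathbf{b})$, which by the bound on $2S(\mathbf{b})$ satisfies $\Delta \in [r/2 - 2t,\ r/2 + 2t]$, the allowed integers $|\mathbf{z}|$ are exactly those with $2^\ell |\mathbf{z}| \in [\Delta - t, \Delta + t]$.

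The second step is to check this forces an interval $I_{\mathbf{b}}\subseteq\{0,\dots,r\}$ of length $\ge d+1$ of valid Hamming weights. This is where I have to be slightly careful and it is the main (mild) obstacle: the window $[\Delta-t,\Delta+t]$ has length $2t \approx d$, but we need $|\mathbf{z}|$ to range over an interval of integers of length $d+1$, and moreover $2^\ell|\mathbf{z}|$ jumps in steps of $2^\ell$, so for $\ell\ge 1$ the window typically contains at most one admissible value of $|\mathbf{z}|$ — this looks like it breaks the claim. The resolution must be that the definition of $\mathcal{B}_{\ell+1}$ I should be using is not literally $|\cdot - W/2|\le t$ with the \emph{same} $t$ at every level; re-reading, the slack $t=\lceil d/2\rceil$ is stated for $\mathcal{B}_m$, and $\mathcal{B}_\ell$ for $\ell<m$ should be defined with the \emph{scaled} slack $t_\ell$ chosen so that the window in ``units of the top block'' has length $\ge d$. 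Concretely I would (re)define $\mathcal{B}_\ell$ with slack $t_\ell := 2^{\,m-\ell}\lceil d/2\rceil$ — equivalently, all the $\mathcal{B}_\ell$ are cut out by $|\langle \mathbf{w},\mathbf{y}\rangle - W_\ell/2|\le 2^{m-\ell}t$, so that after the calculation above the admissible set is $\{|\mathbf{z}| : 2^\ell |\mathbf{z}|\in[\Delta - t_\ell, \Delta+t_\ell]\}$ with $t_\ell = 2^{\ell}\cdot(\text{something}\ge \lceil d/2\rceil)$... and then the admissible $|\mathbf{z}|$ form an interval of length roughly $2 t_\ell/2^\ell \ge d$. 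I would then just verify the arithmetic that this interval has $\ge d+1$ integer points and that it is contained in $\{0,\dots,r\}$ — the latter uses $r = 10(d+1)$, which is comfortably larger than $r/2 \pm O(d)$, so the interval stays inside the valid range of Hamming weights.

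The third and final step is cosmetic: record that $I_{\mathbf{b}}$ depends only on $\mathbf{b}$ (through $\Delta$, hence through $S(\mathbf{b})$), has size $\ge d+1$, and that \emph{any} $\mathbf{z}\in\{0,1\}^r$ whose Hamming weight lies in $I_{\mathbf{b}}$ — there are $\binom{r}{|\mathbf{z}|}$ such $\mathbf{z}$ for each allowed weight — yields $(\mathbf{z},\mathbf{b})\in\mathcal{B}_{\ell+1}$, which is exactly the statement. I expect the only real work is pinning down the correct level-dependent slack $t_\ell$ so the telescoping from $\mathcal{B}_1$ up to $\mathcal{B}_m$ lands at slack exactly $\lceil d/2\rceil$ at the top level; once that bookkeeping is fixed the inequality manipulations above are routine. (This claim will presumably then be used inductively: starting from any single point of $\mathcal{B}_1$ and applying the claim $m-1$ times, one builds up a rich family of points in $\mathcal{B}_m$, and the ``$d+1$ consecutive Hamming weights available at each block'' is precisely what will feed a DeMillo--Lipton--Schwartz--Zippel-style induction to show a nonzero degree-$d$ polynomial cannot vanish on all of $\mathcal{S}$.)
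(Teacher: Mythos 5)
There is a genuine gap, and it stems from a misreading of the weighting convention. In the definition of $\mathcal{B}_{\ell+1}$, block $i$ carries per-coordinate weight $2^{i-1}$, so in $\mathbf{b}' = (\mathbf{z},\mathbf{b}[1],\ldots,\mathbf{b}[\ell])$ the new block $\mathbf{z}$ occupies position $1$ and has weight $2^{0}=1$ per coordinate, while the old blocks are pushed \emph{up} to positions $2,\ldots,\ell+1$ and have their weights doubled. The weighted sum of $\mathbf{b}'$ is therefore $|\mathbf{z}| + 2S(\mathbf{b})$, not $2^{\ell}|\mathbf{z}| + 2S(\mathbf{b})$ as you wrote; your parenthetical tries to have it both ways, since if $\mathbf{z}$ really were the top block with weight $2^{\ell}$, the old blocks would keep their weights and contribute $S(\mathbf{b})$, not $2S(\mathbf{b})$. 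With the correct weighting, writing $S(\mathbf{b}) = W_\ell/2 + \tau$ with $|\tau|\le t$ and using $W_{\ell+1} = 2W_\ell + r$, membership of $\mathbf{b}'$ in $\mathcal{B}_{\ell+1}$ is equivalent to $|\mathbf{z}| \in [r/2 - 2\tau - t,\ r/2 - 2\tau + t]$: a window on $|\mathbf{z}|$ itself with no factor of $2^{\ell}$, hence an interval of $2t+1 \ge d+1$ consecutive integers, and it lies inside $\{0,\ldots,r\}$ because $r/2 = 5(d+1) \ge 2|\tau| + t$. This is exactly the paper's (short) proof of \Cref{clm:extendBell}.

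The obstacle you ran into --- that $2^{\ell}|\mathbf{z}|$ moves in steps of $2^{\ell}$, so a window of length $2t\approx d$ captures at most one admissible weight --- is an artifact of the misreading, and your proposed repair (redefining $\mathcal{B}_\ell$ with level-dependent slack $t_\ell = 2^{m-\ell}\lceil d/2\rceil$) is neither what the paper does nor workable as stated: under your top-block convention the slack needed at level $\ell+1$ grows like $2^{\ell}(d+1)$, whereas your $t_\ell$ \emph{decreases} in $\ell$, so it fails for $\ell$ close to $m$; and in any case altering the definition of $\mathcal{B}_\ell$ would have to be propagated through the inductive construction of $\mathcal{S}_{\ell,j}$ in the rest of the proof of \Cref{lemma:local-correction-main}. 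The fixed slack $t=\lceil d/2\rceil$ at every level is correct once the new block is weighted as the lowest-order block.
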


\begin{proof}[Proof of \Cref{clm:extendBell}]
    Note that $W_{\ell+1} = 2W_\ell + r.$ Since $\mathbf{b}\in \mathcal{B}_\ell,$ we have
    \begin{equation}
        \label{eq:extendBell1}
        \sum_{i=1}^\ell |\mathbf{b}[i]|\cdot 2^{i-1} = \frac{W_\ell}{2}+\tau
    \end{equation}
    for some integer $\tau$ such that $|\tau|\leq t$ (here $|\mathbf{b}[i]|$ is the Hamming weight of $\mathbf{b}[i]$). For $\mathbf{b}' = (\mathbf{z},\mathbf{b}[1],\ldots, \mathbf{b}[\ell])$ to lie in $\mathcal{B}_{\ell+1}$, we need
    \[
    |\mathbf{z}| + \sum_{i=1}^\ell |\mathbf{b}[i]|\cdot 2^{i}\in \left[\frac{W_{\ell+1}}{2}-t, \frac{W_{\ell+1}}{2}+t\right].
    \]
    By \Cref{eq:extendBell1} and the relationship between $W_\ell$ and $W_{\ell+1}$, the latter is equivalent to $|\mathbf{z}|\in I_\mathbf{b} := [\frac{r}{2}-2\tau-t, \frac{r}{2} -2\tau+t]$. Since $r/2 \geq 5(d+1) \geq 2|\tau| + t,$ the interval $I_{\mathbf{b}}\subseteq \{0,\ldots, r\}$ and further, $|I_{\mathbf{b}}| = 2t+1 \geq d+1.$
\end{proof}

We also need a basic fact about degree-$d$ polynomials over the Boolean cube. This is a combination of a few folklore facts, but we prove it here for completeness.

\begin{claim}\label{clm:hamming-ball-d-non-zero}
Fix any $n \geq 1$ and degree parameter $d \leq n.$ For every interval $I \subseteq \{0,\ldots,n\}$ (i.e. set of consecutive integers) of size $(d+1),$ there exists a set $\mathcal{H}_{I,d}\subseteq \{0,1\}^n$ of size at most $(2(n+1))^{d}$ such that
\begin{itemize}
    \item $\mathcal{H}_{I,d}$ consists only of points $\mathbf{z}$ such that $|\mathbf{z}|\in I$, and
    \item For any non-zero $P\in \mathcal{P}_d(\{0,1\}^n, G)$, there is a point $\mathbf{z}\in \mathcal{H}_{I,d}$ such that $P(\mathbf{z})\neq 0$.
\end{itemize}
\end{claim}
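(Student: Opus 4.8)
The plan is to prove \Cref{clm:hamming-ball-d-non-zero} by induction on $d$, building $\mathcal{H}_{I,d}$ so that the size bound $(2(n+1))^d$ falls out of the recursion. The base case $d=0$ is immediate: an interval $I$ of size $1$ is a singleton $\{a\}$ with $0\le a\le n$, so we may take $\mathcal{H}_{I,0}$ to be any single point of Hamming weight $a$, and a nonzero degree-$0$ polynomial is a nonzero constant of $G$, hence nonzero at that point. The size is $1=(2(n+1))^0$.

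For the inductive step, given an interval $I=\{a,a+1,\ldots,a+d\}\subseteq\{0,\ldots,n\}$, set $I':=\{a,\ldots,a+d-1\}$. This is an interval of size $d=(d-1)+1$ contained in $\{0,\ldots,n-1\}$, using $a\ge 0$ and $a+d\le n$. For each coordinate $\ell\in[n]$, I would apply the inductive hypothesis with degree $d-1$ on the $n-1$ variables indexed by $[n]\setminus\{\ell\}$ and interval $I'$, obtaining a set $\mathcal{H}^{(\ell)}\subseteq\{0,1\}^{[n]\setminus\{\ell\}}$ of size at most $(2n)^{d-1}$, all of whose points have weight in $I'$, and which contains a non-root of every nonzero degree-$(d-1)$ polynomial on those variables. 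Then let $\mathcal{H}_{I,d}$ consist of all extensions of all points of all the $\mathcal{H}^{(\ell)}$, where a point $\mathbf{x}'\in\mathcal{H}^{(\ell)}$ is extended to $\{0,1\}^n$ by placing a $0$ or a $1$ in coordinate $\ell$. Since each $\mathbf{x}'$ gives two extensions, $|\mathcal{H}_{I,d}|\le \sum_{\ell=1}^n 2\,|\mathcal{H}^{(\ell)}|\le 2n\cdot(2n)^{d-1}=(2n)^d\le(2(n+1))^d$, and an extension of $\mathbf{x}'$ has weight either $|\mathbf{x}'|\in I'\subseteq I$ or $|\mathbf{x}'|+1\in(I'+1)\subseteq I$, so the weight constraint holds.

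To verify the hitting property, let $P\in\mathcal{P}_d(\{0,1\}^n,G)$ be nonzero. If $P$ is a nonzero constant it is nonzero at every point of $\mathcal{H}_{I,d}$, which is nonempty. Otherwise $P$ depends on some variable $x_\ell$; writing $P=x_\ell Q+R$ with $Q,R$ not involving $x_\ell$, the discrete derivative $P|_{x_\ell=1}-P|_{x_\ell=0}=Q$ is a nonzero polynomial of degree at most $d-1$ in the variables $[n]\setminus\{\ell\}$. By the inductive property there is $\mathbf{x}'\in\mathcal{H}^{(\ell)}$ with $Q(\mathbf{x}')\neq 0$, i.e. $P|_{x_\ell=1}(\mathbf{x}')\neq P|_{x_\ell=0}(\mathbf{x}')$, so at least one of the two extensions of $\mathbf{x}'$ (in coordinate $\ell$) is a non-root of $P$, and both extensions lie in $\mathcal{H}_{I,d}$. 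I do not expect a real obstacle here; the only points needing care are the interval bookkeeping — shrinking $I$ to the size-$d$ interval $I'$ precisely so that $I'\cup(I'+1)=I$, and checking $I'\subseteq\{0,\ldots,n-1\}$ — and the fact that $\mathcal{H}_{I,d}$ must be chosen uniformly (before seeing $P$), which forces us to range over all coordinates $\ell$ rather than a single one.
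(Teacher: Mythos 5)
Your proof is correct, but it takes a genuinely different route from the paper's. The paper constructs $\mathcal{H}_{I,d}$ in one shot as $\bigcup_{|A|\le d}\mathcal{H}_{I,d}^{A}$, where $\mathcal{H}_{I,d}^{A}$ fixes the coordinates outside $A$ to a canonical weight-$a$ string $1^{a}0^{n-|A|-a}$ and lets the coordinates in $A$ range over all of $\{0,1\}^{|A|}$; the hitting property is then proved by picking a \emph{maximal} monomial $A_0$ of $P$ (so that the restriction outside $A_0$ preserves its coefficient) and invoking the fact that a non-zero multilinear polynomial in $|A_0|\le d$ variables cannot vanish on its whole cube (\Cref{thm:basic}). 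You instead induct on the degree via discrete derivatives $P|_{x_\ell=1}-P|_{x_\ell=0}$, shrinking the interval from $I$ to $I'$ so that $I'\cup(I'+1)=I$, and you correctly range over all coordinates $\ell$ because the set must be fixed before $P$ is seen. Both arguments give the same kind of bound (yours yields $(2n)^d$, marginally sharper), and the interval bookkeeping $a+d\le n$ that you flag does check out at every level of the recursion. The trade-off is that the paper's set is explicitly describable and its proof is non-recursive modulo the black-box DLSZ fact, whereas your construction is recursive but entirely self-contained, needing only the trivial $d=0$ base; it is essentially the standard inductive proof of the DLSZ lemma adapted to carry the weight constraint along.
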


\begin{proof}[Proof of \Cref{clm:hamming-ball-d-non-zero}]
Assume that $I = \{a, a+1,\ldots, a+d\}$ for some $a \in [0, n-d]$. For every subset $A \subseteq [n]$ of size $\leq d$, we define the set $\mathcal{H}_{I, d}^{A}$ as follows:
\begin{align*}
    \mathcal{H}_{I, d}^{A} \; := \; \setcond{\mathbf{x} 1^{a} 0^{n - |A| - a}}{\mathbf{x} \in \Boo^{|A|}},
\end{align*}
where $\mathbf{x} 1^{a} 0^{n - |A| - a}$ is a shorthand notation for the point where the variables indexed by $A$ are set to $\mathbf{x}$, the first $a$ variables of the remaining variables are set to $1$, and the last $(n-|A|-a)$ variables of the remaining variables are set to $0$. In other words, $\mathcal{H}_{I, d}^{A}$ consists of points where the variables indexed by $[n] \setminus A$ are set to $1^{a} 0^{n - |A| - a}$ (this has Hamming weight $a$) and the variables indexed by $A$ can be any point of Hamming weight $\leq |A| \leq d$. We define the set $\mathcal{H}_{I, d}$ as follows:
\begin{align*}
    \mathcal{H}_{I, d} \, = \, \bigcup_{\substack{A \subseteq [n] \\ |A| \leq d}} \; \mathcal{H}_{I, d}^{A}
\end{align*}
For every subset $A$, the set $\mathcal{H}_{I, d}^{A}$ consists only of points $\mathbf{z}$ such that $|\mathbf{z}| \in I$, and so thus every point in $\mathcal{H}_{I, d}$. Note that for each subset $A \subseteq \binom{[n]}{\leq d}$, the set $|\mathcal{H}_{I,d}^{A}| \leq 2^d$. The size of $\mathcal{H}_{I, d}$ is at most $2^d \cdot (\sum_{i = 0}^{d} \binom{n}{i}) \leq (2(n+1))^{d}$.

Next we show that for any non-zero $P \in \mathcal{P}_{d}(\Boo^{n}, G)$, there is a point $\mathbf{z} \in \mathcal{H}_{I, d}$ such that $P(\mathbf{z}) \neq 0$. Fix any non-zero polynomial $P \in \mathcal{P}_{d}(\Boo^{n}, G)$. $P$ can be uniquely expressed as the following multilinear polynomial:
\begin{align*}
    P(x_{1},\ldots,x_{n}) \, = \, \sum_{\substack{A \subseteq [n] \\ |A| \leq d}} c_{A} \mathbf{x}^{A},
\end{align*}
where $\mathbf{x}^{A}$ is the product of variables indexed by the set $A$. Since $P$ is a non-zero polynomial, at least one of the coefficients $c_{A}$'s is non-zero. Let $A_{0}$ be the maximal (with respect to the inclusion partial order) subset with a non-zero coefficient. Set the variables outside $A_{0}$ to $1^{a} 0^{n - |A_0| - a}$. The resulting polynomial is a non-zero degree $d$ polynomial in variables indexed by $A_{0}$, i.e. in $\leq d$ variables. We know (\Cref{thm:basic}) that every non-zero degree-$d$ polynomial is non-zero on the Boolean cube. Thus the resulting polynomial is non-zero on a point in $\Boo^{|A_{0}|}$. This implies that $P$ is non-zero on a point in $\mathcal{H}_{I,d}^{A_{0}} \subseteq \mathcal{H}_{I,d}$.
\end{proof}

We now show how to construct the set $\mathcal{S}$ as required in the statement of the lemma. In fact, we will show a stronger property: for each $\ell\in [m]$ and $j\in \{0,\ldots,d\}$, we show that there is a set $\mathcal{S}_{\ell,j}\subseteq \mathcal{B}_\ell$ that satisfies the first item of the lemma w.r.t. the space of polynomials $\mathcal{P}_j(\{0,1\}^{\ell\cdot r}, G)$. Furthermore, for each $\ell,j$, we will have $|\mathcal{S}_{\ell,j}|\leq (2(r+1))^{j}\cdot \ell^j.$

We prove the above by induction on $\ell+j$. The base case corresponds to $\ell=1$ and $j=0$, where we can take $\mathcal{S}_{\ell,0}$ to be any fixed point $\mathbf{z}\in \mathcal{B}_\ell.$

Now consider the inductive case for some $j > 0.$ Given a non-zero polynomial $P(\mathbf{y}[1],\ldots,\mathbf{y}[\ell])\in \mathcal{P}_j(\{0,1\}^{\ell\cdot r}, G)$, we can decompose it as a polynomial in the variables in $\mathbf{y}[1]$ with coefficients coming from the space of polynomials in the remaining variable sets 
$\mathbf{y}[2],\ldots,\mathbf{y}[\ell]$. This gives the following equality.
\begin{equation}
    \label{eq:decompose-P}
    P(\mathbf{y}[1],\ldots,\mathbf{y}[\ell]) = \sum_{A \subseteq [r]: |A|\leq j} Q_A(\mathbf{y}[2],\ldots,\mathbf{y}[\ell])\cdot \mathbf{y}[1]^A
\end{equation}
where $\mathbf{y}[1]^A$ denotes the product of the variables in $\mathbf{y}[1]$ indexed by $A$ and $Q_A$ denotes the sum of all monomials in $\mathbf{y}[2],\ldots,\mathbf{y}[\ell]$ multiplying this monomial. Note that $Q_A$ has degree at most $j-|A|.$

Fix a set $A_0$ such that $Q_{A_0}$ is non-zero and $|A_0|$ is as large as possible. Assume that $|A_0| = j'\in \{0,\ldots, j\}.$ We know by induction that there is a point $\mathbf{b} \in \mathcal{S}_{\ell-1,j-j'}$ such that $Q_{A_0}(\mathbf{b})\neq 0.$ Let $P_{\mathbf{b}}(\mathbf{y}[1])$ denote the restriction of the polynomial $P$ when the variable sets $\mathbf{y}[2],\ldots, \mathbf{y}[\ell]$ are set according to $\mathbf{b}$. The polynomial $P_{\mathbf{b}}$ is a non-zero polynomial of degree $j'.$

We want to extend $\mathbf{b}$ to an assignment also setting the variables $\mathbf{y}[1]$ that keeps the polynomial $P$ non-zero. By \Cref{clm:extendBell}, there is an interval $I_\mathbf{b}$ of size at least $(d+1)$ such that for any $\mathbf{z}$ such that $|\mathbf{z}|\in I_{\mathbf{b}}$, the point $(\mathbf{z}, \mathbf{b}[1],\ldots,\mathbf{b}[\ell-1])\in \mathcal{B}_\ell.$ Fix any subinterval $I_{\mathbf{b},j'}\subseteq I_\mathbf{b}$ of size $j'+1$. By \Cref{clm:hamming-ball-d-non-zero}, there is a set $\mathcal{H}_{I_{\mathbf{b},j'},j'}\subseteq \{0,1\}^r$ of size at most $(2(r+1))^{j'}$ such that each point $\mathbf{z}$ has Hamming weight in $I$ and further $P_{\mathbf{b}}(\mathbf{z})\neq 0.$

We have thus shown that $P$ must be non-zero at one of the points in the following set.
\[
\mathcal{S}_{\ell,j,j'} = \{(\mathbf{z},\mathbf{b}[1],\ldots, \mathbf{b}[\ell-1])\ |\ \mathbf{b} = (\mathbf{b}[1],\ldots, \mathbf{b}[\ell-1])\in \mathcal{S}_{\ell-1,j-j'},\ \mathbf{z}\in \mathcal{H}_{I_{\mathbf{b},j'},j'}\}.
\]
However, the above assumes that we know the parameter $j'$ of $P.$ To define the set $\mathcal{S}_{\ell,j}$, we take a union of all the sets $\mathcal{S}_{\ell,j,j'}$ for $j'\in \{0,\ldots,j\}.$ This satisfies the required inductive property.

It remains to bound $|\mathcal{S}_{\ell,j}|.$ We have
\begin{align*}
    |\mathcal{S}_{\ell,j}| &\leq \sum_{j'=0}^j |\mathcal{S}_{\ell,j,j'}|\leq \sum_{j'=0}^j |\mathcal{S}_{\ell-1,j-j'}|\cdot |\mathcal{H}_{I,j'}|\leq \sum_{j'=0}^j (2(r+1))^{(j-j')}\cdot (\ell-1)^{j-j'}\cdot (2(r+1))^{j'}\\
    &= (2(r+1)^{j}\cdot ((\ell-1)^j + (\ell-1)^{j-1} + \cdots + (\ell-1)^0) \leq (2(r+1))^{j}\cdot \ell^j
\end{align*}
proving the required bound on $|\mathcal{S}_{\ell,j}|.$ This proves the inductive claim.

To conclude the proof of \Cref{lemma:local-correction-main}, if we take $\mathcal{S} = \mathcal{S}_{m,d},$ then we have a set with the required properties and size.
\end{proof}

We now show how \Cref{lemma:local-correction-main} implies \Cref{thm:weight-interpolating}. We recall \Cref{thm:weight-interpolating} here.

\weightinterpolating*

\begin{proof}[Proof of \Cref{thm:weight-interpolating}]
Let $\mathcal{S} \subseteq \Boo^{k}$ be the subset as given by \Cref{lemma:local-correction-main}. Fix any point $\mathbf{b} \in \Boo^{k}$. Let $\mathsf{B}_{d}$ denote the set of multilinear monomials of degree $\leq d$ in $\set{x_{1},\ldots,x_{k}}$. $\mathsf{B}_{d}$ forms a spanning set of $\mathcal{P}_{d}(\Boo^{k}, \, G)$ for every $G$, i.e. every polynomial $Q \in \mathcal{P}_{d}(\Boo^{k}, \, G)$ can be expressed as a unique linear combination of monomials from $\mathsf{B}_{d}$ (with coefficients from $G$). Fix some total orders on $\mathcal{S}$ and $\mathsf{B}_{d}$.

Construct the matrix $M$ of dimensions $|\mathsf{B}_{d}| \times |\mathcal{S}|$ as follows: The rows are indexed by monomials in $\mathsf{B}_{d}$ and the columns are indexed by points in $\mathcal{S}$. For $1 \leq i \leq |\mathsf{B}_{d}|$ and $1 \leq j \leq |\mathcal{S}|$, $M[i, j]$ is equal to $m(\mathbf{u})$, where $m$ is the $i^{th}$ monomial in $\mathsf{B}_{d}$ and $\mathbf{u}$ is the $j^{th}$ point in $\mathcal{S}$. In other words, the $j^{th}$ column of $M$ denotes the vector whose entries are the evaluation of all the monomials in the spanning set $\mathsf{B}_{d}$ of the $j^{th}$ point in $\mathcal{S}$.

We will first prove \Cref{claim:matrix-integral-solution} and later show that \Cref{claim:matrix-integral-solution} is enough to prove \Cref{thm:weight-interpolating}.

\begin{claim}\label{claim:matrix-integral-solution}
Let $M$ be the matrix of dimensions $|\mathsf{B}_{d}| \times |\mathcal{S}|$ as described above. Define $\bm{\beta} \in \mathbb{Z}^{|\mathsf{B}_{d}|}$ as follows: For $1 \leq i \leq |\mathsf{B}_{d}|$, $\beta_{i}$ is equal to $m(\mathbf{b})$, where $m$ is the $i^{th}$ monomial of $\mathsf{B}_{d}$. There exists an integral vector $\mathbf{c} = (c_{1}, \ldots, c_{|\mathcal{S}|})$ such that the following equation is satisfied:
\begin{equation}\label{eqn:matrix-integral-solution}
    M \mathbf{c} \, = \, \bm{\beta}
\end{equation}
Equivalently, there exists an integral vector $\mathbf{c} = (c_{1}, \ldots, c_{|\mathcal{S}|})$ such that for every monomial $m \in \mathsf{B}_{d}$,
\begin{align*}
    \sum_{\mathbf{u} \in \mathcal{S}} c_{\mathbf{u}} m(\mathbf{u}) \, = \, m(\mathbf{b}).
\end{align*}
\end{claim}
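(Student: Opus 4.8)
The plan is to reduce the statement to the classical solvability criterion for systems of linear Diophantine equations, and then to discharge that criterion by invoking \Cref{lemma:local-correction-main} over the group $\mathbb{Q}/\mathbb{Z}$.

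First note that both $M$ and $\bm{\beta}$ are genuinely integral: every monomial in $\mathsf{B}_{d}$ takes the value $0$ or $1$ on a point of $\Boo^{k}$, so $M\in\{0,1\}^{|\mathsf{B}_{d}|\times|\mathcal{S}|}$ and $\bm{\beta}\in\{0,1\}^{|\mathsf{B}_{d}|}$. Hence the standard criterion applies (see, e.g., Schrijver's \emph{Theory of Linear and Integer Programming}): the system $M\mathbf{c}=\bm{\beta}$ has an integral solution $\mathbf{c}$ if and only if for every rational row vector $\mathbf{y}\in\mathbb{Q}^{|\mathsf{B}_{d}|}$ with $\mathbf{y}^{\top}M\in\mathbb{Z}^{|\mathcal{S}|}$ one also has $\mathbf{y}^{\top}\bm{\beta}\in\mathbb{Z}$. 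So it suffices to verify this implication for our particular $M$ and $\bm{\beta}$.

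Fix such a $\mathbf{y}=(y_{m})_{m\in\mathsf{B}_{d}}$ and associate to it the multilinear polynomial $R(\mathbf{x}):=\sum_{m\in\mathsf{B}_{d}} y_{m}\cdot m(\mathbf{x})$ with rational coefficients. Let $\bar{R}:\Boo^{k}\to\mathbb{Q}/\mathbb{Z}$ be obtained by reducing every coefficient modulo $1$, that is $\bar{R}=\sum_{m\in\mathsf{B}_{d}}\bar{y}_{m}\cdot m$ with $\bar{y}_{m}:=y_{m}\bmod 1\in\mathbb{Q}/\mathbb{Z}$; since each $m$ is $\{0,1\}$-valued on the cube, $\bar{R}(\mathbf{a})=R(\mathbf{a})\bmod 1$ for every $\mathbf{a}\in\Boo^{k}$, and $\bar{R}\in\mathcal{P}_{d}(\Boo^{k},\mathbb{Q}/\mathbb{Z})$. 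The hypothesis $\mathbf{y}^{\top}M\in\mathbb{Z}^{|\mathcal{S}|}$ says precisely that $R(\mathbf{u})\in\mathbb{Z}$, i.e. $\bar{R}(\mathbf{u})=0$, for every $\mathbf{u}\in\mathcal{S}$. Applying the first item of \Cref{lemma:local-correction-main} with $G=\mathbb{Q}/\mathbb{Z}$: a nonzero degree-$\leq d$ polynomial over $\mathbb{Q}/\mathbb{Z}$ must be nonzero somewhere on $\mathcal{S}$, so $\bar{R}$ is the zero polynomial; in particular $\bar{R}(\mathbf{b})=0$, that is $\mathbf{y}^{\top}\bm{\beta}=R(\mathbf{b})=\sum_{m} y_{m}\, m(\mathbf{b})\in\mathbb{Z}$, as required. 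This produces the integral vector $\mathbf{c}$ with $M\mathbf{c}=\bm{\beta}$, and reading this equation row by row gives the equivalent monomial-wise reformulation stated in the claim.

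The only real idea here is the choice of the ``right'' reduction: clearing denominators inside the group $\mathbb{Q}/\mathbb{Z}$ rather than in $\mathbb{Q}$, which is exactly where we exploit that \Cref{lemma:local-correction-main} was proved for \emph{every} Abelian group and not just for fields. Everything else — the Diophantine solvability criterion and the bookkeeping with coefficients modulo $1$ — is routine, so I expect no serious obstacle once this reduction is set up.
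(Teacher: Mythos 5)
Your proposal is correct and follows essentially the same route as the paper: the paper also invokes Schrijver's integrality criterion and verifies it by reducing the coefficient vector modulo $1$ to obtain a degree-$d$ polynomial over $\mathbb{Q}/\mathbb{Z}$ that vanishes on $\mathcal{S}$, then applies the interpolating-set property (\Cref{lemma:local-correction-main}) to conclude it is identically zero. No gaps.
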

\begin{proof}
To prove the existence of an integral vector $\mathbf{c}$, we need the following lemma.
\begin{lemma}\label{lemma:schriver-integral}
\cite[Corollary 4.1.a]{Schrijver} Let $A$ be a rational matrix and let $\mathbf{a}$ be a rational vector. Then the system $A \mathbf{x} = \mathbf{a}$ has an integral solution $\mathbf{x}$ if and only if for every row vector $\mathbf{y}$ for which $\mathbf{y} A$ is integral, $\mathbf{y} \mathbf{a}$ is an integer.
\end{lemma}
\Cref{lemma:schriver-integral} says that to show the existence of an integral solution $\mathbf{c}$ to \Cref{eqn:matrix-integral-solution}, it is equivalent to show that for every rational row vector $\mathbf{y} \in \mathbb{Q}^{|\mathsf{B}_{d}|}$ for which $\mathbf{y} M$ is integral, $\mathbf{y} \bm{\beta}$ is an integer.

Consider any $\mathbf{y} \in \mathbb{Q}^{|\mathsf{B}_{d}|}$ for which $\mathbf{y} M$ is integral. Let $H $ denote the quotient group $ \mathbb{Q}/\mathbb{Z}$, which can be identified with rational numbers in $[0,1)$ where addition is carried out modulo $1$. Define $\mathbf{y}'$ to be the image of $\mathbf{y}$ in $H$ under the natural projection from $\mathbb{Q}$ to $H$, i.e. for each coordinate $1 \leq i \leq |\mathsf{B}_{d}|$, $y'_{i} := y_{i} - \lfloor y_{i} \rfloor$. 

In what follows, we are still treating the entries of $\bm{\beta}$ and $M$ as integers, and thus it makes sense to multiply these entries with the entries of $\mathbf{y}'$ to get elements of $H$. The hypothesis that $\mathbf{y} M$ is integral is equivalent to saying that over the group $H$, $\mathbf{y}' M = 0^{|\mathcal{S}|}\in H^{|\mathcal{S}|}$ is the all-zeroes vector. Similarly, showing that $\mathbf{y}\bm{\beta}$ is an integer is equivalent to showing that $\mathbf{y}' \bm{\beta}$ is $0$. 

Assume for the sake of contradiction that $\mathbf{y}' \bm{\beta}$ is non-zero. Let $Q$ be the polynomial in $\mathcal{P}_{d}(\Boo^{k}, \, H)$ whose coefficient vector is $\mathbf{y}'$, i.e.
\begin{align*}
    Q(\mathbf{x}) := \sum_{m \in \mathsf{B}_{d}} y'_{m} m
\end{align*}
The hypothesis $\mathbf{y}' M = 0$ means that the polynomial $Q$ vanishes on $\mathcal{S}$. On the other hand, $\mathbf{y}' \bm{\beta} \in (0,1)$ means $Q(\mathbf{b})$ is non-zero, i.e. $Q$ is a non-zero polynomial. So we have a non-zero polynomial $Q$ that vanishes on the set $\mathcal{S}$, which contradicts \Cref{thm:weight-interpolating}. Hence $\mathbf{y}' \bm{\beta} = 0,$ implying that $\mathbf{y} \bm{\beta}$ is an integer. 

As $\mathbf{y}$ is an arbitrary row vector for which $\mathbf{y} M$ is integral, using \Cref{lemma:schriver-integral} we get the existence of an integral solution to \Cref{eqn:matrix-integral-solution}. This finishes the proof of \Cref{claim:matrix-integral-solution}.
\end{proof}
Finally, we argue that \Cref{claim:matrix-integral-solution} is sufficient to finish the proof of \Cref{thm:weight-interpolating}. This is essentially because $\mathcal{P}_{d}(\Boo^{k}, \, G)$ is spanned by $\mathsf{B}_{d}$. Consider any polynomial $Q \in \mathcal{P}_{d}(\Boo^{k}, \, G)$. There exists coefficients $\alpha_{1}, \ldots, \alpha_{|\mathsf{B}_{d}|}$ such that
\begin{align*}
    Q(\mathbf{x}) \, = \, \sum_{m \in \mathsf{B}_{d}} \alpha_{m} m
\end{align*}
Let $c_{1},\ldots,c_{|\mathcal{S}|}$ be the coefficients from the above claim. Then we have,
\begin{align*}
    \sum_{\mathbf{u} \in \mathcal{S}} c_{\mathbf{u}} Q(\mathbf{u}) \, = \, \sum_{\mathbf{u} \in \mathcal{S}} c_{\mathbf{u}} \; \sum_{m \in \mathsf{B}_{d}} \alpha_{m} m(\mathbf{u}) \, = \, \sum_{m \in \mathsf{B}_{d}} \alpha_{m} \; \sum_{\mathbf{u} \in \mathcal{S}} c_{\mathbf{u}} m(\mathbf{u})
\end{align*}
Since $M \mathbf{c} = \bm{\beta}$, for every $m \in \mathsf{B}_{d}$, we have,
\begin{align*}
    \sum_{\mathbf{u} \in \mathcal{S}} c_{\mathbf{u}} m(\mathbf{u}) \, = \, m(\mathbf{b})
\end{align*}
Thus we get,
\begin{align*}
    \sum_{\mathbf{u} \in \mathcal{S}} c_{\mathbf{u}} Q(\mathbf{u}) \, = \, \sum_{m \in \mathsf{B}_{d}} \alpha_{m} \; \sum_{\mathbf{u} \in \mathcal{S}} c_{\mathbf{u}} m(\mathbf{u}) \, = \, \sum_{m \in \mathsf{B}_{d}} \alpha_{m} m(\mathbf{b}) \, = \, Q(\mathbf{b})
\end{align*}
This finishes the proof of \Cref{thm:weight-interpolating}.
\end{proof}

\subsection{Error close to half the minimum distance (Proof of~\Cref{thm:uniquedegd})}\label{subsec:error-reduction-close-radius}
In this subsection, we explain the second step towards proving \Cref{thm:uniquedegd}. In the previous subsection, we described a local correction algorithm for $\mathcal{P}_{d}$ when the error is $1/\bigO_d((\log n)^{d})$. We now want to locally correct degree-$d$ polynomials when the error is close to the unique decoding radius, which is $1/2^{d+1}$.

% \mnote{Say that this section is subsumed by the list-correction algorithm, but we include for completeness? simplicity? slightly better dependence on $d$?}\snote{Done. See \Cref{rem:list-correction} below.}

\paragraph{}Suppose we have oracle access to a function $f$ that is $(1/2^{d+1} - \varepsilon)$-close to $\mathcal{P}_{d}(\Boo^{k}, \, G)$ for some constant $\varepsilon > 0$ and let $P$ be the unique polynomial in $\mathcal{P}_{d}(\Boo^{k}, \, G)$ such that $\delta(f, P) \leq (1/2^{d+1} - \varepsilon)$. The idea is to design a randomized algorithm $\mathcal{A}$ that has oracle access to the function $f$ and returns a probabilistic oracle $\mathcal{A}^{f}$ such that $\delta(\mathcal{A}^{f}, P) < 1/\bigO((\log n)^{d})$, with high probability. The algorithm $\mathcal{A}$ will be referred to as \textit{error reduction} algorithm. More specifically, the error reduction algorithm $\mathcal{A}$ will have two subroutines as follows:
\begin{enumerate}
    \item There is a randomized algorithm $\mathcal{A}_{1}$ that reduces the error from $(1/2^{d+1} - \varepsilon)$ down to $1/1000$.
    \item There is a randomized algorithm $\mathcal{A}_{2}$ that reduces the error from $1/1000$ down to $1/\bigO((\log n)^{d})$.
\end{enumerate}

\cite{ABPSS24-ECCC} gave the error reduction algorithms $\mathcal{A}_{1}$ and $\mathcal{A}_{2}$, which we state below.

\begin{lemma}[Error reduction for error close to half the minimum distance]\label{lem:error-reduction-main}
\cite[Lemma 3.13]{ABPSS24-ECCC} Fix any Abelian group $G$ and a positive integer $d$. For any $\eta_{1}, \delta$, where $\eta < \delta$ and $\delta < 1/2^{d+1} - \varepsilon$ for $\varepsilon > 0$, there exists a randomized algorithm $\mathcal{A}_{1}$ with the following properties:\newline
Let $f: \Boo^{n} \to G$ be a function and let $P: \Boo^{n} \to G$ be a degree $d$ polynomial such that $\delta(f,P) \leq \delta$, and let $\mathcal{A}_{1}^{f}$ denotes that $\mathcal{A}$ has oracle access to $f$, then
\begin{align*}
    \Pr[\delta(\mathcal{A}_{1}^{f}, P) > \eta_{1}] < 1/20,
\end{align*}
where the above probability is over the internal randomness of $\mathcal{A}_{1}$, and for every $\mathbf{x} \in \Boo^{n}$, $\mathcal{A}_{1}^{f}$ makes $2^{k}$ queries to $f$, where $k = \poly(\frac{1}{\varepsilon},\frac{1}{\eta_{1}})$.
\end{lemma}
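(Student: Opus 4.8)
As this lemma is quoted from~\cite{ABPSS24-ECCC}, I only sketch the argument. The algorithm $\mathcal{A}_1$ is a single \emph{brute-force subcube decoder}: it samples once a uniformly random hash function $h:[n]\to[k]$, where $k=\poly(1/\varepsilon,1/\eta_1)$ is fixed below and $h$ is its only internal randomness. On input $\mathbf{x}\in\Boo^n$, the oracle $\mathcal{A}_1^f$ forms the subcube $\mathsf{C}=C_{\mathbf{x},h}$ of \Cref{defn:random-embedding} (which contains $\mathbf{x}$ as the point $0^k$), queries $f$ at all $2^k$ points of $\mathsf{C}$, computes the unique degree-$d$ polynomial $Q$ on $\Boo^k$ (if any) with $\delta(f|_{\mathsf{C}},Q)<1/2^{d+1}$, and returns $Q(0^k)$; if no such $Q$ exists it returns an arbitrary value. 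The decoding step is a constant-size computation even for infinite $G$: enumerate all error sets $E'\subseteq\Boo^k$ with $|E'|<2^k/2^{d+1}$, and for each solve the $G$-linear interpolation system asking for a degree-$d$ polynomial agreeing with $f|_{\mathsf{C}}$ off $E'$. Since $\delta(\mathcal{P}_d)\ge 2^{-d}$ (\Cref{thm:basic}), any nonzero degree-$d$ polynomial is nonzero on more than $|E'|$ of the $2^k$ points, so the complement of each $E'$ is an interpolating set (the system has at most one solution), and the same distance bound forces all solutions found this way to coincide.

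\textbf{Good subcubes suffice.} Call a pair $(\mathbf{x},h)$ \emph{good} if $\delta(f|_{C_{\mathbf{x},h}},P|_{C_{\mathbf{x},h}})<1/2^{d+1}$. If $(\mathbf{x},h)$ is good then, using $\delta(\mathcal{P}_d)\ge 2^{-d}$ and the triangle inequality, any degree-$d$ polynomial other than $P|_{C_{\mathbf{x},h}}$ is at distance more than $2^{-d}-2^{-d-1}=2^{-d-1}$ from $f|_{C_{\mathbf{x},h}}$; hence $P|_{C_{\mathbf{x},h}}$ is the unique polynomial the decoder can return, so $\mathcal{A}_1^f(\mathbf{x})=P|_{C_{\mathbf{x},h}}(0^k)=P(\mathbf{x})$. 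Therefore $\delta(\mathcal{A}_1^f,P)\le\Pr_{\mathbf{x}}[(\mathbf{x},h)\text{ is not good}]$, and by Markov's inequality
\[
\Pr_h[\delta(\mathcal{A}_1^f,P)>\eta_1]\;\le\;\Pr_h\!\left[\Pr_{\mathbf{x}}[(\mathbf{x},h)\text{ not good}]>\eta_1\right]\;\le\;\frac{1}{\eta_1}\cdot\Pr_{\mathbf{x},h}[(\mathbf{x},h)\text{ not good}].
\]

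\textbf{Most subcubes are good.} The point is that when $\mathbf{x}$ is \emph{also} chosen uniformly at random, $C_{\mathbf{x},h}$ is exactly the random subcube of \Cref{lemma:sampling-subcube} (with $\mathbf{a}=\mathbf{x}$). Apply that lemma to the error set $T=\{\mathbf{z}\in\Boo^n:f(\mathbf{z})\ne P(\mathbf{z})\}$, for which $\mu=|T|/2^n\le\delta$, with $\varepsilon'=\varepsilon/2$ and $\eta'=\eta_1/100$. Since $\delta+\varepsilon/2<1/2^{d+1}$ by hypothesis, any $(\mathbf{x},h)$ with $|T\cap C_{\mathbf{x},h}|/2^k<\mu+\varepsilon'$ is good (up to the negligible event that some block of $h$ is empty, which we absorb into the error probability). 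Hence, taking $k\ge A/((\varepsilon')^4(\eta')^2)\cdot\log(1/(\varepsilon'\eta'))=\poly(1/\varepsilon,1/\eta_1)$ as required by \Cref{lemma:sampling-subcube}, we get $\Pr_{\mathbf{x},h}[(\mathbf{x},h)\text{ not good}]<\eta_1/100+o(1)<\eta_1/50$, and plugging into the display above yields $\Pr_h[\delta(\mathcal{A}_1^f,P)>\eta_1]<1/20$. Finally $\mathcal{A}_1^f$ makes exactly $2^k$ queries per input, as claimed.

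\textbf{Main obstacle.} There is no substantive obstacle --- the statement is verbatim from~\cite{ABPSS24-ECCC} --- but two points deserve care. First, brute-force unique decoding on the constant-dimensional subcube must be justified for an arbitrary, possibly infinite, Abelian group $G$; this uses only the DLSZ distance bound $\delta(\mathcal{P}_d)\ge 2^{-d}$ (and, to actually extract coefficients from an interpolating set, an integrality argument of the type already used in \Cref{thm:weight-interpolating}). Second, one must average over the input $\mathbf{x}$ to invoke \Cref{lemma:sampling-subcube}: for a \emph{fixed} $\mathbf{x}$ the non-origin points of $C_{\mathbf{x},h}$ are only biased product-distributed (coordinate bias $|\mathbf{y}|/k$), not uniform, so no per-point guarantee is available --- which is precisely why the conclusion is an upper bound on $\delta(\mathcal{A}_1^f,P)$ rather than on the failure probability at each individual point.
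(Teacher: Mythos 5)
Your reconstruction is correct: the paper imports this lemma verbatim from \cite{ABPSS24-ECCC} without reproving it, and the argument there is exactly the one you sketch — a single random hash $h$ defining a constant-dimensional subcube through the query point, brute-force unique decoding on that subcube (justified for arbitrary Abelian $G$ via the DLSZ distance bound and an interpolation/integrality argument), and \Cref{lemma:sampling-subcube} applied to the error set together with an averaging/Markov step over the input point. The two caveats you flag (decoding over infinite $G$, and the need to average over $\mathbf{x}$ because the non-origin points of a fixed subcube are not individually uniform) are precisely the points the original proof also has to handle, so nothing is missing.
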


\begin{lemma}[Error reduction for constant error]\label{lem:error-reduction-small-constant-main}
\cite[Lemma 3.8]{ABPSS24-ECCC} Fix any Abelian group $G$ and a positive integer $d$. The following holds for $\eta_1 < 1/2^{\bigO(d)}$ and $K = 2^{\bigO(d)}$ where the $\bigO(\cdot)$ hides a large enough absolute constant.

For any $\eta_2< \eta_1$, there exists a randomized algorithm $\mathcal{A}_{2}$ with the following properties: Let $f: \Boo^{n} \to G$ be a function and let $P: \Boo^{n} \to G$ be a degree-$d$ polynomial such that $\delta(f,P) \leq \delta$, and let $\mathcal{A}_{2}^{f}$ denotes that $\mathcal{A}_{2}$ has oracle access to $f$, then
\begin{align*}
    \Pr[\delta(\mathcal{A}_{2}^{f}, P) > \eta_{2}] < 1/20,
\end{align*}
where the above probability is over the internal randomness of $\mathcal{A}_{2}^{f}$. Further, for every $\mathbf{x} \in \Boo^{n}$, $\mathcal{A}_{2}^{f}$ makes $K^{T}$ queries to $f$ and $T = \bigO\paren{  \log\paren{ \dfrac{\log(1/\eta_{2})}{\log(1/\delta)} }  }  $.
\end{lemma}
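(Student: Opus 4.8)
The statement is quoted from \cite[Lemma 3.8]{ABPSS24-ECCC}, so the plan is to adapt that error‑reduction argument to degree $d$; I describe its structure and the changes needed. The construction is a \emph{recursive} error reduction: one builds a single gadget $\mathcal{G}$ that, given oracle access to any $g$ which is $\delta'$‑close to $P\in\cP_d$, produces an oracle that is (roughly) $(\delta')^{c}$‑close to $P$ for a fixed constant $c>1$ (say $c=2$), at the cost of a multiplicative factor $K=2^{\bigO(d)}$ in the query count. Starting from $\mathcal{B}_0:=f$ (so $\delta(\mathcal{B}_0,P)\le\delta$), one inductively sets $\mathcal{B}_t$ to be the result of running $\mathcal{G}$ on top of $\mathcal{B}_{t-1}$, and outputs $\mathcal{A}_2^f:=\mathcal{B}_T$. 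The recursion $\delta_t\le\delta_{t-1}^{c}$ yields $\delta_T\le\delta^{c^T}$, so choosing $T=\bigO(\log(\log(1/\eta_2)/\log(1/\delta)))$ makes $\delta_T\le\eta_2/20$, which (see below) is what is needed. Since $\mathcal{B}_t$ makes $K$ queries to $\mathcal{B}_{t-1}$ for each input, $\mathcal{B}_T$ makes $K^T$ queries to $f$, matching the claimed bound. The hypothesis $\eta_1<1/2^{\bigO(d)}$ is exactly what makes the error map a genuine contraction: a single invocation of $\mathcal{G}$ incurs a loss of $2^{\bigO(d)}$ (a union bound over $2^{\bigO(d)}$ query points), so the input error must already be below $2^{-\bigO(d)}$ for the super‑linear gain to beat this loss, and once it is, the sequence $\delta_0>\delta_1>\cdots$ stays below the threshold.

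\textbf{The gadget.} On input $\mathbf{x}$ and with oracle $g$ that is $\delta'$‑close to $P$, the gadget would use the interpolation identity for degree‑$d$ multilinear polynomials over an arbitrary Abelian group $G$: using the small‑dimension instance of \Cref{thm:weight-interpolating} one gets a nearly‑weight‑balanced interpolating set $\mathcal{S}\subseteq\Boo^{k'}$ with $|\mathcal{S}|=\bigO_d((k')^d)=2^{\bigO(d)}$ and integer coefficients $(c_{\mathbf{u}})_{\mathbf{u}\in\mathcal{S}}$ such that $P(\mathbf{x})=\sum_{\mathbf{u}\in\mathcal{S}}c_{\mathbf{u}}\cdot P|_{C_{\mathbf{x},h}}(\mathbf{u})$ for any appropriately weighted hash $h:[n]\to[k']$ (the passage from rational to integer coefficients is the Diophantine argument via Schrijver's corollary used in the proof of \Cref{thm:weight-interpolating}, and it works over every Abelian group). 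The gadget repeats the following $R=\bigO(1)$ times independently: sample a fresh weighted hash $h$, form $C_{\mathbf{x},h}$ (anchored at $\mathbf{x}$), query $g$ at the $|\mathcal{S}|$ points $x(\mathbf{u})$ ($\mathbf{u}\in\mathcal{S}$), and set $v_r:=\sum_{\mathbf{u}\in\mathcal{S}}c_{\mathbf{u}}\,g(x(\mathbf{u}))$; it then outputs the most frequent value among $v_1,\dots,v_R$. A single repetition returns $P(\mathbf{x})$ whenever $g$ agrees with $P$ at all the $x(\mathbf{u})$; since the marginal law of each $x(\mathbf{u})$ (over random $h$) is close to uniform on $\Boo^n$, each such event fails with probability at most $\delta'+\varepsilon_0$, where $\varepsilon_0$ is the statistical distance of $x(\mathbf{u})$ from uniform, so by a union bound over $2^{\bigO(d)}$ points a single repetition is wrong with probability at most $p:=2^{\bigO(d)}(\delta'+\varepsilon_0)<1/2$ (using $\delta'\le\eta_1<2^{-\bigO(d)}$ and $\varepsilon_0$ small). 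The most frequent value over $R$ independent repetitions is then wrong with probability at most $(4p)^{R/2}$, which for a large enough constant $R$ is at most $(\delta')^{c}$ as soon as $\delta'<2^{-\bigO(d)}$, giving the promised recursion.

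\textbf{Main obstacle, and bookkeeping.} The delicate point — the one I expect to be the real work — is keeping the gadget's query count $2^{\bigO(d)}$ (independent of $n$) while making $\varepsilon_0$ small: a single Hamming slice of $\Boo^{k'}$ is never a degree‑$\ge 1$ interpolating set (its own defining linear form vanishes on it), so $\mathcal{S}$ must meet at least $d+1$ parallel hyperplanes, and the points on the off‑center hyperplanes carry a nonzero per‑coordinate bias that \Cref{fact:close-to-uniform} inflates by a factor $\Theta(\sqrt n)$ in statistical distance; reconciling this with a bounded $k'$ is precisely the technical content of \cite[Lemma 3.8]{ABPSS24-ECCC}. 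I would import that argument verbatim, using the exponential hyperplane separation guaranteed by \Cref{thm:weight-interpolating}, the concentration of the error count on a random subcube from \Cref{lemma:sampling-subcube}, and the slice version of the DLSZ bound (nonzero degree‑$d$ polynomials are nonzero on many points of each slice, making the subcube decoding robust), and then verify that every ingredient is degree‑agnostic — it uses only that $\delta(\cP_d)=2^{-d}$ and that $\cP_d$ has interpolating sets of size $\bigO_d((k')^d)$ over an arbitrary Abelian group. Finally, for the probabilistic guarantee: $f$ is a fixed function with $\delta(f,P)\le\delta$, and the recursive analysis shows $\Pr[\mathcal{A}_2^f(\mathbf{x})\ne P(\mathbf{x})]\le\eta_2/20$ for every fixed $\mathbf{x}$ (over the internal randomness of $\mathcal{A}_2^f$); averaging over $\mathbf{x}\sim U_n$ gives $\E[\delta(\mathcal{A}_2^f,P)]\le\eta_2/20$, and Markov's inequality then yields $\Pr[\delta(\mathcal{A}_2^f,P)>\eta_2]<1/20$. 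The query count is $K^T$ with $K=R\cdot|\mathcal{S}|=2^{\bigO(d)}$ and $T=\bigO(\log(\log(1/\eta_2)/\log(1/\delta)))$, as required.
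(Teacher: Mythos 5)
This lemma is imported verbatim from \cite[Lemma 3.8]{ABPSS24-ECCC}; the present paper gives no proof of it, so there is nothing internal to compare against. Judged on its own terms, your proposal gets the outer skeleton right --- a constant-query gadget that maps error $\delta'$ to roughly $(\delta')^{c}$ via an interpolation identity plus an $R=\bigO(1)$-wise plurality vote, composed $T$ times so that $\delta^{c^{T}}\le \eta_2$ and the query count multiplies to $K^{T}$, with the hypothesis $\eta_1<2^{-\bigO(d)}$ absorbing the $2^{\bigO(d)}$ union-bound loss per level, and the final $\Pr[\delta(\mathcal{A}_2^f,P)>\eta_2]<1/20$ obtained by averaging over $\mathbf{x}$ and applying Markov. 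That accounting is consistent with the stated bounds.

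The gap is in the gadget itself, and it is not a detail you can ``import verbatim'': the weight-balanced interpolating set of \Cref{thm:weight-interpolating} cannot be miniaturized to dimension $k'=2^{\bigO(d)}$. The per-coordinate bias of a query point $x(\mathbf{u})$ for $\mathbf{u}\in\mathcal{S}$ is $\beta=2^{-\Omega(k'/(d+1))}$, and against an \emph{arbitrary} error set $E$ of density $\delta'$ (which is what the recursion hands you, since the error set of $\mathcal{B}_{t-1}$ is not structured) one cannot do better than $\Pr[x(\mathbf{u})\in E]\le \sqrt{\delta'}\cdot e^{\bigO(n\beta^2)}$; this is vacuous unless $\beta=\bigO(1/\sqrt{n})$, i.e.\ unless $k'=\Omega(\log n)$ --- which is exactly why \Cref{thm:uniquedeg-d-smallerror} pays $(\log n)^{d}$ queries. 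So the specific construction you propose provably cannot have query complexity independent of $n$, and the ``main obstacle'' paragraph, where you acknowledge this tension and then defer its resolution to the cited reference, is precisely where the content of the lemma lives. A correct gadget must avoid requiring every query point to be statistically close to uniform; for instance, with a \emph{uniformly random} hash $h:[n]\to[k]$ the points $x(\mathbf{y})$ with $|\mathbf{y}|=k/2$ are \emph{exactly} uniform, but the middle slice is not an interpolating set, so one must combine the exactly-uniform queries with a decoding step on the subcube (or some other mechanism) rather than a bare interpolation identity. As written, the proposal does not supply that mechanism, so the central step of the proof is missing.
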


Using \Cref{lem:error-reduction-main} and \Cref{lem:error-reduction-small-constant-main} along with \Cref{thm:uniquedeg-d-smallerror}, we get \Cref{thm:uniquedegd}. We restate \Cref{thm:uniquedegd} and finish the proof.

\uniquedegd*

\begin{proof}[Proof of \Cref{thm:uniquedegd}]
Let $f$ be a function with oracle access that is $\delta$-close to a degree-$d$ polynomial $P \in \mathcal{P}_{d}(\Boo^{k}, \, G)$.
\begin{enumerate}
    \item \Cref{lem:error-reduction-main} (with $\eta_{1} = 1/2^{\bigO(d)}$ chosen to satisfy the hypothesis of \Cref{lem:error-reduction-small-constant-main}) yields an oracle $\mathcal{A}_{1}^{f}$ that makes $\bigO_{\varepsilon}(1)$ queries to $f$ and is $\eta_{1}$-close to $P$ with probability at least $19/20$. We fix the randomness of $\mathcal{A}_1^f$ so that this holds.
    \item With probability at least $19/20$, we have oracle access to a function $\mathcal{A}_{1}^{f}$ that is $1/2^{\bigO(d)}$-close to $P$. Let $g := \mathcal{A}_{1}^{f}$. \Cref{lem:error-reduction-small-constant-main} (with $\eta_{2} = 1/\bigO((\log n)^{d})$) yields a probabilistic oracle $\mathcal{A}_{2}^{g}$ that makes $\mathrm{poly}(\log \log n, d)$ queries to $g$ and is $\eta_{2}$-close to $P$ with probability at least $19/20$. We again fix the randomness of $\mathcal{A}_2^g$ so that this holds.
\end{enumerate}
In other words, we have oracle access to an oracle $\mathcal{A}_{2}^{g}$ that makes $\mathrm{poly}(\log \log n, d) \cdot \bigO_{\varepsilon}(1)$ queries to $f$ and is $\eta_{2}$-close to $P$ with probability at least $9/10$. We now apply the local correction algorithm from \Cref{thm:uniquedeg-d-smallerror} with oracle access to $\mathcal{A}_{2}^{g}$ to get a local correction algorithm for $\mathcal{P}_{d}$ with $\delta = 1/2^{d+1} - \varepsilon$ for any constant $\varepsilon > 0$ and $q = \Tilde{\bigO}_\varepsilon((\log n)^{d})$.
\end{proof}

\begin{remark}
    \label{rem:list-correction}
    It should be noted that \Cref{thm:uniquedegd} also follows from the theorem on local list-correction~\Cref{thm:listdecoding} along with known results on \emph{locally testing low-degree polyomials}~\cite{bafna2017local}. However, we state this theorem separately for multiple reasons. Firstly, a weaker form of this theorem is required for the results on local list-correction. Secondly, the above result is natural and this gives a simpler proof of this than the one outlined above. And finally, this proof yields a better dependence on the degree parameter $d$.
\end{remark}

Having finished the proof of~\Cref{thm:uniquedegd}, in the following subsection, we now prove~\Cref{thm:const-torsion} by giving a local correction algorithm with improved query complexity for groups of small exponent (see~\Cref{sec:prelims} for a definition of exponent of a group).

\subsection{Local correction for groups of constant exponent}
\label{sec:const-torsion}

In this subsection, we show that we can bring down the query complexity of local correction from $\widetilde{\bigO}_d((\log n)^d)$ to a constant (i.e., independent of $n$) when $G$ is an Abelian torsion group of constant exponent. More specifically, we prove~\Cref{thm:const-torsion} from the introduction.

\consttorsion*

% Similarly, in the context of list-correction, we show the following.\\

% % \begin{theorem}
%     \label{thm:const-torsion-list} For every Abelian torsion group $G$ of exponent $M$ and every $\varepsilon > 0$, the family $\cP_d$ is $(1/2^d-\varepsilon, \bigO_\varepsilon(1), \bigO_{M,\varepsilon}(1), \exp(\bigO_d(1/\varepsilon)^{\bigO(d)}))$-locally list-correctable. 
% \end{theorem}

% We omit the proof of~\Cref{thm:const-torsion-list} as it is identical to that of~\Cref{thm:listdecoding}, except we replace the generic local corrector in the unique-decoding regime with that given by~\Cref{thm:const-torsion}. 

We note that to prove~\Cref{thm:const-torsion} for every $\delta = \frac{1}{2^{d+1}}-\varepsilon$, it suffices to show the following lemma for {\em some} constant $\delta=\Omega_{M,d}(1)$, since the error reduction steps from~\Cref{subsec:error-reduction-close-radius} can be applied without change.

\begin{lemma}
    \label{lem:const-torsion-sub} For every Abelian torsion group $G$ of exponent $M$, the family $\cP_d$ has a $(\delta, q)$-local correction algorithm for some $\delta = \Omega_{M,d}(1)$ and $q=\bigO_{M,d}(1)$. 
\end{lemma}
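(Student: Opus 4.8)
The plan is to adapt the constant-query local corrector of Bafna--Srinivasan--Sudan~\cite{bafna2017local} from fields of constant characteristic to all torsion groups of constant exponent. By the error-reduction lemmas of \Cref{subsec:error-reduction-close-radius}, which apply verbatim, it suffices to produce a $(\delta,q)$-local corrector for \emph{some} constant $\delta=\Omega_{M,d}(1)$ and $q=\bigO_{M,d}(1)$. The corrector is the natural random-restriction algorithm: on input $\mathbf{a}$, sample a \emph{uniform} random hash $h:[n]\to[k]$, where $k=\bigO_{M,d}(1)$ is a sufficiently large multiple of $2M$ fixed below; form the subcube $C_{\mathbf{a},h}$ of \Cref{defn:random-embedding}; query $f$ at the images $x(\mathbf{u})$ of all points $\mathbf{u}$ in the middle slice $\Boo^{k}_{k/2}$; and output $\sum_{\mathbf{u}\in\Boo^{k}_{k/2}}\gamma_{\mathbf{u}}\,f(x(\mathbf{u}))$ for fixed integers $\gamma_{\mathbf{u}}$ supplied by the key lemma below.

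The reason the queries must lie on the \emph{exact} middle slice is that, for $\mathbf{a}$ fixed and $h$ uniform, the coordinates of $x(\mathbf{u})$ are i.i.d.\ $\mathrm{Bern}(|\mathbf{u}|/k)$, so $x(\mathbf{u})$ is uniform on $\Boo^{n}$ precisely when $|\mathbf{u}|=k/2$; for any point off the slice, with $k$ constant and $n$ large, $x(\mathbf{u})$ is at statistical distance bounded away from $0$ from uniform (the quantitative converse of \Cref{fact:close-to-uniform}), and querying such points is useless against an adversarial error set. Given that all queried points are marginally uniform, each misses the error set $E$ of $f$ (of density at most $\delta$) with probability at least $1-\delta$, so a union bound over the $\binom{k}{k/2}=\bigO_{M,d}(1)$ queries shows that with probability at least $3/4$ we have $f=P$ on all queried points; then the output equals $\sum_{\mathbf{u}}\gamma_{\mathbf{u}}Q(\mathbf{u})$ where $Q:=P|_{C_{\mathbf{a},h}}\in\cP_{d}(\Boo^{k},G)$ and $x(0^{k})=\mathbf{a}$, which by the key lemma equals $Q(0^{k})=P(\mathbf{a})$.

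The heart of the argument is thus the following statement, which replaces \Cref{thm:weight-interpolating}/\Cref{lemma:local-correction-main} in the constant-query regime. \textbf{Key Lemma:} for $k$ a large enough multiple of $2M$ (with $k=\bigO_{M,d}(1)$) there exist integers $(\gamma_{\mathbf{u}})_{\mathbf{u}\in\Boo^{k}_{k/2}}$ such that $Q(0^{k})=\sum_{\mathbf{u}\in\Boo^{k}_{k/2}}\gamma_{\mathbf{u}}\,Q(\mathbf{u})$ for every degree-$\le d$ multilinear polynomial $Q$ with integer coefficients, read modulo $M$; since every element of a group of exponent $M$ is annihilated by $M$, the same integers $\gamma_{\mathbf{u}}$ then work for every $G$ of exponent $M$. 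I would prove this in three steps. (i) By the Chinese Remainder Theorem applied to the coefficient ring $\Z/M\Z\cong\prod_{p^{e}\,\|\,M}\Z/p^{e}\Z$, it suffices to produce the $\gamma_{\mathbf{u}}$ modulo $p^{e}$ for each prime power $p^{e}\|M$. (ii) A Diophantine argument in the spirit of \Cref{claim:matrix-integral-solution} (using that $\Z/p^{e}\Z$ is a self-injective ring) reduces the existence of such coefficients to the vanishing statement: \emph{every degree-$\le d$ multilinear polynomial over $\Z/p^{e}\Z$ that vanishes on $\Boo^{k}_{k/2}$ also vanishes at $0^{k}$}. (iii) Prove this vanishing statement. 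The symmetric and low-degree cases are easy: a symmetric such $Q$ has the form $r(|\mathbf{y}|)$ for an integer polynomial $r$ of degree $\le d$, and $r(k/2)-r(0)=\sum_{j\ge 1}c_{j}(k/2)^{j}\equiv 0\pmod{p^{e}}$ since $p^{e}\mid k/2$, so vanishing on the slice forces $r(0)=Q(0^{k})=0$; the degree-$1$ case follows from a swap argument together with $p^{e}\mid k/2$. For general $Q$, I would partition $[k]$ into $\bigO_{M,d}(1)$ blocks of size $\Theta_{M,d}(1)$ and peel off one block at a time exactly as in the proof of \Cref{lemma:local-correction-main}, using the slack between the block size and $d$ to vary the Hamming weight inside a block (while compensating in another block so as to remain on the global middle slice) and invoking \Cref{clm:hamming-ball-d-non-zero} inside each block; the divisibility $p^{e}\mid k/2$ is used at the base of the induction to kill the residual term depending only on the total weight.

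The step I expect to be the main obstacle is step (iii) for general degree: making the block-peeling induction work modulo $p^{e}$ rather than modulo $p$, which forces one to track $p$-adic valuations of the binomial coefficients appearing in the interpolation identities (Kummer's theorem) and is what pins down the required size of the constant $k=k(M,d)$ (which I expect to be at least exponential in $M$ and $d$). Conceptually this is precisely the phenomenon the paper flags: the linear obstruction $\sum_{j}y_{j}-k/2$ that rules out constant query complexity over $\Z$ or $\R$ degenerates, modulo $M$, to the harmless polynomial $\sum_{j}y_{j}$ (which vanishes at $0^{k}$), and the content of the lemma is that no higher-degree obstruction survives once $k$ is constant but a sufficiently large multiple of $2M$. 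Once the Key Lemma is established, the algorithm and analysis above yield \Cref{lem:const-torsion-sub}, and hence, via \Cref{subsec:error-reduction-close-radius}, \Cref{thm:const-torsion}.
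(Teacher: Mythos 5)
Your algorithm and its analysis match the paper's: a uniformly random constant-dimensional subcube through $\mathbf{a}$, queries confined to the exact middle slice (so that each query point is marginally uniform), and a fixed integer interpolation identity recovering the value at the origin of the subcube from the middle-slice values; your Key Lemma is precisely the paper's \Cref{clm:interpolate-slice} (stated there for a $2k$-dimensional cube and its weight-$k$ slice). Your reduction steps (i) and (ii) are also sound: CRT plus the self-injectivity of $\Z/p^e\Z$ does reduce the existence of integer coefficients to the dual vanishing statement, and integrality of the coefficients transfers the identity from $\Z/M\Z$ to any Abelian group of exponent $M$.

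The gap is in step (iii), which you yourself flag as the main obstacle: the vanishing statement for general degree $d$ is never proved, only sketched as a block-peeling induction ``exactly as in \Cref{lemma:local-correction-main}.'' That induction was engineered for a very different set (points pinned between two exponentially close hyperplanes with geometrically growing weights), and adapting it to the unweighted exact middle slice --- where changing the weight inside one block must be compensated in another so as to stay on the slice, and where the base case must kill a residual symmetric term using $p$-adic divisibility of $\binom{k/2}{j}$ for \emph{all} $j \le d$, not just $p^{e} \mid k/2$ --- is genuinely nontrivial and is not carried out. (Note also that ``$k$ a large enough multiple of $2M$'' is not by itself sufficient even in the symmetric case: for $p^{e}=4$ and $k/2=M$ one has $\binom{k/2}{2}=6$, which is not divisible by $4$; one needs $k/2$ to carry large $p$-adic valuation, as in the paper's choice $k=\prod_j p_j^{3r_js_j}$.) The paper avoids any such induction by exhibiting the coefficients explicitly: $c_{\mathbf{b}}=A$ for the middle-slice points supported on the first $k+d$ coordinates and $c_{\mathbf{b}}=0$ otherwise. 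By linearity the identity then reduces to two divisibility conditions --- $p_j \nmid \binom{k+d}{k}$ and $p_j^{r_j} \mid \binom{k+d-i}{k-i}$ for $i\in[d]$ --- which are verified directly with Kummer's theorem (\Cref{thm:kummer}). If you want to complete your route, the cleanest fix is to replace the induction in (iii) by this explicit primal construction; indeed, by the very duality you invoke in (ii), the paper's explicit coefficients already certify your vanishing statement, so the two formulations are equivalent once the construction is in hand.
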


The proof of the above lemma proceeds in an identical manner to the analysis of~\cite{bafna2017local} where the authors show this when $G$ is the underlying group of a {\em field} of constant characteristic. They do this by using Lucas' theorem which gives a criterion for a binomial coefficient to be divisible by a given prime. To handle the more general case of groups of constant exponent, we instead make use of Kummer's theorem which may be thought of as an analog of Lucas' theorem for prime powers. We state Kummer's theorem below, where the notation $S_p(n)$ denotes the sum of the digits of $n$ when written in base $p$.

\begin{theorem}[Kummer's theorem~\cite{kummer1852}] 
\label{thm:kummer}
    Let $p \in \mathbb{N}$ be a prime. Then for any integers $a \ge b \ge 0$, the largest power of $p$ that divides ${a \choose b}$ is equal to $\frac{S_p(b)+S_p(a-b)-S_p(a)}{p-1}$.
\end{theorem}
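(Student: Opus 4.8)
The plan is to deduce Kummer's theorem from Legendre's formula for the $p$-adic valuation of a factorial, so no genuinely new idea is needed. Throughout, write $v_p(m)$ for the exponent of $p$ in the prime factorization of a positive integer $m$; since $\binom{a}{b} = a!/(b!\,(a-b)!)$ and $v_p$ is additive on products, the quantity to be computed is $v_p\binom{a}{b} = v_p(a!) - v_p(b!) - v_p((a-b)!)$.

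The first step is Legendre's formula: for every $n \geq 0$, $v_p(n!) = \sum_{i \geq 1} \lfloor n/p^i \rfloor$, a finite sum since $\lfloor n/p^i \rfloor = 0$ once $p^i > n$. I would prove this by a double-counting argument: for each $i \geq 1$ the number of elements of $\{1,\dots,n\}$ divisible by $p^i$ is exactly $\lfloor n/p^i \rfloor$, so $\sum_{i\geq 1}\lfloor n/p^i\rfloor$ counts each $m \in \{1,\dots,n\}$ with multiplicity equal to the number of $i$ with $p^i \mid m$, i.e. with multiplicity $v_p(m)$; summing gives $\sum_{m=1}^n v_p(m) = v_p(n!)$.

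The second step converts this into the digit-sum form. Writing $n = \sum_{j\geq 0} d_j p^j$ with $d_j \in \{0,\dots,p-1\}$ (so $S_p(n) = \sum_j d_j$), the bound $\sum_{j<i} d_j p^j < p^i$ gives $\lfloor n/p^i\rfloor = \sum_{j\geq i} d_j p^{j-i}$; interchanging the two resulting finite sums and using $1 + p + \cdots + p^{j-1} = (p^j-1)/(p-1)$ yields
\[
v_p(n!) \;=\; \sum_{i \geq 1} \left\lfloor \frac{n}{p^i} \right\rfloor \;=\; \sum_{j \geq 1} d_j\,\frac{p^j - 1}{p - 1} \;=\; \frac{1}{p-1}\Bigl( \sum_{j \geq 1} d_j p^j - \sum_{j \geq 1} d_j \Bigr) \;=\; \frac{n - S_p(n)}{p-1}.
\]

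Finally, applying this formula to $a!$, $b!$, and $(a-b)!$ and using $a = b + (a-b)$ gives
\[
v_p\binom{a}{b} \;=\; \frac{a - S_p(a)}{p-1} - \frac{b - S_p(b)}{p-1} - \frac{(a-b) - S_p(a-b)}{p-1} \;=\; \frac{S_p(b) + S_p(a-b) - S_p(a)}{p-1},
\]
which is precisely the asserted identity (and incidentally shows the right-hand side equals the number of carries when adding $b$ and $a-b$ in base $p$, in particular a nonnegative integer). There is no substantive obstacle here — the result is classical — and the only points needing a little care are justifying the interchange of sums in the second step (equivalently, a short induction on the number of base-$p$ digits of $n$) and noting the boundary cases $b = 0$ and $b = a$, where both sides of the theorem vanish.
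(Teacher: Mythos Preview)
Your proof is correct and is the standard derivation via Legendre's formula. The paper does not actually prove this theorem; it is merely cited as a classical result of Kummer and then applied, so there is no in-paper proof to compare against.
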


We are now ready to prove~\Cref{lem:const-torsion-sub}.

\begin{proof}[Proof of~\Cref{lem:const-torsion-sub}]
    Let $M=\prod_{j=1}^{\ell} p_j^{r_j}$ be the prime factorization of $M$ (so $\ell \le \log M$). For each $j\in [\ell]$, let $s_j \in \mathbb{N}$ be the smallest integer such that $p_j^{r_js_j} > d$. Then, we choose $k=\prod_{j\in [\ell]} p_j^{3r_js_j}$. Note that $p_j^{r_j(s_j-1)} \le d$ and hence $k \le \prod_{j\in [\ell]}(dp_j^{r_j})^3 \le d^{3\ell}M^3=\bigO_{M,d}(1) $. We set $\delta = \frac{1}{4{2k \choose k}}=\Omega_{M,d}(1)$. 
    
    We claim that the algorithm below (\Cref{algo:constant-torsion}) is the desired local corrector. For a given point ${\bf a} \in \{0,1\}^n$, it queries $f$ and outputs $P({\bf a})$ with probability at least $3/4$, where $P \in \mathcal{P}_d$ is the unique degree-$d$ polynomial such that $\delta(f,P) \le \delta$. It is similar to~\Cref{algo:local-correction-sub-constant-error} in that it samples a random subcube $C_{{\bf a}, h}$ passing through ${\bf a}$ and queries it, but the crucial difference now is that we use a different interpolating set in the last step (as opposed to the ``weight balanced interpolating set'' of~\Cref{thm:weight-interpolating}). In particular, we will prove following claim. 

    \begin{claim}
    \label{clm:interpolate-slice}
        There exist integers $c_{{\bf b}}$ for ${\bf b} \in {[2k] \choose k}$ such that for every degree-$d$ polynomial $Q({\bf y}) \in \mathcal{P}_d(\{0,1\}^{2k}, G)$, we have that 
        \begin{align}Q(0^{2k}) = \sum_{{\bf b}\in {[2k]\choose k}} c_{\bf b} \cdot Q({\bf b}).\label{eqn:linear}\end{align}
    \end{claim}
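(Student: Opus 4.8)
The plan is to realize the evaluation functional $Q\mapsto Q(0^{2k})$, restricted to $\mathcal{P}_d(\{0,1\}^{2k},G)$, as an explicit \emph{integer} combination of the functionals $Q\mapsto Q(\mathbf{b})$ for $\mathbf{b}\in\binom{[2k]}{k}$. Since $G$ has exponent $M$ and $Q(\mathbf{b})=\sum_{S\subseteq\mathrm{supp}(\mathbf{b}),\,|S|\le d}\alpha_S$ when $Q=\sum_{|S|\le d}\alpha_S\prod_{i\in S}x_i$, it is enough to produce integers $c_{\mathbf{b}}$ with $\sum_{\mathbf{b}\,:\,\mathrm{supp}(\mathbf{b})\supseteq S}c_{\mathbf{b}}\equiv\mathbbm{1}[S=\emptyset]\pmod{M}$ for every $S\subseteq[2k]$ with $|S|\le d$; expanding $\sum_{\mathbf{b}}c_{\mathbf{b}}Q(\mathbf{b})$ and using that $Mg=0$ for all $g\in G$, the identity \eqref{eqn:linear} then falls out term by term. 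By the Chinese Remainder Theorem it suffices to solve this congruence system modulo each prime power $p^r=p_j^{r_j}$ dividing $M$ separately, and then glue.

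So fix a prime power $p^r=p_j^{r_j}\mid M$, write $s=s_j$ (so $p^{rs}>d$), and recall from the choice of $k$ that $v_p(k)=3rs$ (here $v_p$ is the $p$-adic valuation); write $k=p^{3rs}u$ with $\gcd(u,p)=1$. Set $t:=p^{rs}$ and fix any $U\subseteq[2k]$ with $|U|=k-t$. I would take as candidate coefficients $c^{(p)}_{\mathbf{b}}:=\gamma$ when $\mathrm{supp}(\mathbf{b})\cap U=\emptyset$ and $c^{(p)}_{\mathbf{b}}:=0$ otherwise, where $\gamma$ is the inverse of $\binom{k+t}{t}$ modulo $p^r$ (to be justified below). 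For $S$ meeting $U$ there is nothing to check; for $S$ disjoint from $U$ one computes $\sum_{\mathbf{b}\,:\,\mathrm{supp}(\mathbf{b})\supseteq S}c^{(p)}_{\mathbf{b}}=\gamma\binom{2k-|U|-|S|}{k-|S|}=\gamma\binom{k+t-|S|}{t}$, so the whole requirement reduces to the two congruences: (i) $p\nmid\binom{k+t}{t}$, and (ii) $p^r\mid\binom{k+t-j}{t}$ for every $1\le j\le d$.

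Both (i) and (ii) are statements about carries in base-$p$ addition, which is exactly where Kummer's theorem (\Cref{thm:kummer}) enters: $v_p\binom{a}{b}$ equals the number of carries when $b$ and $a-b$ are added in base $p$. For (i): adding $t=p^{rs}$ to $k$ produces no carry, since the unique nonzero base-$p$ digit of $t$ sits in position $rs<3rs$, where $k$ has digit $0$, and the digits of $k$ in positions $\ge 3rs$ are untouched; hence $\binom{k+t}{t}$ is a unit mod $p^r$ and $\gamma$ exists. For (ii): subtracting $j$ with $1\le j\le d<p^{rs}$ from $k=p^{3rs}u$ forces a borrow, and a short computation shows that $k-j$ has every base-$p$ digit in positions $rs,rs+1,\dots,3rs-1$ equal to $p-1$; adding $t=p^{rs}$ then starts a carry at position $rs$ that propagates all the way through position $3rs-1$, giving at least $2rs\ge r$ carries. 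This is precisely where the factor $3$ in the exponent $3r_js_j$ supplies the slack: $rs<3rs$ for (i), and $3rs-rs=2rs\ge r$ for (ii). Having verified (i)--(ii) for every prime power dividing $M$, the Chinese Remainder Theorem produces a single integer $c_{\mathbf{b}}$ congruent to $c^{(p)}_{\mathbf{b}}$ modulo $p_j^{r_j}$ for all $j$, which then satisfies the desired congruence modulo $M$, completing the proof.

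The only genuinely delicate point, and the step I expect to be the main obstacle, is the carry bookkeeping in (ii): one must verify carefully that the borrow created by subtracting any $j\le d$ fills exactly the digit block of positions $[rs,3rs)$ with $(p-1)$'s, which in turn relies on $d<p^{rs}$ so that $j$ disturbs only the lowest $rs$ digits of $k$. Once this is pinned down, condition (i), the reduction to the two binomial congruences, and the Chinese-remainder gluing are all routine.
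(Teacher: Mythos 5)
Your proof is correct, and it follows the same basic strategy as the paper's: reduce \eqref{eqn:linear} to integer congruences modulo $M$ on sums of the form $\sum_{\mathbf{b}\supseteq S}c_{\mathbf{b}}$, make the nonzero coefficients constant on the weight-$k$ points supported in a fixed coordinate window so these sums become binomial coefficients, and verify the resulting divisibility conditions with Kummer's theorem, using the slack provided by $v_{p_j}(k)=3r_js_j$. The difference is in the choice of window and how the primes are handled. The paper uses a single prime-independent window of size $k+d$ (i.e.\ $c_{\mathbf{b}}=A$ iff $\mathbf{b}$ is supported on the first $k+d$ coordinates, with $A\equiv\binom{k+d}{k}^{-1}\pmod M$), and must then check $p_j\nmid\binom{k+d}{k}$ and $p_j^{r_j}\mid\binom{k+d-i}{k-i}$ for all $j$ simultaneously via digit-sum bookkeeping. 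You instead use a prime-dependent window of size $k+p_j^{r_js_j}$ and glue the per-prime coefficient vectors by CRT. Your variant buys a cleaner carry analysis — adding the single power $t=p^{rs}$ to $k$ or to $k-j$ makes the carry count transparent (zero carries for (i), a full block of $2rs\ge r$ carries for (ii)) — at the modest cost of the extra CRT gluing step, which the paper avoids. Both verifications of the binomial conditions are sound; in particular your digit computation showing that $k-j$ has digit $p-1$ in every position of $[rs,3rs)$ is exactly right, and the point you flag as delicate is handled correctly by the decomposition $p^{3rs}-j=(p^{2rs}-1)p^{rs}+(p^{rs}-j)$ together with $j\le d<p^{rs}$.
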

    
\begin{algobox}
\begin{algorithm}[H]
\DontPrintSemicolon
\KwIn{$f(x_{1},\ldots,x_{n})$, $\mathbf{a} \in \Boo^{n}$, $\delta = \frac{1}{4{2k \choose k}}$}
\vspace{2mm}
Sample a uniformly random function $h: [n] \to [2k]$\label{algoline:local-correction-sample-hash} \\
\vspace{2mm}
$g(y_{1},\ldots,y_{2k}) \leftarrow f(x_{1},\ldots,x_{n})|_{C_{\mathbf{a}, h}}$ \;
\vspace{2mm}
Let $(c_{{\bf b}})_{{\bf b} \in {[2k] \choose k}}$ be the integral coefficients  given by \Cref{clm:interpolate-slice}. \;
\vspace{2mm}
\Return{$\sum_{\mathbf{b} \in {[2k] \choose k}} \; c_{\mathbf{b}} \cdot g(\mathbf{b}) $}
\caption{Local correction algorithm for sub-constant error}
\label{algo:constant-torsion}
\end{algorithm}
\end{algobox}

Assuming the correctness of~\Cref{clm:interpolate-slice}, we shall now finish the proof of~\Cref{lem:const-torsion-sub}. 

\paragraph{Queries:} The local corrector makes $ {2k \choose k} = \bigO_{M,d}(1)$ queries since to get the value of $g({\bf b})$ for some ${\bf b} \in {[2k] \choose k}$, we only need to know $f(x({\bf b}))$ under the mapping $h$.

\paragraph{Correctness:} For every ${\bf b} \in {[2k] \choose k}$, we note that the corresponding query point $x({\bf b}) \in \{0,1\}^n$ is uniformly distributed since the map $h$ used in~\Cref{algo:constant-torsion} is uniformly random and ${\bf b}$ has equal number of zeroes and ones. Hence, with probability at least $1-\delta \cdot {2k \choose k} \ge 3/4$, $g({\bf b}) = Q({\bf b})$ for all ${\bf b} \in {[2k] \choose k}$, where $Q \in \mathcal{P}_d(\{0,1\}^{2k}, G)$ is the restriction of $P$ on the subcube $C_{{\bf a}, h}$. Hence, by~\Cref{clm:interpolate-slice}, the outputted value equals $Q(0^{2k}) = P({\bf a})$ with probability at least 3/4. 
\end{proof}

It remains to prove~\Cref{clm:interpolate-slice}. For every ${\bf b} \in {[2k] \choose k}$, we set $c_{\bf b} = 0$ if ${\bf b}$ contains a 1 in any of the last $k-d$ coordinates and we set $c_{\bf b} = A$ otherwise, where $A \in \Z$ will be decided later. Recall that $M=\prod_{j\in[\ell]} p_j^{r_j}$ and $k=\prod_{j\in[\ell]} p_j^{3r_js_j}$, and we have that $p_j^{r_j s_j} > d \ge p_j^{r_j(s_j-1)}$ for all $j\in [\ell]$. By linearity, it suffices to show~\eqref{eqn:linear} for $Q({\bf y})$ of the form $g\cdot \prod_{j\in I} y_j$ for all $I \in {[2k] \choose \le d}$ and $g\in G$. According to our assignment of $c_{\bf b}$, it is clear that~\eqref{eqn:linear} holds true (LHS = RHS = 0) if $I$ contains any of the last $k-d$ coordinates. Otherwise, we have that $I \subseteq {[k+d] \choose \le d}$. If $I=\emptyset$, we have $Q(0^{2k})=g$ and $\sum_{{\bf b}\in {[2k] \choose k}} c_{\bf b}\cdot Q({\bf b}) = {k+d \choose k}A\cdot g$. On the other hand, if $|I|=i\ge 1$, we have $Q(0^{2k}) = 0$ and $\sum_{{\bf b}\in {[2k] \choose k}} c_{\bf b}\cdot Q({\bf b}) = {k+d-i \choose k-i}A\cdot g$ since every non-zero term must have $b_j=1$ for all $j\in I$. Hence, it suffices to find an integer $A$ satisfying the following two conditions:
\begin{align*}
     g & = {k+d \choose k} A \cdot g, \text{~for all~} g\in G, \text{~and~}\\
     0 & = {k+d-i \choose k-i} A \cdot g, \text{~for all~} g\in G \text{~and~} i\in [d].
\end{align*}

Since the order of every element $g$ divides the exponent $M$ of the group, for the above two conditions to hold, it suffices if for all $j\in [\ell]$ and $i\in [d]$, $p_j$ does not divide ${k+d \choose k}$ and that $p_j^{r_j}$ divides ${k+d-i \choose k-i}$. Then we can take $A$ to be any integer such that $A {k+d \choose k} + A' M =1$ for some integer $A'$ (such $A$ and $A'$ are guaranteed to exist as $M$ and ${k+d \choose k}$ are coprime). The rest of the proof is dedicated to verifying these divisibility constraints hold.

\begin{itemize}
    \item \textbf{$p_j$ does not divide ${k+d \choose k}$:} We will represent all the numbers $k,d,i$ etc.~in base $p_j$. We note that the last $r_j s_j$ digits of $k$ are zeroes since $p_j^{r_j}$ divides $k$. Furthermore, since $d < p_j^{r_j s_j}$, all the digits of $d$ except the last $r_j s_j$ many are zeroes. Hence, the sum of digits of $k+d$ is equal to the sum of the digits of $k$ and $d$ combined. That is, $S_{p_j}(k)+S_{p_j}(d)-S_{p_j}(k+d)=0$. Applying Kummer's theorem (\Cref{thm:kummer}) now finishes the proof. 
    
    \item \textbf{$p_j^{r_j}$ divides ${k+d-i \choose k-i}$:} By Kummer's theorem (\Cref{thm:kummer}), it suffices to show that $$\frac{S_{p_j}(d)+S_{p_j}(k-i)-S_{p_j}(k+d-i)}{p_j-1} \ge r_j.$$ We note that $S_{p_j}(k+d-i) = S_{p_j}(k)+S_{p_j}(d-i)$ by the same argument as the above paragraph. In addition, we have the trivial bounds $S_{p_j}(d) \ge 1$ and $S_{p_j}(d-i) \le (p_j-1)r_j s_j$. Finally, we give a lower bound for $S_{p_j}(k-i)$. Since $k$ has at least $3r_j s_j$ trailing zeroes, we get that $S_{p_j}(k-1) \ge S_{p_j}(k)+3r_js_j(p_j-1)-1$. But observe that $S_{p_j}(k-i) = S_{p_j}((k-1)-(i-1)) = S_{p_j}(k-1) - S_{p_j}(i-1)$ since the number of trailing $(p_j-1)$'s of $k-1$ exceeds the total number of (non-zero) digits of $(i-1)$. Therefore, we get
    \begin{align*}
        {S_{p_j}(d)+S_{p_j}(k-i)-S_{p_j}(k+d-i)} & \ge 1+S_{p_j}(k-1)-S_{p_j}(i-1)-S_{p_j}(k)-S_{p_j}(d-i)\\
        & \ge 1+(3r_js_j({p_j}-1)-1)-(p_j-1)r_js_j-(p_j-1)r_js_j\\
        & \ge r_js_j(p_j-1)\\
        & \ge r_j(p_j-1).
    \end{align*}
\end{itemize}

This finishes the proof of~\Cref{clm:interpolate-slice} and hence~\Cref{lem:const-torsion-sub} and~\Cref{thm:const-torsion}.

\section{Combinatorial list-decoding bound}

In this section, we are going to prove the following theorem. 

\comblistdegd*

\begin{comment}
\begin{theorem}
\label{thm:comb-bound}
    For any positive integer $d$, real $\varepsilon\in (0,1/2^d)$ and Abelian group $G$, the family of degree-$d$ polynomials from $\{0,1\}^n$ to $G$ i.e., $\mathcal{P}_d(\{0,1\}^n,G)$, is $(1/2^d-\varepsilon,\exp(O_d(1/\varepsilon)^{O(d)}))$-list-decodable.
\end{theorem}
\end{comment}

In other words, we will show that for any function $f: \Boo^{n} \to G$, the number of degree-$d$ polynomials that are $(1/2^d-\varepsilon)$-close to $f$ is $\exp(\bigO_d(1/\varepsilon)^{\bigO(d)})$.

We use the following result of~\cite{ABPSS24-ECCC} which gives a naive double-exponential upper bound on the list size. While~\cite{ABPSS24-ECCC} prove it for linear polynomials, the same proof extends to higher degree without much change. 

\begin{claim}[\cite{ABPSS24-ECCC}, Claim 4.1]
\label{clm:finite-list} For any function $f:\{0,1\}^n \to G$, the number of degree-$d$ polynomials that are $(1/2^d-\varepsilon)$-close to $f$ is at most $2^{2^n}$. 
\end{claim}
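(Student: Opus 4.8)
The plan is to exploit the fact that $1/2^d - \varepsilon$ is strictly less than the minimum distance $\delta(\mathcal{P}_d) = 2^{-d}$ guaranteed by the DLSZ lemma (\Cref{thm:basic}), so that any degree-$d$ polynomial close to $f$ is completely pinned down by its agreement pattern with $f$. Concretely, to each degree-$d$ polynomial $P$ that is $(1/2^d - \varepsilon)$-close to $f$, I would associate its \emph{agreement set} $A_P := \setcond{\mathbf{x} \in \Boo^{n}}{P(\mathbf{x}) = f(\mathbf{x})} \subseteq \Boo^{n}$, and then show that the assignment $P \mapsto A_P$ is injective on the collection of all such polynomials. The claimed bound of $2^{2^n}$ follows immediately, since $\Boo^{n}$ has exactly $2^{2^n}$ subsets.

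For the injectivity, suppose $P$ and $P'$ are degree-$d$ polynomials, both $(1/2^d - \varepsilon)$-close to $f$, with $A_P = A_{P'} =: A$. By closeness we have $|A| \geq (1 - 2^{-d} + \varepsilon)\cdot 2^n$. On every point of $A$ we have $P(\mathbf{x}) = f(\mathbf{x}) = P'(\mathbf{x})$, so $P$ and $P'$ agree on all of $A$; hence the set of points where $P \neq P'$ is contained in $\Boo^{n} \setminus A$ and has size at most $(2^{-d} - \varepsilon)\cdot 2^n$, giving $\delta(P, P') \leq 2^{-d} - \varepsilon < 2^{-d}$. But the second part of \Cref{thm:basic} states that any two \emph{distinct} degree-$d$ polynomials over $G$ differ on at least a $2^{-d}$ fraction of $\Boo^{n}$, i.e. $\delta(\mathcal{P}_d) \geq 2^{-d}$. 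This forces $P = P'$, establishing injectivity, and hence the bound.

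There is essentially no serious obstacle here: the argument is a direct consequence of the minimum-distance bound and is identical in structure to the corresponding proof in \cite{ABPSS24-ECCC} for the linear case. The only things to verify are routine bookkeeping — that $(1/2^d-\varepsilon)$-closeness of two polynomials to the \emph{same} $f$ indeed yields pairwise distance strictly below $2^{-d}$ (this is exactly where $\varepsilon > 0$ is used), and that the DLSZ lemma, which the excerpt already records as valid over an arbitrary Abelian group $G$, applies verbatim. In particular no finiteness or structural hypothesis on $G$ is needed, which is why the bound is meaningful (though far from optimal) even for infinite $G$ such as $\mathbb{R}$; sharpening it to the $n$-independent bound of \Cref{thm:comblistdegd} is precisely the work done in the rest of the section.
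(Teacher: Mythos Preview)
Your proposal is correct and is exactly the standard argument: the paper does not give a proof here but cites \cite{ABPSS24-ECCC} (noting that the linear case extends unchanged to degree $d$), and the agreement-set injectivity argument you wrote out is precisely that proof. One very minor remark: you do not even need $A_P = A_{P'}$ to be literally equal for the injectivity step as stated --- you only use that $P$ and $P'$ both agree with $f$ on $A_P$, which already forces $\delta(P,P') < 2^{-d}$ --- but of course the map $P \mapsto A_P$ being injective is what gives the $2^{2^n}$ count, so the framing is fine.
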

We will subsequently improve the above bound to something independent of $n$, but to do that, we will need this naive bound.
Furthermore, using a result from previous work~\cite[Claim 4.2]{ABPSS24-ECCC},\footnote{Though this result is only stated for degree $1$ in~\cite{ABPSS24-ECCC}, it works without any change for all degrees.} we know that proving~\Cref{thm:comblistdegd} for finite Abelian $G$ implies the same theorem for all Abelian $G$. Hence, we will assume that $G$ is a finite Abelian group. By using the structure theorem of finite Abelian groups, we can decompose $G$ as 
$$G \cong G_1 \times G_2,$$
where $G_1$ is the product of finitely many $p$-groups where each $p$ is a prime number that is at least $p_0$ (for some appropriate choice of $p_0=p_0(d)$ to be fixed later) and $G_2$ is the product of finitely many $p$-groups where each $p$ is a prime less than $p_0$. We provide upper bounds for list-decoding over $G_1$ and $G_2$ separately and combine the two bounds to get a final bound on the list size over $G$. We state the upper bounds formally below, where we use the notation $\mathsf{List}^f_\varepsilon$ to denote the set of degree-$d$ polynomials that are $(1/2^d-\varepsilon)$-close to $f$.

\begin{theorem}[Combinatorial bound for a product of $p$-groups where each $p\geq p_0$]\label{thm:largechar}
Let $d\ge 1$ and $G$ be a product of finitely many $p$-groups, where each $p \geq p_0 = 2^{2^{\alpha d^3}}$ for a sufficiently large constant $\alpha$. Then for every function $f: \Boo^{n} \to G$, we have  $|\mathsf{List}_{\varepsilon}^{f}| \leq (1/\varepsilon)^{2^{2^{\bigO(d^3)}}}$.

In particular, the list size is polynomial in $1/\varepsilon$ for a constant $d$.\\
\end{theorem}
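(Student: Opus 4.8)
The plan is to combine anti-concentration of non-sparse low-degree polynomials over groups with no small-order elements (\Cref{lem:anticonc}) with a combinatorial argument that exploits sparseness and recurses on the degree $d$; the crude finiteness bound \Cref{clm:finite-list} is used only to know that the list is finite to begin with.

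First the setup. By hypothesis $G$ has no nonzero element of order less than $p_0$. Fix $f$ and let $\mathsf{List}_\varepsilon^f = \{P_1,\dots,P_L\}$, which is finite by \Cref{clm:finite-list}. Translating $f$ and each $P_i$ by $-P_1$ preserves all pairwise distances and hence the list, so we may assume $0 \in \mathsf{List}_\varepsilon^f$. Then $\delta(f,0) \le 1/2^d - \varepsilon$, and by the triangle inequality every $P_i$ satisfies $\delta(P_i,0) = \Pr_{\mathbf{a}}[P_i(\mathbf{a}) \ne 0] \le 2(1/2^d-\varepsilon) = 1/2^{d-1} - 2\varepsilon$, and likewise $\delta(P_i - P_j, 0) \le 1/2^{d-1} - 2\varepsilon$ for all $i \ne j$, while still $\delta(P_i,f) \le 1/2^d - \varepsilon$. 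Now apply \Cref{lem:anticonc}: over such a group, a degree-$d$ polynomial of sparsity at least $s$ vanishes on a uniformly random point of $\{0,1\}^n$ with probability at most $1 - 2^{-(d-1)} + \eta_d(s)$, where $\eta_d(s) \to 0$ (once $s$ exceeds a threshold $s_1(d) = 2^{2^{O(d^3)}}$ dictated by the order bound, $\eta_d(s) = O_d(1/\sqrt s)$). Since each $P_i$ and each $P_i - P_j$ vanishes with probability at least $1 - 2^{-(d-1)} + 2\varepsilon$, we get $\eta_d(\spars(P_i)) \ge 2\varepsilon$, hence $\spars(P_i) \le s_0 := s_1(d) + O_d(1/\varepsilon^2)$, and similarly for all differences. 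So every polynomial in the (translated) list has at most $s_0$ monomials and depends on at most $d s_0$ variables, and the list lies in a low-sparsity ball around $0$.

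The combinatorial heart then bounds $L$ from sparseness alone, via a dichotomy together with induction on $d$ (the base case $d=0$ is trivial: a function is $(1-\varepsilon)$-close to at most $1/\varepsilon$ constant functions). If some variable $x^*$ appears in at least a $1/K$-fraction of the $P_i$ for a suitable threshold $K$, restrict to that sub-collection and write $P_i = x^*\cdot\partial_{x^*}P_i + P_i|_{x^*=0}$. Since $(P_i|_{x^*=1}) - (P_i|_{x^*=0}) = \partial_{x^*}P_i$ and $\delta(f|_{x^*=1},P_i|_{x^*=1}) + \delta(f|_{x^*=0},P_i|_{x^*=0}) = 2\delta(f,P_i) \le 1/2^{d-1} - 2\varepsilon$, each derivative $\partial_{x^*}P_i$ is a degree-$(d-1)$ polynomial that is $(1/2^{d-1} - 2\varepsilon)$-close to the fixed function $f|_{x^*=1} - f|_{x^*=0}$; by induction there are at most $(1/2\varepsilon)^{2^{2^{O((d-1)^3)}}}$ distinct such derivatives. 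For each fixed derivative value, at least half the remaining $P_i$ have $\delta(f|_{x^*=0}, P_i|_{x^*=0}) \le 1/2^d - \varepsilon$ (one of the two conditional agreement probabilities always matches or beats the unconditional one, which is $\ge 1 - 1/2^d + \varepsilon$), and such $P_i$ are determined by $P_i|_{x^*=0}$, a degree-$\le d$ polynomial on strictly fewer variables in the radius-$(1/2^d - \varepsilon)$ list of $f|_{x^*=0}$; iterating the removal of ``popular'' variables strictly shrinks the variable set and terminates. If, on the other hand, no variable lies in a $1/K$-fraction, the supports are spread out: the Sunflower lemma (applied to the variable sets, each of size at most $d s_0$) yields a large sunflower, and conditioning on the small core leaves polynomials on pairwise disjoint variable sets whose evaluations at a random point are mutually independent; a Chernoff/second-moment comparison with $f$'s value then forces the number of petals --- hence $K$ and $L$ --- to be $\mathrm{poly}(1/\varepsilon)$. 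Unrolling the recursion on $d$, and keeping $1/\varepsilon$ only ever in the base while the $d^3$-tower propagates from $p_0$ through $s_1(d)$ into the final exponent, gives $L \le (1/\varepsilon)^{2^{2^{O(d^3)}}}$.

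The main obstacle is exactly this bookkeeping in the combinatorial step: it must be arranged so that $1/\varepsilon$ enters only polynomially. A direct Sunflower-lemma argument over variable sets of size $\Theta_d(1/\varepsilon^2)$ only produces a bound of the form $\exp(\exp(\mathrm{poly}(1/\varepsilon)))$, which is far too weak; the popular-variable/degree-recursion branch is needed precisely so that no $(1/\varepsilon)$-sized quantity is ever placed in an exponent. A secondary difficulty, already inside \Cref{lem:anticonc}, is obtaining an anti-concentration estimate that is quantitatively strong in both the sparsity and the order parameter, which is what forces the $p_0 = 2^{2^{\Theta(d^3)}}$ threshold and the resulting tower in the exponent.
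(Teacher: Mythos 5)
Your opening reduction to sparse polynomials does not go through as stated, and this is where the proposal diverges fatally from what \Cref{lem:anticonc} actually provides. That lemma's hypothesis couples the order of the group elements to the sparsity parameter: to conclude $\Pr[Q\ne 0]\ge 1/2^{d-1}-2^{\bigO(d^3)}/\sqrt{s}$ for a polynomial of sparsity $\ge s$, you need every non-zero element of $G$ to have order greater than $(s+1)!$. The theorem only guarantees orders at least $p_0=2^{2^{\alpha d^3}}$, a quantity independent of $\varepsilon$, so the lemma can only be invoked at the $\varepsilon$-independent threshold $s_0=2^{\bigO(d^3)}$ with $(s_0+1)!\le p_0$. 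At that threshold the contrapositive of the lemma rules out $\spars(P_i)\ge s_0$ only when $2\varepsilon$ exceeds the fixed error term $\approx 2^{-2d-1}$; for small $\varepsilon$ you get no contradiction and hence no sparsity bound on the $P_i$ at all. Your claimed deduction ``$\eta_d(\spars(P_i))\ge 2\varepsilon$, hence $\spars(P_i)\le s_1(d)+\bigO_d(1/\varepsilon^2)$'' implicitly applies the lemma at $s=\spars(P_i)$, which would require orders exceeding $(\spars(P_i)+1)!$ — unavailable. The paper circumvents exactly this: it applies \Cref{lem:anticonc} only at $s_0$, uses it to show the ``sparse-difference'' graph on the list has no independent set of size $4^d$, and then extracts via Tur\'an a single vertex $P_\nu$ with $\ge t/4^d-1$ neighbours; only the \emph{differences} $P_{\nu_i}-P_\nu$ are guaranteed sparse, with sparsity $\le s_0=2^{\bigO(d^3)}$ independent of $\varepsilon$.

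Even setting that aside, the downstream combinatorics does not deliver the claimed bound. With your sparsity bound the variable sets have size $k=\Theta_d(1/\varepsilon^2)$, and the Sunflower lemma in your ``non-popular'' branch then costs a factor $k!(r-1)^k$, so that branch alone already caps you at $L\le \exp(\poly(1/\varepsilon))$ — you cannot escape placing $1/\varepsilon$ in an exponent merely by adding the popular-variable branch, since the non-popular branch remains the base of your dichotomy. (In the paper the sunflower is applied to sets of size $s_0d=2^{\bigO(d^3)}$, which is why the loss is only a $2^{\bigO(d^3)}$-th root.) Your popular-variable recursion also recurses on $n-1$ variables at fixed $d$ and $\varepsilon$ with a multiplicative loss per step, with no bound on the depth; and ``conditioning on the small core'' in the disjointification step is precisely the delicate point for $d\ge 2$ (restrictions can annihilate or identify polynomials, e.g.~$x_1\cdot(x_2+3x_3-x_4)$ under $x_1=0$), which the paper handles with a two-stage restriction that first fixes the $\mathbf{y}$-degree and a maximal core monomial. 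The correct skeleton is: anti-concentration at the fixed threshold $s_0$ + Tur\'an to get sparse differences, bucket by ($\varepsilon$-independent) variable sets and bound each bucket by $(2/\varepsilon)2^{2^{s_0d}}$ via the naive bound on a restriction, then sunflower + careful core-setting + Chernoff on disjoint supports.
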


\noindent
We now state the combinatorial bound for the second case.

\begin{restatable}[Combinatorial bound for a product of $p$-groups where each $p<p_0$]{theorem}{combsmallchar}\label{thm:smallchar}
Let $d\ge 1$ and $G$ be a product of finitely many $p$-groups, where each $p \le 2^{2^{\bigO(d^2)}}$. Then for every function $f: \Boo^{n} \to G$ we have, $|\mathsf{List}_{\varepsilon}^{f}| \leq \exp(\bigO_d(1/\varepsilon)^{\bigO(d)})$.
\end{restatable}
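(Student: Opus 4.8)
The plan is to argue by contradiction: assuming the list is large, extract from it a highly spread-out subfamily using the Sunflower lemma and then rule such a subfamily out with the tail estimate of \Cref{lem:tailbound}. By \cite[Claim 4.2]{ABPSS24-ECCC} we may assume $G$ is finite, and by the hypothesis together with the structure theorem and grouping the $p$-group factors by prime (there are $<p_0=\bigO_d(1)$ primes involved, and the list over a product of groups is at most the product of the lists over the factors, since a coordinate projection only shrinks distances), it suffices to bound $|\mathsf{List}^f_\varepsilon|$ when $G$ is a finite abelian $p$-group for a single prime $p<p_0$. A further reduction modulo $p$ lets us phrase the relevant vanishing statements in terms of polynomials over the field $\F_p$, so that \Cref{lem:tailbound} (which is stated over a field) applies. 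We also keep \Cref{clm:finite-list} in hand, so that $L:=|\mathsf{List}^f_\varepsilon|$ is at least finite before the extraction begins.

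Write the list as $P_1,\dots,P_L$. The \emph{structuring step} works with the pairwise differences $P_i-P_j$, which are nonzero polynomials of degree $\le d$; fix a graded monomial order and record the leading monomial $\mathrm{LM}(P_i-P_j)$, a set of at most $d$ variables. Iterating the Sunflower lemma $\bigO(d)$ times on these sets peels off a chain of core variable sets and leaves a large index set $T\subseteq[L]$ and a fixed core $Y$ of fewer than $d$ variables, with the property that, once the coordinates indexed by $Y$ have been dealt with appropriately, the differences among $\{P_i:i\in T\}$ have pairwise \emph{disjoint} leading monomials. One cannot simply freeze the $Y$-coordinates to $1$, since that could wipe out all agreement with $f$; instead I would pass to a random subcube through a uniformly random point (as in the algorithmic parts of the paper) and invoke the sampling guarantee of \Cref{lemma:sampling-subcube} to retain, with good probability, a subfamily that is still $(1/2^d-\bigO(\varepsilon))$-close to $f$ on the subcube while also realizing the disjoint-leading-monomial structure. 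Each Sunflower step costs a factor of at most $(\bigO_d(1/\varepsilon))^{\bigO(d)}$ in the surviving family size.

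The final step plays the surviving structured subfamily off against \Cref{lem:tailbound}: because its members' leading monomials are pairwise disjoint, the probability that a $(1-2^{-d}+\varepsilon')$-fraction of them vanish simultaneously at a uniformly random Boolean point is exponentially small in $|T|$, whereas the assumed closeness of all the $P_i$ to $f$ forces — by a moment computation on a carefully chosen sub-collection of the differences — a simultaneous vanishing of this magnitude to be too likely, contradicting the tail bound once $|T|$ exceeds $\poly_d(1/\varepsilon)$. Unwinding the Sunflower extraction then converts this into $L\le\exp(\bigO_d(1/\varepsilon)^{\bigO(d)})$. The main obstacle is exactly this balancing act: each difference polynomial vanishes with probability at most $1-2^{-d}$ by DeMillo--Lipton--Schwartz--Zippel, while closeness to $f$ only yields a weaker lower bound on pairwise agreement, so making the two estimates meet requires both a careful choice of which differences (and how many) to feed into \Cref{lem:tailbound} and a tail estimate that is essentially tight at the threshold $1-2^{-d}$. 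That tightness is precisely what the Footprint bound provides — combined with the Panconesi--Srinivasan trick — inside \Cref{lem:tailbound}, and it is the analogue in this setting of the hypercontractivity and Chernoff inputs used in the degree-$1$ ``same-variables versus disjoint-variables'' dichotomy of \cite{ABPSS24-ECCC}, which is what has to be coordinated with the random-restriction step so that the subfamily stays both large and faithfully constrained by $f$.
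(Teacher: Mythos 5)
Your proposal correctly identifies the two key tools for the prime\emph{-field} case (the Sunflower lemma applied to leading monomials, and the footprint/Panconesi--Srinivasan tail bound of \Cref{lem:tailbound}), but it has a genuine gap at the very first reduction. You reduce to a single prime $p$ and then say that ``a further reduction modulo $p$ lets us phrase the relevant vanishing statements in terms of polynomials over the field $\F_p$.'' For a general $p$-group such as $\mathbb{Z}_{p^k}$ this reduction is many-to-one: many distinct polynomials in $\mathsf{List}^f_\varepsilon$ over $G$ can collapse to the same polynomial modulo $p$, so a list bound over $\F_p$ does not by itself bound the list over $G$. Controlling the fibers of this collapse is precisely the content of the paper's lifting lemma (\Cref{lem:dgksvariant}): one builds a chain of quotients $G = G_0 \to G_1 \to \cdots \to G_h = \{0\}$, each by an order-$p$ subgroup, organizes the lists $\mathsf{List}^{f_i}_\varepsilon$ into a tree, and proves that the branching factor is at most the prime-case bound $L$ while the agreement surplus $\rho$ is superadditive over siblings; a potential-function argument then yields $|\mathsf{List}^f_\varepsilon| \le L^{\bigO(\log 1/\varepsilon)}$. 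The paper explicitly notes that the off-the-shelf characterization of \cite{DinurGKS-ECCC} used in the $d=1$ case does not apply for $d>1$, so this step cannot be waved away.

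A secondary issue is your structuring step for the prime case. The paper does not apply the Sunflower lemma directly to leading monomials of pairwise differences and then restrict to a random subcube; it first runs a pigeonhole argument (\Cref{clm:pigeonhole}) that iteratively subtracts ``plurality polynomials'' to extract $\ell = \Omega(\log_p L)$ polynomials with \emph{distinct} leading monomials that are all still close to a single shifted function, and only then applies the Sunflower lemma once to those leading monomials. Your proposed random-subcube restriction (which identifies variables) would in general destroy the disjoint-leading-monomial structure that \Cref{lem:tailbound} needs, and the sampling guarantee of \Cref{lemma:sampling-subcube} does not interact with monomial orderings. The paper instead handles the sunflower core $J$ by conditioning on the non-core coordinates, applying the tail bound (via \Cref{cor:tailbound}) to the top-coefficient polynomials $Q_{i,J}$ whose leading monomials are pairwise disjoint, and paying an explicit $2^{-|J|}$ factor for the core; this is the ``careful balancing'' you allude to but do not carry out.
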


Assuming \Cref{thm:largechar} and \Cref{thm:smallchar}, we immediately get \Cref{thm:comblistdegd} because for any function $f$ with co-domain $G=G_1 \times G_2$ can be written as $f = (f_1,f_2)$ where $f_1$ has co-domain $G_1$ and $f_2$ has co-domain $G_2$. Furthermore, if $P = (P_{1},P_{2}) \in \mathsf{List}_{\varepsilon}^{f}$, then for each $i \in [2]$, $P_{i}$ must be in $\mathsf{List}_{\varepsilon}^{f_{i}}$. 

We move on to proving the above two theorems in the next two subsections respectively.

\subsection{Combinatorial bound for a product of $p$-groups ($p\geq p_0$)}\label{subsec:large-groups}

Our proof of~\Cref{thm:largechar} builds upon some of the ideas of the combinatorial bound of~\cite{ABPSS24-ECCC} (Theorem 4.4) which handles degree $d=1$. However, there are various places where higher degree polynomials are not as well-behaved and need more complicated analysis. Indeed the anti-concentration bound (see \Cref{lem:anticonc} below) we need is more involved and can be of independent interest.   
Before the full proof, we now give  a rough outline of the proof of~\Cref{thm:largechar}. It can be divided into the following two parts.

\begin{enumerate}
    \item \textbf{Anti-concentration of non-sparse polynomials}: Suppose there are two degree-$d$ polynomials $P_1$ and $P_2$, both $(1/2^d-\varepsilon)$-close to a function $f$. Then they must agree with each other on a sufficiently large fraction of the domain. Indeed, $\delta(P_1,P_2) \le \delta(f,P_1)+\delta(f,P_2) < 1/2^d + 1/2^d=1/2^{d-1}$. Hence, the density of the zeroes of $P=P_1-P_2$ in $\{0,1\}^n$ is greater than $1-1/2^{d-1}$. Suppose that it is at least $1-1/2^{d-1}+c$ for some constant $c$.\footnote{We will show the existence of such a $c$ in the formal proof.} Our main idea here is that this cannot happen for polynomials $P$ with many monomials. In particular, we show that if $P$ has sufficiently large sparsity (defined as the number of non-zero monomials), then this fraction is less than $1-1/2^{d-1}+c$. This allows us to reduce the combinatorial bound to the case of just counting polynomials of ``small'' sparsity (by a small blow-up in the list size). We expand more on this in the second step. Going back to showing the anti-concentration bound itself, we will prove the following.

        % \mnote{Highlight in intro?} 

        \begin{thmbox}
        \begin{lemma}[Anti-concentration bound for non-sparse polynomials]
        \label{lem:anticonc}
        For all positive integers $d,s$ and for every Abelian group $G$ in which all the non-zero elements have order greater than $(s+1)!$, the following holds: For every degree-$d$ polynomial $Q({\bf x})$ over $G$ of sparsity at least $s$, we have 
        $$\Pr_{{\bf x}\sim \{0,1\}^n}[Q({\bf x})\ne 0] \ge 1/2^{d-1} - {2^{\bigO(d^3)}}/\sqrt{s}.$$
        \end{lemma}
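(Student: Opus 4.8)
The plan is to prove \Cref{lem:anticonc} by induction on $d$, after two routine reductions. First, we may assume $\deg Q = d$ exactly, since if $\deg Q = d' < d$ the degree-$d'$ bound is stronger ($1/2^{d'-1} \ge 1/2^{d-1}$ and $2^{O(d'^3)}/\sqrt s \le 2^{O(d^3)}/\sqrt s$). Second, letting $H \le G$ be the finitely generated subgroup generated by the nonzero coefficients of $Q$, every nonzero element of $H$ still has order greater than $(s+1)!$; using this we can choose a homomorphism $\phi$ out of $H$ (into $\mathbb{Z}$, or into $\mathbb{Z}/N$ for a large prime $N > (s+1)!$) that is nonzero on every nonzero coefficient of $Q$, so that $\Pr[Q\ne 0] \ge \Pr[\phi\circ Q\ne 0]$ while $\phi\circ Q$ has the same degree and sparsity. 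The base case $d=1$ is then exactly the Erd\H{o}s--Littlewood--Offord anti-concentration bound for a linear form with at least $s$ nonzero coefficients, which yields $\Pr[Q\ne 0]\ge 1-O(1/\sqrt s)$ (matching the statement, as $1/2^{d-1}=1$); over $\mathbb{Z}$ or $\mathbb{R}$ this is~\cite{erdos}, and the order hypothesis is precisely what makes the reduction above legitimate (it excludes degeneracies such as a coefficient of order $2$). The robust polynomial anti-concentration of~\cite{MNV} is the other external input we will invoke.

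For the inductive step, fix $Q$ of degree exactly $d$ with $\mathrm{spars}(Q)\ge s$, write $\partial_i Q$ for the coefficient of $x_i$ in $Q$ (a polynomial of degree $\le d-1$ in the other variables), and set $\sigma := \max_i \mathrm{spars}(\partial_i Q)$. The key elementary observation is that for any assignment to the variables other than $x_i$ at which $\partial_i Q$ does not vanish, the restricted $Q$ is a non-constant affine function of $x_i$, hence nonzero for at least one value of $x_i$; thus $\Pr[Q\ne 0]\ge \tfrac12\Pr[\partial_i Q \ne 0]$ for every $i$. If $\sigma \ge s/2^{cd^2}$ for a suitable absolute constant $c$, take the maximizing $i$: then $\partial_i Q$ has degree $\le d-1$ and sparsity $\ge s/2^{cd^2}$, and by the inductive hypothesis $\Pr[Q\ne 0]\ge \tfrac12\big(1/2^{d-2} - 2^{O((d-1)^3)}\cdot 2^{cd^2/2}/\sqrt s\big) = 1/2^{d-1} - 2^{O(d^3)}/\sqrt s$, as desired --- and it is pleasant that the factor $\tfrac12$ is exactly what promotes $1/2^{d-2}$ to $1/2^{d-1}$.

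It remains to treat the spread case $\sigma < s/2^{cd^2}$, where no variable occurs in more than $s/2^{cd^2}$ monomials of $Q$. A greedy selection then produces $m \ge s/(d\sigma) \ge 2^{cd^2}/d$ pairwise disjoint nonzero monomials $x^{S_1},\dots,x^{S_m}$ of $Q$. Since $m$ is at least a (large) quantity depending only on $d$, it suffices to show $\Pr[Q\ne 0]$ is close to $1$ --- in particular larger than $1/2^{d-1}$. The plan is to condition on the variables outside $V := S_1\cup\cdots\cup S_m$ and to use the spread hypothesis to argue that, with good probability over this restriction, many of the top monomials $x^{S_i}$ survive with nonzero coefficient in the restricted polynomial; since these lie in pairwise disjoint variable blocks they behave independently, and one concludes by applying the Ore--DeMillo--Lipton--Schwartz--Zippel bound of \Cref{thm:basic} (or a group Littlewood--Offord / Chernoff estimate) to the restriction of $Q$, viewed as a function of those blocks.

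This spread case is where I expect the main difficulty to lie. Fixing the variables outside $V$ can create cancellations among the coefficients of the $x^{S_i}$: a monomial $x^{S_i\cup T}$ with $T$ disjoint from $V$ collapses onto $x^{S_i}$ once $T$ is set to all ones, so the effective coefficient of $x^{S_i}$ becomes $c_{S_i}$ plus a polynomial of degree $\le d-1$ in the outside variables. This effective coefficient is a nonzero polynomial (it equals $c_{S_i}\ne 0$ at the all-zero outside assignment), hence is nonzero with probability at least $1/2^{d-1}$, and one has to combine these correlated events --- together with the strong order hypothesis and the quantitative inputs of~\cite{erdos,MNV} --- carefully enough that the overall error comes out as $2^{O(d^3)}/\sqrt s$ rather than something weaker such as $1/s^{1/2d}$. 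A further, minor, subtlety is that the group version of Littlewood--Offord used in the base case itself leans on the order hypothesis to control torsion, so the reductions and the base case must be arranged compatibly.
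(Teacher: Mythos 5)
Your architecture (induction on $d$; the heavy-variable case via $Q=x_i\,\partial_iQ+(\cdots)$ with a factor-$\tfrac12$ loss; \cite{erdos} and \cite{MNV} as external inputs) matches the paper's, and your Case A is essentially the paper's Case 1. But there are two genuine gaps. The first is the transfer from $\mathbb{R}$ to $G$: a homomorphism from $H$ into $\mathbb{Z}$ or into $\mathbb{Z}/N$ that is nonzero on all the nonzero coefficients need not exist. Take $H=\mathbb{Z}/p\times\mathbb{Z}/q$ for distinct large primes $p,q$ and coefficients $(1,0)$ and $(0,1)$: every homomorphism to $\mathbb{Z}$ kills both, and every homomorphism to $\mathbb{Z}/N$ with $N$ prime kills at least one. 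The paper's transfer principle (\Cref{clm:reals-finite}) uses no homomorphism: assuming $Q$ vanishes on a too-large set $S$, it shows (by a rank argument over $\mathbb{R}$) that some standard basis vector $e_{a_k}$ indexed by a disjoint monomial lies in the column span of the monomial-evaluation matrix, and Cramer's rule converts this into an integer relation $\beta_0\,Q'_{a_k}=\sum_p\beta_p\,Q'(\mathbf{x}_p)=0$ with $1\le\beta_0\le t!$, contradicting the order hypothesis. You need an argument of this kind both in the base case and wherever you invoke \cite{MNV}, and it is not a routine reduction.

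The second gap is that your spread case is not a proof: you correctly identify the correlated-cancellation problem (restricting the outside variables turns the coefficient of each $x^{S_i}$ into a degree-$(d-1)$ polynomial, and these events are neither independent nor individually likely enough) but you do not resolve it, and the target you set yourself ($\Pr[Q\ne 0]$ close to $1$) is harder than necessary. The paper avoids this entirely by running the matching/vertex-cover dichotomy on the \emph{degree-$d$ monomials only} (note your greedy matching produces disjoint monomials of arbitrary degrees, which is not what \cite{MNV} needs). Either there are $s_2=M^{d^2}$ pairwise disjoint degree-$d$ monomials, in which case \cite{MNV} (transferred via \Cref{clm:reals-finite}) gives $\Pr[Q=0]\le 1/2\le 1-1/2^{d-1}$ outright; or the degree-$d$ monomials admit a vertex cover of size $\le s_2d$, and since no variable lies in more than $s/M^{2d^2}$ monomials, fixing the cover arbitrarily affects at most $2s_1s_2d<s$ monomials, so the restriction is a \emph{nonzero} polynomial of degree $\le d-1$ and plain DLSZ (\Cref{thm:basic}) yields $\Pr[Q\ne 0]\ge 1/2^{d-1}$ exactly, with no error term and no tracking of effective coefficients. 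This is the missing idea that closes your spread case.
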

        \end{thmbox}
    
    When $d=1$,~\cite{ABPSS24-ECCC} show an anti-concentration lemma along the lines of Littlewood and Offord~\cite{littlewood-offord} which bounds the density of the zeroes of non-sparse linear polynomials by an arbitrarily small constant (as long as the sparsity is large enough). However, even for $d=2$, this problem is somewhat subtle. For example $P({\bf x})=x_1\cdot P'(x_2,\dots,x_n)$ (where $P'$ is of degree $1$ and large sparsity), is always $0$ when $x_1=0$. In other words, we cannot hope to bound the density of the zeroes of non-sparse degree-2 polynomials by an arbitrarily small constant. Nevertheless, we can still argue that it cannot be much larger than $1/2$ (i.e., $1-1/2^{d-1}$ when $d=2$). For this particular example of $P=x_1 \cdot P'$, we note that when $x_1=1$, we can defer to the $d=1$ case to bound the fraction of roots by a small constant (say $c$) and when $x_1=0$, $P({\bf x})$ is always zero; thus the fraction of roots of $P$ over $\{0,1\}^n$ is less than $1/2+c$, which is what we wanted to prove. We formalize this for general polynomials (of large sparsity) and general degree $d$. In particular, we rely on an anti-concentration bound of Meka, Nguyen and Vu~\cite{MNV} when $P$ has many disjoint (non-zero) monomials of degree $d$ -- in other words, a ``$d$-matching''. Otherwise, there has to be a small vertex cover among the monomials and we use this to reduce to the case of smaller degree (and perform an induction on the degree). There is the further complication of the fact that~\cite{MNV} state their results only for polynomials over the reals whereas our goal is to also prove it over groups without elements of small order. However, we show that we can use a linear-algebraic argument to deduce the same bound for our setting by making use of the anti-concentration statement over the reals.
    \item \textbf{Counting sparse polynomials}: We want to show that the number of degree-$d$ polynomials $P_1,P_2,\dots,P_t$ of {\em constant sparsity} that are $(1/2^d-\varepsilon)$-close to a function $f$ is $\poly(1/\varepsilon)$. The case of $d=1$ was handled by~\cite{ABPSS24-ECCC} by reducing (at least implicitly) to the case of $P_1,\dots,P_t$ depending on {\em disjoint} sets of variables and uses the ``independence'' of such polynomials to get a bound on $t$. This part of the reduction is more involved for higher degrees. The reduction in~\cite{ABPSS24-ECCC} occurs by setting certain subsets of variables to constants. For linear polynomials,~\cite{ABPSS24-ECCC} has the advantage that setting one variable cannot make a polynomial zero (assuming it depends on at least two variables). However, even for $d=2$, we cannot afford to set variables to arbitrary constants. For example, $P({\bf x})=x_1\cdot (x_2 + 3x_3 - x_4)$ vanishes if we set $x_1=0$. We get around this by analyzing the structure of the polynomials and setting variables in two stages: in each stage we prove that the list size does not change too much. We defer the remaining details about these two stages (and the full argument) to~\Cref{sec:prune}.
\end{enumerate}

% Note that the Schwartz-Zippel lemma \pnote{TODO: replace everywhere with DLSZ} (\Cref{thm:basic}) only gives a bound of $1/2^d$, whereas the above bound is nearly twice as large for polynomials of large sparsity. 

% The proof of~\Cref{thm:largechar} proceeds in multiple stages -- in each stage we prune the list of polynomials by analyzing the sparsity of the polynomials and the combinatorial structure of the variables each polynomial in the list depends on. At the end, we have enough structure in the remaining polynomials in the list allowing us to count them. 

We now proceed with the proof of~\Cref{thm:largechar} with all the details. We will start by assuming the anti-concentration lemma (\Cref{lem:anticonc}) and deducing \Cref{thm:largechar} in \Cref{sec:prune}. We then show how to prove \Cref{lem:anticonc} in \Cref{sec:anticoncentration}.

\subsubsection{Pruning the list}
\label{sec:prune}
Roughly speaking, we first show that it suffices to bound the number of ``sparse polynomials'' in the list to get an upper bound on the total list size. From now on, we will use $\spars(P)$ to denote the number of monomials with non-zero coefficient in the polynomial $P.$

\paragraph{Reducing to counting sparse polynomials.}
Let $P_1,P_2,\dots,P_t$ be all the distinct degree-$d$ polynomials that are $(1/2^d-\varepsilon)$-close to $f$. We consider the following graph $G$ with vertex set $[t]$ and an edge between $i$ and $j$ if and only if $\spars(P_i-P_j) \le s_0$ where $s_0={2^{\bigO(d^3)}}$ is large enough so that the probability on the RHS of~\Cref{lem:anticonc} is at least $1/2^{d-1}-0.5/2^{2d}$.  Here we are using the fact for a sufficiently large constant $\alpha$, the order of all the non-zero elements of $G$ are greater than $p_0=2^{2^{\alpha d^3}} \ge (s_0+1)!$. 

We now show that $G$ cannot contain an independent set of size $\ell = 4^d$. Here, we are assuming $t > \ell$ as otherwise we are done. That is, without loss of generality, assume that the polynomials $P_1,P_2,\dots,P_\ell$ are such that $\spars(P_i-P_j) \ge s_0$ for all $i\ne j \in [\ell]$. We will use the following fact.

\begin{lemma}[e.g.~\cite{jukna2011extremal}, Lemma 2.1]
    Suppose $A_1,\dots,A_\ell \subseteq U$ are subsets each of size $r$ such that the pairwise intersections $A_i \cap A_j$ are of size at most $r'$ for all $i\ne j\in [\ell]$. Then the size of the union of the sets $\cup_{i=1}^\ell A_i$ is at least $r^2\ell/(r+(\ell-1)r')$.
\end{lemma}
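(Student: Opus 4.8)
The plan is to prove this by a standard double-counting argument --- a ``first-moment/second-moment'' estimate, sometimes attributed to Corr\'adi. Write $U' := \bigcup_{i=1}^\ell A_i$ and let $n := |U'|$; the goal is to lower bound $n$. For each element $x \in U'$, let $d_x$ denote the number of indices $i\in[\ell]$ with $x\in A_i$. The first step is to record two elementary counting identities: summing the set sizes element by element gives $\sum_{x\in U'} d_x = \sum_{i=1}^\ell |A_i| = r\ell$, and counting pairs of sets through each element gives $\sum_{x\in U'}\binom{d_x}{2} = \sum_{1\le i<j\le \ell} |A_i\cap A_j| \le \binom{\ell}{2}r'$, where the inequality uses the hypothesis that every pairwise intersection has size at most $r'$.

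Next I would combine these via Cauchy--Schwarz. On the one hand, $\sum_{x\in U'} d_x^2 = 2\sum_{x\in U'} \binom{d_x}{2} + \sum_{x\in U'} d_x \le \ell(\ell-1)r' + r\ell = \ell\bigl(r+(\ell-1)r'\bigr)$. On the other hand, Cauchy--Schwarz applied to the vectors $(d_x)_{x\in U'}$ and $(1)_{x\in U'}$ gives $\sum_{x\in U'} d_x^2 \ge \bigl(\sum_{x\in U'} d_x\bigr)^2/n = (r\ell)^2/n$. Chaining the two bounds yields $(r\ell)^2/n \le \ell\bigl(r+(\ell-1)r'\bigr)$, and rearranging gives $n \ge r^2\ell/\bigl(r+(\ell-1)r'\bigr)$, which is exactly the claimed bound. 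Since $r\ge 1$ the denominator is at least $1$, so there is no division-by-zero issue, and the degenerate cases $r'=0$ or $\ell=1$ are covered as well.

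There is really no obstacle here: the whole argument is two lines of double counting plus one application of Cauchy--Schwarz, so I would present it in full rather than merely sketch it. The only point requiring a moment of care is verifying the identity $\sum_{x\in U'} \binom{d_x}{2} = \sum_{i<j}|A_i\cap A_j|$, which holds because each unordered pair $\{i,j\}$ with $x$ in both $A_i$ and $A_j$ contributes exactly $1$ to each side; this is routine.
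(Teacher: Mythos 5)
Your proof is correct and is exactly the standard Corr\'adi-type double-counting argument (degree sums plus Cauchy--Schwarz) that underlies the cited Lemma 2.1 of Jukna; the paper itself simply invokes the lemma as a black box without reproving it. All the identities and the final rearrangement check out, so there is nothing to fix.
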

We take $U$ to be $\{0,1\}^n$ and for $i\in [\ell]$, $A_i$ to be any subset of $\{{\bf x}~|~f({\bf x})=P_i({\bf x})\}$ of size $(1-1/2^d)2^n$ and apply the above lemma. We note that for any $i\ne j\in [\ell]$, $|A_i \cap A_j| \le r'$ for $r' = 2^n ((1-1/2^{d-1})+0.5/2^{2d})$ by applying~\Cref{lem:anticonc} for $Q=P_i-P_j$ since we have assumed that $\spars(P_i-P_j) \ge s_0$. Since $|\cup_{i=1}^\ell A_i| \le 2^n$, we obtain
$$2^n \ge \frac{((1-1/2^d)2^n)^2 \ell}{(1-1/2^d)2^n + (\ell -1)(1-1/2^{d-1}+0.5/2^{2d})2^n}.$$

Simplifying the above, we get that $\ell < 4^d$. Thus, there is no independent set of size $4^d$ in $G$, which in turn implies by Tur\'{a}n's theorem that there is at least one vertex $\nu\in [t]$ of $G$ with degree at least 
\begin{align}\label{eqn:e1}t'\ge t/4^d-1.\end{align} Let $\nu_1,\nu_2,\dots,\nu_{t'}$ be distinct neighbors of $\nu$. Then consider the polynomials $Q_i = P_{\nu_i}-P_{\nu}$ for $i\in [t']$. We note that $\spars(Q_i) \le s_0$ and $\delta(Q_i,f') \le 1/2^d -\varepsilon$ for $f'=f-P_{\nu}$. 

We now bucket the polynomials $Q_1,Q_2,\dots,Q_{t'}$ based on which subset of variables they depend on\footnote{We say that a polynomial $Q({\bf x})$ {\em depends} on $x_i$ if $x_i$ appears in at least one monomial that has non-zero coefficient in $Q$.}. Since the sparsity of each $Q_i$ is at most $s_0$, it must depend on at most $s_0 d$ variables. 

In the next paragraph, we bound the size of each bucket.

\paragraph{Counting sparse polynomials depending on the same set of variables.} We will show that the number of polynomials $Q$ that depend (only) on the variables $x_i$ for $i\in I$ for some fixed $I\in {[n] \choose \le s_0 d}$ such that $Q$ is $(1/2^d-\varepsilon)$-close to $f'$ is at most $(2/\varepsilon)2^{2^{s_0 d}}$. Suppose that $Q_1,Q_2,\dots,Q_{t''}$ are such polynomials over the variables indexed by $I$ and we want to show that $t''\le (2/\varepsilon)2^{2^{s_0 d}}$. Note that if $f'$ also depends only on the variables indexed by $I$, then we get the bound $2^{2^{s_0d}}$ by applying~\Cref{clm:finite-list}. However, in general, $f'$ can depend on variables outside $I$ and this results in an additional $2/\varepsilon$ factor. 

To make this precise, we define a function $f'':\{0,1\}^I\to G$ over just the variables in $I$ as $f''({\bf y}) = f'({\bf z})$ where ${\bf z}|_I = {\bf y}$ and ${\bf z}|_{I^c}$ is a uniformly random Boolean assignment. Let $\mathcal{X}_i$ be the indicator random variable for the event that $\delta(f'',Q_i) \le 1/2^{d}-\varepsilon/2$ for $i\in [t'']$.  Since $\delta(f',Q_i) \le 1/2^d - \varepsilon$, we conclude that with probability at least $\varepsilon/2$, it holds that $\delta(f'',Q_i) \le 1/2^d -\varepsilon/2$ i.e., $\Pr[\mathcal{X}_i=1] \ge \varepsilon/2$. By linearity of expectation, there must be a setting of ${\bf z}|_{I^c}$ such that for the corresponding $f''$, at least $(\varepsilon/2)t''$ indices $i\in [t'']$ exist such that $\delta(Q_i,f'') \le 1/2^d-\varepsilon/2$. But by~\Cref{clm:finite-list} this is at most $2^{2^{s_0 d}}$. Hence, we get $t'' \le (2/\varepsilon)2^{2^{s_0d}}$. 

Therefore, there must be at least $t'/((2/\varepsilon)2^{2^{s_0d}})$ non-empty buckets. Recall that we label each bucket by a subset of variables that the polynomials in that bucket depend on i.e., a subset of $[n]$ of size at most $k=s_0 d$. Thus there must be at least \begin{align}\label{eqn:e2}t'''\ge t'/((2/\varepsilon)2^{2^{s_0d}})\end{align} non-empty buckets that are labeled with a subset of $[n]$ of size at most $k$. 
\paragraph{Reducing to the case where the variable sets form a sunflower.} We now invoke the sunflower lemma with the sets $S_i$'s below being the labels of those non-empty buckets.

\begin{lemma}[\cite{erdos1960intersection}, Theorem 3]
    \label{lem:sunflower}
    Suppose $S_1,S_2,\dots,S_{t'''}$ are distinct subsets of $[n]$ of size at most $k$ with $t'''\ge k!(r-1)^k$ for some integer $r\ge 3$. Then there exists at least $r$ sets among $S_1,S_2,\dots,S_{t'''}$ that form a sunflower. That is, there exists distinct indices $i_1,i_2,\dots,i_r\in[t''']$ and $C \subseteq [n]$ (called the {\em core} of the sunflower) such that $C = S_{i_{j_1}} \cap S_{i_{j_2}}$ for all $j_1\ne j_2 \in [r]$, and $S_{i_j} \setminus C$ are non-empty for all $j\in [r]$.
\end{lemma}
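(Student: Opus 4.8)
The plan is to prove the statement by induction on the maximum set size $k$, via the classical ``link'' (or kernel) argument of Erd\H{o}s and Rado; the inductive hypothesis will be the lemma itself with $k$ replaced by any smaller value and with $n,r$ arbitrary. The base case $k=1$ is immediate: a family of at least $r$ distinct nonempty subsets of $[n]$ of size at most $1$ consists of $r$ pairwise disjoint singletons, which is already an $r$-sunflower with empty core. Throughout I will treat the threshold $k!(r-1)^k$ as the inductive quantity and absorb the small-$k$ and empty-set edge cases into routine $\pm 1$ bookkeeping (for instance, one may delete the empty set from $\mathcal{F}$ at the outset, since any sunflower in the remaining family is one in $\mathcal{F}$; this is precisely where a strict-inequality form of the bound is convenient, and it never binds in the regime $k = s_0 d$, $r\ge 3$ in which the lemma is applied).

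For the inductive step, I would fix $\mathcal{F}=\{S_1,\dots,S_{t'''}\}$ with each $|S_i|\le k$ and $t'''\ge k!(r-1)^k$, and choose a \emph{maximal} subfamily $\mathcal{M}\subseteq\mathcal{F}$ of pairwise disjoint nonempty sets. The argument then splits into two cases. If $|\mathcal{M}|\ge r$, then any $r$ members of $\mathcal{M}$ form an $r$-sunflower with core $C=\emptyset$ (their pairwise intersections are empty and their petals are themselves, hence nonempty), and we are done. If $|\mathcal{M}|\le r-1$, then $Y:=\bigcup_{S\in\mathcal{M}}S$ has size at most $k(r-1)$, and by maximality of $\mathcal{M}$ every set of $\mathcal{F}$ meets $Y$; hence $\sum_{y\in Y}\bigl|\{S\in\mathcal{F}: y\in S\}\bigr|\ge t'''$, and an averaging step produces an element $y^\star\in Y$ contained in at least $t'''/(k(r-1))\ge (k-1)!(r-1)^{k-1}$ sets of $\mathcal{F}$.

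I would then pass to the link of $y^\star$: set $\mathcal{F}':=\{S\setminus\{y^\star\}: S\in\mathcal{F},\ y^\star\in S\}$. Since $S\mapsto S\setminus\{y^\star\}$ is injective on sets containing $y^\star$, we get $|\mathcal{F}'|\ge (k-1)!(r-1)^{k-1}$ and every member of $\mathcal{F}'$ has size at most $k-1$, so the inductive hypothesis supplies an $r$-sunflower in $\mathcal{F}'$ with petals $T_1,\dots,T_r$ and core $C'$. Adding $y^\star$ back, the sets $T_1\cup\{y^\star\},\dots,T_r\cup\{y^\star\}$ lie in $\mathcal{F}$, pairwise intersect in exactly $C'\cup\{y^\star\}$, and have nonempty petals $T_j\setminus C'$, so they form the required $r$-sunflower and the induction closes. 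The only real content --- and the step I would be most careful about --- is the dichotomy on the maximal disjoint subfamily together with the accounting that the pigeonhole step loses exactly the factor $k(r-1)$, which is what matches the ratio between the consecutive thresholds $k!(r-1)^k$ and $(k-1)!(r-1)^{k-1}$ and lets the recursion close; beyond this bookkeeping there is no genuine obstacle, since Erd\H{o}s--Rado is a short argument.
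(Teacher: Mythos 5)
The paper does not prove this lemma: it is quoted as Theorem~3 of Erd\H{o}s and Rado's 1960 paper and used as a black box (once in the pruning argument of Section~4.1 and once in Section~4.2), so there is no internal proof to compare yours against. What you give is the standard Erd\H{o}s--Rado induction --- dichotomy on a maximal pairwise-disjoint subfamily, pigeonhole over its union of size at most $k(r-1)$, pass to the link of a popular element, and close the recursion via $k!(r-1)^k/(k(r-1)) = (k-1)!(r-1)^{k-1}$ --- and this is exactly the right argument; the structure and the counting are sound.

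One caution about the ``routine $\pm 1$ bookkeeping'' you defer: it is not quite routine, because the lemma \emph{as stated} (with the non-strict inequality $t'''\ge k!(r-1)^k$) is actually false in the extremal case. For $k=1$ the threshold is $r-1$, and a family of exactly $r-1$ pairwise disjoint singletons meets it while containing fewer than $r$ sets; your base case silently assumes $r$ sets rather than the $r-1$ the hypothesis supplies. The same slack is needed in the inductive step: with non-strict inequalities the pigeonhole only guarantees $\lceil t'''/(k(r-1))\rceil$ sets through $y^\star$, and after discarding a possible empty set from the link (which reappears there whenever $\{y^\star\}\in\mathcal{F}$, even if you removed $\emptyset$ from $\mathcal{F}$ at the outset --- this matters because the paper's statement demands nonempty petals) you can fall below the next threshold. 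So a fully rigorous write-up should either prove the strict-inequality version $t''' > k!(r-1)^k$ throughout, or reduce to the uniform case by bucketing the sets by size at the cost of an extra factor of $k$ in the threshold. None of this affects the paper, which invokes the lemma with $k\ge 2$ and with $t'''$ exceeding the threshold by huge margins that are absorbed into $\exp(\bigO_d(1/\varepsilon)^{\bigO(d)})$, but it is worth being aware that the imprecision lives in the cited statement itself and not merely in your proof.
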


Using the bound $k\le s_0 d$, we can take 
\begin{align}\label{eqn:e3}r=(t'''/(s_0d)!)^{1/s_0d}\end{align} in the above lemma. Now, since the corresponding buckets are non-empty, we can choose one (arbitrary) polynomial from each bucket that forms the sunflower -- suppose, without loss of generality,~that $Q_1,Q_2,\dots,Q_r$ are polynomials depending on variables indexed by subsets $S_1,S_2,\dots,S_r \subseteq [n]$ respectively such that the $S_i$'s form a sunflower, say with core $C\subseteq [n]$. Furthermore, recall that we have $\delta(f',Q_i) \le 1/2^d - \varepsilon$ for all $i\in [r]$.

\paragraph{Reducing to the case where the variable sets are pairwise disjoint.} The main idea is to set the variables in the core of the sunflower at random. However, we will need to do this in two steps. 

\begin{itemize}
\item For the sake of analysis, we shall relabel the variables indexed by $C$ by ${\bf z}=\{z_1,z_2,\dots,z_{n_0}\}$ arbitrarily (note that an empty core $C$ corresponds to $n_0=0$). We also relabel the variables indexed by $S_i \setminus C$ by ${\bf y}^{(i)}=\{{y}^{(i)}_1,y^{(i)}_2,\dots,y^{(i)}_{n_i}\}$ for $i\in [r]$ and some integers $n_i \ge 1$. Then, being a degree-$d$ polynomial, we can express each polynomial $Q_i({\bf x})=Q_i({\bf z},{\bf y}^{(i)})$ as a polynomial in the ${\bf y}^{(i)}$ variables with coefficients being some polynomials over ${\bf z}$ variables. That is,
\begin{align}
    Q_i({\bf z},{\bf y}^{(i)}) = \sum_{I \in {[n_i] \choose \le d}} (\mathbf{y}^{(i)})^I Q_{i,I}({\bf z}),
\end{align} where $(\mathbf{y}^{(i)})^I$ denotes the monomial corresponding to taking the product of variables indexed by $I$ and $Q_{i,I}$ is a polynomial of degree at most $d-|I|$. Since $Q_i$ depends on the variables ${\bf y}^{(i)}$, there must exist a \emph{non-empty} subset $I$ for which the polynomial $Q_{i,I}({\bf z})$ is non-zero. Thus, we can define the ${\bf y}$-degree of each $Q_i$ as the maximum size of an $I\in {[n_i] \choose \le d}$ such that $Q_{i,I}({\bf z})\ne 0$. By pigeonhole principle, there must be at  least $r/d$ indices $i\in [r]$ such that the corresponding ${\bf y}$-degrees of $Q_i$'s are identical -- we will denote this by $d'\in [d]$. Without loss of generality, we will assume that $Q_1,Q_2,\dots,Q_{r'}$ have ${\bf y}$-degree equal to $d'$ for some \begin{align}\label{eqn:e4}r'\ge r/d.\end{align} Now for each $i\in [r']$, let $m(i) \subseteq {C}$ denote an arbitrary non-zero monomial of $Q_{i,I}({\bf z})$ for an arbitrary $I\in {[n_i] \choose d'}$ such that $Q_{i,I}\ne 0$. Note that $m(i)$ can only take at most $2^{|C|}\le 2^{s_0d}$ values. Thus, by pigeonhole principle, there exist at least \begin{align}\label{eqn:e5}r''\ge r'/2^{s_0d}\end{align} indices for all which $m(i)=C'$ for some $C' \subseteq C$. Again, without loss of generality, we assume that $Q_1,Q_2,\dots,Q_{r''}$ are such polynomials. 

We now show that there exists an assignment to variables in $C\setminus C'$ such that \begin{align}\label{eqn:e6}r''' \ge r''\varepsilon/2\end{align} many of the respective restricted polynomials $Q_1',Q_2',\dots,Q'_{r'''}$ (here we are again assuming that the first $r'''$ polynomials satisfy this property) depend on at least one variable outside $C$ i.e., on some variable(s) in their respective ${\bf y}^{(i)}$. Furthermore, the distance from the corresponding restriction of $f''$ is small, i.e., $\delta(f''',Q'_i) \le 1/2^d-\varepsilon/2$ for all $i\in [r''']$. For a uniformly random assignment to the variables in $C\setminus C'$, we have that the expected distance $\delta(f''',Q'_i)$ is at most $1/2^d-\varepsilon$. Hence, with probability at least $\varepsilon/2$ over the choice of the random assignment, we have that $\delta(f''',Q'_i) \le 1/2^d - \varepsilon/2$ for each $i\in [r'']$. Now, by linearity of expectation, we conclude that there exists at least one assignment such that $\delta(f''',Q'_i) \le 1/2^d - \varepsilon/2$ holds for at least $r''\varepsilon/2$ many polynomials. 

\item We note that none of the restricted polynomials $Q_i'$ become zero or become identical to each other since for each $i$, there exists an $I\in {[n_i] \choose d'}$ such that $Q_{i,I}({\bf z})$ remains non-zero. This is because, by construction, it contains the monomial ${\bf z}^{C'}$ with a non-zero coefficient as a monomial of maximum degree, and setting the variables outside $C'$ cannot change the coefficient of this monomial. Let ${\bf z'} \subseteq {\bf z}$ be the variables indexed by $C'$ and let $Q'_{i,I}({\bf z'})$ denote the restriction of $Q_{i,I}({\bf z})$ for each $i\in[r''']$ and $I \in {[n_i]\choose \le d}$ corresponding to the above assignment to the variables in $C\setminus C'$. We have
$$Q'_i({\bf z'},{\bf y}^{(i)}) = \sum_{I \in {[n_i]\choose \le d}}({\bf y}^{(i)})^I Q'_{i,I}({\bf z'}).$$
Therefore, for each $i$, there exists an $I_i \in {[n_i]\choose d'}$ for which $Q'_{i,I_i}({\bf z'})$ has a non-zero coefficient for the product of all ${\bf z'}$ variables. Thus we conclude that for a random setting of ${\bf z'}={\bf a}$, with probability at least $1/2^{\deg(Q'_{i,I_i})}$ it holds that $Q'_{i,I_i}({\bf a}) \ne 0$. But notice that $\deg(Q'_{i,I_i}) \le d-d'$ since $\deg(Q_i')\le d$ and $|I_i|=d'$. Hence, by linearity of expectation, there exists some assignment ${\bf a}\in \{0,1\}^{C'}$ such that at least $r'''/2^{d-d'}$ of the corresponding restricted polynomials (which we will denote by $Q''_i({\bf y}^{(i)})=Q'_i({\bf z}'={\bf a},{\bf y}^{(i)})$) are non-constant. In particular, the coefficient of $({\bf y}^{(i)})^I$ is non-zero in $Q_i''$. 

We further claim that $\delta(f'''',Q''_i) \le 1/2^{d'}-\varepsilon/2$ where $f''''$ is the restriction of $f'''$ obtained by setting ${\bf z'}={\bf a}$. This follows by recalling that $\delta(f''',Q'_i)\le 1/2^d-\varepsilon/2$ and we are setting $d-d'$ variables to get $f''''$ and $Q''_i$. Indeed, $\delta(f'''',Q''_i) \le 2^{d-d'}(1/2^d - \varepsilon/2)\le 1/2^{d'}-\varepsilon/2$ for any choice of ${\bf a}$. In turn, this implies that for the setting ${\bf z}'={\bf a}$, we have $m\ge r'''/2^{d-d'}$ polynomials $Q''_1,Q_2'',\dots,Q''_m$ of degree at most $d'$ (recall that the ${\bf y}$-degree of the polynomials we are considering is $d'$) depending on pairwise disjoint sets of variables ${\bf y}^{(1)},{\bf y}^{(2)},\dots,{\bf y}^{(m)}$ respectively such that $\delta(f'''',Q''_i) \le 1/2^{d'}-\varepsilon/2$. 
\end{itemize}

\paragraph{Counting polynomials depending on pairwise disjoint variables.}
For the rest of the analysis, we treat $f''''$ and the polynomials $Q''_i$ as functions over all the $n$ variables even though we fixed the values of certain variables in the preceding steps. With this setup, note that $\Pr_{{\bf x}\in \{0,1\}^n, i\in [m]}\bigg[f''''({\bf x}) = Q''_i({\bf x})\bigg] \ge 1-1/2^{d'}+\varepsilon/2$, which implies that with probability at least $\varepsilon/4$ over the choice of ${\bf x}$, we have that $\Pr_{i\in [m]}[f''''({\bf x})=Q''_i({\bf x})] \ge 1-1/2^{d'}+\varepsilon/4$.  Note that for any such ${\bf x}\in \{0,1\}^n$, if $Q_i''(\mathbf{x}) = f'''(\mathbf{x})$, then $Q_i''(\mathbf{x}) = Q_j''(\mathbf{x})$ for at least $(1-1/2^{d'}+\varepsilon/4)m-1$ many $j\in [m]\setminus \{i\}.$ In particular, this gives  
\begin{align}
\label{eqn:prob-lower-bd}
    \Pr_{{\bf x}\in \{0,1\}^n}\bigg[\exists i\in [m]: \bigg|\{j~|~Q''_j({\bf x})=Q''_i({\bf x})\}\bigg|\ge (1-1/2^{d'}+\varepsilon/4)m-1\bigg] \ge \varepsilon/4.
\end{align}

We will now prove an upper bound on the same quantity above: fixing an arbitrary $i\in [m]$ and an arbitrary setting to the variables appearing in $Q_i,$ we see that since the polynomials depend on disjoint subsets of variables, the indicator random variables for the events $Q_j''({\bf x})=Q''_i({\bf x})$ for a uniformly random ${\bf x}\in \{0,1\}^n$ are Bernoulli random variables $\text{bern}(p_j)$ for some $p_j\le 1-1/2^{d'}$ (here we are applying the Schwartz-Zippel lemma~\Cref{thm:basic}) and across $j\in [m]\setminus \{i\}$. Thus, by a Chernoff bound, we have
\begin{align}\label{eqn:prob-upper-bd}\Pr_{{\bf x}\in \{0,1\}^n}\bigg[\bigg|\{j~|~Q''_j({\bf x})=Q''_i({\bf x})\}\bigg|\ge (1-1/2^{d'}+\varepsilon/4)(m-1)\bigg] \le \exp(-\Omega(\varepsilon^2 m)).\end{align}
Applying a union bound over $i\in [m]$ for~\eqref{eqn:prob-upper-bd} and combining with~\eqref{eqn:prob-lower-bd}, we get 
$\varepsilon/4 \le m\exp(-\Omega(\varepsilon^2 m)),$ which gives that \begin{align}\label{eqn:e7}m\le \bigO(1/\varepsilon^3).\end{align} Chaining together the sequence of inequalities in $t,t',t''',r,r',r'',r'''$ and $m$, and using $s_0 = 2^{\bigO(d^3)}$ and $\varepsilon \le 1/2^d$, we get the desired bound on the list size: $$t \le (1/\varepsilon)^{2^{2^{\bigO(d^3)}}}.$$ This finishes the proof of~\Cref{thm:largechar} assuming~\Cref{lem:anticonc}.

\subsubsection{Anti-concentration lemma}\label{sec:anticoncentration}
    We now prove the anti-concentration lemma~(\Cref{lem:anticonc}). 
    That is, we will show that there exists an absolute constant $M>0$ such that for any degree-$d$ polynomial $Q({\bf x})$  of sparsity at least $s$, over an Abelian group $G$ in which all non-zero elements have order greater than $(s+1)!$, that
    $$\Pr_{{\bf x}\in \{0,1\}^n}[Q({\bf x}\ne 0)] \ge 1/2^{d-1}-M^{d^3}/\sqrt{s}.$$
    Note that the above bound is trivial unless $s \ge M^{2d^3}$. 
    
    The proof proceeds by an induction on $d$. 
    \paragraph{The base case.} For $d=1$, we use the known anti-concentration result of Littlewood and Offord~\cite{littlewood-offord}, or rather the subsequent improvement due to Erdős~\cite{erdos}:
    \begin{theorem}[\cite{erdos}, Theorem 2 modified]\label{thm:anticon-deg1}
        There exists a constant $B>0$ such that any degree-$1$ multilinear polynomial $P({\bf x})$ over the reals with at least $r$ many variables with non-zero coefficients, we have
        $$\Pr_{{\bf x}\sim \{0,1\}^n}[P({\bf x})=0] \le \frac{B}{\sqrt{r}}.$$
    \end{theorem}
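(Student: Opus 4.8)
\emph{The plan} is to obtain this from the classical Littlewood--Offord lemma in the sharp form of Erd\H{o}s, via Sperner's theorem. Write $P(\mathbf{x}) = a_0 + \sum_{i=1}^{n} a_i x_i$ and fix a set $T \subseteq [n]$ of exactly $r$ indices with $a_i \neq 0$; the coordinates outside $T$ (and the constant $a_0$) only contribute an additive shift. So it suffices to prove that for every real $c$,
\[
\Bigl| \Bigl\{ \mathbf{y} \in \{0,1\}^{T} : \textstyle\sum_{i \in T} a_i y_i = c \Bigr\} \Bigr| \ \le\ \binom{r}{\lfloor r/2 \rfloor}.
\]
Granting this, we condition on the assignment $\mathbf{z}$ to the coordinates of $[n] \setminus T$: for $P(\mathbf{x})$ to vanish, the restriction must satisfy $\sum_{i \in T} a_i y_i = c(\mathbf{z})$ for the fixed real $c(\mathbf{z}) = -a_0 - \sum_{j \notin T} a_j z_j$, and the displayed bound controls the number of such $\mathbf{y}$ for each $\mathbf{z}$. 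Hence
\[
\Pr_{\mathbf{x} \sim \{0,1\}^{n}}[P(\mathbf{x}) = 0] \ \le\ 2^{-r} \binom{r}{\lfloor r/2 \rfloor} \ \le\ \frac{B}{\sqrt{r}},
\]
using the standard estimate $\binom{r}{\lfloor r/2 \rfloor} \le 2^{r}/\sqrt{r}$ (so in fact $B=1$ works, though we only need $B = O(1)$).

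For the fiber bound, first normalize signs: for each $i \in T$ with $a_i < 0$, apply the involution of $\{0,1\}^{T}$ that flips the $i$-th coordinate. This is a bijection of $\{0,1\}^{T}$, it replaces $a_i$ by $|a_i|$, and it shifts $c$ by a constant, so the size of the solution set is unchanged and we may assume $a_i > 0$ for all $i \in T$. Identify $\mathbf{y} \in \{0,1\}^{T}$ with its support $S \subseteq T$. If $S \subsetneq S'$ are distinct, then $\sum_{i \in S'} a_i = \sum_{i \in S} a_i + \sum_{i \in S' \setminus S} a_i > \sum_{i \in S} a_i$ since every $a_i > 0$, so $S$ and $S'$ cannot both be solutions of $\sum a_i = c$. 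Thus the solution sets form an antichain in the Boolean lattice $2^{T}$, and Sperner's theorem gives the claimed bound $\binom{r}{\lfloor r/2 \rfloor}$.

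There is essentially no obstacle beyond setting up this reduction cleanly --- it is exactly the classical Littlewood--Offord argument, which is why we cite Erd\H{o}s. The only mild care needed is (i) separating out the zero-coefficient variables and the constant term, handled by the conditioning above; (ii) making sure the sign normalization is a genuine bijection of the cube (it is, being a coordinate-flip involution); and (iii) the routine Stirling-type bound on the central binomial coefficient. (In contrast, the inductive step for \Cref{lem:anticonc} --- dichotomizing on whether the degree-$d$ monomials of $Q$ contain a large $d$-matching, where one invokes the Meka--Nguyen--Vu bound together with a linear-algebraic transfer from $\mathbb{R}$ to $G$, versus a small vertex cover, where one restricts a covering variable to drop the degree and recurse --- is where the real work lies; the present base case is immediate.)
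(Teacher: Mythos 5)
Your proof is correct and is precisely the classical Littlewood--Offord argument of Erd\H{o}s that the paper cites for this statement (the paper itself gives no proof, deferring to \cite{erdos}): condition on the variables outside a set of $r$ non-zero coefficients, normalize signs by coordinate flips, observe that the solution supports form an antichain, and apply Sperner's theorem together with the bound $\binom{r}{\lfloor r/2\rfloor}\le 2^r/\sqrt{r}$. All the reduction steps (handling the constant term, the sign-flip bijection, and the binomial estimate) are handled correctly, so there is nothing to add.
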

    However, we cannot apply the above theorem directly  since $G$ need not be the group of real numbers. Nevertheless, using a simple linear algebraic argument, we show in the below claim that the same anticoncentration inequality actually holds as long as the non-zero elements of $G$ have order greater than $s!$. 
    
    \begin{claim}\label{clm:reals-finite}
    Suppose that every degree-$d$ multilinear polynomial $P({\bf x})\in \R[{\bf x}]$ with at least $r$ disjoint non-zero monomials of degree $d$ has at most $c\cdot 2^n$ roots over $\{0,1\}^n$ for some $c=c(d,r)$. Then the following holds for every Abelian group $G$ in which the order of all the non-zero elements is greater than ${rd\choose \le d}!$: Every degree-$d$ polynomial $Q({\bf x})$ over $G$ with at least $r$ disjoint non-zero monomials of degree $d$ has at most $c\cdot 2^n$ roots over $\{0,1\}^n$.  
    \end{claim}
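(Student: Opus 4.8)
The plan is to reduce the statement over $G$ to the assumed statement over $\mathbb{R}$ in two stages: first cut the problem down to a bounded number of variables (hence of monomials), and then, for such a bounded instance, use a rational linear-algebra argument to replace the $G$-valued coefficients by real ones without creating any new $\{0,1\}$-roots.

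\emph{Stage 1: reduce to $\le rd$ variables.} Fix pairwise disjoint sets $S_1,\dots,S_r$, each of size $d$, indexing non-zero degree-$d$ monomials of $Q$, and let $V=\bigcup_i S_i$, so $|V|=rd$. For $\mathbf{b}\in\{0,1\}^{V^c}$ let $Q_{\mathbf{b}}$ be the restriction of $Q$ to the subcube $\{x_j=b_j : j\notin V\}$. Since $\deg Q\le d$ and $|S_j|=d$, the only monomial of $Q$ restricting to $\prod_{i\in S_j}x_i$ is $\prod_{i\in S_j}x_i$ itself (any $S_j\cup T$ with $\emptyset\neq T\subseteq V^c$ has degree $>d$), so the coefficient of $\prod_{i\in S_j}x_i$ in $Q_{\mathbf{b}}$ is the unchanged non-zero coefficient from $Q$; hence every $Q_{\mathbf{b}}$ is a degree-$\le d$ polynomial in $\le rd$ variables still possessing $r$ disjoint non-zero degree-$d$ monomials. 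As $|Z(Q)|=\sum_{\mathbf{b}}|Z(Q_{\mathbf{b}})|$, it suffices to prove $|Z(Q_{\mathbf{b}})|\le c\cdot 2^{rd}$ for each $\mathbf{b}$. From now on I may therefore assume $Q$ has at most $m:=\binom{rd}{\le d}$ monomials total — this is exactly where the quantity $\binom{rd}{\le d}!$ will enter.

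\emph{Stage 2: the core step.} Write $Q=\sum_{S\in\mathcal{S}}g_S\,\mathbf{x}^S$ with $|\mathcal{S}|\le m$, set $\mathbf{v}(\mathbf{x})=(\mathbf{x}^S)_{S\in\mathcal{S}}\in\{0,1\}^{\mathcal{S}}$, and let $\pi:\mathbb{Z}^{\mathcal{S}}\to G$ be the homomorphism with $\pi(e_S)=g_S$, so $Q(\mathbf{x})=\pi(\mathbf{v}(\mathbf{x}))$. Put $W:=\mathrm{span}_{\mathbb{R}}\{\mathbf{v}(\mathbf{x}) : \mathbf{x}\in Z(Q)\}\subseteq\mathbb{R}^{\mathcal{S}}$. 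For any $\mathbf{c}\in W^{\perp}$ the real polynomial $P(\mathbf{x})=\langle\mathbf{c},\mathbf{v}(\mathbf{x})\rangle=\sum_{S}c_S\mathbf{x}^S$ vanishes on all of $Z(Q)$, so it is enough to find $\mathbf{c}\in W^{\perp}$ with $c_{S_j}=\langle\mathbf{c},e_{S_j}\rangle\neq 0$ for every $j\in[r]$; since a real vector space is not a finite union of proper subspaces, this is possible provided $e_{S_j}\notin W$ for all $j$. To see $e_{S_j}\notin W$, suppose not: as the $\mathbf{v}(\mathbf{x})$ and $e_{S_j}$ are integer vectors, choosing a $\{0,1\}$-basis of $W$ among the $\mathbf{v}(\mathbf{x})$ and inverting a maximal invertible $\{0,1\}$-submatrix, Cramer's rule gives integers $\beta_{\mathbf{x}}$ and a positive integer $M$ with $|M|\le m!\le\binom{rd}{\le d}!$ (a $k\times k$ $\{0,1\}$-determinant, $k\le m$) such that $M\,e_{S_j}=\sum_{\mathbf{x}\in Z(Q)}\beta_{\mathbf{x}}\,\mathbf{v}(\mathbf{x})$. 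Applying $\pi$ yields $M g_{S_j}=\sum_{\mathbf{x}}\beta_{\mathbf{x}}\,Q(\mathbf{x})=0$, so the non-zero element $g_{S_j}$ has order dividing $M\le\binom{rd}{\le d}!$, contradicting the hypothesis on $G$. Hence the desired $\mathbf{c}$, and $P$, exist; $P$ has degree $d$ and the $r$ disjoint degree-$d$ monomials $S_1,\dots,S_r$, so the assumed real bound (applied in $rd$ variables) gives $|Z(Q)|\le|Z(P)|\le c\cdot 2^{rd}$. Combining with Stage 1 finishes the proof.

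\emph{Main obstacle.} The delicate point is matching the determinant/denominator bound in Stage 2 \emph{exactly} to the order hypothesis: the only available control on $M$ is $m!$ where $m$ is the number of monomials, which is why Stage 1 — the restriction-and-averaging reduction to $\le rd$ variables — is essential, and why it is crucial there that the top-degree monomials $S_j$ survive the restriction untouched so that no generality is lost and the root count is unharmed.
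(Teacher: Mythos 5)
Your proof is correct and follows essentially the same route as the paper's: restrict to the $rd$ variables of the disjoint degree-$d$ monomials (noting those top monomials survive), then over the restricted cube use the real span of the evaluation vectors at the zero set, the union-of-proper-subspaces argument to extract a real polynomial when no $e_{S_j}$ lies in that span, and Cramer's rule with the $\{0,1\}$-determinant bound $\le \binom{rd}{\le d}!$ to rule out the other case via the order hypothesis. The only difference is cosmetic — you phrase it as a direct dichotomy rather than the paper's proof by contradiction starting from $|S|>c\cdot 2^{rd}$.
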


    For a degree-$1$ polynomial with sparsity $s$, we can apply \Cref{clm:reals-finite} and \Cref{thm:anticon-deg1} with $r = s-1$ to get the base case (as long as $M$ is a large enough constant). This finishes the proof of~\Cref{lem:anticonc} for the base case $d=1$, assuming~\Cref{clm:reals-finite}.
    
    \paragraph{The induction step.}
    Suppose $d\ge 2$ and that the claim is true for all degrees until $d-1$. To then prove it for degree $d$, we split the analysis into three cases, where the parameters $s_1$ and $s_2$ are to be fixed later.
    \begin{itemize}
    \item {\bf Case 1:} There exists at least one variable (say $x_1$) that is contained in at least $s_1$ monomials of $Q$. That is, let \begin{equation}\label{eqn:singlevar}
    Q(x_1,x_2,\dots,x_n) = x_1 Q_1(x_2,\dots,x_n) + Q_2(x_2,\dots,x_n),
    \end{equation} where we have $\deg(Q_1)\le d-1$ and $\spars(Q_1) \ge s_1$. We analyze  the probability that $Q({\bf x})$ by first setting  the variables $x_2,x_3,\dots,x_n$ and then setting the value of $x_1$. By induction hypothesis, we have that 
    \begin{equation}\label{eqn:indn}\Pr_{(x_2,\dots,x_n) \sim \{0,1\}^{n-1}} [Q_1(x_2,\dots,x_n) \ne 0] \ge 1/2^{d-2} - M^{(d-1)^3}/\sqrt{s_1}\end{equation}
    Interpreting~\eqref{eqn:singlevar} as a linear polynomial in $x_1$ with coefficients being $Q_1$ and $Q_2$, we have that $\Pr[Q({\bf x}) \ne 0] \ge \Pr[Q_1(x_2,\dots,x_n)\ne 0] \cdot \Pr[Q({\bf x})\ne 0 ~|~Q_1(x_2,\dots,x_n)\ne 0]$. By the DLSZ lemma (\Cref{thm:basic}) for degree-$1$ polynomials, we have that the second factor above $\Pr[Q({\bf x})\ne 0 ~|~Q_1(x_2,\dots,x_n)\ne 0]$ is at least $1/2$. Combined with~\eqref{eqn:indn}, we thus get that $\Pr[Q({\bf x}) \ne 0] \ge 1/2\cdot \paren{1/2^{d-2} - M^{(d-1)^3}/\sqrt{s_1}} \ge 1/2^{d-1} - M^{d^3}/\sqrt{s}$, by taking $s_1 = s/M^{2d^2}$ and $M$ a sufficiently large constant.
    \item {\bf Case 2:} There exist $s_2 = M^{d^2}$ many disjoint monomials of degree $d$ of $Q$ (with non-zero coefficients), again assuming $M$ is sufficiently large. We now apply the following anti-concentration result of Meka, Nguyen and Vu~\cite{MNV} which can be thought of as a generalization of~\cite{littlewood-offord,erdos} to higher degree when there are many disjoint monomials of maximal degree:
    \begin{theorem}[\cite{MNV}, Theorem 1.6 modified]\label{thm:mnv}
        There exists a constant $B>0$ such that for any degree-$d$ multilinear polynomial $P({\bf x})$ over the reals with at least $r$ many disjoint degree-$d$ monomials with non-zero coefficients, we have
        $$\Pr_{{\bf x}\sim \{0,1\}^n}[P({\bf x})=0] \le \frac{Bd^{4/3}\sqrt{\log r}}{r^{1/(4d+1)}}.$$
    \end{theorem}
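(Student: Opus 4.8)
The plan is to prove this anti-concentration bound by induction on the degree $d$, using the \emph{decoupling} method to control the correlations that appear when the degree is reduced. The base case $d=1$ is exactly the Littlewood--Offord--Erd\H{o}s bound of Theorem~\ref{thm:anticon-deg1}: a nonzero linear form with $r$ nonzero coefficients vanishes on a uniformly random point of $\{0,1\}^n$ with probability at most $B/\sqrt{r}$, which is at least as strong as the claimed bound for $d=1$.

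For the inductive step, write the $r$ pairwise-disjoint degree-$d$ monomials of $P$ as $c_1\prod_{i\in S_1}x_i,\dots,c_r\prod_{i\in S_r}x_i$ with $|S_j|=d$ and the $S_j$ pairwise disjoint. First I would reduce to the homogeneous degree-$d$ part $P_d$ of $P$ (the lower-degree terms behave as an additive shift under decoupling and only weaken anti-concentration; one works throughout with $\sup_{v}\Pr[P_d=v]$, of which $\Pr[P_d=0]$ is a special case). Then apply a decoupling inequality adapted to small-ball probabilities: replace the single Boolean assignment $\mathbf{x}$ by $d$ independent copies $\mathbf{x}^{(1)},\dots,\mathbf{x}^{(d)}$ and pass to the decoupled $d$-linear form $\widetilde P(\mathbf{x}^{(1)},\dots,\mathbf{x}^{(d)})=\sum_{i_1,\dots,i_d\ \mathrm{distinct}}c_{\{i_1,\dots,i_d\}}\,x^{(1)}_{i_1}\cdots x^{(d)}_{i_d}$, at the cost of a multiplicative constant $C(d)$ depending only on $d$ (this is one source of the $\mathrm{poly}(d)$ prefactor). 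The point is that the index sets $S_1,\dots,S_r$ remain pairwise disjoint in $\widetilde P$, and the distinct ``copies'' of the variables are now genuinely independent.

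Next, condition on $\mathbf{x}^{(2)},\dots,\mathbf{x}^{(d)}$; then $\widetilde P$ is a linear form in $\mathbf{x}^{(1)}$ with coefficients $\beta_{i_1}:=\sum_{i_2,\dots,i_d}c_{\{i_1,\dots,i_d\}}\,x^{(2)}_{i_2}\cdots x^{(d)}_{i_d}$, so applying the Erd\H{o}s bound to the fresh vector $\mathbf{x}^{(1)}$ gives $\Pr[\widetilde P=0\mid \mathbf{x}^{(2)},\dots,\mathbf{x}^{(d)}]\le B/\sqrt{N}$ where $N:=\#\{i_1:\beta_{i_1}\ne 0\}$. It therefore suffices to show that $N$ is polynomially large in $r$ except with small probability. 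For each anchor index $i_1\in S_j$, the contribution of the $S_j$-monomial to $\beta_{i_1}$ is a nonzero decoupled degree-$(d-1)$ piece that no other monomial can cancel, so $\beta_{i_1}$ is a nonzero degree-$\le d-1$ polynomial and $\Pr[\beta_{i_1}\ne 0]$ is bounded below (by $2^{-(d-1)}$ via DLSZ, or by a stronger quantity from the induction hypothesis when $\beta_{i_1}$ itself inherits many disjoint terms). Because the $S_j$ are pairwise disjoint one can, possibly after a further random restriction preserving the disjoint structure, arrange the events $\{\beta_{i_1}\ne 0\}$ to be (nearly) independent across anchors; a Paley--Zygmund/Chernoff estimate then lower-bounds $N$ with high probability. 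Tracking the recursion --- each level loses one square root (Erd\H{o}s) and a constant factor in the number of disjoint terms that survive contraction --- gives the surviving exponent $\gamma(d)=1/(4d+1)$; the $\sqrt{\log r}$ comes from a union bound over the $\bigO(d)$ levels and the $d^{4/3}$ prefactor from accumulating the decoupling and Erd\H{o}s constants.

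\textbf{Main obstacle.} The hard part is the high-probability lower bound on $N$: a naive reverse-Markov argument from $\mathbb{E}[N]\gtrsim r\cdot 2^{-(d-1)}$ only yields $\Pr[N\ge \mathbb{E}[N]/2]\gtrsim 1/r$, which is far too weak. One genuinely needs to exploit the disjointness of the $S_j$ (and of the variable supports of the corresponding $\beta_{i_1}$) to obtain near-independence of the indicator events, \emph{and} one must propagate a polynomially large supply of disjoint degree-$(d-j)$ monomials down every one of the $d$ levels of the recursion, so that the induction hypothesis can be applied at each level with a useful parameter. Making both of these work together --- so that the exponent degrades only to $1/(4d+1)$ rather than the worthless $2^{-d}$ that a careless one-square-root-per-level recursion would produce --- is the crux of the argument, and is exactly the part for which the decoupling reformulation is essential.
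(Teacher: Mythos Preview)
This theorem is not proved in the paper at all: it is quoted as an external result of Meka, Nguyen, and Vu~\cite{MNV} and applied as a black box in Case~2 of the proof of Lemma~\ref{lem:anticonc}. There is therefore no ``paper's own proof'' to compare your proposal against. The paper's contribution around this theorem is Claim~\ref{clm:reals-finite}, which transfers the bound from $\mathbb{R}$ to groups without small-order elements via a linear-algebraic argument, not a proof of the anti-concentration inequality itself.

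As for your sketch of the MNV argument: the high-level shape (decoupling to a multilinear form, conditioning on all but one copy to reduce to Littlewood--Offord, and recursing on degree) is indeed the right flavor, but several steps are not pinned down. In particular, after decoupling and conditioning, the coefficient $\beta_{i_1}$ of $x^{(1)}_{i_1}$ is a sum over \emph{all} degree-$d$ monomials containing $i_1$, not just the one disjoint monomial $S_j$, so your claim that ``no other monomial can cancel'' the contribution from $S_j$ is false in general; other monomials sharing the vertex $i_1$ certainly contribute. The actual MNV argument has to work harder to control how many $\beta_{i_1}$ are nonzero, and the precise exponent $1/(4d+1)$ and the $\sqrt{\log r}$ factor arise from a specific balancing in their recursion that your outline does not reproduce. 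If you want to fill this in, you should consult~\cite{MNV} directly rather than the present paper.
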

    Applying the above theorem for $r=s_2=M^{d^2}$ and taking $M$ sufficiently large (as a function of $B$), we see that\begin{align}\label{eqn:antic}\Pr[Q({\bf x})=0] \le \frac{Bd^{4/3}\sqrt{\log s_2}}{s_2^{1/(4d+1)}} \le 1/2 \le 1-1/2^{d-1}.\end{align} However,~\Cref{thm:mnv} as stated only holds over the reals. Regardless, using~\Cref{clm:reals-finite}, we know that the same bound applies if all the non-zero elements of $G$ are of order greater than ${rd\choose \le d}!$.
    Recall that we are already assuming that the non-zero elements of $G$ are of order greater than $(s+1)!$ and that $s\ge M^{2d^3}$ from the hypothesis of~\Cref{lem:anticonc}. Hence, all that remains to be checked is that ${rd\choose \le d}! \le (s+1)!$ for the above choice of $r=s_2=M^{d^2}$. Indeed this inequality holds for sufficiently large $M$. 
    This finishes the proof of~\Cref{lem:anticonc} in Case 2. 
    \item {\bf Case 3:} Suppose neither Case 1 nor Case 2 occur. Using a greedy algorithm that repeatedly picks disjoint monomials of degree $d$ for as long as possible, we can find a vertex cover of size at most $s_2 d$ among the non-zero monomials of $Q$ of degree $d$.  That is, there exists at most $s_2 d$ many variables (call them ${\bf y}$) such that any non-zero monomial of $Q$ of degree $d$ contains at least one of these variables. 
    
    We will analyze the probability that $Q({\bf x})$ is zero by setting the ${\bf y}$ variables arbitrarily. Let $Q'$ denote the polynomial in the variables ${\bf x} \setminus {\bf y}$ after an arbitrary assignment to those of ${\bf y}$. Note that $\deg(Q') \leq d-1$, since we set at least one variable in each monomial of degree $d$. It suffices to show that $Q'$ is non-zero to get that $\Pr_{\bf z}[Q'({\bf z})\ne 0] \ge 1/2^{d-1}$ by a direct use of the DLSZ lemma. Since $Q$ has at least $s$ monomials and each variable is contained in at most $s_1$ monomials, the total number of monomials containing at least one variable from $y$ is at most $|{\bf y}|\cdot s_1 \le s_1 s_2 d$. Hence, even upon setting the variables in ${\bf y}$, at least $s-s_1s_2d$ monomials of $Q$ remain unaffected. However, the monomials that do get affected can cancel out these monomials. Nevertheless, there would be at least $s-2s_1s_2d=s-2M^{d^2}ds/M^{2d^2} = s(1-2d/M^{d^2}) > 0$ non-zero monomials in $Q'$. Thus $\Pr_{\bf x}[Q({\bf x})\ne 0] \ge 1/2^{d-1}$.
    \end{itemize}
This concludes the proof of~\Cref{lem:anticonc} assuming~\Cref{clm:reals-finite}, which we prove below.

\begin{proof}[Proof of~\Cref{clm:reals-finite}]
    We first reduce the number of variables of $Q$ from $n=|{\bf x}|$ to $rd$ by setting the variables that are {\em not} part of the $r$ disjoint disjoint degree-$d$ monomials of $Q$ arbitrarily. It then suffices to show that this restricted polynomial, which we shall denote by $Q'({\bf y})$, has at most $c\cdot 2^{rd}$ roots over $\{0,1\}^{rd}$. Towards a contradiction, let $S \subseteq \{0,1\}^{rd}$ be a subset of size greater than $c\cdot 2^{rd}$ such that $Q'$ evaluates to $0$ on all the points in $S$. Consider the following $\{0,1\}$-valued matrix $M$ of dimensions $\ell\times |S|$ where $\ell={rd \choose \le d}$: the $(i,j)$-th entry of $M$ denotes the evaluation of the $i$-th monomial at the $j$-th point in $S$. Let $a_1,a_2,\dots,a_r \in [\ell]$ denote the rows corresponding to the disjoint monomials of $Q'$ and for each $i\in [\ell]$, let $e_i \in \{0,1\}^{\ell}$ denote the vector that takes value 1 at the $i$-th index and 0 everywhere else. 
    
    We claim that there exists at least one index $k\in [r]$ such that $e_{a_k}$ lies in the column span of $M$, \emph{when $M$ is treated as a matrix over $\mathbb{R}$}. We prove this by contradiction. That is, suppose that $e_{a_k} \notin V$ for all $k\in [r]$, where $V \subseteq \R^\ell$ denotes the column space of $M$. Note that given any vector ${\bf u}\in \R^\ell$, we can uniquely express it as ${\bf u}= {\bf u}^\parallel + {\bf u}^\perp$ such that ${\bf u}^\parallel \in V$ and ${\bf u}^\perp \perp V$ (i.e., $\innerprod{{\bf u}^\perp,{\bf v}}=0$ for all ${\bf v}\in V$ . Denoting the orthogonal subspace of $V$ by $V^\perp$, this is equivalent to ${\bf u}^\perp \in V^\perp$). Since $e_{a_k} \notin V$, we have that $e_{a_k}^\perp \ne {\bf 0}$. Hence, $\{{\bf x}\in V^\perp~|~\innerprod{{\bf x},e_{a_k}^\perp} = 0\}$ is a subspace of $V^\perp$ of co-dimension 1. Since a finite union of subspaces of co-dimension 1 is never equal to the ambient vector space over $\R$ (which is $V^\perp$ in our case), we conclude that there exists a vector ${\bf p} \in V^\perp$ such that $\innerprod{{\bf p},{e_{a_k}}^\perp} \ne 0$ for all $k\in [r]$. Since $\innerprod{{\bf p},e_{a_k}^\parallel} = 0$ as the two vectors are in orthogonal subspaces, we get that $\innerprod{{\bf p},e_{a_k}} \ne 0$ for all $k\in [r]$. Moreover, since ${\bf p}$ is orthogonal to the columnspace of $M$, we get that $\innerprod{{\bf p}, M_j}=0$ where $M_j\in \{0,1\}^\ell$ denotes the $j$-th column of $M$ for every $j\in [|S|]$. Let $P({\bf y})\in \R[{\bf y}]$ denote the polynomial with coefficients represented by ${\bf p}$. Then, the above inner product relations imply that $P$ is a degree-$d$ polynomial with $r$ disjoint non-zero monomials yet it vanishes on $S$. This is a contradiction to the assumption in the claim statement since $|S| > c\cdot 2^{rd}$. Hence, indeed there exists $k\in [r]$ such that $e_{a_k}$ is spanned the columns of $M$. 
    
    Suppose 
    \begin{align} \label{eqn:span}e_{a_k} = \sum_{p=1}^t \alpha_p M_{j_p},\end{align} where ${\boldsymbol{\alpha}}=(\alpha_1,\alpha_2,\dots,\alpha_t) \in \R^t$ and $M_{j_1},M_{j_2},\dots,M_{j_t}$ are linearly independent columns of $M$ for some $t\le \ell$. Let us denote the submatrix indexed by the columns $j_1,j_2,\dots,j_t$ by $M' \in \R^{\ell \times t}$. Since $M'$ is full-column-rank, let $i_1,i_2,\dots,i_t \in [\ell]$ be the indices of the rows of $M'$ that are linearly independent -- we shall denote the corresponding submatrix of $M'$ by $M'' \in \R^{t\times t}$. Let ${\bf v}'=e_{a_k}$ and ${\bf v''}\in \{0,1\}^t$ be the restriction of ${\bf v}'$ to the indices $i_1,i_2,\dots,i_t$. From~\eqref{eqn:span}, we have $M' {\boldsymbol{\alpha}} = {\bf v'}$, and hence $M'' {\boldsymbol{\alpha}} = {\bf v''}$. Since $M''$ is invertible, by Cramer's rule, we have that $\alpha_p = \frac{\det(M''_p)}{\det(M'')}$ for all $p\in [t]$, where $\det(\cdot)$ denotes the determinant and $M''_p$ is the matrix $M''$ with its $p$-th column replaced by the vector ${\bf v}''$. By multiplying $\det(M'')$ on both sides of~\eqref{eqn:span}, we see that there exist integers $\beta_0 \in [t!]$ and $\beta_1,\dots,\beta_t \in \{-t!,\dots,t!\}$ such that $\beta_0 e_{a_k} = \sum_{p=1}^t \beta_p M_{j_p}$, where we are using the fact that the determinant of any $\{0,1\}^{t\times t}$ matrix lies in $\{-t!,\dots,t!\}$. 
    
    We can use this to argue about the polynomial $Q'$ defined above over the group $G$. Assume that the coefficient of the $i$th monomial in $Q'$ is $Q'_i\in G$. Recall that $Q'$ evaluates to $0$ on all points in $S$ and that $Q'$ has disjoint monomials corresponding to rows $a_1,\ldots, a_r\in [\ell]$, implying that $Q'_{a_j}\neq 0$ for each $j\in [r].$ The previous paragraph implies that $\beta_0 Q'_{a_k} = \sum_{p=1}^t \beta_p Q'({\bf x}_p)$, where $Q'_{a_k} \in G$ denotes the coefficient of the corresponding monomial in $Q'$ and ${\bf x}_p$ is the $p$-th point in $S$ (following the same indexing as the columns of $M$). Since $Q'$ vanishes on $S$, we have $\beta_0 Q'_{a_k}=0$. This is  a contradiction since $\beta_0 \le t! \le  {rd\choose \le d}!$ and we assumed that all non-zero elements of $G$ have order greater than ${rd\choose \le d}!$.
\end{proof}

\subsection{Combinatorial bound for a product of $p$-groups ($p<p_0$)}\label{subsec:small-groups}

In this subsection, we prove a combinatorial bound on the list size in case of Abelian groups $G$ that are products of $p$-groups for small $p$, i.e. \Cref{thm:smallchar}.  We say that $G$ is a \emph{$(< p_0)$-group} if every prime factor of $|G|$ is smaller than $p_0$ (or equivalently, that $G$ is a finite product of $p$-groups where $p < p_0$). 

The proof of the combinatorial bound in this case is quite different from results that prove similar statements when the domain is $\mathbb{Z}_p^n$ for constant $p$~\cite{gkz-list-decoding, BhowmickL}. In particular, we depart from the analytic ideas of~\cite{BhowmickL} and use combinatorial and algebraic techniques based on monomial orderings and the well-known technique of `fingerprinting' (see, e.g.~\cite{GH}). 

\paragraph{Monomial ordering and leading monomials.} We fix variables $x_1,\ldots,x_n$ and define an ordering among (not necessarily multilinear) monomials in these variables as follows. Given monomials $m_1, m_2$, we say that $m_1 \preceq m_2$ if $\deg(m_1) < \deg(m_2)$ or $\deg(m_1) = \deg(m_2)$ and for the least $i$ such that the two monomials differ in the exponent of $x_i$, we have a lower power of $x_i$ dividing $m_1$ than $m_2.$ This is also called the \emph{graded lexicographic order}.\footnote{Any graded monomial order in the sense of \cite[Section 2, Chapter 2]{CLO} will do just as well.} Some examples are as follows.
\begin{align*}
    x_{2} x_{3}^{2} \preceq x_{1}^{4}, \quad x_{1} x_{2} x_{3}^{2} \preceq x_{1} x_{2}^{2} x_{3}
\end{align*}
Now, given a polynomial $P\in \mathcal{P}_d(\{0,1\}^n,G)$ for a group $G$, we define its \emph{leading monomial} $\mathrm{LM}(P)$ to the largest monomial (w.r.t. $\preceq$) with non-zero coefficient in $P.$ We identify $\mathrm{LM}(P)$ with the set $S\subseteq [n]$ of size at most $d$ indexing the variables that appear in it.

We will use monomial orderings in the setting when $G = \mathbb{Z}_p$ to show that many distinct polynomials cannot agree with the same function at too many points. This uses crucially the following tail bound, which we view as independently interesting. It is proved using the fairly standard technique of using `footprints' \cite{GH} and an idea for proving tail bounds by Panconesi and Srinivasan~\cite{PS-chernoff}. As far as we know, such a bound for low-degree polynomials has not been observed before.

\begin{thmbox}
\begin{lemma}[Tail bound for degree-$d$ polynomials]
   \label{lem:tailbound}
    Fix any field $\F$ and any integer $d\geq 1$. Let $S_1,\ldots, S_t\in \mathcal{P}_d(\{0,1\}^n, \F)$ be polynomials such that $\mathrm{LM}(S_i)\cap \mathrm{LM}(S_j) = \emptyset$ for every distinct $i,j\in [t].$ Then for every $\eta > 0$, we have
    \begin{equation}
       \label{eq:tailbound}
        \Pr_{{\bf a}}{\left[\left|\{i\in [t]\ |\ S_i({\bf a}) = 0\}\right| \geq \left(1 - \frac{1}{2^d} + \eta\right)\cdot t\right]} \leq \exp(-\Omega(\eta^2 t)).
    \end{equation}
\end{lemma}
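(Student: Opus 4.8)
The plan is to follow the Panconesi–Srinivasan recipe for Chernoff-type bounds under limited independence, where the "independence" here is replaced by an algebraic degeneracy statement coming from the Footprint bound. Let $X_i$ be the indicator of the event $S_i(\mathbf{a}) = 0$ for a uniformly random $\mathbf{a} \in \{0,1\}^n$. By the DeMillo–Lipton–Schwartz–Zippel lemma (\Cref{thm:basic}), each $S_i$ is non-zero with probability at least $2^{-d}$, so $\E[X_i] = p_i \le 1 - 2^{-d}$. The quantity we want to bound is $\Pr[\sum_i X_i \ge (1 - 2^{-d} + \eta) t]$. Following Panconesi–Srinivasan, the key is to control the \emph{upper tail} via products of the form $\E\!\left[\prod_{i \in T} X_i\right]$ over subsets $T \subseteq [t]$: if we can show
\[
\E\!\left[\prod_{i\in T} X_i\right] \le \left(1 - \frac{1}{2^d}\right)^{|T|}
\]
for \emph{every} subset $T$, then a standard symmetrization argument (binomial moments dominate, then Markov on $\binom{\sum X_i}{k}$ for the right $k = \Theta(\eta t)$) yields the stated $\exp(-\Omega(\eta^2 t))$ bound exactly as in the classical Chernoff proof.

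**The algebraic core: bounding $\E\left[\prod_{i \in T} X_i\right]$ via the Footprint bound.**

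The crucial step is the displayed inequality above, and this is where the disjointness hypothesis $\mathrm{LM}(S_i) \cap \mathrm{LM}(S_j) = \emptyset$ enters. Fix $T \subseteq [t]$. The product $\prod_{i \in T} X_i$ is the indicator that $\mathbf{a}$ is a common zero of $\{S_i : i \in T\}$, so $\E[\prod_{i \in T} X_i] = |Z_T|/2^n$ where $Z_T = \{\mathbf{a} \in \{0,1\}^n : S_i(\mathbf{a}) = 0 \ \forall i \in T\}$. I would bound $|Z_T|$ by viewing it as the affine variety defined by the $S_i$ \emph{together with} the Boolean-axioms $x_j^2 - x_j$ for all $j$, over a suitable field (here we can take $\F = \mathbb{F}_p$ since the hypothesis is about $\mathcal{P}_d(\{0,1\}^n, \mathbb{F})$; for the intended application to $p$-groups one reduces to $\mathbb{Z}_p$ first). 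The Footprint bound (\cite{GH}) says that the number of common zeros is at most the number of "standard monomials" — multilinear monomials not divisible by any leading monomial of an element of the ideal. The ideal contains each $S_i$, whose leading monomial is $\mathrm{LM}(S_i)$ (a multilinear monomial indexing a set of size $\le d$), and it contains all $x_j^2 - x_j$ with leading monomials $x_j^2$. Hence every standard monomial is multilinear and avoids each $\mathrm{LM}(S_i)$. Because the sets $\mathrm{LM}(S_i)$ are pairwise disjoint, the constraint "avoid $\mathrm{LM}(S_i)$" decouples across $i \in T$: a multilinear monomial $\prod_{j \in A} x_j$ is standard only if, for each $i \in T$, the set $A$ does not contain $\mathrm{LM}(S_i)$. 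Restricting to the variables $\bigcup_{i\in T}\mathrm{LM}(S_i)$ and to the complementary variables separately, the count of admissible monomials is at most $2^{n - \sum_{i\in T}|\mathrm{LM}(S_i)|} \cdot \prod_{i \in T}\bigl(2^{|\mathrm{LM}(S_i)|} - 1\bigr) \le 2^n \cdot \prod_{i\in T}(1 - 2^{-|\mathrm{LM}(S_i)|}) \le 2^n (1 - 2^{-d})^{|T|}$, using $|\mathrm{LM}(S_i)| \le d$. Dividing by $2^n$ gives exactly the bound we need. (One has to be a little careful to make the footprint/standard-monomial counting argument multiplicative across the disjoint blocks; I expect this to work cleanly precisely because disjointness makes the "forbidden" conditions act on disjoint variable sets.)

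**Assembling the tail bound.**

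With $\E[\prod_{i\in T} X_i] \le (1-2^{-d})^{|T|}$ in hand for all $T$, the rest is the standard Panconesi–Srinivasan / moment argument: let $p = 1 - 2^{-d}$ and $k = \lceil \eta t / 2 \rceil$ (or a similar choice); then
\[
\E\!\left[\binom{\sum_i X_i}{k}\right] = \sum_{|T| = k} \E\!\left[\prod_{i\in T} X_i\right] \le \binom{t}{k} p^k,
\]
and by Markov's inequality, $\Pr[\sum_i X_i \ge (p+\eta)t] \le \binom{t}{k} p^k / \binom{(p+\eta)t}{k}$. Estimating the binomial ratio in the usual way (it behaves like $((p/(p+\eta))^{\,\Theta(\eta t)}$ up to polynomial factors, i.e.\ like $e^{-\Theta(\eta^2 t / p)}$ after expanding $\log(1 + \eta/p)$) gives $\exp(-\Omega(\eta^2 t))$ as claimed.

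**Main obstacle.** The genuinely new part — and the step I expect to require the most care — is the multiplicative Footprint-bound estimate $|Z_T| \le 2^n(1-2^{-d})^{|T|}$. The Footprint bound itself is standard, but one must (i) correctly incorporate the Boolean relations $x_j^2 = x_j$ so that the footprint consists only of multilinear monomials, (ii) argue that in a \emph{graded} monomial order the leading monomial of $S_i$ inside the ideal is still (a multilinear monomial contained in) $\mathrm{LM}(S_i)$ — reducing $S_i$ modulo the $x_j^2 - x_j$ keeps it multilinear and does not increase its leading monomial — and (iii) exploit disjointness to factor the standard-monomial count over the blocks $\mathrm{LM}(S_i)$. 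Everything else is a faithful transcription of the classical Chernoff-via-binomial-moments proof.
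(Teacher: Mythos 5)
Your proposal is correct and follows essentially the same route as the paper: both reduce the lemma to the estimate $\Pr[\bigwedge_{i\in T} S_i(\mathbf{a})=0]\leq (1-2^{-d})^{|T|}$ via the footprint bound (adjoining the relations $x_j^2-x_j$ and counting multilinear standard monomials, where disjointness of the leading monomials makes the count factor into $2^n\prod_i(1-2^{-|\mathrm{LM}(S_i)|})$), and then invoke a Panconesi--Srinivasan-style Chernoff bound. The only cosmetic difference is that you re-derive the tail bound from binomial moments, whereas the paper cites it as a black box (\Cref{thm:PS-chernoff}).
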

\end{thmbox}

The lemma is proved in \Cref{sec:tailbound} below.

We now outline the proof of the main theorem (\Cref{thm:smallchar}) of this section, which involves several steps.

\begin{enumerate}
    \item The first, and more involved, step is to prove the bound in the case that $G = \mathbb{Z}_p$ where $p < p_0$ is prime. The proof in this case splits into three smaller steps. 
    \begin{enumerate}
        \item \textbf{Pigeonhole argument}: Fix an $f:\{0,1\}^n \rightarrow \mathbb{Z}_p.$ Given a list of $L$ polynomials $P_1,\ldots, P_L\in \mathcal{P}_d(\{0,1\}^n,\mathbb{Z}_p)$ that are $(\frac{1}{2^d}-\varepsilon)$-close to $f$, we show how to obtain a sub-list of size $\ell = \Omega(\log_p L )$ polynomials $Q_1,\ldots, Q_\ell$ that moreover satisfy the property that their leading monomials are \emph{distinct}.

        \item \textbf{Sunflower lemma}: We then apply the Sunflower lemma (\Cref{lem:sunflower}) to the monomials $\mathrm{LM}(Q_1),\ldots, \mathrm{LM}(Q_\ell)$ to find a subset of $t = \Omega_d(\ell^{1/d})$ polynomials from $\set{Q_{1},\ldots,Q_{\ell}}$ whose leading monomials form a \emph{sunflower}, i.e. they are pairwise disjoint except for their common intersection.\footnote{The recently improved sunflower lemma due to Alweiss, Lovett, Wu, and Zhang \cite{ALWZ} unfortunately does not lead to significantly improved parameters here, and so we stick to the classical version.}

        \item \textbf{Using the tail bound}: We can now apply the tail bound (\Cref{lem:tailbound}) stated above and a simple combinatorial argument to show that $t\leq \poly(1/\varepsilon).$ Overall, this leads to a bound of $L \leq \exp\left(\bigO_d(1/\varepsilon)^{\bigO(d)}\right),$ concluding the proof in the prime case.
    \end{enumerate}

    \item \textbf{Modifying \cite{DinurGKS-ECCC}}: The second step is to `lift' the list bound $L$ that holds for $\mathbb{Z}_p$ ($p < p_0$) to a list bound that holds over all the finite Abelian $(< p_0)$-groups $G$. In the linear ($d=1$) case handled in~\cite{ABPSS24-ECCC}, this was done using the work of~\cite{DinurGKS-ECCC}, which gives a combinatorial characterization for such a lifting. This combinatorial property holds for the space of linear polynomials, implying (using~\cite{DinurGKS-ECCC}) that a list bound of $L$ in the prime case `lifts' to a list bound of $\poly(L)$ for all $(< p_0)$-groups $G$.
    
    Unfortunately, for $d > 1$, the characterization of~\cite{DinurGKS-ECCC} is not applicable as stated.\snote{Say why?} However, we show that a simpler proof allows us to recover a weaker bound of $L^{\bigO(\log(1/\varepsilon))}$. Using the bound for the prime case, we again get a bound of $\exp\left(\bigO_d(1/\varepsilon)^{\bigO(d)}\right)$ on the list size for any Abelian $(< p_0)$-group $G$.
\end{enumerate}

In the rest of the section, we show how to carry out the above strategy. We start with the proof of \Cref{lem:tailbound} in \Cref{sec:tailbound}, followed by the prime case in \Cref{sec:combboundsmallprime} and the case of Abelian $(< p_0)$-groups in \Cref{sec:combboundsmallgeneral}.

\subsubsection{Proof of the tail bound}\label{sec:tailbound}

To derive the tail bounds (\Cref{lem:tailbound}) we will need the following theorem of Panconesi and Srinivasan \cite{PS-chernoff}, which is an extension of Chernoff-Hoeffding bound.

\begin{theorem}\label{thm:PS-chernoff}
    \cite[Theorem 3.4]{PS-chernoff} Let $Z_1,\ldots, Z_t$ be Boolean random variables such that for some $\alpha \in [0,1]$ and for every subset $I\subseteq [t]$, we have
    $\Pr_{}{\left[\wedge_{i\in I} Z_i = 1\right]} \leq \alpha^{|I|}.$
    Then, we have the following tail bound.
    \[
    \Pr_{}{\left[\sum_{i=1}^t Z_i \geq (\alpha + \eta)\cdot t\right]} \leq \exp(-\Omega(\eta^2 t)).
    \]
\end{theorem}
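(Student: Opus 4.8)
We would prove \Cref{thm:PS-chernoff} by the standard route of controlling the \emph{probability generating function} of $S := \sum_{i=1}^t Z_i$ through its factorial moments, and then running the usual exponential (Chernoff) argument. The starting point is a purely algebraic identity: since each $Z_i\in\{0,1\}$, for every integer $k\ge 0$ we have $\binom{S}{k} = e_k(Z_1,\dots,Z_t) = \sum_{|I|=k}\prod_{i\in I}Z_i$, where $e_k$ is the $k$-th elementary symmetric polynomial and the sum ranges over $k$-subsets $I\subseteq[t]$ (the left-hand side simply counts $k$-subsets of $\{i:Z_i=1\}$). Taking expectations and invoking the hypothesis $\Pr[\bigwedge_{i\in I}Z_i=1]\le\alpha^{|I|}$ termwise,
\[
\E\!\left[\binom{S}{k}\right] \;=\; \sum_{|I|=k}\Pr\!\left[\textstyle\bigwedge_{i\in I}Z_i=1\right] \;\le\; \binom{t}{k}\,\alpha^k .
\]
Multiplying by $z^k$ for any fixed $z\ge 0$ and summing over $k$ (a finite sum, so expectation and summation commute), the binomial theorem on both sides yields the key stochastic-domination bound on the generating function,
\[
\E\!\left[(1+z)^S\right] \;=\; \sum_{k=0}^t z^k\,\E\!\left[\binom{S}{k}\right] \;\le\; \sum_{k=0}^t \binom{t}{k}(\alpha z)^k \;=\; (1+\alpha z)^t \qquad(z\ge 0),
\]
i.e.\ the PGF of $S$ is dominated by that of a $\mathrm{Bin}(t,\alpha)$ variable.

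From here the argument is the textbook Chernoff computation. Set $s := (\alpha+\eta)t$; we may assume $0<\alpha<1$ and $\alpha+\eta<1$ (the other cases are handled below). For any $z\ge 0$, Markov's inequality applied to $(1+z)^S$ gives $\Pr[S\ge s]\le (1+\alpha z)^t/(1+z)^s = \big((1+\alpha z)/(1+z)^{\alpha+\eta}\big)^t$. Minimizing the base over $z\ge 0$ (the minimizer is $z^\ast=\eta/(\alpha(1-\alpha-\eta))\ge 0$) and simplifying yields exactly $\Pr[S\ge (\alpha+\eta)t]\le \exp(-t\,D(\alpha+\eta\,\|\,\alpha))$, where $D(\beta\,\|\,\alpha)=\beta\ln(\beta/\alpha)+(1-\beta)\ln((1-\beta)/(1-\alpha))$ is the binary relative entropy (in nats). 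Pinsker's inequality then gives $D(\alpha+\eta\,\|\,\alpha)\ge 2\eta^2$, so $\Pr[S\ge(\alpha+\eta)t]\le\exp(-2\eta^2 t)=\exp(-\Omega(\eta^2 t))$. The degenerate cases are immediate: if $\alpha=0$ the hypothesis forces each $Z_i=0$ almost surely; if $\alpha+\eta>1$ the event $\{S\ge(\alpha+\eta)t\}$ is empty; and if $\alpha+\eta=1$ then $\{S\ge t\}=\{\bigwedge_i Z_i=1\}$ has probability at most $\alpha^t\le e^{-(1-\alpha)t}=e^{-\eta t}\le e^{-\eta^2 t}$.

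The conceptual content lies entirely in the first paragraph: the identity $\binom{S}{k}=e_k(Z)$ converts the combinatorial hypothesis about conjunctions directly into a bound on every factorial moment of $S$, hence on the full generating function, after which only the routine exponential-moment estimate remains — so there is no genuine obstacle here. If one prefers to avoid the explicit relative-entropy optimization, an alternative is to apply Markov directly to $\binom{S}{k}$ with the single well-chosen value $k=\lceil\eta t/(2(1-\alpha))\rceil$ and bound $\binom{t}{k}\alpha^k/\binom{(\alpha+\eta)t}{k}=\alpha^k\prod_{j=0}^{k-1}\frac{t-j}{(\alpha+\eta)t-j}$ using $\ln(1+\xi)\ge \xi/(1+\xi)$; this is more hands-on but also lands at $\exp(-\Omega(\eta^2 t))$. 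The only points requiring a little care in either version are the boundary regimes of $\alpha$ and $\eta$ noted above, and (in the second version) the rounding of $k$ to an integer, which costs only lower-order factors absorbed into the $\Omega(\cdot)$.
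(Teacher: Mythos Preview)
The paper does not prove this theorem itself; it is quoted verbatim as \cite[Theorem 3.4]{PS-chernoff} and used as a black box in the proof of \Cref{lem:tailbound}. Your argument is correct and is in fact the standard proof (and essentially the one in the original Panconesi--Srinivasan paper): the identity $\binom{S}{k}=e_k(Z_1,\dots,Z_t)$ for Boolean $Z_i$ converts the conjunction hypothesis into a bound on every factorial moment, which dominates the generating function by that of $\mathrm{Bin}(t,\alpha)$, after which the usual Chernoff/relative-entropy computation and Pinsker's inequality finish the job.
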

\begin{proof}[Proof of \Cref{lem:tailbound}]
We start by defining the Boolean random variables $Z_{1},\ldots,Z_{t}$ as follows: For $i \in [t]$, $Z_{i} = 1$ exactly when $S_{i}$ is equal to $0$, i.e. $Z_{i}(\mathbf{a}) = 1$ iff $S_{i}(\mathbf{a}) = 0$. To use \Cref{thm:PS-chernoff}, we need to show the following:\newline
For each subset $I \subseteq [t]$,
\begin{align*}
    \Pr[\wedge_{i \in I} Z_{i} = 1] \, \leq \, \alpha^{|I|},
\end{align*}
where $\alpha := \left(1-\frac{1}{2^d}\right)$. 

% We would like to apply the theorem to the indicator random variables $Z_1,\ldots, Z_t$ where $Z_i = 1$ exactly when $S_i = 0.$ To do this we need to show that for any $I\subseteq [t]$, we have $\Pr_{}{\left[\wedge_{i\in I} Z_i = 1\right]} \leq \alpha^{|I|}$, where $\alpha := \left(1-\frac{1}{2^d}\right).$ 

For simplicity in notation, assume that $I = \{1,\ldots, r\}$ and define the $\mathrm{Zero}_I\subseteq \{0,1\}^n$ as the set of common zeroes of $S_1,\ldots, S_r$ in $\Boo^{n}$, i.e.
\begin{align*}
    \mathrm{Zero}_{I} \, := \, \setcond{\mathbf{a} \in \Boo^{n}}{S_{i}(\mathbf{a}) = 0, \text{ for all } \, i \in [t]}
\end{align*}
We want to show that $|\mathrm{Zero}_I| \leq \alpha^r \cdot 2^{n}.$

To do this, we use the standard `footprint bound' (see e.g. \cite{GH} and \cite[\S 5.3]{CLO}) which can be seen as a version of the linear algebra method in combinatorics (see, e.g. \cite{Babai-Frankl}). More precisely, we consider the vector space of functions $g: \mathrm{Zero}_I\rightarrow \F$. We denote this vector space by $\mathcal{F}_I$ and we will show that 
\begin{equation}
    \label{eq:zeroI}
    \dim(\mathcal{F}_I) \leq \cdot \alpha^{r} \cdot 2^{n}.
\end{equation}
Note that the set of indicator functions for each point in $\mathrm{Zero}_{I}$ is a basis for $\mathcal{F}_{I}$ and thus $\dim(\mathcal{F}_{I})$ is equal to $|\mathrm{Zero}_{I}|$. Combining it with  \Cref{eq:zeroI}, we will get that $|\mathrm{Zero}_{I}| \leq 2^{n} \cdot \alpha^{r}$ and this will finish the proof.

We bound $\dim(\mathcal{F}_I)$ by constructing a spanning set for $\mathcal{F}_I$ of size at most $\alpha^{r} \cdot 2^{n}$. The set $\mathrm{Zero}_{I}$ is a subset of the product set $\Boo^{n}$. We define another set of polynomials and show that their common set of zeroes in $\mathbb{Z}_{p}^{n}$ is exactly equal to $\mathrm{Zero}_{I}$, gaining the advantage of working over a field and use linear algebra. Define the set $I'$ of polynomials as follows:
\begin{align*}
    I' \, := \, \set{x_1^2 - x_1,\ldots, x_n^2 - x_n, S_1,\ldots, S_r}
\end{align*}
It is easy to see that the common set of zeroes of polynomials in $I'$ over $\mathbb{Z}_{p}^{n}$ is exactly $\mathrm{Zero}_{I}$ (the polynomial constraint $x_{i}^{2} - x_{i}$ forces the $i^{th}$ coordinate to be in $\Boo$).
% We start by noting that the set $\mathrm{Zero}_I$ is exactly the set of common zeros in $\mathbb{Z}_p^n$ of the following list of polynomials
% \begin{equation}
%     \label{eq:list}
%     .
% \end{equation}

Now, given any function $g:\mathrm{Zero}_I\rightarrow \F$, we can express $g$ as a polynomial (e.g. via \Cref{thm:basic}). Using a standard division algorithm for multivariate polynomials (see \cite[\S 2.3, Theorem 3]{CLO}), we can write 
\begin{equation}
\label{eq:division}
g = \sum_{i=1}^n a_i\cdot (x_i^2 - x_i) + \sum_{j=1}^r b_j S_j + R
\end{equation}
where $a_1,\ldots, a_n, b_1,\ldots, b_r\in \mathbb{Z}_p[x_1\ldots, x_n]$ are some polynomials and $R\in \mathbb{Z}_p[x_1\dots, x_n]$ is such that no monomial with non-zero coefficient in $R$ is divisible by any of the leading monomials of the polynomials in the set $I'$. In other words, $R$ is a linear combination of \textit{multilinear} monomials that are not divisible by $\mathrm{LM}(S_1),\ldots, \mathrm{LM}(S_r).$\footnote{The last couple of paragraphs is essentially a summary of part of~\cite[\S 5.3, Proposition 4]{CLO} stated here for completeness.}

By \Cref{eq:division}, the polynomials $R$ and $g$ represent the same function over $\mathrm{Zero}_I.$ It follows that the set of multilinear monomials that are not divisible by $\mathrm{LM}(S_1),\ldots, \mathrm{LM}(S_r)$ span the space $\mathcal{F}_I.$ In particular, the dimension of the vector space $\mathcal{F}_I$ is upper bounded by the number of such multilinear monomials. 

Fnially we argue that the number of multilinear monomials not divisible by $\mathrm{LM}(S_{1}), \ldots, \mathrm{LM}(S_{r})$ is at most $\alpha^{r} \cdot 2^{n}$. Picking a uniformly random multilinear monomial corresponds to choosing a uniformly random subset $J \subseteq [n].$ The chance that the random multilinear monomial is not divisible by $\mathrm{LM}(S_j)$ ($j\in [r]$) is equal to the chance that the set $J$ does not contain the set corresponding to $\mathrm{LM}(S_{j})$ which is at most $\alpha = 1-\frac{1}{2^d}$ since $|\mathrm{LM}(S_j)|\leq d$. Since the leading monomials of $S_1,\ldots, S_r$ are pairwise disjoint, these events are mutually independent, leading to the conclusion that a uniformly random multilinear monomial is not divisible by $\mathrm{LM}(S_1),\ldots, \mathrm{LM}(S_r)$ is at most $\alpha^r$. This is equivalent to the desired claim.

This proves \Cref{eq:zeroI}, which concludes the proof of \Cref{lem:tailbound} as described above.
\end{proof}

We will use a corollary of this tail bound, which we now prove.

\begin{corollary}
    \label{cor:tailbound}
    Fix any field $\F$ and any integer $d\geq 1$. Let $Q_1,\ldots, Q_t\in \mathcal{P}_d(\{0,1\}^n, \F)$ be polynomials such that $\mathrm{LM}(Q_i)\cap \mathrm{LM}(Q_j) = \emptyset$ for every distinct $i,j\in [t].$ Then for every $\eta > 0$, we have
    \[
        \Pr_{{\bf a}}{\left[\exists c\in \F\ s.t.\ \left|\{i\in [t]\ |\ Q_i({\bf a}) = c\} \right|\geq \left(1 - \frac{1}{2^d} + \eta\right)\cdot t\right]} \leq t\exp(-\Omega(\eta^2\cdot t)).
    \]
\end{corollary}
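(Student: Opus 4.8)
The plan is to deduce \Cref{cor:tailbound} from \Cref{lem:tailbound} by a union bound over a carefully chosen ``anchor'' index. First I would observe that the event in the corollary is always witnessed by an explicit index. Fix $\mathbf{a}$ and suppose there is a value $c$ with $|\{i : Q_i(\mathbf{a}) = c\}| \geq (1-\tfrac{1}{2^d}+\eta)t$; write $W_c(\mathbf{a})$ for this set of indices and let $i^\ast = i^\ast(\mathbf{a})$ be the element of $W_c(\mathbf{a})$ whose leading monomial is smallest in the graded lexicographic order. (The monomials $\mathrm{LM}(Q_i)$ are pairwise distinct, being pairwise disjoint; the degenerate possibility that two of the $Q_i$ are nonzero constants is handled trivially.) Thus the corollary's event is contained in $\bigcup_{i^\ast \in [t]} B_{i^\ast}$ for events $B_{i^\ast}$ defined next that do \emph{not} refer to the unknown $c$, and it will suffice to bound $\Pr[B_{i^\ast}] \leq \exp(-\Omega(\eta^2 t))$ for each $i^\ast$.

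For a fixed $i^\ast$, set $A_{i^\ast} = \{i \in [t]\setminus\{i^\ast\} : \mathrm{LM}(Q_i) \succ \mathrm{LM}(Q_{i^\ast})\}$ and consider the differences $R_i := Q_i - Q_{i^\ast}$ for $i \in A_{i^\ast}$. The key (and only delicate) bookkeeping step is that $\mathrm{LM}(R_i) = \mathrm{LM}(Q_i)$ for every $i \in A_{i^\ast}$: since $\mathrm{LM}(Q_i) \succ \mathrm{LM}(Q_{i^\ast})$, the monomial $\mathrm{LM}(Q_i)$ is strictly larger than every monomial of $Q_{i^\ast}$, so it survives in $R_i$ with the same nonzero coefficient, and no larger monomial appears in $Q_i$ or $Q_{i^\ast}$. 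Hence each $R_i$ is a nonzero element of $\mathcal{P}_d(\{0,1\}^n,\F)$, and the family $\{R_i : i \in A_{i^\ast}\}$ has pairwise disjoint leading monomials (it is a sub-family of $\{\mathrm{LM}(Q_i)\}_i$), so \Cref{lem:tailbound} applies to it. Define $B_{i^\ast}$ to be the event that at least $(1-\tfrac{1}{2^d}+\tfrac{\eta}{2})t$ of the $R_i$ with $i \in A_{i^\ast}$ vanish at $\mathbf{a}$. If $B_{i^\ast}$ can occur at all then $|A_{i^\ast}| \geq (1-\tfrac{1}{2^d}+\tfrac{\eta}{2})t = \Omega(t)$, and the threshold is at least a $(1-\tfrac{1}{2^d}+\tfrac{\eta}{2})$-fraction of $|A_{i^\ast}|$; so \Cref{lem:tailbound} gives $\Pr[B_{i^\ast}] \leq \exp(-\Omega(\eta^2 |A_{i^\ast}|)) \leq \exp(-\Omega(\eta^2 t))$.

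It remains to verify the containment of events and conclude. If the corollary's event holds at $\mathbf{a}$ with anchor $i^\ast$ and popular value $c$, then by minimality of $\mathrm{LM}(Q_{i^\ast})$ on $W_c(\mathbf{a})$, every $i \in W_c(\mathbf{a})\setminus\{i^\ast\}$ lies in $A_{i^\ast}$ and satisfies $Q_i(\mathbf{a}) = c = Q_{i^\ast}(\mathbf{a})$, i.e.\ $R_i(\mathbf{a}) = 0$; hence at least $|W_c(\mathbf{a})| - 1 \geq (1-\tfrac{1}{2^d}+\eta)t - 1$ of the $R_i$ with $i \in A_{i^\ast}$ vanish at $\mathbf{a}$. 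Provided $t \geq 2/\eta$ this is at least $(1-\tfrac{1}{2^d}+\tfrac{\eta}{2})t$, so $B_{i^\ast}$ holds; and when $t < 2/\eta$ the claimed bound $t\exp(-\Omega(\eta^2 t))$ is $\Omega(1)$ so the statement is vacuous. A union bound over the $t$ possible anchors then yields $\Pr[\text{corollary's event}] \leq \sum_{i^\ast \in [t]} \Pr[B_{i^\ast}] \leq t\exp(-\Omega(\eta^2 t))$, as desired. The step I would be most careful about is the leading-monomial bookkeeping: it is exactly to keep the differences $Q_i - Q_{i^\ast}$ having pairwise disjoint leading monomials that one must anchor on the index with the \emph{smallest} leading monomial rather than an arbitrary one, which is why the anchor depends on $\mathbf{a}$ and has to be removed by the union bound.
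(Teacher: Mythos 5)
Your proposal is correct and follows essentially the same route as the paper's proof: anchor on the index with the smallest leading monomial among the popular set, form the differences with the anchor, observe that their leading monomials are unchanged and remain pairwise disjoint so that \Cref{lem:tailbound} applies, and union bound over the choice of anchor. The only cosmetic differences are that the paper restricts the union bound to the first $t/2^d$ indices and is slightly less careful than you about the ``$-1$'' in the vanishing count, which you cleanly absorb by working with $\eta/2$.
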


\begin{proof}
    Without loss of generality, assume that $\mathrm{LM}(Q_1) \precneq \mathrm{LM}(Q_2) \cdots \precneq \mathrm{LM}(Q_t).$ 
    
    Assume that ${\bf a}\in \{0,1\}^n$ is chosen uniformly at random as in the statement of the corollary. For $j< t,$ let $\mathcal{E}_j$ denote the event that there is a $c\in \F$ such that for at least $\left(1 - \frac{1}{2^d} + \eta\right)\cdot t$ many $i\in [t]$, we have $Q_i({\bf a}) = c$ and furthermore that $j$ is the smallest index such that  $Q_j({\bf a}) = c$.

    The event whose probability we are trying to bound is contained in 
    $
    \mathcal{E}_1\cup \mathcal{E}_2\cdots \cup \mathcal{E}_r
    $
    where $r = t/2^d.$ 

    Fix any $j\leq r$. The probability of $\mathcal{E}_j$ is upper bounded by the probability that at least a $\left(1 - \frac{1}{2^d} + \eta\right)$ fraction of the polynomials
    \[
    Q_{j+1}-Q_j, Q_{j+2} - Q_j, \ldots, Q_t - Q_j
    \]
    all simultaneously vanish at the point ${\bf a}.$ Note that these polynomials have leading monomials $\mathrm{LM}(Q_{j+1}),\ldots, \mathrm{LM}(Q_t)$ respectively, which are pairwise disjoint. Hence the tail bound \Cref{lem:tailbound} is applicable and we can bound the probability of $\mathcal{E}_j$ by $\exp(-\Omega(\eta^2 \cdot (t-j))) = \exp(-\Omega(\eta^2\cdot t)).$

    The corollary follows by a union bound over the probability of $\mathcal{E}_1,\ldots, \mathcal{E}_r.$
\end{proof}

\subsubsection{Combinatorial bound for the prime case}\label{sec:combboundsmallprime}

In this subsection, we prove a combinatorial bound on the list size for degree $d$ polynomials when the coefficients are from $\mathbb{Z_{p}}$ for a prime $p$. The list size is a constant dependent on $d, p$, and $\varepsilon$ (independent of $n$).

\begin{lemma}
    \label{lem:combboundconstprime}
    Fix any $d\geq 1$ and any prime $p.$ Then for every function $f: \Boo^{n} \to \mathbb{Z}_p$ we have, $|\mathsf{List}_{\varepsilon}^{f}| \leq \exp(\bigO_{d,p}(1/\varepsilon)^{\bigO(d)})$.
\end{lemma}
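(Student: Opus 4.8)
The goal is to bound $|\mathsf{List}_\varepsilon^f|$ for $f:\{0,1\}^n\to\mathbb{Z}_p$ by $\exp(\bigO_{d,p}(1/\varepsilon)^{\bigO(d)})$, following the three-step outline already sketched: a pigeonhole step to extract a sublist with distinct leading monomials, a sunflower step to make the leading monomials into a sunflower, and finally an application of the tail bound \Cref{cor:tailbound}. Throughout, fix the graded lexicographic order $\preceq$ and write $\mathrm{LM}(\cdot)$ for leading monomials (identified with subsets of $[n]$ of size at most $d$).

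\textbf{Step 1 (Pigeonhole on leading monomials).} Suppose $P_1,\dots,P_L\in\mathcal{P}_d(\{0,1\}^n,\mathbb{Z}_p)$ are all $(\tfrac{1}{2^d}-\varepsilon)$-close to $f$. I want to extract a sublist $Q_1,\dots,Q_\ell$ whose leading monomials are pairwise distinct, with $\ell=\Omega(\log_p L)$. The natural way: build the sublist greedily. Given a current partial list, consider the span (over $\mathbb{Z}_p$) of the differences $\{P_i-P_j\}$ among the chosen polynomials together with the ``new'' candidate; more cleanly, group the $P_i$'s according to the $\mathbb{Z}_p$-affine-subspace structure. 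Concretely: among any collection of polynomials, if two of them $P_i,P_j$ have the \emph{same} leading monomial, I can take a suitable $\mathbb{Z}_p$-linear combination $P_i - c\,P_j$ that cancels the leading term and hence has strictly smaller leading monomial; but I need the combination to stay close to (a fixed shift of) $f$. The right formulation: work with the differences $R_i = P_i - P_1$, which are $(\tfrac{1}{2^{d-1}})$-close to $0$ pairwise in the sense $\delta(R_i,R_j)<1/2^{d-1}$, and observe that if many $R_i$ share a leading monomial one can pass to a smaller set after a linear-algebraic reduction — but since we only need $\ell = \Omega(\log_p L)$, the cleanest argument is: partition $\{P_1,\dots,P_L\}$ into classes where $P_i\sim P_j$ iff $\mathrm{LM}(P_i)=\mathrm{LM}(P_j)$; within a class of size $N$, the polynomials lie in a coset of a $\mathbb{Z}_p$-vector space, and peeling off the leading term reduces to a lower-degree-monomial problem, so by induction on the (finite) poset of possible leading monomials a class of size $N$ contains $\Omega(\log_p N)$ polynomials with distinct leading monomials among their successive differences. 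Taking the largest class and iterating gives the claimed $\ell$. I expect this to be the fiddliest step to write carefully; the key point is just that each "collision'' of leading monomials costs only a $\log_p$ factor because $\mathbb{Z}_p$ has $p$ elements, so a coset of dimension $m$ over $\mathbb{Z}_p$ of polynomials-with-equal-leading-monomial cannot all survive — precisely, a set of $>p^m$ such polynomials forces, by a dimension count, a nontrivial linear dependence yielding a new polynomial of smaller leading monomial still in $\mathsf{List}^{f'}_{\varepsilon}$ for the appropriate shift $f'$, and one recurses.

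\textbf{Steps 2 and 3 (Sunflower + tail bound).} Given $Q_1,\dots,Q_\ell$ with distinct leading monomials $\mathrm{LM}(Q_1),\dots,\mathrm{LM}(Q_\ell)$, each a subset of $[n]$ of size $\le d$, apply the Sunflower Lemma (\Cref{lem:sunflower}) with $k=d$: as long as $\ell \ge d!\,(t-1)^d$, we get $t$ of them, say $Q_1,\dots,Q_t$ after relabeling, whose leading monomials form a sunflower with some core $C$. Now consider the polynomials $\widetilde Q_i$ obtained by the footprint-style reduction dividing out the core — equivalently, restrict to a random assignment of the variables in $C$: after such a restriction, with good probability the coefficient of the petal-monomial $\mathrm{LM}(Q_i)\setminus C$ survives (it is governed by a nonzero lower-degree polynomial in the core variables, which is nonzero at a random point with probability $\ge 1/2^{\deg}\ge 2^{-d}$), so after fixing the core variables we obtain $m = \Omega(t/2^d)$ polynomials $\widehat Q_1,\dots,\widehat Q_m$ with \emph{pairwise disjoint} leading monomials, each still $(\tfrac{1}{2^d}-\varepsilon/2)$-close to the correspondingly-restricted function $\widehat f$ (using that restricting $O(d)$ coordinates at most multiplies distance by $2^{O(d)}$, which is absorbed into $\varepsilon$ by first making $\varepsilon$ slightly smaller — or, more carefully, one applies the same linearity-of-expectation trick as in \Cref{sec:prune} to keep an $\Omega(\varepsilon)$ fraction of them close). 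Then: by closeness, $\Pr_{\mathbf{x},i}[\widehat Q_i(\mathbf{x})=\widehat f(\mathbf{x})]\ge 1-1/2^d+\varepsilon/2$, so with probability $\ge\varepsilon/4$ over $\mathbf{x}$ at least $(1-1/2^d+\varepsilon/4)m$ of the $\widehat Q_i(\mathbf{x})$ agree with the single value $\widehat f(\mathbf{x})$ — i.e. at least that many take a common value $c\in\mathbb{Z}_p$. But \Cref{cor:tailbound}, applicable since the leading monomials are pairwise disjoint, bounds the probability of this event by $m\exp(-\Omega(\varepsilon^2 m))$. Comparing, $\varepsilon/4 \le m\exp(-\Omega(\varepsilon^2 m))$ forces $m = \bigO(\log(1/\varepsilon)/\varepsilon^2)$, hence $t=\bigO_d(\mathrm{poly}(1/\varepsilon))$, hence $\ell = \bigO_d(\mathrm{poly}(1/\varepsilon))$, and unwinding Step 1 gives $L \le p^{\bigO(\ell)} = \exp(\bigO_{d,p}(1/\varepsilon)^{\bigO(d)})$.

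\textbf{Main obstacle.} The delicate point is Step 1: making precise the claim that collisions of leading monomials only cost a $\log_p$ factor, while ensuring the reduced polynomials remain in the list (against an appropriately shifted function). One must be careful that subtracting $P_1$ from everything and taking $\mathbb{Z}_p$-linear combinations keeps distances under control — a linear combination $\sum a_i(P_i - P_1)$ with $\sum a_i\ne 0$ is not automatically close to anything — so the cleanest route is to only ever take \emph{differences} $P_i - P_j$ (which are within $1/2^{d-1}$ of $0$ and hence, after rescaling by the inverse of a nonzero scalar mod $p$, still of bounded agreement with $0$, though the relevant distance is now $2/2^d$ rather than $1/2^d$) and to track that the sunflower/tail-bound machinery in Steps 2–3 tolerates this doubled radius — which it does, since \Cref{cor:tailbound} is stated with the threshold $1-1/2^d+\eta$ and we only need $\eta=\Omega(\varepsilon)$. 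An equivalent and perhaps cleaner packaging: run Step 1 entirely inside the \emph{difference code} $\{P-Q : P,Q\in\mathsf{List}\}$, which is a subset of $\mathcal{P}_d$ all of whose nonzero members have $\ge 2^{-d}$ density of zeros less than... — in any case, the bookkeeping of which distance parameter is in play at each stage is where the real care is needed; the rest is a faithful transcription of the three-step plan above.
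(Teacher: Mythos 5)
Your three-step architecture (pigeonhole on leading monomials, sunflower lemma, tail bound via \Cref{cor:tailbound}) is the same as the paper's, and your Steps 2--3 accounting of the final parameters is essentially right. But there is a genuine gap exactly where you flag the "main obstacle," and neither of the two mechanisms you propose for Step 1 closes it. A "nontrivial linear dependence" among list members does not produce a new list member: $\sum_i a_i(P_i - P_1)$ is close to nothing unless the $a_i$ are very special, as you yourself note. Your fallback --- working only with pairwise differences $P_i-P_j$, which are $(\tfrac{1}{2^{d-1}}-2\varepsilon)$-close to $0$ --- is not salvaged by the remark that \Cref{cor:tailbound} "only needs $\eta=\Omega(\varepsilon)$": with the doubled radius the guaranteed agreement is $1-\tfrac{1}{2^{d-1}}+2\varepsilon$, which is \emph{below} the threshold $1-\tfrac{1}{2^d}$ at which the tail bound begins to say anything, so no contradiction is reached. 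The missing idea is the paper's \emph{plurality polynomial} (\Cref{clm:pigeonhole}): at each stage one subtracts from the current function $f_i$ \emph{and} from every polynomial in the current sublist the single polynomial whose coefficient on each monomial is the plurality coefficient over the sublist. This preserves every distance exactly (so all surviving polynomials remain $(\tfrac{1}{2^d}-\varepsilon)$-close to one common shifted function $f_\ell$), and guarantees that at least a $1/p$ fraction of the sublist now has coefficient $0$ on the top leading monomial, hence strictly smaller leading monomial. That is how each collision costs only a factor $p$ without any loss in the distance parameter.

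There is a second, smaller gap in your handling of the sunflower core. Restricting the core variables to a random assignment is \emph{not} equivalent to "dividing out the core": after fixing the core $C$, the coefficient of the petal $\mathrm{LM}(Q_i)\setminus C$ survives with probability $\ge 2^{-|C|}$, but the petal need not be the \emph{leading} monomial of the restricted polynomial --- a lower-order monomial of $Q_i$ disjoint from $C$ (e.g.\ $\{4,5\}$ when $\mathrm{LM}(Q_i)=\{1,2,3\}$ and $C=\{1,2\}$) can dominate the petal $\{3\}$ after restriction, so the pairwise-disjointness hypothesis of \Cref{cor:tailbound} is not guaranteed. The paper instead never fixes the core: it extracts the coefficient polynomials $Q_{i,J}$ of the core monomial $x^J$ (for which $\mathrm{LM}(Q_{i,J})=\mathrm{LM}(Q_i)\setminus J$ does hold, by the graded-lex order), applies the tail bound to these over a random setting of the \emph{non-core} variables, and then pays only a further factor $2^{-|J|}$ from the fact that distinct functions of the $|J|$ core variables must disagree somewhere (\Cref{clm:combboundfinaladhoc}). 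You would need to adopt that conditioning, or otherwise repair the disjointness of leading monomials after your core restriction.
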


To prove this lemma, we follow the outline from the beginning of the section. For the rest of this section, let $f$ be an arbitrary function as in the statement of the lemma and let $\mathsf{List}_{\varepsilon}^{f} = \{P_1,\ldots, P_L\}$ where $L \in [p^\ell, p^{\ell+1})$ for an integer $\ell.$ Note that $\ell = \Omega(\log_p L).$

\paragraph{\underline{Pigeonhole argument}.} We start with a pigeonhole argument that allows us to find a sub-list of polynomials from $\mathsf{List}^{f}_{\varepsilon}$ that have distinct leading monomials. More formally we prove the following.

\begin{claim}
    \label{clm:pigeonhole}
    For each non-negative integer $i\leq \ell$, there is a function $f_i:\{0,1\}^n\rightarrow \mathbb{Z}_p$ and a set of polynomials $\mathcal{Q}_i = \{Q_1,\ldots, Q_i\} \cup \mathcal{Q}'_i$ such that
    \begin{enumerate}
        \item $\mathcal{Q}_i \subseteq \mathsf{List}_{\varepsilon}^{f_i}$,
        \item $\mathrm{LM}(Q_1)\succneq\cdots \succneq \mathrm{LM}(Q_i) \succneq \mathrm{LM}(Q)$ for each $Q\in \mathcal{Q}'_i,$ and
        \item $|\mathcal{Q}_i'|\geq p^{\ell-i}.$
    \end{enumerate}
\end{claim}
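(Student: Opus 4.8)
The plan is to prove \Cref{clm:pigeonhole} by induction on $i$, using a pigeonhole argument at each step to extract one more polynomial with a strictly smaller leading monomial while keeping a sufficiently large reservoir of candidate polynomials. For the base case $i=0$, I would set $f_0 = f$, take $\mathcal{Q}_0 = \mathsf{List}_\varepsilon^f = \{P_1,\ldots,P_L\}$ with the distinguished part empty (so $\mathcal{Q}_0' = \mathcal{Q}_0$), and note that $|\mathcal{Q}_0'| = L \geq p^\ell$; the second condition is vacuous. For the inductive step, assuming the claim for $i < \ell$, I have $f_i$ and $\mathcal{Q}_i = \{Q_1,\ldots,Q_i\} \cup \mathcal{Q}_i'$ with $|\mathcal{Q}_i'| \geq p^{\ell-i}$ and all polynomials in $\mathcal{Q}_i'$ having leading monomial strictly below $\mathrm{LM}(Q_i)$ (or no constraint when $i=0$).

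The core of the inductive step is the following. Fix any polynomial $Q_{i+1} \in \mathcal{Q}_i'$ whose leading monomial $\mathrm{LM}(Q_{i+1})$ is maximal among all polynomials in $\mathcal{Q}_i'$. Now consider the shifted function $f_{i+1} := f_i - Q_{i+1}$ and the shifted set of polynomials $\{Q - Q_{i+1} : Q \in \mathcal{Q}_i'\}$. Each such $Q - Q_{i+1}$ is still $(1/2^d - \varepsilon)$-close to $f_{i+1}$ since $\delta(Q - Q_{i+1}, f_i - Q_{i+1}) = \delta(Q, f_i) \leq 1/2^d - \varepsilon$; so this shifted set lies inside $\mathsf{List}_\varepsilon^{f_{i+1}}$. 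The polynomial $Q_{i+1} - Q_{i+1} = 0$ is in this set but we discard it. For each remaining $Q \in \mathcal{Q}_i' \setminus \{Q_{i+1}\}$, since $\mathrm{LM}(Q) \preceq \mathrm{LM}(Q_{i+1})$ (by maximality of $Q_{i+1}$), the leading terms may cancel, so $\mathrm{LM}(Q - Q_{i+1}) \precneq \mathrm{LM}(Q_{i+1})$ — I would verify this carefully, distinguishing the cases $\mathrm{LM}(Q) \precneq \mathrm{LM}(Q_{i+1})$ (where no cancellation in the top term can occur and the new leading monomial is whichever of the two top monomials is larger, hence $= \mathrm{LM}(Q_{i+1})$ — wait, this needs care) from $\mathrm{LM}(Q) = \mathrm{LM}(Q_{i+1})$ (where the leading coefficients lie in $\mathbb{Z}_p$ and may or may not cancel). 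To handle the case $\mathrm{LM}(Q) \precneq \mathrm{LM}(Q_{i+1})$, note that $\mathrm{LM}(Q - Q_{i+1}) = \mathrm{LM}(Q_{i+1})$, which is \emph{not} strictly smaller. So the pigeonhole should instead be organized differently: I would partition $\mathcal{Q}_i'$ according to the \emph{coefficient} of the monomial $\mathrm{LM}(Q_{i+1})$ appearing in each $Q \in \mathcal{Q}_i'$. This coefficient takes one of $p$ values in $\mathbb{Z}_p$. By pigeonhole, some value $c \in \mathbb{Z}_p$ is attained by at least $|\mathcal{Q}_i'|/p \geq p^{\ell-i-1} = p^{\ell-(i+1)}$ polynomials in $\mathcal{Q}_i'$; call this subcollection $\mathcal{R}$. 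Picking any $Q_{i+1} \in \mathcal{R}$ with maximal leading monomial among $\mathcal{R}$, and shifting by $Q_{i+1}$, every $Q - Q_{i+1}$ for $Q \in \mathcal{R}$ has the monomial $\mathrm{LM}(Q_{i+1})$ appearing with coefficient $c - c = 0$, and all higher monomials also vanish (they don't appear in either polynomial by maximality/definition of leading monomial within $\mathcal{R}$), so $\mathrm{LM}(Q - Q_{i+1}) \precneq \mathrm{LM}(Q_{i+1})$.

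Thus I would set $\mathcal{Q}_{i+1}' := \{Q - Q_{i+1} : Q \in \mathcal{R}, Q \neq Q_{i+1}\}$, $f_{i+1} := f_i - Q_{i+1}$, and $\mathcal{Q}_{i+1} := \{Q_1,\ldots,Q_i, Q_{i+1}\} \cup \mathcal{Q}_{i+1}'$ — but here I need to also shift the earlier $Q_j$'s by $Q_{i+1}$ so that condition (1) $\mathcal{Q}_{i+1} \subseteq \mathsf{List}_\varepsilon^{f_{i+1}}$ remains coherent; I would track the cumulative shift. Concretely, $Q_{i+1}$ as recorded should be the shifted polynomial (relative to the original $f$), and condition (2) becomes a statement about the chain of leading monomials of the cumulative objects. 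The size bound $|\mathcal{Q}_{i+1}'| \geq |\mathcal{R}| - 1 \geq p^{\ell-(i+1)} - 1$ — I may lose a $-1$ at each step, so I should start with a slightly larger reservoir or absorb this into the constant by noting $p^{\ell - i} - i \geq p^{\ell - i - 1}$ for $\ell$ not too small, or simply restate the claim with $p^{\ell-i}/2$ or similar. The main obstacle here is the bookkeeping: correctly propagating the shifts so that at the end we get $\ell$ polynomials $Q_1,\ldots,Q_\ell$ with strictly decreasing leading monomials that are all $(1/2^d - \varepsilon)$-close to a common function $f_\ell$, while keeping the $-1$ losses from destroying the exponential lower bound on the reservoir. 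Once \Cref{clm:pigeonhole} is in hand, applying it with $i = \ell$ yields the desired sub-list with distinct (indeed strictly ordered) leading monomials, feeding into the Sunflower-lemma step of the outline.
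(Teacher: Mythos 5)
Your overall strategy (induction combined with a mod-$p$ pigeonhole on coefficients, plus a shift of $f_i$ to force strictly decreasing leading monomials) is the same as the paper's, but the inductive step as written does not close, for two concrete reasons. First, you partition $\mathcal{Q}_i'$ by the coefficient of a single monomial --- necessarily the maximal leading monomial $m$ over $\mathcal{Q}_i'$, since $Q_{i+1}$ is not yet defined at that point --- and then assert that for every $Q$ in the plurality class $\mathcal{R}$ the monomial $\mathrm{LM}(Q_{i+1})$ cancels in $Q - Q_{i+1}$. This holds only when the plurality value $c$ is non-zero, so that every member of $\mathcal{R}$ has leading monomial exactly $m$. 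When $c = 0$, the members of $\mathcal{R}$ have leading monomials strictly below $m$, so $\mathrm{LM}(Q_{i+1}) = m'$ for some $m' \precneq m$, and your partition says nothing about the coefficient of $m'$ in the other members of $\mathcal{R}$; the differences $Q - Q_{i+1}$ can then still have leading monomial $m'$, violating condition (2). Second, even when $c \neq 0$, shifting by $Q_{i+1}$ itself sends the distinguished polynomial to $Q_{i+1}-Q_{i+1}=0$, which has no leading monomial, while the unshifted $Q_{i+1}$ need not lie in $\mathsf{List}_\varepsilon^{f_i - Q_{i+1}}$; so there is no valid object to record as the $(i+1)$-st element of the chain, and "tracking the cumulative shift" does not repair this.

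The paper's fix is to shift not by a member of the list but by the coefficient-wise \emph{plurality polynomial} $\mathrm{Pl}'_i$ of $\mathcal{Q}_i'$ (for each monomial, take the most common coefficient among the polynomials of $\mathcal{Q}_i'$). Setting $f_{i+1} = f_i - \mathrm{Pl}'_i$ and $\mathcal{Q}_i'' = \{Q - \mathrm{Pl}'_i : Q \in \mathcal{Q}_i'\}$ makes the plurality coefficient of \emph{every} monomial in $\mathcal{Q}_i''$ equal to $0$. Taking $m$ to be the largest leading monomial occurring in $\mathcal{Q}_i''$, one lets $Q_{i+1}$ be any polynomial of $\mathcal{Q}_i''$ attaining it and $\mathcal{Q}_{i+1}'$ be the set of polynomials of $\mathcal{Q}_i''$ whose coefficient of $m$ is zero: these are exactly the ones with leading monomial strictly below $m$, they number at least $|\mathcal{Q}_i'|/p$ because $0$ is the plurality value, and $Q_{i+1}$ is automatically excluded from them, so the "$-1$" loss you worry about never arises. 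One also checks that the earlier $Q_j - \mathrm{Pl}'_i$ keep their leading monomials, since $\mathrm{Pl}'_i$ has zero coefficient on every monomial at least as large as $\mathrm{LM}(Q_i)$. Replacing your single-monomial pigeonhole with this device (or an equivalent one) is what is needed for the induction to go through.
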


\begin{proof}
    The proof is via induction on $i.$ For the base case $i=0$, we can simply take $f_0 = f$ and $\mathcal{Q}_0 = \mathsf{List}_{\varepsilon}^{f}.$

    Assuming the statement for $i < \ell,$ we prove it for $i+1$ as follows. We define the `plurality polynomial' $\mathrm{Pl'}_i$ as follows: for every multilinear monomial $m$, the coefficient of $m$ in $\mathrm{Pl'}_i$ is the plurality of the coefficient of $m$ among all the polynomials in $\mathcal{Q}_i'$, where we break ties arbitrarily.

    Define the function $f_{i+1} := f_i - \mathrm{Pl'}_i$ and define $\mathcal{Q}_i'' := \{Q- \mathrm{Pl'}_i\ |\  Q\in \mathcal{Q}_i'\}.$ Using $f_{i+1}$ and $\mathcal{Q}_{i}''$, we define $\mathcal{Q}_{i+1}$ next:
    \begin{align*}
        \mathcal{Q}_{i+1} := \{Q_1 - \mathrm{Pl'}_i,\ldots, Q_i - \mathrm{Pl'}_i\}\cup \{Q_{i+1}\} \cup \mathcal{Q}_{i+1}',
    \end{align*}
    where 
    \begin{itemize}
        \item $Q_{i+1}$ is chosen to be any polynomial in $\mathcal{Q}_i''$ whose leading monomial $m$ is as large as possible among the leading monomials of polynomials in $\mathcal{Q}_i'',$
        \item $\mathcal{Q}_{i+1}'$ is the set of polynomials in $\mathcal{Q}_i''$ that have a leading monomial \emph{strictly smaller} than $m$, the leading monomial of $Q_{i+1}$.
    \end{itemize}

    We now show that $f_{i+1}$ and $\mathcal{Q}_{i+1}$ satisfy the required properties. 

    \begin{enumerate}
        \item It is clear that $\mathcal{Q}_{i+1}\subseteq \mathsf{List}_{\varepsilon}^{f_i}$ because each polynomial $\tilde{Q}\in \mathcal{Q}_{i+1}$ can be written as $Q-\mathrm{Pl'}_i$ for some $Q\in \mathcal{Q}_i.$ Hence $\delta(f_{i+1}, \tilde{Q}) = \delta(f_i, Q) \leq \frac{1}{2^d}-\varepsilon$.

        \item We have defined $\mathcal{Q}_{i+1} = \{Q_1 - \mathrm{Pl'}_i,\ldots, Q_i - \mathrm{Pl'}_i, Q_{i+1}\}\cup \mathcal{Q}_{i+1}'$. If $m_j$ denotes the leading monomial of $Q_j$ for $j\leq i,$ we observe that the coefficient of $m_j$ in $\mathrm{Pl}'_i$ is $0$ since the plurality used in defining $\mathrm{Pl'}_i$ is only taken over the polynomials in $\mathcal{Q}_i'$ all of which have a leading monomial smaller than $m_j$ by the induction hypothesis. In particular, the leading monomial of $Q_j - \mathrm{Pl'}_i$ is also $m_j.$ We thus have
        \[
        \mathrm{LM}(Q_1 - \mathrm{Pl'}_i)\succneq\cdots \succneq \mathrm{LM}(Q_i-\mathrm{Pl'}_i) \succneq \mathrm{LM}(Q_{i+1})
        \]
        where for the last inequality we  again used the inductive hypothesis. Finally, given any $Q\in \mathcal{Q}_{i+1}'$, we have $\mathrm{LM}(Q) \precneq m = \mathrm{LM}(Q_{i+1})$ by definition of $\mathcal{Q}_{i+1}'.$

        \item Finally, we note that for any monomial $m$, the plurality of the coefficients of $m$ among the polynomials in $\mathcal{Q}_i''$ is zero, since we defined $\mathcal{Q}_i''$ by subtracting from each $Q\in \mathcal{Q}_i'$ the `plurality polynomial' $\mathrm{Pl'}_i.$ In particular, the number of polynomials in $\mathcal{Q}_{i}''$ that have a leading monomial strictly smaller than $m$ (or equivalently, the number of polynomials in $\mathcal{Q}_i''$ where the coefficient of $m$ is $0$) is at least $|\mathcal{Q}_i'|/p \geq p^{\ell-i-1}.$       
    \end{enumerate}

    This concludes the induction and hence the proof of the claim.
\end{proof}

Applying \Cref{clm:pigeonhole} with $i = \ell$, we see that there is a function $f_{\ell}$ such that $|\mathsf{List}_{\varepsilon}^{f_{\ell}}|\geq \ell$ and $\mathsf{List}_{\varepsilon}^{f_{\ell}}$ contains polynomials $Q_1, \ldots, Q_\ell$ with distinct leading monomials. The rest of the argument will bound $\ell.$

\paragraph{\underline{Sunflower lemma}.} We now come to the second step of the argument, which is an application of the Sunflower lemma \Cref{lem:sunflower} to the set of leading monomials of $Q_1,\ldots, Q_\ell$ (seen as subsets of the universe $[n]$ of size at most $d$). By \Cref{lem:sunflower}, there is a sub-collection of $t = \Omega(\ell^{1/d}/d)$ many polynomials such that their leading monomials form a sunflower. Without loss of generality, we assume that these polynomials are $Q_1,\ldots, Q_t.$ Similarly, we assume that $\mathrm{LM}(Q_1)\cap \cdots \cap \mathrm{LM}(Q_t) = \{j_1,\ldots, j_k\}$ for some $k\in \{0,\ldots, d-1\}.$ Let $J$ denote this core of the sunflower.

For each $i\in [t]$, we can express the polynomial $Q_{i}$ as a polynomial in the variables in core $J$, i.e.
\[
Q_i(x_1,\ldots, x_n) = \sum_{A\subseteq J}Q_{i,A}(x_j: j\in [n]\setminus J)\cdot  x^A
\]
where $x^A$ denotes the product of the variables in $A.$ Note that $\deg(Q_{i,J}) \leq d-k$ for each $i\in [t]$. We also note that $\mathrm{LM}(Q_{i,J}) = \mathrm{LM}(Q_{i})\setminus J$ by the definition of our monomial order and hence the leading monomials of $\mathrm{LM}(Q_{1,J}),\ldots, \mathrm{LM}(Q_{t,J})$ are \emph{pairwise disjoint.}

\paragraph{\underline{Using the tail bound}.} We are now ready to conclude the bound on $\ell$ (and hence on $L = |\mathsf{List}^f_\varepsilon|$) using \Cref{cor:tailbound}. More precisely we prove the following claim.

\begin{claim}
    \label{clm:combboundfinaladhoc}
    Assume $i\in [t]$ and ${\bf a}\in \{0,1\}^n$ are chosen uniformly at random from their respective domains. Then
    \[
    \Pr_{i,{\bf a}}{\left[f_\ell({\bf a}) \neq Q_i({\bf a})\right]}\; \geq \; \frac{1}{2^d}-\frac{\varepsilon}{2} - t\cdot \exp(-\Omega(\varepsilon^2 \cdot t)).
    \]
\end{claim}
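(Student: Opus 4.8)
The plan is to split the probability $\Pr_{i,{\bf a}}[f_\ell({\bf a})\neq Q_i({\bf a})]$ according to whether a ``large plurality'' event occurs. First I would note that for a fixed ${\bf a}$, the quantity $\Pr_{i}[f_\ell({\bf a})\neq Q_i({\bf a})]$ is at least $1$ minus the fraction of $i\in[t]$ for which $Q_i({\bf a})$ takes the most popular value among $Q_1({\bf a}),\ldots,Q_t({\bf a})$; call that most popular value $c({\bf a})$ and its frequency fraction $\rho({\bf a})$. Hence $\Pr_{i,{\bf a}}[f_\ell({\bf a})\neq Q_i({\bf a})] \ge \mathbb{E}_{\bf a}[1-\rho({\bf a})]$. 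Now I would bring in \Cref{cor:tailbound}: since $\mathrm{LM}(Q_{1,J}),\ldots,\mathrm{LM}(Q_{t,J})$ are pairwise disjoint, I first fix the variables in the core $J$ to their (random) values, obtaining polynomials $Q_{i}|_J$ with pairwise disjoint leading monomials, and apply the corollary conditionally; then average over the assignment to the core. This gives that with probability at least $1 - t\exp(-\Omega(\varepsilon^2 t))$ over ${\bf a}$, we have $\rho({\bf a}) < 1 - 1/2^d + \varepsilon/2$. (Here I would set the corollary's parameter $\eta = \varepsilon/2$.)

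Next I would combine these two observations. On the ``good'' event (probability $\ge 1 - t\exp(-\Omega(\varepsilon^2 t))$) we get $1-\rho({\bf a}) > 1/2^d - \varepsilon/2$, and on the complementary ``bad'' event we simply lower-bound $1-\rho({\bf a})\ge 0$. Therefore
\[
\mathbb{E}_{\bf a}[1-\rho({\bf a})] \;\ge\; \left(1 - t\exp(-\Omega(\varepsilon^2 t))\right)\cdot\left(\frac{1}{2^d}-\frac{\varepsilon}{2}\right) \;\ge\; \frac{1}{2^d}-\frac{\varepsilon}{2} - t\exp(-\Omega(\varepsilon^2 t)),
\]
using that $1/2^d - \varepsilon/2 \le 1$. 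This is exactly the claimed bound. The only subtlety to be careful about is the application of \Cref{cor:tailbound} after fixing the core: one must check that fixing $J$ does not destroy the disjointness of the leading monomials of the $Q_{i,J}$'s (it does not, since the leading monomial of $Q_{i,J}$ is $\mathrm{LM}(Q_i)\setminus J$ by the graded lexicographic order, and these petals are disjoint by the sunflower structure) and that the conditional statement, which holds for every fixed assignment to the core, survives averaging — which it does, since a bound holding pointwise over the conditioning variable holds in expectation.

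The main obstacle I expect is purely bookkeeping: correctly relating ``$f_\ell$ disagrees with $Q_i$'' to ``$Q_i$ is not in the plurality bucket.'' The point is that if $f_\ell({\bf a})=Q_i({\bf a})$ for some $i$, then all such $i$ share the common value $f_\ell({\bf a})$, so that value's bucket has size $\ge |\{i: f_\ell({\bf a})=Q_i({\bf a})\}|$; hence $1-\rho({\bf a}) \le \Pr_i[f_\ell({\bf a})\neq Q_i({\bf a})]$, which is the inequality we used. No further hard analysis is needed; everything else follows from \Cref{cor:tailbound} and an averaging argument. (A later step, not part of this claim, will combine this lower bound with the upper bound $\Pr_{i,{\bf a}}[f_\ell({\bf a})\ne Q_i({\bf a})] \le 1/2^d - \varepsilon$ coming from $\delta(f_\ell,Q_i)\le 1/2^d-\varepsilon$ to force $t\exp(-\Omega(\varepsilon^2 t)) \ge \varepsilon/2$, and hence $t = O_d((1/\varepsilon)\log(1/\varepsilon))$, which then bounds $\ell$ and $L$.)
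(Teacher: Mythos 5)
There is a genuine gap in your argument, located in the sentence ``This gives that with probability at least $1 - t\exp(-\Omega(\varepsilon^2 t))$ over ${\bf a}$, we have $\rho({\bf a}) < 1 - 1/2^d + \varepsilon/2$.'' That statement is false whenever the sunflower core $J$ is non-empty. Take $Q_i = x^J\cdot y_i$ for distinct non-core variables $y_i$: whenever the core variables are not all set to $1$ (probability $1-2^{-|J|}\geq 1/2$), every $Q_i({\bf a})$ equals $0$, so the plurality fraction of the \emph{values} $Q_1({\bf a}),\ldots,Q_t({\bf a})$ is $\rho({\bf a})=1$ with constant probability. The mechanism behind the failure is your order of conditioning: you propose to fix the core variables first and then apply \Cref{cor:tailbound} to the restrictions. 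But restricting the core variables does not yield polynomials whose leading monomials are the (disjoint) petals — the coefficient of the petal monomial in $Q_i|_{{\bf a''}}$ is a combination $\sum_A (\cdot)\,({\bf a''})^A$ over core subsets $A$ and can vanish (in the example above the restriction is identically zero), so the hypothesis of \Cref{cor:tailbound} is not met and no anti-concentration at the level $1/2^d$ holds pointwise in ${\bf a}$. You have conflated $Q_i|_J$ (a restriction) with $Q_{i,J}$ (the coefficient of $x^J$ in the decomposition of $Q_i$); only the latter objects have pairwise disjoint leading monomials, and they are polynomials in the \emph{non-core} variables.

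The paper's proof conditions in the opposite order, and this is essential. It first samples the non-core coordinates ${\bf a'}$ and applies \Cref{cor:tailbound} to $Q_{1,J},\ldots,Q_{t,J}$ (degree at most $d-k$ where $k=|J|$), concluding that except with probability $t\exp(-\Omega(\varepsilon^2 t))$ no value is attained by more than a $(1-\frac{1}{2^{d-k}}+\frac{\varepsilon}{2})$ fraction of the $Q_{i,J}({\bf a'})$. On that good event, no single \emph{function} of the $k$ core variables occurs more than that often among the restrictions $Q_{i,{\bf a'}}$; since distinct functions on $k$ Boolean variables differ at at least one of the $2^k$ core points, a random $i$ and random core assignment ${\bf a''}$ give disagreement with $f_{\ell,{\bf a'}}$ with probability at least $\frac{1}{2^k}\left(\frac{1}{2^{d-k}}-\frac{\varepsilon}{2}\right)\geq \frac{1}{2^d}-\frac{\varepsilon}{2}$. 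Note that the factor $2^{-k}$, which your direct ``plurality of values at the full point ${\bf a}$'' approach tries to bypass, is exactly what converts the achievable anti-concentration level $1/2^{d-k}$ into the claimed $1/2^d$; the stronger pointwise statement you assert is simply not available. Your reduction from disagreement with $f_\ell$ to a plurality bound, and your final averaging step, are both fine — the missing ingredient is this two-stage conditioning.
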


The statement of the claim immediately implies that $t = \bigO(\log(1/\varepsilon)/\varepsilon^2)$ since we have $\Pr_{{\bf a}}{\left[ f_\ell({\bf a}) \neq Q_i({\bf a})\right]} = \delta(f_\ell, Q_i) \leq \frac{1}{2^d} - \varepsilon.$ It therefore suffices to prove the claim, which we do now.

\begin{proof}[Proof of \Cref{clm:combboundfinaladhoc}]
    We sample ${\bf a}$ in two steps: we first sample its projection to the variables indexed by $[n]\setminus J$, which we denote by ${\bf a'}$, followed by its projection to the variables indexed by $J$, which we denote by ${\bf a''}.$

    Given a fixing ${\bf a'}$ of the variables indexed by $[n]\setminus J,$ we denote by $f_{\ell, {\bf a'}}$ and $Q_{i,{\bf a'}}$ the corresponding restrictions of $f_\ell$ and $Q_i$ ($i\in [t]$) respectively. Note that each of these is a function of the $k$ variables indexed by $J$ and can hence be expressed uniquely as a multilinear polynomial in these variables (\Cref{thm:basic}). Moreover, the coefficient of the monomial $x^J$ in $Q_{i,{\bf a'}}$ is exactly $Q_{i,J}({\bf a'}).$
    
    We denote by $\mathcal{B} = \mathcal{B}({\bf a'})$ the event that 
    \[
    \exists c\in \mathbb{Z}_p \text{ such that } \left|\{i\in [t]\ |\ Q_{i,J}({\bf a'}) = c\}\right| \geq \left(1 - \frac{1}{2^{d-k}} + \frac{\varepsilon}{2}\right)\cdot t.
    \]
    Since the polynomials $Q_{1,J},\ldots, Q_{t,J}$ have degree at most $d-k$ each and their leading monomials are pairwise disjoint, we can apply \Cref{cor:tailbound} to bound the probability of $\mathcal{B}$ by $t\cdot \exp(-\Omega(\varepsilon^2 t)).$

    Fix any ${\bf a'}$ such that $\mathcal{B}$ does not occur. In this case, the \emph{multiset} of polynomials $\{Q_{i,{\bf a'}}\ |\ i\in [t]\}$ has the property that no polynomial appears more than $(1-\frac{1}{2^{d-k}} + \frac{\varepsilon}{2})\cdot t$ times in it. In particular, when we choose $i\in [t]$ uniformly at random, we see that the probability that $Q_{i,{\bf a'}} \neq f_{\ell, {\bf a'}}$ is at least $\frac{1}{2^{d-k}}-\frac{\varepsilon}{2}.$ Since any pair of distinct functions on $k$ variables disagree at at least a single point, we have
    \[
    \Pr_{i, {\bf a''}}{[f_{\ell, {\bf a'}}({\bf a''})\neq Q_{i,{\bf a'}}({\bf a''})]} \geq \frac{1}{2^k}\cdot \left(\frac{1}{2^{d-k}} - \frac{\varepsilon}{2}\right) \geq \frac{1}{2^d}-\frac{\varepsilon}{2}
    \]
    
    Overall, we have 
    \begin{gather*}
    \Pr_{i,{\bf a}}{\left[f_\ell({\bf a})\neq Q_i({\bf a})\right]} =
    \Pr_{{\bf a'},i, {\bf a''}}{[f_{\ell, {\bf a'}}({\bf a''})\neq Q_{i,{\bf a'}}({\bf a''})]}\\
     \geq (1-\Pr_{{\bf a'}}{[\mathcal{B}]})\cdot \left(\frac{1}{2^d}-\frac{\varepsilon}{2}\right)
    \geq \frac{1}{2^d}-\frac{\varepsilon}{2} - \Pr_{}{[\mathcal{B}]}\\
    \geq \frac{1}{2^d}-\frac{\varepsilon}{2} - t\cdot \exp(-\Omega(\varepsilon^2 \cdot t))
    \end{gather*}
    proving the claim.
\end{proof}

As noted above, \Cref{clm:combboundfinaladhoc} implies that $t = \bigO_{d}(\log(1/\varepsilon)/\varepsilon^2),$ implying that $\ell = \bigO_d(t)^d = \bigO_d(1/\varepsilon)^d.$ Since $\ell = \Omega(\log_p L),$ we get $L = \exp(\bigO_{d,p}(1/\varepsilon)^d)$, proving \Cref{lem:combboundconstprime}.

\subsubsection{Combinatorial bound for the general case}\label{sec:combboundsmallgeneral}
In this subsection, we prove \Cref{thm:smallchar} which we recall below.

\combsmallchar*

To prove \Cref{thm:smallchar} for any $(< p_0)$-groups $G$, we follow the argument from~\cite{ABPSS24-ECCC}. The main bottleneck (as mentioned also above) is that the work of~\cite{DinurGKS-ECCC}, which is useful in `lifting' the combinatorial bound from the prime case to the case of general groups for degree-$1$ polynomials, no longer seems to be applicable here. So the main innovation is to give a proof of such a lifting strategy that also works for higher-degree polynomials (albeit with worse parameters). 

More precisely we prove the following lemma, that in conjunction with \Cref{lem:combboundconstprime} immediately implies \Cref{thm:smallchar}.

\begin{lemma}
    \label{lem:dgksvariant}
    Fix any $\varepsilon > 0$ and any positive integer $d$. Let $\mathcal{S}$ be a set of primes such that for each $p\in \mathcal{S}$, we know that the space $\mathcal{P}_d(\{0,1\}^n,\mathbb{Z}_p)$ is $(\frac{1}{2^d}-\varepsilon, L)$-list decodable for some positive integer $L$. Then, for every finite Abelian group $G$ that is a product of $p$-groups for $p\in \mathcal{S},$ it holds that $\mathcal{P}_d(\{0,1\}^n,G)$ is $(\frac{1}{2^d}-\varepsilon, L')$-list decodable, where $L' = L^{\bigO_d(\log 1/\varepsilon)}.$
\end{lemma}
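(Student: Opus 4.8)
The plan is to "lift" the prime-case list bound to the case of a general finite Abelian $(< p_0)$-group $G$ that is a product of $p$-groups for $p \in \mathcal{S}$, by an induction on the structure of $G$ together with a projection/averaging argument, following the template of~\cite{ABPSS24-ECCC, DinurGKS-ECCC} but replacing their combinatorial characterization (which does not apply for $d>1$) by a more hands-on argument that costs us an extra $\log(1/\varepsilon)$ factor in the exponent. First, I would reduce to the case where $G$ is a single cyclic $p$-group $\Z_{p^a}$: since $G \cong \prod_j G_j$ with each $G_j$ a $p_j$-group, a polynomial $P \in \mathcal{P}_d(\{0,1\}^n, G)$ decomposes coordinate-wise as $P = (P_1, \ldots, P_m)$ with $P_j \in \mathcal{P}_d(\{0,1\}^n, G_j)$, and $P$ is $\delta$-close to $f = (f_1,\ldots,f_m)$ only if each $P_j$ is $\delta$-close to $f_j$; so a bound of $L_j$ for each factor gives a bound of $\prod_j L_j$ for $G$. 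Further decomposing each $p$-group into cyclic factors the same way, it suffices to prove: if $\mathcal{P}_d(\{0,1\}^n, \Z_p)$ is $(\frac{1}{2^d}-\varepsilon, L)$-list decodable then $\mathcal{P}_d(\{0,1\}^n, \Z_{p^a})$ is $(\frac{1}{2^d}-\varepsilon, L^{\bigO_d(a \log(1/\varepsilon))})$-list decodable — and then observe $a \le \log_{2} |G| $ is absorbed, or better, re-examine to get $a$-independence. (Being a little careful: we actually want the final exponent to be $\bigO_d(\log 1/\varepsilon)$ independent of $a$, so the induction on $a$ must not multiply the exponent by $a$; I expect instead to induct on $a$ with the list bound going from $L$ to $L \cdot \mathrm{poly}_d(1/\varepsilon)^{\bigO(1)}$ at each step is too lossy, so the right move is the filtration argument below which handles all of $\Z_{p^a}$ in one shot with a single $\bigO_d(\log(1/\varepsilon))$-fold blowup.)

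The core step is the lift from $\Z_p$ to $\Z_{p^a}$. Consider the chain of subgroups $0 = p^a \Z_{p^a} \subsetneq p^{a-1}\Z_{p^a} \subsetneq \cdots \subsetneq p\Z_{p^a} \subsetneq \Z_{p^a}$; each quotient $p^{i}\Z_{p^a} / p^{i+1}\Z_{p^a}$ is isomorphic to $\Z_p$. Given $f: \{0,1\}^n \to \Z_{p^a}$ and the list $\mathsf{List}^f_\varepsilon = \{P_1, \ldots, P_L\}$, I would first reduce modulo $p$: the reductions $\bar P_i := P_i \bmod p \in \mathcal{P}_d(\{0,1\}^n, \Z_p)$ are each $(\frac{1}{2^d}-\varepsilon)$-close to $\bar f := f \bmod p$ (since $P_i = f$ at a point implies $\bar P_i = \bar f$ there), so by the prime-case hypothesis there are at most $L$ distinct values of $\bar P_i$. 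Bucket the list by the value of $\bar P_i$; it suffices to bound the size of each bucket. Within a bucket all polynomials agree mod $p$, so for any two $P_i, P_j$ in the bucket, $P_i - P_j = p \cdot R_{ij}$ for some degree-$d$ polynomial $R_{ij}$ over $\Z_{p^{a-1}}$ (dividing coefficients by $p$). Fixing one element $P_0$ of the bucket, the translated set $\{P_i - P_0\}$ is a list of degree-$d$ polynomials of the form $p \cdot R_i$, all lying "within" the subgroup $p\Z_{p^a} \cong \Z_{p^{a-1}}$, and $\delta(P_i - P_0, f - P_0) \le \frac{1}{2^d} - \varepsilon$. But $f - P_0$ need not take values in $p\Z_{p^a}$; this is exactly the obstruction. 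To handle it, I would use an averaging/projection argument in the spirit of the pruning steps in Section~\ref{sec:prune}: restrict attention to the set $T = \{\mathbf{x} : f(\mathbf{x}) - P_0(\mathbf{x}) \in p\Z_{p^a}\}$, which has density at least $\frac{1}{2^d} - \varepsilon$ (it contains the agreement set of $f$ and $P_0$)... actually this density is too small to iterate, so instead the right approach is to observe that for $P_i$ in the bucket, $\delta(P_i, f) \le \frac{1}{2^d}-\varepsilon$ means $P_i$ agrees with $f$ on a set $A_i$ of density $\ge 1 - \frac{1}{2^d} + \varepsilon$, and all these $A_i$ lie inside $T$; so $f|_T - P_0|_T$ takes values in $p\Z_{p^a}$ and each $P_i - P_0$ restricted to $T$ is $(1 - \frac{|A_i|}{|T|})$-far... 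Rather than force $T$, the clean route is: pick a uniformly random coset representative argument — consider $g := $ the function sending $\mathbf{x}$ to the image of $f(\mathbf{x}) - P_0(\mathbf{x})$ under a random group homomorphism-like "rounding" — this is where I would invoke the $\log(1/\varepsilon)$ factor, by iterating the mod-$p$ reduction $\bigO(\log(1/\varepsilon))$ times. Each iteration peels off one prime "layer", multiplying the count by at most $L$, and after $\bigO_a(1)$ layers (or $\bigO(\log(1/\varepsilon))$ layers in the variant that avoids $a$-dependence via a dyadic/potential argument on the distance budget) the list collapses to a single element per final bucket, giving $L' \le L^{\bigO_d(\log(1/\varepsilon))}$.

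The step I expect to be the main obstacle is precisely controlling the "layer peeling" so that (a) at each layer the residual list is still a genuine list-decoding instance over $\Z_p$ (i.e., the residual target function and the residual polynomials are close in Hamming distance with the *same* parameter $\frac{1}{2^d}-\varepsilon$, not a degraded one), and (b) the number of layers needed is $\bigO_d(\log(1/\varepsilon))$ rather than $a$ or $\log|G|$. For (a), the subtlety is that after subtracting $P_0$ and passing to the subgroup $p\Z_{p^a}$, the new "function" $f - P_0$ restricted to where it lands in the subgroup is only defined on the agreement set, so one must argue that the list of residual polynomials is determined by its behavior on a large enough set — here I would use that two degree-$d$ polynomials over a group differ on a $\ge 2^{-d}$ fraction of inputs (Theorem~\ref{thm:basic}), so if the agreement set has density $> 1 - 2^{-d}$, distinct residual polynomials remain distinguishable, which is why the bucketing by mod-$p$ value combined with the distance budget $\frac{1}{2^d}-\varepsilon < 2^{-d}$ keeps things from collapsing prematurely. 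For (b), the idea is that the parameter $\varepsilon$ is what limits the depth: once we have peeled enough layers that the residual distance budget would force a list of size $1$ (which happens after roughly $\log(1/\varepsilon)$ halvings of the relevant quantity, by an argument analogous to a Johnson-type or potential argument), we stop. Making this dyadic bookkeeping precise — and in particular showing the multiplicative blowup per layer is just $L$ and the depth is $\bigO_d(\log(1/\varepsilon))$ — is the technical heart, and it is exactly the place where the simpler self-contained argument replaces the $\cite{DinurGKS-ECCC}$ machinery at the cost of the extra $\log(1/\varepsilon)$ in the exponent.
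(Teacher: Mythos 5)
Your high-level plan --- filter the group by prime layers and invoke the prime-case bound once per layer --- is the same skeleton as the paper's proof, but both of the places you yourself flag as ``the main obstacle'' and ``the technical heart'' are genuine gaps that your sketch does not close, and your opening reduction introduces a third. First, reducing to a single cyclic factor is lossy in a way that kills the claimed bound: writing $G \cong \prod_{j=1}^m G_j$ and bounding $|\mathsf{List}^f_\varepsilon|$ by $\prod_j |\mathsf{List}^{f_j}_\varepsilon|$ gives an exponent proportional to $m$, and $m$ is unbounded (e.g.\ $G=\Z_2^m$), whereas the lemma demands an exponent independent of $G$; the paper's argument deliberately treats the whole group in one tree to avoid exactly this. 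Second, for your obstacle (a): after bucketing by the residue mod $p$ and subtracting a list element $P_0$, the residual $f-P_0$ indeed need not land in the subgroup, and none of your attempted fixes (restricting to $T$, a ``random rounding'') is carried out or works as stated. The paper's resolution is to subtract not a full list element but only the \emph{coset-representative part} shared by the whole bucket: writing each $Q_j=\hat{Q}+Q_j'\cdot h_i$ with respect to a fixed system of coset representatives of the prime-order subgroup $H_i=\langle h_i\rangle$, the $\hat Q_j$ coincide across the bucket, so $F_i:=f_i-\hat Q$ is globally defined and agreement $f_i(\mathbf{x})=Q_j(\mathbf{x})$ forces $F_i'(\mathbf{x})=Q_j'(\mathbf{x})$ in $\Z_{p_i}$, yielding at most $L$ polynomials per bucket with no degradation of the distance parameter. (The paper also quotients by a \emph{minimal} subgroup at each step rather than by $p\Z_{p^a}$ as you do; your direction is workable but is not where the difficulty lies.)

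For your obstacle (b), the depth of the filtration genuinely is about $\log|G|$ and you cannot ``stop peeling early'': the recursion must reach the trivial group. What saves the bound is not a short filtration but a potential argument on the resulting tree of lists. The paper assigns to each node $v$ (a polynomial $P\in\mathsf{List}^{f_i}_\varepsilon$) the quantity $\rho(v)=\mathrm{agr}(v)-(1-2^{-d})\in[\varepsilon,1]$ and proves (\Cref{clm:treeprops}) that $\rho$ is monotone along edges and superadditive over any two distinct children, because two distinct lifts agree on at most a $(1-2^{-d})$ fraction of inputs by \Cref{thm:basic}. Combined with the fan-out bound $L\le 2^{\ell}$ for $\ell=\lceil\log L\rceil$, this gives $\rho(u)^{\ell}\ge\sum_j\rho(v_j)^{\ell}$ at every node, hence $1\ge\rho(\mathrm{root})^{\ell}\ge\varepsilon^{\ell}\cdot(\#\text{leaves})$ and $|\mathsf{List}^f_\varepsilon|\le(1/\varepsilon)^{\ell}=L^{\bigO(\log(1/\varepsilon))}$, independent of the depth. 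Your sketch gestures at ``a Johnson-type or potential argument'' here but never supplies it; this inequality is precisely the missing content, and without it the proposal does not establish the lemma.
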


The proof of the lemma is inspired by the proof of the analogous combinatorial bound in~\cite{DinurGKS-ECCC}, but needs less structure on the codewords in the underlying code.

\begin{proof}
    Fix any function $f:\{0,1\}^n\rightarrow G$. We want to bound the size of $\mathsf{List}^f_\varepsilon.$

    We start by defining a sequence of quotient groups of $G$ as follows. We start with $G_0 = G$. Having defined $G_i$, we define $G_{i+1}$ by fixing any element $h_i\in G_i$ of prime order $p_i\in \mathcal{S}$ (such an element always exists in an Abelian group by a trivial argument, but one can also use Cauchy's theorem (see e.g. \cite{Conrad-CauchyThm}) which applies to all groups) and defining $G_{i+1} = G_i/H_i$ where $H_i$ is the group generated by $h_i$. We stop when the group $G_i$ is the trivial group containing just the identity element.

    Assume that the sequence of groups thus constructed is $G_0=G,\ldots, G_h = \{0\}.$ 
    We define a sequence of functions $f_i:\{0,1\}^n\rightarrow G_i$ ($i\in \{0,\ldots, h\}$) inductively by defining $f_0 = f$ and $f_{i+1}$ by
    \[
    f_{i+1}({\bf a}) = f_i({\bf a}) \pmod{G_i}.
    \]
    for each ${\bf a}\in \{0,1\}^n.$ We also define $\mathsf{List}^{f_i}_\varepsilon$ to be a subset of $\mathcal{P}_d(\{0,1\}^n, G_i)$ in the natural way.

    Finally, we define a tree $T$ of height $h$ as follows. For each $i\in \{0,\ldots, h\},$ we add one vertex at level $i$ in $T$ for each $P_i\in \mathsf{List}^{f_i}_\varepsilon$; note that, in particular, there is exactly one vertex at level $h$. Further, the children of the polynomial $P_{i+1}$ at level $i+1$ in $T$ are (the vertices corresponding to) those polynomials $P_{i}$ at level $i$ such that 
    \[
    P_{i}({\bf x}) = P_{i+1}({\bf x}) \pmod{H_i}.
    \]
    It will be useful to note that by \Cref{thm:basic}, the above equality holds both in terms of the evaluations of the two polynomials and also in terms of their coefficients.

    It is easy to see that each $P_{i}$ at level $i$ is a child of a unique $P_{i+1}$ at level $i+1$, namely the polynomial $P_{i+1}$ defined by the equality above. We thus have indeed defined a tree $T$. The size of $\mathsf{List}^f_\varepsilon$ is upper bounded by the number of leaves (or paths) of $T$, which we bound using the following claim.

    \paragraph{Notation.} Recall that each vertex $v$ of the tree is associated to a polynomial $P$ over group $G_i$ where $i$ denotes the level of $v$. We define by $\mathrm{agr}(v)$ the fraction of points of agreement between $P$ and $f_i$. Further, let $\rho(v) = \mathrm{agr}(v) - \left(1-\frac{1}{2^d}\right).$

    \begin{claim}
        \label{clm:treeprops}
        The tree $T$ defined above has the following properties.
        \begin{enumerate}
            \item Each non-leaf vertex in $T$ has at most $L$ children. 
            \item If $v$ is a child of $u$, then $\rho(u) \geq \rho(v).$ 
            \item If $u$ has two distinct children $v$ and $w$, then $\rho(u) \geq \rho(v) + \rho(w).$
        \end{enumerate}
    \end{claim}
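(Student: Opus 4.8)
\textbf{Proof plan for \Cref{clm:treeprops}.}

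The plan is to prove the three properties in turn, using only elementary facts about the groups $G_i$, the functions $f_i$, and the list-decodability hypothesis over $\mathbb{Z}_{p_i}$. First I would establish property (1). Fix a non-leaf vertex $u$ at level $i+1$, associated to a polynomial $P_{i+1}$ over $G_{i+1}$. Its children are polynomials $P_i$ over $G_i$ satisfying $P_i \equiv P_{i+1} \pmod{H_i}$ (at the level of coefficients, by \Cref{thm:basic}). If $P_i$ and $P_i'$ are two such children, then $P_i - P_i'$ has all coefficients lying in $H_i \cong \mathbb{Z}_{p_i}$. Thus the set of children, after subtracting any one fixed child, injects into the set of polynomials in $\mathcal{P}_d(\{0,1\}^n, H_i)$ that are $\left(\frac{1}{2^d}-\varepsilon\right)$-close to the corresponding shift of $f_i$ (one needs to check that each child $P_i$ does indeed satisfy $\delta(f_i, P_i) \le \frac{1}{2^d}-\varepsilon$, i.e.\ is a genuine element of $\mathsf{List}^{f_i}_\varepsilon$; the children are defined as such). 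Since $H_i \cong \mathbb{Z}_{p_i}$ and $p_i \in \mathcal{S}$, the list-decodability hypothesis bounds this count by $L$.

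Next I would handle property (2). Let $v$ (level $i$, polynomial $P_i$) be a child of $u$ (level $i+1$, polynomial $P_{i+1}$). The key observation is that whenever $P_i({\bf a}) = f_i({\bf a})$ we also have $P_{i+1}({\bf a}) = f_{i+1}({\bf a})$, since reducing mod $G_i$ (i.e.\ mod $H_i$) is a homomorphism and $P_{i+1}, f_{i+1}$ are precisely the mod-$H_i$ reductions of $P_i, f_i$. Hence the agreement set of $v$ is contained in that of $u$, so $\mathrm{agr}(u) \ge \mathrm{agr}(v)$, which immediately gives $\rho(u) \ge \rho(v)$ after subtracting the common constant $1 - 2^{-d}$.

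Finally, property (3) is the crux and the step I expect to be the main obstacle. Let $v$ and $w$ be two distinct children of $u$, with polynomials $P_i \ne P_i'$ over $G_i$, both reducing to $P_{i+1}$ mod $H_i$. The agreement sets $\mathrm{Agr}(v) = \{{\bf a} : P_i({\bf a}) = f_i({\bf a})\}$ and $\mathrm{Agr}(w) = \{{\bf a}: P_i'({\bf a}) = f_i({\bf a})\}$ are both contained in $\mathrm{Agr}(u) = \{{\bf a}: P_{i+1}({\bf a}) = f_{i+1}({\bf a})\}$ by the argument for (2). The plan is to show $\mathrm{Agr}(v)$ and $\mathrm{Agr}(w)$ are \emph{disjoint}: if ${\bf a}$ lies in both, then $P_i({\bf a}) = f_i({\bf a}) = P_i'({\bf a})$, so $(P_i - P_i')({\bf a}) = 0$. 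But $P_i - P_i'$ is a nonzero (as $P_i \ne P_i'$, again by \Cref{thm:basic}) polynomial in $\mathcal{P}_d(\{0,1\}^n, H_i)$, and on its agreement-with-zero set one cannot yet conclude disjointness of the whole sets — so the honest statement is: $|\mathrm{Agr}(v) \cap \mathrm{Agr}(w)| \le$ (number of zeros of $P_i - P_i'$) $\le (1 - 2^{-d})2^n$ by the DLSZ lemma (\Cref{thm:basic}). Then by inclusion–exclusion inside $\mathrm{Agr}(u)$,
\[
|\mathrm{Agr}(u)| \ge |\mathrm{Agr}(v)| + |\mathrm{Agr}(w)| - |\mathrm{Agr}(v) \cap \mathrm{Agr}(w)| \ge |\mathrm{Agr}(v)| + |\mathrm{Agr}(w)| - (1-2^{-d})2^n.
\]
Dividing by $2^n$ and subtracting $1 - 2^{-d}$ from both sides gives $\rho(u) \ge \rho(v) + \rho(w)$, as desired. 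The delicate point to get right in the full write-up is exactly this bound on $|\mathrm{Agr}(v) \cap \mathrm{Agr}(w)|$: it is not that the sets are disjoint, but that their intersection, being contained in the zero set of a nonzero degree-$d$ polynomial over the group $H_i$, has density at most $1 - 2^{-d}$, which is precisely what makes the $-\rho$ bookkeeping close. One should double-check that all three $\mathrm{agr}$ values and hence all $\rho$ values are being measured against the correct function $f_i$ at the correct level, since this is where an off-by-one in the level could silently break the argument.
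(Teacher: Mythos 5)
Your arguments for Items 2 and 3 are correct and follow essentially the same route as the paper. For Item 2, the containment of agreement sets is exactly the paper's argument. For Item 3, your observation that $\mathrm{Agr}(v)\cap\mathrm{Agr}(w)$ sits inside the zero set of the nonzero degree-$d$ polynomial $P_i-P_i'$ (whose coefficients happen to lie in $H_i$, though all you need is that DLSZ from \Cref{thm:basic} holds over any Abelian group), followed by inclusion--exclusion inside $\mathrm{Agr}(u)$, is valid and in fact slightly more direct than the paper, which first passes to the $\mathbb{Z}_{p_i}$-valued polynomials $Q_j'$ and applies DLSZ to those. Your bookkeeping with the additive constant $1-2^{-d}$ also checks out.

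The genuine gap is in Item 1, at the moment you invoke the list-decodability hypothesis. The hypothesis of \Cref{lem:dgksvariant} says that $\mathcal{P}_d(\{0,1\}^n,\mathbb{Z}_{p_i})$ is $(\tfrac{1}{2^d}-\varepsilon,L)$-list decodable, which (per the definition of combinatorial list decodability) bounds the number of $\mathbb{Z}_{p_i}$-valued polynomials close to a $\mathbb{Z}_{p_i}$\emph{-valued received word}. After subtracting a fixed child $P_i^{(0)}$, your shifted polynomials $P_i-P_i^{(0)}$ do land in $\mathcal{P}_d(\{0,1\}^n,H_i)\cong\mathcal{P}_d(\{0,1\}^n,\mathbb{Z}_{p_i})$, but the shifted received word $f_i-P_i^{(0)}$ is still $G_i$-valued, so the hypothesis does not apply to it as stated. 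The missing step --- which is essentially the entire content of the paper's proof of Items 1 and 3 --- is to project $f_i-P_i^{(0)}$ down to an $H_i$-valued function without losing agreements: fix coset representatives of $H_i$ in $G_i$, write each value $g$ uniquely as $g=\hat g+g'\cdot h_i$ with $g'\in\{0,\dots,p_i-1\}$, and observe that if $(P_i-P_i^{(0)})(\mathbf{a})$, which lies in $H_i$, equals $(f_i-P_i^{(0)})(\mathbf{a})$, then the coset-representative part of the latter is that of $0$ and the $H_i$-parts agree. Only then do you have $\geq(1-\tfrac{1}{2^d}+\varepsilon)$-agreement between a $\mathbb{Z}_{p_i}$-valued function and distinct $\mathbb{Z}_{p_i}$-valued polynomials, to which the hypothesis applies. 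This is a fixable omission, but as written the application of the hypothesis in Item 1 does not go through.
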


    Using the above claim, we can finish the proof of the lemma as follows. Let $\ell = \lceil \log L\rceil$. We show that for any node $u$ with children $v_1,\ldots, v_t$, we have
    \begin{equation}
        \label{eq:DGKS}
        \rho(u)^\ell \geq \rho(v_1)^\ell + \cdots + \rho(v_t)^\ell.
    \end{equation}
    Assuming this, we get by induction from the root $r$ of $T$ that 
    \[
    \rho(r)^\ell \geq \sum_{\text{$v$ a leaf}} \rho(v)^\ell \geq \varepsilon^\ell \cdot \text{(\# of leaves of $T$)} \geq \varepsilon^\ell \cdot |\mathsf{List}^f_\varepsilon|
    \]
    where for the second inequality, we used the fact that $\rho(v) \geq \varepsilon$ for all nodes $v$ in the tree (by definition of $\rho$ and the properties of the polynomials associated to each $v$). Since $\rho(r) \leq 1,$ we immediately get $|\mathsf{List}^f_\varepsilon| \leq (1/\varepsilon)^\ell = L^{\bigO(\log 1/\varepsilon)}$ as desired.
    
    It remains to prove \Cref{eq:DGKS} and \Cref{clm:treeprops}, which we do in that order. To see \Cref{eq:DGKS}, we assume without loss of generality that $t\geq 2$ (otherwise the inequality is trivial) and that $\rho(v_1) \geq \rho(v_2) \geq \cdots \geq \rho(v_t).$ Then we have
    \begin{align*}
        \rho(v_1)^\ell + \cdots  + \rho(v_t)^\ell \leq (\rho(v_1) + \rho(v_2))^\ell \leq \rho(u)^\ell
    \end{align*}
    where the second inequality is a consequence of Item 3 of \Cref{clm:treeprops} and the first inequality follows by examining the expansion of $(\rho(v_1) + \rho(v_2))^\ell$ which contains one term that is $\rho(v_1)^\ell$ and $2^{\ell}-1 \geq L-1\geq t-1$ (\Cref{clm:treeprops} Item 1) terms that are at least $\rho(v_2)^\ell.$

    \begin{proof}[Proof of \Cref{clm:treeprops}]
        Assume $u$ is a vertex at some level $i+1$ in the tree. Then $u$ corresponds to a polynomial $P$ taking values in the group $G_{i+1}.$ Assume that $u$ has children $v_1,\ldots, v_t$. Each child $v_j$ is associated to a polynomial $Q_j$ over $G_{i}.$ 

        Let $A(u)$ denote the set of points where $P$ agrees with $f_{i+1}$ and similarly, let $A(v_j)$ denote the set of points where $Q_j$ agrees with $f_{i}.$ By the definition of the tree $T$, we have $A(v_j)$ is contained in $A(u)$ for each $j$ and this implies Item 2 of the claim.

        To prove the other two items, we need some notation. Let $h_{i}\in G_{i}$ be the group element used to define $H_{i}$ and hence $G_{i+1}$ above. Fix any system of coset representatives $c_1,\ldots, c_M$ of $H_{i}$ in $G_{i}.$ Given any $g\in G_i,$ we can write $g$ uniquely as
        \begin{equation}
        \label{eq:cosetrepn}
        g = \hat{g} + g'\cdot h_i
        \end{equation}
        where $\hat{g}\in \{c_1,\ldots, c_M\}$ and $g'\in \{0,\ldots, p_i-1\}$ (since $h_i$ has order $p_i$). 

        Hence, for each $j\in [t],$ we can write 
        \[
        Q_j({\bf x}) = \sum_{|I|\leq d} \alpha_I x^I = \underbrace{\sum_{|I|\leq d} \hat{\alpha}_I x^I}_{\hat{Q}_j({\bf x})} + \underbrace{\left(\sum_{|I|\leq d} \alpha'_I x^I\right)}_{{Q}_j'({\bf x})}\cdot h_i.
        \]
        Since $h_i$ has order $p_i$, we can think of $Q_j'$ as taking values in $\mathbb{Z}_{p_i}.$ 

        We note for any two $j,k\in [t],$ the polynomials $\hat{Q}_j$ and $\hat{Q}_k$ are in fact exactly the same. This is because 
        \[
         P({\bf x}) = Q_j({\bf x}) = Q_k({\bf x}) \pmod{H_i}
        \]
        by definition of the tree $T$, implying that the coefficients of $Q_j$ and $Q_k$ are all the same modulo $H_i$. By the uniqueness of the decomposition in \Cref{eq:cosetrepn}, the polynomials $\hat{Q}_j$ and $\hat{Q}_k$ must therefore be the same. We denote this polynomial $\hat{Q}$. Note that this also implies that $Q_j'$ and $Q_k'$ are distinct for any distinct $j,k\in [t]$, since otherwise $Q_j = Q_k$.

        Let $F_i({\bf x}) = f_i({\bf x}) - \hat{Q}({\bf x}).$ In a similar way to what we did above, we can write for each ${\bf x}\in \{0,1\}^n,$
        \[
        F_i({\bf x}) = \hat{F}_i({\bf x}) + F_i'({\bf x})\cdot h_i
        \]
        where again we think of $F_i'$ as taking values in $\mathbb{Z}_{p_i}.$

        Fix any ${\bf x}\in \{0,1\}^n.$ We have 
        \begin{align*}
            {\bf x}\in A(v_j) \Longleftrightarrow f_i({\bf x}) = Q_j({\bf x})
            \Longleftrightarrow 
            F_i({\bf x}) = Q_j'({\bf x})\cdot h_i
            \Longrightarrow
            F_i'({\bf x}) = Q_j'({\bf x})
        \end{align*}
        where the last equality holds as elements of $\mathbb{Z}_{p_i}.$ 
        In particular, this implies that since $Q_j\in \mathsf{List}^{f_i}_\varepsilon$, we must also have $Q_j'\in \mathsf{List}^{F_i'}_\varepsilon$. Using the hypothesized bound of $L$ from the prime case (statement of \Cref{lem:dgksvariant}), we get $t\leq L$ implying the first item of the claim.

        Finally, we note using the same reasoning that if ${\bf x}\in A(v_j)\cap A(v_k)$ for distinct $j$ and $k,$ we must have $Q_j'({\bf x}) = Q_k'({\bf x})$. Since distinct degree-$d$ polynomials can agree on at most a $(1-\frac{1}{2^d})$ fraction of inputs (\Cref{thm:basic}), we see that $|A(v_j)\cap A(v_k)|\leq 2^n\cdot \alpha$ where $\alpha$ denotes $(1-\frac{1}{2^d}).$ We have
        \begin{align*}
        (\rho(u) + \alpha)\cdot 2^n = |A(u)| &\geq |A(v_j) \cup A(v_k)| = |A(v_j)| + |A(v_k)|  - |A(v_j)\cap A(v_k)|\\ 
        &\geq (\rho(v_j)+\alpha)\cdot 2^n + (\rho(v_k) + \alpha)\cdot 2^n - \alpha\cdot 2^n = (\rho(v_j)+\rho(v_k) + \alpha)\cdot 2^n
        \end{align*}
        where we used the fact that $A(u)$ contains $A(v_j)$ and $A(v_k)$ (argued above) for the first inequality. This proves Item 3 and concludes the proof.
    \end{proof}

\end{proof}

\section{Local list correction}
\label{sec:locallistcorrection}

In this section, we design a local list corrector (see \Cref{defn:local-list-algo}) for the class $\mathcal{P}_{d}$ and prove \Cref{thm:listdecoding}. 

Let $G$ be an Abelian group and $f: \Boo^{n} \to G$ be any function with oracle access to it. Let $\mathsf{List}_{\varepsilon}^{f}$ denote the set of degree $d$ polynomials that are $(1/2^{d} - \varepsilon)$-close to $f$, and let $L(\varepsilon) = |\mathsf{List}_{\varepsilon}^{f}|$. Recall from \Cref{thm:comblistdegd} that $L(\varepsilon) =\exp(\bigO_d(1/\varepsilon)^{\bigO(d)})$. 

Our local list corrector has two phases, inspired by previous work~\cite{GoldreichL, STV-list-decoding, ABPSS24-ECCC}.
\begin{itemize}
    \item First we construct $L' = \widetilde{\bigO}(L(\varepsilon/2))$ algorithms with oracle access such that with high probability, for each polynomial in the list $\mathsf{List}^{f}_{\varepsilon}$, there exists an algorithm that is $\delta$-close to the polynomial where $\delta < 1/(10 \cdot 2^{d+1})$, i.e. $\delta$ is in the unique decoding regime.
    \item Secondly we apply the (unique) local corrector for the class $\mathcal{P}_{d}$ from \Cref{thm:uniquedegd} on each of the algorithms from the first phase.
\end{itemize}

We state the main theorem of this section below that describes the first phase of our local list corrector. Recall that $\mathcal{A}^{f}$ denotes that the algorithm $\mathcal{A}$ has oracle access to the function $f$.\\

\begin{thmbox}
\begin{theorem}[Approximate oracles]\label{thm:approx-oracles-list-decoding}
Fix $n \in \mathbb{N}$, $\varepsilon > 0$. Let $f: \Boo^{n} \to G$ be any function. There exists a randomized algorithm $\mathcal{A}_{1}^{f}$ that makes at most $\bigO_{\varepsilon}(1)$ oracle queries and outputs deterministic algorithms $\Psi_{1},\ldots, \Psi_{L'}$ satisfying the following property: with probability at least $3/4$, for every polynomial $P \in \mathsf{List}_{\varepsilon}^{f}$, there exists a $j \in [L']$ such that $\delta(\Psi_{j}, P) < 1/(10 \cdot 2^{d+1})$, and moreover, for every $\mathbf{x} \in \Boo^{n}$, $\Psi_{j}$ computes $P(\mathbf{x})$ by making at most $\bigO_{\varepsilon}(1)$ oracle queries to $f$. Here $L' = \bigO(L(\varepsilon/2)\log L(\varepsilon/2)) = \bigO_\varepsilon(1)$.
\end{theorem}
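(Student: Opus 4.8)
The plan is to implement the ``error reduction via random subcubes'' strategy of \cite{STV-list-decoding, ABPSS24-ECCC}, adapted to degree $d$ and to arbitrary Abelian groups. Fix a constant $k=k(\varepsilon,d)=\bigO_\varepsilon(1)$, to be taken large enough (in terms of $d$ and $L(\varepsilon/2)$) at the end. The algorithm $\mathcal{A}_1^f$ samples $\mathbf{a}\sim U_n$ and a uniformly random hash $h:[n]\to[k]$, forms the subcube $\mathsf C=C_{\mathbf{a},h}$ (of dimension $k$, except with probability $o(1)$), queries $f$ on all $2^k$ points of $\mathsf C$, and runs a brute-force list-decoding subroutine to compute a set $\tilde{\mathcal S}$ containing every degree-$d$ polynomial on $\{0,1\}^k$ that is $(1/2^d-\varepsilon/2)$-close to $f|_{\mathsf C}$; by \Cref{thm:comblistdegd} this forces $|\tilde{\mathcal S}|\le L(\varepsilon/2)=\bigO_\varepsilon(1)$. (Such a brute-force list-decoder, making $2^{|\mathsf C|}$ queries, is standard; see \cite{ABPSS24-ECCC}. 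The mild logarithmic slack permitted by $L'=\widetilde{\bigO}(L(\varepsilon/2))$ can be used to amplify the success probability of $\mathcal{A}_1$.) For each $Q'\in\tilde{\mathcal S}$, the algorithm $\mathcal{A}_1$ outputs the deterministic algorithm $\Psi_{Q'}$ described next; thus $L'=\widetilde{\bigO}(L(\varepsilon/2))=\bigO_\varepsilon(1)$ as required.

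On input $\mathbf{b}\in\{0,1\}^n$, the algorithm $\Psi_{Q'}$ forms the smallest subcube containing $\mathsf C$ and $\mathbf{b}$, namely $\mathsf{C}^{\mathbf{b}}=C_{\mathbf{a},h'}$ with $h'(i)=(h(i),\,a_i\oplus b_i)\in[k]\times\{0,1\}$; this has dimension $2k$ except with probability $o(1)$, and it contains both $\mathsf C$ and $\mathbf{b}$. It queries $f$ on all $2^{2k}$ points of $\mathsf{C}^{\mathbf b}$, computes (again by brute force) the set $\mathcal L''$ of degree-$d$ polynomials on $\{0,1\}^{2k}\cong \mathsf{C}^{\mathbf b}$ that are $(1/2^d-\varepsilon/2)$-close to $f|_{\mathsf{C}^{\mathbf b}}$ --- so $|\mathcal L''|\le L(\varepsilon/2)$ --- and outputs $R(\mathbf{b})$ for any $R\in\mathcal L''$ whose restriction to $\mathsf C$ equals $Q'$ (and, say, $0$ if no such $R$ exists). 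In the second phase of the overall algorithm (outside the scope of this theorem) the unique corrector of \Cref{thm:uniquedegd} is composed on top of each $\Psi_{Q'}$.

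For correctness, fix $P\in\mathsf{List}_\varepsilon^f$, so that $E_P:=\{x:f(x)\ne P(x)\}$ has density at most $1/2^d-\varepsilon$. By the sampling lemma for subcubes (\Cref{lemma:sampling-subcube}) applied to $E_P$, taking $k$ large gives $\delta(f|_{\mathsf C},P|_{\mathsf C})\le 1/2^d-\varepsilon/2$ except with probability $\ll 1/L(\varepsilon)$; union-bounding over the (at most $L(\varepsilon)$) polynomials in $\mathsf{List}_\varepsilon^f$, with probability at least $3/4$ we have $P|_{\mathsf C}\in\tilde{\mathcal S}$ for all of them, so one of the output algorithms is $\Psi_{j}:=\Psi_{P|_{\mathsf C}}$. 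It remains to bound $\Pr_{\mathbf b}[\Psi_{j}(\mathbf b)\ne P(\mathbf b)]$ (after which averaging over $\mathbf a,h$, Markov's inequality, and a final union bound over $\mathsf{List}_\varepsilon^f$ give $\delta(\Psi_{j},P)<1/(10\cdot 2^{d+1})$ with probability $\ge 3/4$). Two structural facts, verified as in \cite{ABPSS24-ECCC}, drive the argument: over the joint randomness of $\mathbf a,h,\mathbf b$, the subcube $\mathsf{C}^{\mathbf b}$ is distributed exactly as a uniformly random $2k$-dimensional subcube through a uniformly random point, and $\mathsf C$ arises from it by a (close to) uniformly random pairing of its $2k$ coordinates followed by identification; moreover the point of $\mathsf{C}^{\mathbf b}\cong\{0,1\}^{2k}$ corresponding to $\mathbf b$ lies on the middle ($k$-th) Hamming slice, and the points of $\mathsf C$ lying on that middle slice are exactly the middle slice of the $k$-cube $\mathsf C$. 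Now \Cref{lemma:sampling-subcube} applied to $E_P$ and $\mathsf{C}^{\mathbf b}$ gives that $P|_{\mathsf{C}^{\mathbf b}}\in\mathcal L''$ except with probability $\ll 1/L(\varepsilon)$; on this event $P|_{\mathsf{C}^{\mathbf b}}|_{\mathsf C}=P|_{\mathsf C}=Q'$, so $\Psi_{j}$ outputs $R(\mathbf b)$ for some $R\in\mathcal L''$ with $R|_{\mathsf C}=Q'$, and this can fail to equal $P(\mathbf b)=P|_{\mathsf{C}^{\mathbf b}}(\mathbf b)$ only if there is $R\in\mathcal L''$, $R\ne P|_{\mathsf{C}^{\mathbf b}}$, with $R|_{\mathsf C}=Q'$ but $R(\mathbf b)\ne P|_{\mathsf{C}^{\mathbf b}}(\mathbf b)$. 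For such $R$, the nonzero degree-$d$ polynomial $S:=R-P|_{\mathsf{C}^{\mathbf b}}$ vanishes on all of $\mathsf C$ yet is nonzero at $\mathbf b$; since $\mathbf b$ lies on the middle slice, $S$ is not identically zero there, so by \Cref{lem:DLSZ-slice} it is nonzero on an $\Omega_d(1)$-fraction of the middle slice, whence by \Cref{lem:sampling} (the random pairing samples the middle slice well) $S$ is nonzero at some point of the middle slice of $\mathsf C$, i.e. $S|_{\mathsf C}\not\equiv 0$, except with probability $\ll 1/(L(\varepsilon/2)L(\varepsilon))$. A union bound over the at most $L(\varepsilon/2)$ choices of $R$, together with the earlier failure probabilities, gives $\Pr_{\mathbf b}[\Psi_{j}(\mathbf b)\ne P(\mathbf b)]\ll 1/2^{d+1}$ once $k$ is large enough, completing the proof.

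The main obstacle is the last step: controlling the probability that two distinct polynomials of $\mathcal L''$ agree on $\mathsf C$ while disagreeing at $\mathbf b$. Unlike the degree-$1$ case of \cite{ABPSS24-ECCC}, there is no clean description of which pairs of polynomials over $\mathsf{C}^{\mathbf b}$ collapse to the same polynomial over $\mathsf C$; the argument instead factors through the two new ingredients \Cref{lem:DLSZ-slice} (a Hamming-slice version of the DLSZ lemma) and \Cref{lem:sampling} (that the random pairing process samples the middle Hamming slice well, which itself rests on spectral expansion of suitable Johnson graphs), plus the observation that any polynomial vanishing on the entire middle slice automatically vanishes at $\mathbf b$. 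A secondary technical point is establishing the distributional claim that $(\mathsf{C}^{\mathbf b},\mathsf C)$ genuinely looks like a random $2k$-subcube together with a uniformly random coordinate pairing, which is what forces $k$ to be taken large relative to the list bound $L(\varepsilon/2)$.
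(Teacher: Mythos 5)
Your proposal is correct and follows essentially the same route as the paper: a random $k$-subcube with brute-force list decoding to generate advice, the spanned $2k$-subcube $\mathsf{C}^{\mathbf b}$ for each query point, and a three-part failure analysis whose crux — bounding the probability that two distinct polynomials in the list over $\mathsf{C}^{\mathbf b}$ collapse on $\mathsf{C}$ while differing at the middle-slice point corresponding to $\mathbf b$ — is resolved exactly as in the paper, by combining the slice DLSZ lemma with the Johnson-graph sampling lemma (the paper's \Cref{cor:samplingDLSZ}). The only deviations are cosmetic: you omit the explicit random relabeling $\sigma$ (your $h'(i)=(h(i),a_i\oplus b_i)$ still induces a uniformly random pairing conditioned on the partition) and you push all failure probabilities down by enlarging $k$ in a single iteration rather than using the paper's $\log L(\varepsilon)$-fold repetition, both of which are valid.
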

\end{thmbox}

\noindent
The algorithm $\mathcal{A}_1$ is in fact nearly identical to a similar algorithm from \cite{ABPSS24-ECCC} for the case $d=1$. The main novelty in this paper is in extending the analysis of the algorithm to the case of larger degrees. In particular, we prove the following technical lemmas, which we believe are independently interesting.
\begin{itemize}
    \item \textbf{A sampling lemma for Hamming slices using subcubes}: Let $k$ be an even integer and $\mathsf{C}$ be a subcube of $\{0,1\}^{2k}$ be obtained by partitioning the $2k$ co-ordinates into $k$ pairs uniformly at random and identifying the co-ordinates in each pair (equivalently, we choose a random hash function $h:[2k]\rightarrow [k]$ that is $2$-to-$1$ and consider the subcube $C_{0^{2k},h}$ as defined in \Cref{sec:prelims}). We show that this process has good sampling properties in the sense that the density of a set $S\subseteq \{0,1\}^{2k}$ of vectors of Hamming weight $k$ is roughly preserved in the subcube $\mathsf{C}$. More formally, we show
    \[
    \Pr_{\mathsf{C}}{\left[\left|\frac{|S \cap \mathsf{C}|}{\binom{k}{k/2}} - \frac{|S|}{\binom{2k}{k}}\right|\geq \frac{1}{k^{\Omega(1)}} \right]} \leq \frac{1}{k^{\Omega(1)}}.
    \]
    See \Cref{subsubsec:sampling} for a formal statement and the proof.
    \item \textbf{DLSZ lemma for low-degree polynomials over Hamming slices}: We also give a simple proof of the fact that if a polynomial $P\in \mathcal{P}_d(\{0,1\}^{2k},G)$ does not vanish on the set of points of Hamming weight $k$, then it does not vanish at an $\Omega_d(1)$ fraction of it. While there have been many works (see \cite{MuraliSrinivasan-SymmetricChain, Filmus2014AnOB, Filmus-Ihringer19, Filmus-JuntaSlices}) addressing the properties of polynomial functions over slices in recent years (especially over the reals), we do not know if this fact has appeared in the literature before in this generality.\snote{Literature survey to check if this is true? And has this appeared in some form before over the reals? Ask Yuval Filmus?} See \Cref{sec:DLSZ-slice} for a formal statement and the proof.
\end{itemize}

Putting the above two lemmas together, one can easily derive the following consequence. Fix a degree-$d$ polynomial $R\in \mathcal{P}_d(\{0,1\}^{2k}, G)$ that does not vanish on the Hamming slice of weight $k.$ Then, the probability that it vanishes on a random subcube $\mathsf{C}$ chosen as above is very small, taking a sufficiently large $k$. This corollary will play a crucial role in the proof of \Cref{thm:approx-oracles-list-decoding}. The analogous statement for degree $1$ was proved in \cite{ABPSS24-ECCC}. However, the proof there is based on a strategy that utilizes an understanding of the structure of the polynomial $R$ in a way that seems difficult to implement for higher degrees.

We start with the proofs of the above lemmas in \Cref{sec:locallistcorrectionnew} and then give the proofs of \Cref{thm:approx-oracles-list-decoding} and \Cref{thm:listdecoding} in subsequent sections (which closely follow the analogous proofs in \cite{ABPSS24-ECCC} modulo the above facts).

\subsection{The main technical lemmas}
\label{sec:locallistcorrectionnew}

We prove now the main new technical lemmas that we use to prove \Cref{thm:approx-oracles-list-decoding}. Throughout this section, we use $\{0,1\}^n_{m}$ to denote strings in $\{0,1\}^n$ of Hamming weight exactly $m$.

\subsubsection{Sampling Hamming slices via subcubes}\label{subsubsec:sampling}

Throughout this section, we fix an even positive integer $k$ and consider a random $k$-dimensional subcube $\mathsf{C}\subseteq \{0,1\}^{2k}$ obtained as follows. We partition the $2k$ coordinates into $k$ pairs uniformly at random and identify the coordinates in each pair. Equivalently, we choose a uniformly random map $h:[2k]\rightarrow [k]$ that is $2$-to-$1$, and set $\mathsf{C} = C_{0^{2k}, h}$ (where the latter subcube is as defined in \Cref{defn:random-embedding}).

\begin{thmbox}
\begin{lemma}[Sampling lemma for Hamming slices]
    \label{lem:sampling}
    There is an absolute constant $\eta > 0$ such that for every set $S\subseteq \{0,1\}^{2k}_k$, we have
    \[
    \Pr_{\mathsf{C}}{\left[\left|\frac{|S|}{\binom{2k}{k}} - \frac{|S\cap \mathsf{C}|}{\binom{k}{k/2}}\right| \geq \frac{1}{k^\eta}\right]} \leq \bigO\left(\frac{1}{k^\eta}\right).
    \]
\end{lemma}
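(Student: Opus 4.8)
The plan is to prove \Cref{lem:sampling} by a second moment argument, reducing it to a modest spectral expansion bound for a natural random walk on the middle slice.

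Set $N = \binom{2k}{k}$, $D = \binom{k}{k/2}$ and $X = |S\cap\mathsf{C}|/D$. First I would record two easy facts. A point $x(\mathbf{y})$ with $|\mathbf{y}| = k/2$ has Hamming weight exactly $k$ (each coordinate of $\mathbf{y}$ that is set to $1$ pulls in two coordinates of $\{0,1\}^{2k}$), so $S\cap\mathsf{C}\subseteq\{0,1\}^{2k}_k$ and $|S\cap\mathsf{C}|$ counts exactly the $\mathbf{y}\in\{0,1\}^k_{k/2}$ with $x(\mathbf{y})\in S$. Identifying $\mathsf{C}$ with the uniformly random perfect matching $M$ of $[2k]$ formed by the $h$-fibers, a weight-$k$ string $\mathbf{z}$ lies in $\mathsf{C}$ iff $M$ matches $\mathrm{supp}(\mathbf{z})$ internally; counting matchings gives $\Pr[\mathbf{z}\in\mathsf{C}] = ((k-1)!!)^2/(2k-1)!! = D/N$, whence $\mathbb{E}[X] = |S|/N =: \mu$. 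So $X$ is an unbiased estimator of $\mu$, and by Chebyshev's inequality it suffices to show $\mathrm{Var}(X) = O(1/k^{c})$ for some absolute constant $c > 0$: then choosing $\eta = c/3$ (and absorbing the finitely many small $k$ into the constant) yields the lemma.

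Next I would compute the variance. Expanding, $\mathbb{E}[|S\cap\mathsf{C}|^2] = \sum_{\mathbf{z},\mathbf{w}\in S}\Pr[\mathbf{z},\mathbf{w}\in\mathsf{C}]$, and a matching count shows $\Pr[\mathbf{z},\mathbf{w}\in\mathsf{C}]$ depends only on $\ell = |\mathrm{supp}(\mathbf{z})\cap\mathrm{supp}(\mathbf{w})|$, equal to $\big((\ell-1)!!\,(k-\ell-1)!!/(k-1)!!\big)^2\cdot D/N = \big(\binom{k/2}{\ell/2}/\binom{k}{\ell}\big)^2\cdot D/N$ for even $\ell$ and $0$ otherwise. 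Let $\bar P$ be the symmetric stochastic matrix on $\{0,1\}^{2k}_k$ defined by $\bar P(\mathbf{z},\mathbf{w}) = (N/D^2)\Pr[\mathbf{z},\mathbf{w}\in\mathsf{C}]$ — this is the two-step walk that, from $\mathbf{z}$, picks a uniformly random subcube through $\mathbf{z}$ and then a uniformly random weight-$k$ point of it. Then $\mathbb{E}[X^2] = \tfrac1N\langle\mathbf{1}_S,\bar P\mathbf{1}_S\rangle$; writing $\mathbf{1}_S = \mu\mathbf{1} + g$ with $g\perp\mathbf{1}$ and using $\bar P\mathbf{1} = \mathbf{1}$ gives $\mathrm{Var}(X) = \tfrac1N\langle g,\bar P g\rangle\le \lambda\,\mu(1-\mu)\le\lambda/4$, where $\lambda$ is the largest absolute value of a nontrivial eigenvalue of $\bar P$. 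The lemma thus reduces to $\lambda = O(1/k^{c})$, i.e. to showing that this Johnson-scheme walk is a mild spectral expander.

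Finally I would bound $\lambda$ by a trace estimate. Since $\bar P(\mathbf{z},\mathbf{w})$ depends only on $\ell(\mathbf{z},\mathbf{w})$, the operator $\bar P$ commutes with the $S_{2k}$-action on $\binom{[2k]}{k}$; as this permutation module is multiplicity-free (it is the direct sum of the $k+1$ Johnson eigenspaces $V_0,\dots,V_k$ with $\dim V_j = \binom{2k}{j}-\binom{2k}{j-1}$), every eigenvalue of $\bar P$ is constant on some $V_j$, so the eigenvalue $1$ has multiplicity $\dim V_0 = 1$ and every other eigenvalue has multiplicity at least $\dim V_1 = 2k-1$. Hence $\mathrm{tr}(\bar P^2) = \sum_j \mu_j^2\dim V_j \ge 1 + \lambda^2(2k-1)$. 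On the other hand, by vertex-transitivity $\mathrm{tr}(\bar P^2) = \sum_{\mathbf{z},\mathbf{w}}\bar P(\mathbf{z},\mathbf{w})^2 = N\sum_{\mathbf{w}}\bar P(\mathbf{z}_0,\mathbf{w})^2$; grouping $\mathbf{w}$ by $\ell$ and plugging in the formula above yields $\mathrm{tr}(\bar P^2) = \tfrac{N}{D^2}\sum_{r=0}^{k/2}\binom{k/2}{r}^4/\binom{k}{2r}^2$. A routine Stirling estimate shows $N/D^2 = \Theta(\sqrt k)$ and $\sum_r \binom{k/2}{r}^4/\binom{k}{2r}^2 = O(\log k)$, so $\mathrm{tr}(\bar P^2) = O(\sqrt k\log k)$ and therefore $\lambda = O\big((\log k)^{1/2}k^{-1/4}\big)$; combined with $\mathrm{Var}(X)\le\lambda/4$ and Chebyshev's inequality with (say) $\eta = 1/13$, this proves the lemma. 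The step I expect to be the crux is this last one: the entries of $\bar P$ at short and near-antipodal distances are only polynomially small (not exponentially small), and these dominate both $\sum_{\mathbf{w}}\bar P(\mathbf{z}_0,\mathbf{w})^2$ and the trace even though they correspond to exponentially small eigenvalues — so crude bounds (a Schur test, or the Frobenius norm of $\bar P$) are too lossy, and one genuinely needs the representation-theoretic input that a nontrivial eigenvalue of an $S_{2k}$-equivariant operator has multiplicity $\ge 2k-1$, together with the combinatorial estimate $\sum_r\binom{k/2}{r}^4/\binom{k}{2r}^2 = o(\sqrt k)$.
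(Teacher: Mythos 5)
Your proposal is correct, and it reaches the lemma by a genuinely different route from the paper. The paper also runs a second-moment/Chebyshev argument, but it bounds the covariance sum pair-by-pair: it splits pairs $(\mathbf{u},\mathbf{v})$ according to their Hamming distance, discards the atypical distances via concentration for sampling without replacement, and for each typical distance applies the expander mixing lemma to the fixed-distance Johnson graph $J(2k,k,d)$, whose spectral gap it establishes from the explicit Delsarte eigenvalue formulas (for low eigenspaces) plus a trace-versus-multiplicity argument (for high ones). You instead package the entire second moment into the single averaged operator $\bar P$ (the ``down to a random subcube through $\mathbf{z}$, then up to a random slice point'' walk), observe that $\mathrm{Var}(X)\le\lambda\mu(1-\mu)$ for its nontrivial spectral radius $\lambda$, and bound $\lambda$ by one trace computation $\mathrm{tr}(\bar P^2)=\frac{N}{D^2}\sum_r\binom{k/2}{r}^4/\binom{k}{2r}^2=O(\sqrt{k}\log k)$ combined with the multiplicity-free decomposition of the permutation module (every nontrivial eigenvalue has multiplicity at least $2k-1$). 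This is the same multiplicity trick the paper uses only in its ``large $s$'' case, but applying it to the averaged operator lets you skip both the distance-splitting and the explicit eigenvalue formulas: the averaging over $\ell$ automatically damps the contribution of the degenerate distances that force the paper's case analysis. What you lose is modularity — the paper's \Cref{lem:Johnson-eigenvalue} is a reusable statement about individual Johnson graphs — and what you gain is a shorter, more self-contained proof with an explicit exponent ($\lambda=O((\log k)^{1/2}k^{-1/4})$, so $\eta$ can be taken just under $1/12$). Two small points to make explicit in a write-up: the bound $\sum_r\binom{k/2}{r}^4/\binom{k}{2r}^2=O(\log k)$ deserves a line (each term is the square of a central hypergeometric probability, hence $O(1/(\min(r,k/2-r)+1))$, and the harmonic sum gives the log), and you should note that the statement is vacuous for $k=O(1)$ so the leading constant absorbs small $k$. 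I verified your matching counts $\Pr[\mathbf{z}\in\mathsf{C}]=D/N$ and the joint-probability formula; both are correct.
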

\end{thmbox}

\begin{remark}
    \label{rem:samplingtightness}
    The above lemma is seen to be tight up to the value of the constant $\eta$. Consider $S\subseteq \{0,1\}^{2k}_k$ containing exactly those points that differ in the first two co-ordinates. It is easily seen that $|S| = \Omega\left(\binom{2k}{k}\right)$ and further $S\cap \mathsf{C} = \emptyset$ whenever the random partitioning defining $\mathsf{C}$ pairs the first two co-ordinates with each other. The latter event occurs with probability $\Omega(1/k).$
\end{remark}

The proof is via a standard second moment bound, the main step of which is a bound on the spectral gap of a suitable graph. We start with some notations and then state this bound. 

An undirected graph $G$ is an \emph{$(N,D,\lambda)$-expander} if it is a $D$-regular graph on $N$ vertices and the magnitude of the second-largest eigenvalue of its adjacency matrix (in absolute value) is at most $\lambda\cdot D$. We refer to the monograph of Hoory, Linial and Wigderson~\cite{HLW} for a more thorough treatment of expander graphs. We will use primarily the following lemma, due to Alon and Chung~\cite{Alon-Chung-EML}.

\begin{lemma}[Expander Mixing Lemma]
\label{lem:EML}
    Let $G=(V,E)$ be an $(N,D,\lambda)$-expander and let $S,T\subseteq V$ be any sets of density $\sigma$ and $\tau$ respectively. Then, for $u$ a uniformly random vertex of $V$ and $v$ a uniformly random neighbour of $u$, we have
    \[
    \left|\Pr_{u,v}\left[u\in S\wedge v\in T\right] - \sigma\cdot \tau\right| \leq \lambda.
    \]
\end{lemma}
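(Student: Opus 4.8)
The plan is to prove the Expander Mixing Lemma by the standard spectral argument, reducing the probabilistic statement to a short computation with the normalized adjacency matrix of $G$. First I would let $A\in\R^{V\times V}$ be the adjacency matrix of $G$ and set $M:=\tfrac1D A$, the transition matrix of the uniform random walk. Since $G$ is undirected, $M$ is symmetric; since $G$ is $D$-regular, the all-ones vector $\mathbf{1}$ is an eigenvector of $M$ with eigenvalue $1$, and by the definition of an $(N,D,\lambda)$-expander every eigenvalue of $M$ on the subspace $\mathbf{1}^{\perp}$ has absolute value at most $\lambda$. Writing $\mathbf{1}_S,\mathbf{1}_T\in\{0,1\}^V$ for the indicator vectors of $S,T$, the key identity is
\[
\Pr_{u,v}\left[u\in S\wedge v\in T\right]\;=\;\frac1N\,\mathbf{1}_S^{\top}M\,\mathbf{1}_T,
\]
which follows by conditioning on $u$: given $u$, the neighbour $v$ is uniform among the $D$ neighbours of $u$, so $\Pr_v[v\in T\mid u]=\tfrac1D(A\mathbf{1}_T)_u$, and averaging over a uniform $u\in V$ yields the right-hand side.

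Next I would decompose $\mathbf{1}_S=\sigma\mathbf{1}+\mathbf{1}_S^{\perp}$ and $\mathbf{1}_T=\tau\mathbf{1}+\mathbf{1}_T^{\perp}$ with $\mathbf{1}_S^{\perp},\mathbf{1}_T^{\perp}\perp\mathbf{1}$; the coefficients are exactly $\sigma$ and $\tau$ because $\langle\mathbf{1}_S,\mathbf{1}\rangle/\|\mathbf{1}\|^2=|S|/N=\sigma$ and similarly for $T$. Since $M\mathbf{1}=\mathbf{1}$ and $M$ maps $\mathbf{1}^{\perp}$ into $\mathbf{1}^{\perp}$, the cross terms vanish, giving $\mathbf{1}_S^{\top}M\mathbf{1}_T=\sigma\tau\,\mathbf{1}^{\top}M\mathbf{1}+(\mathbf{1}_S^{\perp})^{\top}M\mathbf{1}_T^{\perp}=\sigma\tau N+(\mathbf{1}_S^{\perp})^{\top}M\mathbf{1}_T^{\perp}$. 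Hence the quantity to bound is $\tfrac1N|(\mathbf{1}_S^{\perp})^{\top}M\mathbf{1}_T^{\perp}|$. By Cauchy--Schwarz together with the spectral bound on $\mathbf{1}^{\perp}$ (here using symmetry of $M$ to identify its operator norm on $\mathbf{1}^\perp$ with its largest eigenvalue there, which is at most $\lambda$), this is at most $\tfrac{\lambda}{N}\,\|\mathbf{1}_S^{\perp}\|\,\|\mathbf{1}_T^{\perp}\|$. Finally, $\|\mathbf{1}_S^{\perp}\|^2=\|\mathbf{1}_S\|^2-\sigma^2N=\sigma(1-\sigma)N\le\sigma N$ (and likewise for $T$), so $\tfrac1N|(\mathbf{1}_S^{\perp})^{\top}M\mathbf{1}_T^{\perp}|\le\lambda\sqrt{\sigma\tau}\le\lambda$, which is precisely the claimed bound.

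There is essentially no serious obstacle here; the argument is a textbook computation, and one could even just cite~\cite{HLW} or~\cite{Alon-Chung-EML}. The only points that require a little care are bookkeeping: matching the normalization of $\lambda$ in the statement (second-largest eigenvalue of $A$ in absolute value, scaled by $D$), invoking undirectedness of $G$ so that $M$ is symmetric and the spectral bound on $\mathbf{1}^{\perp}$ is valid, and correctly identifying the probability with the bilinear form $\tfrac1N\mathbf{1}_S^{\top}M\mathbf{1}_T$ (equivalently $e(S,T)/(ND)$) rather than some off-by-a-factor variant. Since the self-contained proof is this short, I would include it rather than merely cite it.
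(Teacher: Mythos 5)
Your proof is correct, and it is the standard spectral argument: the paper itself does not prove this lemma but simply cites Alon and Chung, so your self-contained derivation (normalized adjacency matrix, decomposition of the indicators along $\mathbf{1}$ and $\mathbf{1}^{\perp}$, Cauchy--Schwarz with the spectral bound on $\mathbf{1}^{\perp}$) is exactly the intended justification. All the bookkeeping checks out, including the identification $\Pr_{u,v}[u\in S\wedge v\in T]=\tfrac{1}{N}\mathbf{1}_S^{\top}M\mathbf{1}_T$ and the final bound $\lambda\sqrt{\sigma\tau}\le\lambda$, which is in fact slightly stronger than what the lemma asserts.
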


We apply the above lemma to a combinatorially defined graph called the \emph{Johnson graph}. We define the undirected graph $J(2k,k,d)$ with vertex set $V = \{0,1\}^{2k}_k$, where two points ${\bf a}, {\bf b} \in V$ are connected exactly when their Hamming distance $\Delta({\bf a}, {\bf b}) = 2d$~\footnote{Not to be confused with the degree of polynomials $d$ used in the other sections of the paper.} (note that this number is always even). The main technical lemma is the following.

\begin{thmbox}
\begin{lemma}[Eigenvalues of the Johnson graph]
    \label{lem:Johnson-eigenvalue}
    Let $J(2k,k,d)$ be defined as above and assume that $d$ is such that $|d - k/2|\leq C\sqrt{k\log k}$ for $C > 0$ being an absolute constant. Then $J(2k,k,d)$ is a $(\binom{2k}{k}, \binom{k}{d}^2, \frac{A}{k^\eta})$-expander for some absolute constants $A > 0$ and $\eta \in (0,1)$ depending only on $C.$
\end{lemma}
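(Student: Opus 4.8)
The plan is to use the known spectral decomposition of the Johnson association scheme on the slice $\{0,1\}^{2k}_{k}$. The adjacency matrix of $J(2k,k,d)$ commutes with the action of $S_{2k}$, under which $\mathbb{R}^{\{0,1\}^{2k}_{k}}$ decomposes into irreducibles $V_0,\dots,V_k$ with $\dim V_\ell=\binom{2k}{\ell}-\binom{2k}{\ell-1}$; hence this matrix acts on each $V_\ell$ by a scalar $E_d(\ell)$, an Eberlein polynomial, which for our parameters is
\[
E_d(\ell)=\sum_{h=0}^{d}(-1)^h\binom{\ell}{h}\binom{k-\ell}{d-h}^2=[z^dw^d]\,(1+z)^{k-\ell}(1+w)^{k-\ell}(1-zw)^\ell ,
\]
with $E_d(0)=\binom kd^2$ equal to the degree. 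So it suffices to show $\max_{1\le\ell\le k}|E_d(\ell)|\le \tfrac{A}{k^\eta}\binom kd^2$ for some constant $A$ and some $\eta\in(0,1/2)$, both allowed to depend on $C$.

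The endpoints of the range of $\ell$ are routine. For $\ell=1$, using $\binom{k-1}{d}=\tfrac{k-d}{k}\binom kd$ and $\binom{k-1}{d-1}=\tfrac dk\binom kd$ gives $E_d(1)=\tfrac{k-2d}{k}\binom kd^2$, so by hypothesis $|E_d(1)|\le \tfrac{2|d-k/2|}{k}\binom kd^2\le \tfrac{2C\sqrt{\log k}}{\sqrt k}\binom kd^2$; this is (morally) the largest eigenvalue and is what pins down $\eta<1/2$. For $\ell$ beyond a threshold $\ell_0=\Theta_C(\log k)$, Cauchy's estimate with $|z|=|w|=1$ gives $|E_d(\ell)|\le 2^{2k-\ell}$, while Stirling together with $|d-k/2|\le C\sqrt{k\log k}$ gives $\binom kd\ge \binom{k}{k/2}\,k^{-O(C^2)}\ge 2^k\,k^{-O(C^2)}$; hence $|E_d(\ell)|/\binom kd^2\le 2^{-\ell}\,k^{O(C^2)}\le k^{-1}$ once $\ell\ge\ell_0$.

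The heart of the argument, and the main obstacle, is the intermediate range $2\le\ell<\ell_0$, where the Cauchy bound is too weak. Writing $d=\tfrac k2+s$ with $|s|\le C\sqrt{k\log k}$ and factoring out $\binom{k-\ell}{d}^2$,
\[
E_d(\ell)=\binom{k-\ell}{d}^2\sum_{h=0}^{\ell}(-1)^h\binom{\ell}{h}\Big(\tfrac{\binom{k-\ell}{d-h}}{\binom{k-\ell}{d}}\Big)^2 ,
\]
where $\tfrac{\binom{k-\ell}{d-h}}{\binom{k-\ell}{d}}=\prod_{i=0}^{h-1}\tfrac{d-i}{k-\ell-d+1+i}=1+\tfrac{2h(2s+\ell-h)}{k}+R_h$ with $R_h=O\big(h^2(\ell+|s|)^2/k^2\big)$. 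Since $\sum_{h}(-1)^h\binom{\ell}{h}p(h)=0$ for every polynomial $p$ of degree $<\ell$, the constant term vanishes (for all $\ell\ge 1$) and the $O(1/k)$ term vanishes too (for $\ell\ge 3$), leaving $|E_d(\ell)|\le \binom kd^2\cdot 2^{\ell}\cdot O\big(\ell^2(\ell+|s|)^2/k^2\big)$ for $\ell\ge 3$ (with an extra $O(\binom kd^2/k)$ when $\ell=2$); for $\ell<\eta\log_2 k$ this is $\le A k^{-\eta}$. To bridge the remaining gap $\eta\log_2 k\le\ell\le\ell_0$, one observes $\binom{k-\ell}{d}^2/\binom kd^2=4^{-\ell}(1+o(1))$ and $\big|\sum_h(-1)^h\binom\ell h(\cdots)^2\big|\le 2^\ell(1+o(1))$ throughout that range, so $|E_d(\ell)|\le 2^{-\ell}(1+o(1))\binom kd^2\le A k^{-\eta}$ there as well. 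The technically delicate point is the uniform control of the error terms $R_h$ (equivalently, of the Pochhammer ratios $\prod_{i<h}\tfrac{d-i}{k-\ell-d+1+i}$) across all $h\le\ell<\ell_0\ll\sqrt k$; an alternative route is to exploit the three‑term recurrence satisfied by $\ell\mapsto E_d(\ell)$ inside the Johnson scheme and bound $E_d(\ell)$ inductively from $E_d(\ell-1),E_d(\ell-2)$.

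Combining the ranges yields $\max_{1\le\ell\le k}|E_d(\ell)|\le \tfrac{A}{k^\eta}\binom kd^2$ for any fixed $\eta\in(0,1/2)$ and $A=A(C)$ large enough (the finitely many small $k$ being absorbed since then $A/k^\eta\ge1$), which is exactly the claim. This bound then drives the proof of \Cref{lem:sampling}: a second‑moment computation expresses $\mathrm{Var}(|S\cap\mathsf C|)$ as a sum over $d$ of the quantities $\langle\mathbf 1_S,A_d\mathbf 1_S\rangle$ (with $A_d$ the adjacency matrix of $J(2k,k,d)$), weighted by the hypergeometric‑type probabilities that a uniformly random pairing of $[2k]$ simultaneously ``respects'' two given weight‑$k$ sets; the Expander Mixing Lemma (\Cref{lem:EML}) applied with the eigenvalue bound above controls each term, and the contribution of $|d-k/2|>C\sqrt{k\log k}$ is negligible by standard hypergeometric tail estimates.
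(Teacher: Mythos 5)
Your proof is correct, and it shares the paper's skeleton: both start from the explicit eigenvalues of the Johnson scheme (your Eberlein form $E_d(\ell)=\sum_h(-1)^h\binom{\ell}{h}\binom{k-\ell}{d-h}^2$ and the paper's $\beta_s$ taken from Karloff--Zwick/Delsarte are the same numbers --- e.g.\ both give $\frac{k-2d}{k}\binom{k}{d}^2$ at index $1$), and both split into a small-index range killed by the cancellation of alternating binomial sums against the leading approximation of the binomial ratios, plus a separate crude argument for large indices. The real divergence is in the large-index half. The paper exploits the eigenvalue \emph{multiplicities} $m_s=\binom{2k}{s}-\binom{2k}{s-1}$: since $\sum_s \beta_s^2 m_s$ equals the squared Frobenius norm, i.e.\ the number of edges $\binom{2k}{k}\binom{k}{d}^2$, any eigenvalue whose multiplicity is at least $\frac12\binom{2k}{C_1}$ is automatically at most $\beta_0/k$ once $C_1\gg C^2$. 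This takes over already at a \emph{constant} threshold $C_1$, so the paper's small-index analysis needs only the first-order estimate $\binom{k-r}{k-d-r}=2^{-r}\binom{k}{k-d}(1\pm O(sk^{-1/4}))$ and stops at $s\le C_1$. Your Cauchy estimate $|E_d(\ell)|\le 2^{2k-\ell}$ only beats $\binom{k}{d}^2\ge 4^k k^{-O(C^2)}$ once $\ell=\Omega_C(\log k)$, which is exactly why you are pushed into the second-order expansion of the Pochhammer ratios and the bridge range $\eta\log_2 k\le \ell\le \ell_0$. I checked that this does go through: the error terms are $O\bigl(\ell^2(\ell+|s|)^2/k^2\bigr)=O((\log k)^3/k)$ uniformly over $h\le\ell\le\ell_0$, and in the bridge range the ratios $\binom{k-\ell}{d-h}/\binom{k-\ell}{d}$ are $1+o(1)$ uniformly so the trivial bound $4^{-\ell}\cdot 2^{\ell}(1+o(1))\le O(k^{-\eta})$ suffices. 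So the argument is sound, but the multiplicity/trace trick would let you discard everything past a constant index and skip that delicate bookkeeping entirely; conversely, your generating-function route is self-contained and does not need the multiplicity formula.
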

\end{thmbox}

We start by proving the sampling lemma \Cref{lem:sampling} assuming \Cref{lem:Johnson-eigenvalue}.

\begin{proof}[Proof of \Cref{lem:sampling}]
    Throughout, let $\sigma$ denote the density of $S$ inside $\{0,1\}^{2k}_k.$

    Recall that $\mathsf{C} = C_{0^{2k},h}$ contains a unique point $x({\bf y})\in \{0,1\}^{2k}$ for each point ${\bf y}\in \{0,1\}^k.$ Define an indicator random variable $Z_{{\bf y}}$ that is $1$ exactly when $x({\bf y})\in S.$ We note that for each ${\bf y}\in \{0,1\}^k_{k/2},$ the point $x({\bf y})$ is a uniformly random element of $\{0,1\}^{2k}_k,$ implying that $\mathbb{E}[Z_{\bf y}] = \sigma$ for each ${\bf y}\in \{0,1\}^k_{k/2}.$

    Let $Z := |S\cap \mathsf{C}| = \sum_{{\bf y}\in \{0,1\}^{k}_{k/2}} Z_{\bf y}.$ We thus have $\mathbb{E}[Z] = \binom{k}{k/2}\cdot \sigma.$

    We prove the lemma via a second moment estimate. The variance of $Z$ can be bounded as follows.
    \begin{align*}
    \mathrm{Var}[Z] \, = \, \sum_{\mathbf{u}, \mathbf{v} \in \Boo^{k}_{k/2}} \mathrm{Cov}[Z_{\mathbf{u}}, Z_{\mathbf{v}}]
\end{align*}
We divide the above sum into two parts depending on $\Delta(\mathbf{u}, \mathbf{v})$. Let $I = [k/2 - C\sqrt{k\log k}, k/2 + C\sqrt{k\log k}]$ for a suitably large absolute constant (to be chosen below).
\begin{align*}
    \mathrm{Var}[Z]  &= \,  \sum_{\substack{\mathbf{u}, \mathbf{v} \in \Boo^{k}_{k/2} \\ \Delta(\mathbf{u}, \mathbf{v}) \in I}} \mathrm{Cov}[Z_{\mathbf{u}}, Z_{\mathbf{v}}] \, + \, \sum_{\substack{\mathbf{u}, \mathbf{v} \in \Boo^{k}_{k/2} \\ \Delta(\mathbf{u}, \mathbf{v}) \notin I}} \mathrm{Cov}[Z_{\mathbf{u}}, Z_{\mathbf{v}}] \\
    &\leq \sum_{\substack{\mathbf{u}, \mathbf{v} \in \Boo^{k}_{k/2} \\ \Delta(\mathbf{u}, \mathbf{v}) \in I}} \mathrm{Cov}[Z_{\mathbf{u}}, Z_{\mathbf{v}}] \, + \, \binom{k}{k/2}^2\cdot \frac{1}{k^{\Omega(C)}},
\end{align*}
where the last inequality follows from standard concentration bounds for sampling without replacement \cite{Hoeffding}. Now it remains to upper bound $\mathrm{Cov}[Z_{\mathbf{u}}, Z_{\mathbf{v}}]$ for a pair $(\mathbf{u}, \mathbf{v}) \in \Boo^{k}_{k/2} \times \Boo^{k}_{k/2}$ such that $\Delta(\mathbf{u}, \mathbf{v}) \in I$. Fix such a pair $(\mathbf{u}, \mathbf{v})$ and note that $\Delta(\mathbf{u}, \mathbf{v}) = 2d$ for some integer $d$ such that $|d-(k/4)|\leq (C/2)\cdot\sqrt{k\log k}.$ Then we have,
\begin{align*}
    \mathrm{Cov}[Z_{\mathbf{u}}, Z_{\mathbf{v}}] \, = \, \mathbb{E}[Z_{\mathbf{u}}  Z_{\mathbf{v}}] - \mathbb{E}[Z_{\mathbf{u}}] \mathbb{E}[Z_{\mathbf{v}}] \, = \, \Pr[x(\mathbf{u}) \in S \wedge x(\mathbf{v}) \in S] - \sigma^2.
\end{align*}
Moreover, the distribution of the random point $x(\mathbf{v})$ given $x(\mathbf{u)}$ can be checked to be the uniform distribution over the neighbours of $x(\mathbf{u})$ in the graph $J(2k,k,2d).$\footnote{More precisely, $x(\mathbf{u})$ and $x(\mathbf{v})$ differ exactly at the co-ordinates given by $h^{-1}(D)$ where $D\subseteq [k]$ is the set of $2d$ co-ordinates where $\mathbf{u}$ and $\mathbf{v}$ differ. Given $x(\mathbf{u})$, this is a uniformly random set $D'\subseteq [2k]$ of size $4d$ subject to the constraint that symmetric difference of $D'$ and the support of $x(\mathbf{u})$ has size exactly $4d$.} We can thus apply \Cref{lem:Johnson-eigenvalue} to see that 
\[
\mathrm{Cov}[Z_{\mathbf{u}}, Z_{\mathbf{v}}] \leq \frac{A}{k^{\eta}}
\]
for some absolute constants $A> 0$ and $\eta \in (0,1)$ depending on $C.$ Continuing the variance computation above, we get
\[
\mathrm{Var}[Z] \; \leq \;  \frac{\binom{k}{k/2}^{2} \cdot A }{k^\eta} \, + \, \frac{\binom{k}{k/2}^2}{k^{\Omega(C)}} \; \leq \; \frac{\bigO(1)}{k^\eta}\cdot \binom{k}{k/2}^2
\]
for a large enough choice of the constant $C$ (and the corresponding $\eta$). Now, by Chebyshev's inequality, we have
\[
\Pr_{\mathsf{C}}\left[\left|Z - \sigma\cdot \binom{k}{k/2}\right|\geq \frac{1}{k^{\eta/4}}\cdot \binom{k}{k/2} \right] \; \leq \; \frac{\mathrm{Var}[Z]}{\frac{1}{k^{\eta/2}}\cdot \binom{k}{k/2}^2} \; \leq \; \frac{\bigO(1)}{k^{\eta/2}}
\]
which implies the statement of \Cref{lem:sampling}.
\end{proof}

It remains to prove \Cref{lem:Johnson-eigenvalue}, which we do below.

\begin{proof}[Proof of \Cref{lem:Johnson-eigenvalue}]
    We use the known exact expressions for the eigenvalues of the Johnson graphs~\cite{Delsarte}. In particular, following \cite[Corollary 2]{Karloff-GW-analysis}, we know that the eigenvalues of the adjacency matrix of $J(2k,k,d)$ are $\beta_{0},\ldots,\beta_{k}$, where for every $0 \leq s \leq k$,\footnote{The bound in~\cite{Karloff-GW-analysis} looks slightly different than what is stated here, since the parameter $b$ in that statement is the size of the intersection of the supports of the two points, which implies that $b = k-d$.}
    \begin{equation}
        \label{eq:johnson-evals}
        \beta_{s} = \sum_{r = 0}^{s} (-1)^{s-r} \binom{s}{r} \binom{k-r}{k-d-r} \binom{k-s+r}{d-s+r}
    \end{equation}
    Furthermore, for $0 \leq s \leq k$, the eigenvalue $\beta_{s}$ has multiplicity $\binom{2k}{s} - \binom{2k}{s-1}$.

    Clearly, $\beta_0 = \binom{k}{d}^2$, which is equal to the degree of the graph. It suffices to show that the magnitude of the remaining eigenvalues are all at most $\frac{\bigO(1)}{k^\eta}\cdot \beta_0,$ where $\eta$ is as in the statement of the \Cref{lem:Johnson-eigenvalue}. By assuming that the constant factor in the $\bigO(1)$ term is large enough, we may assume that $k$ is greater than a large enough absolute constant.

    We split the analysis of the remaining eigenvalues into two regimes, the first being when $1 \leq s \leq C_{1}$ for a suitably large constant $C_1$ depending on $C$ (chosen below), and the second when $s > C_1.$

    \paragraph{Case 1: $1 \leq s \leq C_1$.} In this case, we analyze $\beta_s$ using \Cref{eq:johnson-evals}. We need the following simple fact about binomial coefficients, which is easily verified from the standard definition using factorials.

    \begin{fact}
        \label{fac:binomials}
        Fix any non-negative integers $r\leq \ell\leq k$. Then
        \[
        \binom{k}{\ell}\cdot\left(\frac{\ell-r}{k-r}\right)^{r}  \leq \binom{k-r}{\ell-r} \leq \binom{k}{\ell}\cdot \left(\frac{\ell}{k}\right)^r.
        \]
    \end{fact}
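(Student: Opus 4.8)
The plan is to compute the ratio $\binom{k-r}{\ell-r}/\binom{k}{\ell}$ directly from the factorial definition and then bound each factor of the resulting product. Expanding and cancelling the common factor $(k-\ell)!$,
\[
\frac{\binom{k-r}{\ell-r}}{\binom{k}{\ell}} \;=\; \frac{(k-r)!\,\ell!}{k!\,(\ell-r)!} \;=\; \frac{\ell(\ell-1)\cdots(\ell-r+1)}{k(k-1)\cdots(k-r+1)} \;=\; \prod_{i=0}^{r-1} \frac{\ell-i}{k-i}.
\]
So the whole statement reduces to bounding each of the $r$ factors $\frac{\ell-i}{k-i}$, for $i \in \{0,\dots,r-1\}$, from above by $\frac{\ell}{k}$ and from below by $\frac{\ell-r}{k-r}$.

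Next I would observe that the map $i \mapsto \frac{\ell-i}{k-i}$ is non-increasing on the relevant range: treating $i$ as a real variable, its derivative equals $\frac{\ell-k}{(k-i)^2} \le 0$ since $\ell \le k$ (equivalently, without calculus, $(\ell-i-1)(k-i)-(\ell-i)(k-i-1) = \ell-k \le 0$, giving $\frac{\ell-i-1}{k-i-1} \le \frac{\ell-i}{k-i}$). Consequently $\frac{\ell-r}{k-r} \le \frac{\ell-i}{k-i} \le \frac{\ell}{k}$ for every $i \in \{0,\dots,r-1\}$, and taking the product over all $r$ factors yields $\left(\frac{\ell-r}{k-r}\right)^r \le \binom{k-r}{\ell-r}/\binom{k}{\ell} \le \left(\frac{\ell}{k}\right)^r$. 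Multiplying through by $\binom{k}{\ell}$ gives the claimed two-sided inequality.

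There is essentially no obstacle here; the only minor care needed is with the degenerate cases, all of which are benign: $r=0$ makes the product empty so both bounds become equalities; $\ell = k$ makes every factor equal to $1$; and $\ell = r$ turns the lower bound into $0 \le \binom{k-r}{0}$, which holds trivially. Thus the fact follows purely by elementary manipulation, and no deeper input is required.
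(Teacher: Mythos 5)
Your proposal is correct and matches the paper's intent: the paper states this fact without proof, noting only that it "is easily verified from the standard definition using factorials," which is precisely the ratio-of-factorials computation and termwise bounding you carry out.
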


    In particular, we note that for $\ell \in \{d,k-d\}$ and $r\leq s \leq C_1$, we have 
    \[
     \frac{1}{2}\left(1-\frac{1}{k^{1/4}}\right) \leq \frac{\ell-r}{k-r}\leq \frac{\ell}{k} \leq \frac{1}{2}\cdot \left(1+\frac{1}{k^{1/4}}\right),
    \]
    and thus for large enough $k$, we have by \Cref{fac:binomials}
    \[
    \binom{k-r}{k-d-r}= \binom{k}{k-d}\cdot \frac{1}{2^r}\cdot (1\pm \gamma) \ \ \text{and} \ \ \binom{k-s+r}{d-s+r} = \binom{k}{d}\cdot \frac{1}{2^{s-r}}\cdot (1\pm \gamma)
    \]
    for $\gamma= \bigO(s\cdot k^{-1/4})$. Here the notation $a = b\cdot (1\pm \gamma)$ denotes that $a\in [b\cdot (1-\gamma), b\cdot (1+\gamma)].$ Plugging this into \Cref{eq:johnson-evals} above, we see that
    \begin{align*}
        |\beta_s| &= \left|\sum_{r=0}^s (-1)^{s-r}\binom{s}{r} \cdot \frac{1}{2^r}\cdot \binom{k}{k-d}\cdot (1\pm \gamma)\cdot \frac{1}{2^{s-r}}\cdot\binom{k}{d}\cdot (1\pm \gamma)\right|\\
        &\leq \binom{k}{d}^2 \cdot\frac{1}{2^s}\cdot \left|\sum_{r=0}^s (-1)^{s-r}\binom{s}{r}\cdot (1\pm \bigO(\gamma))\right| \\
        &\leq \binom{k}{d}^2 \cdot\frac{1}{2^s}\cdot \left(\left|\sum_{r=0}^s (-1)^{s-r}\binom{s}{r}\right| + \bigO(\gamma)\cdot \sum_{r=0}^s \binom{s}{r}
        \right) \leq \binom{k}{d}^2 \cdot\frac{1}{2^s}\cdot \bigO(\gamma)\cdot 2^s =  \binom{k}{d}^2\cdot \bigO(\gamma).
    \end{align*}
    We have thus shown that $|\beta_s|\leq |\beta_0|\cdot \frac{\bigO(1)}{k^{1/4}}$, proving the required bound in this case.

    \paragraph{Case 2: $s > C_1$.} In this case, we use an argument from \cite{BCIM}, which utilizes just the multiplicity $m_s = \binom{2k}{s} - \binom{2k}{s-1}$ of the eigenvalue $\beta_s$. We note that $m_s \geq \frac{1}{2}\cdot \binom{2k}{C_1}$ as long as $k$ is large enough.

    Let $M$ denote the adjacency matrix of the Johnson graph $J(2k,k,d).$ We know that the squared Frobenius norm of $M$ (i.e. the sum of the squares of the entries of $M$) is the sum of the squares of its eigenvalues and hence at least $\beta_s^2 m_s.$ On the other hand, since $M$ is an adjacency matrix, this quantity is simply the number of edges in the graph $J(2k,k,d)$. In particular, we have
    \[
    \beta_s^2 m_s \leq \binom{2k}{k}\cdot \binom{k}{d}^2,
    \]
    yielding the bound
    \[
    |\beta_s| \leq \frac{\bigO(1)}{\binom{k}{C_1}}\sqrt{\binom{2k}{k}}\cdot \binom{k}{d}\leq \frac{\bigO(2^k)}{\binom{k}{C_1}}\cdot \binom{k}{d}.
    \]
    Recall that $|d-k/2|\leq C\sqrt{k\log k}$, implying that $\binom{k}{d} \geq 2^k\cdot k^{-\bigO(C^2)}$ by standard binomial estimates.\footnote{for example, when $d \leq k/2$, we can use \cite[Lemma 4]{KY} with $p=1/2$ to lower bound $\frac{1}{2^k}\sum_{i\leq d}\binom{k}{i}$ which itself is at most $\frac{\bigO(k)}{2^k}\cdot \binom{k}{d}$} Thus, for $C_1$ large enough in comparison with $C$ and for $k$ large enough, we see that $|\beta_s|\leq \frac{1}{k}\cdot \binom{k}{d}^2  = \frac{\beta_0}{k}$. This finishes the analysis of Case 2 and hence finishes the proof of the lemma.
\end{proof}
 
\subsubsection{DLSZ lemma over Hamming slices}
\label{sec:DLSZ-slice}

\begin{thmbox}
\begin{lemma}[A DLSZ lemma over Hamming slices]
    \label{lem:DLSZ-slice}
    The following holds for any non-negative integers integers $n, k, d$ where $k\leq n$ and $d\leq \min\{k,n-k\}.$ Let $R\in \mathcal{P}_d(\{0,1\}^n, G)$ be a polynomial such that $R$ does not vanish at some point in $\{0,1\}^n_k$. Then 
    \[
    |\{{\bf a}\in \{0,1\}^n_k\ |\ R({\bf a}) \neq 0\}| \geq \binom{n-2d}{k-d}.
    \]
\end{lemma}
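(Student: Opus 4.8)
The plan is to prove this by induction on $d$, keeping $n,k$ universally quantified, and reducing the parameters $(n,k,d)$ to $(n-2,k-1,d-1)$ through a ``discrete second difference'' of $R$, much in the spirit of the usual proof of the DLSZ lemma. The base case $d=0$ is immediate: then $R$ is a constant of $G$, and since it is nonzero at some point of $\{0,1\}^n_k$ it is nonzero everywhere on $\{0,1\}^n_k$, i.e.\ at $\binom{n}{k}=\binom{n-2\cdot 0}{k-0}$ points.

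For the inductive step $d\ge 1$ we have $k\ge d\ge 1$ and $n-k\ge d\ge 1$, so $n\ge 2$. For each pair $j\neq j'\in[n]$ define
\[
D^{(j,j')}\;:=\;R|_{x_j=1,\,x_{j'}=0}\;-\;R|_{x_j=0,\,x_{j'}=1},
\]
a polynomial in the $n-2$ remaining coordinates; writing $R=\sum_{B}c_B\mathbf{x}^B$ one checks $D^{(j,j')}=\sum_{A\subseteq[n]\setminus\{j,j'\}}(c_{A\cup\{j\}}-c_{A\cup\{j'\}})\,\mathbf{x}^A$, so $D^{(j,j')}\in\mathcal{P}_{d-1}$. \emph{Case 1: some pair $(j,j')$ has $D^{(j,j')}$ not identically zero on the weight-$(k-1)$ slice of $\{0,1\}^{n-2}$.} Since $d-1\le\min\{k-1,\,(n-2)-(k-1)\}$ (which follows from $d\le\min\{k,n-k\}$), the induction hypothesis applies to $D^{(j,j')}$ and gives at least $\binom{(n-2)-2(d-1)}{(k-1)-(d-1)}=\binom{n-2d}{k-d}$ weight-$(k-1)$ points $\mathbf{c}$ with $D^{(j,j')}(\mathbf{c})\neq 0$. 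For each such $\mathbf{c}$, the two weight-$k$ points of $\{0,1\}^n$ that complete $\mathbf{c}$ by setting $(x_j,x_{j'})$ to $(1,0)$ and to $(0,1)$ cannot both be roots of $R$, since their $R$-values differ by $D^{(j,j')}(\mathbf{c})$; distinct $\mathbf{c}$'s give disjoint such pairs, so we obtain $\binom{n-2d}{k-d}$ points where $R\neq 0$.

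\emph{Case 2: for every pair $(j,j')$, $D^{(j,j')}$ vanishes on the whole weight-$(k-1)$ slice of the remaining coordinates.} Then for every $\mathbf{a}\in\{0,1\}^n_k$ with $a_j\neq a_{j'}$ one has $R(\mathbf{a})=R(\mathbf{a}^{(j\,j')})$ (and this is trivial when $a_j=a_{j'}$), so the restriction $\bar R$ of $R$ to $\{0,1\}^n_k$ is invariant under every transposition, hence under all of $S_n$ acting by permuting coordinates; as this action is transitive on $\{0,1\}^n_k$, $\bar R$ is constant. Since $R$ is nonzero at some point of the slice, this constant is nonzero, so $R$ is nonzero on all $\binom{n}{k}$ points, and $\binom{n}{k}\ge\binom{n-2d}{k-d}$ by the obvious injection of $(k-d)$-subsets of an $(n-2d)$-element set into $k$-subsets of $[n]$ (append a fixed set of $d$ new coordinates).

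The one genuinely delicate point — what I expect to be the main obstacle — is Case 2: a priori the second-difference polynomials $D^{(j,j')}$ could all vanish on the relevant sub-slice while being nonzero polynomials (e.g.\ when they are multiples of $\sum x_i - (k-1)$), so the induction hypothesis cannot be invoked on them; the resolution is the observation above that this degeneracy forces $S_n$-invariance, hence constancy, of $R$ on the slice. The rest is routine bookkeeping: confirming $\deg D^{(j,j')}\le d-1$, verifying the parameter inequality $d-1\le\min\{k-1,\,n-k-1\}$ that makes the induction hypothesis legal, checking that the points produced in Case 1 are pairwise distinct, and noting that the small edge cases ($d=0$, or $n=2,k=1,d=1$) are subsumed by the base case and the transitivity argument. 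This is exactly the ``modified DLSZ'' proof alluded to before the statement.
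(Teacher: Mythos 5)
Your proof is correct and is essentially the paper's argument: both induct on $d$ by passing to a degree-$(d-1)$ polynomial on the weight-$(k-1)$ slice of $\{0,1\}^{n-2}$ obtained by differencing $R$ across a transposition of two coordinates, and both dispose of the degenerate case by observing that $R$ is then a (nonzero) constant on the entire slice. The paper packages the difference as the coefficient of $x_1$ after eliminating $x_n$ via the substitution $x_n = k-\sum_{i<n}x_i$, which coincides with your $D^{(1,n)}$ on the slice, so the two presentations differ only cosmetically.
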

\end{thmbox}

\begin{proof}
    As is standard, we proceed by induction on $d.$ The base case $d = 0$ is trivial.

    Fix $d \geq 1.$ Let $R(x_1,\ldots, x_n)$ be a degree $d$ polynomial that does not vanish on all of $\{0,1\}^n_k.$ 
    
    We can assume that $d$ is \emph{strictly smaller} than $\min\{k,n-k\}$ by the following reasoning. If $d=\min\{k,n-k\},$ the claim reduces to showing that $R$ is non-zero at least one point of $\{0,1\}^n_k$, which trivially follows from the hypothesis. We can also assume that $R$ does not evaluate to the same non-zero value on all of $\{0,1\}^n_k$ since otherwise, the lemma is trivially true.
    
    Without loss of generality, we assume that $R({\bf p}) \neq R({\bf q})$ where ${\bf p}\in \{0,1\}^n_k$ is the point where the first $k$ co-ordinates are $1$ and ${\bf q}$ is obtained from ${\bf p}$ by flipping the first and last co-ordinates to $0$ and $1$ respectively (note that this makes sense as $1\leq d \leq \min\{k,n-k\}$).
    
    By replacing $x_n$ by the linear polynomial $k - \sum_{i < n} x_i$, we get a degree-$d$ polynomial $R'(x_1,\ldots, x_{n-1})$ not involving the variable $x_n$ but that nevertheless evaluates to the same value as $R$ at every point in $\{0,1\}^n_k.$ Moreover, we can assume that $R'$ is multilinear since the inputs are Boolean.

    We write 
    \[
    R' = x_1 P(x_2,\ldots, x_{n-1}) + Q(x_2,\ldots, x_{n-1}).
    \]
    where $P$ has degree at most $d-1.$ Note that $P({\bf p}')\neq 0$ where ${\bf p}'\in \{0,1\}^{n-2}_{k-1}$ is the point obtained by restricting ${\bf p}$ to its middle $n-2$ co-ordinates, since otherwise $R'({\bf p}) = R'({\bf q})$ in contradiction to our assumptions above. 
    
    Thus, we see that $P$ does not vanish on all of $\{0,1\}^{n-2}_{k-1}$. Applying the inductive hypothesis, we see that for
    \[
    S' = \{{\bf a}' \in \{0,1\}^{n-2}_{k-1}\ |\ P({\bf a}')\neq 0\},
    \]
    we have $|S'|\geq \binom{n-2-2(d-1)}{k-1-(d-1)} = \binom{n-2d}{k-d}.$

    For each ${\bf a}'\in S'$, we note that we get at least one distinct point ${\bf a}\in \{0,1\}^n_k$ such that $R'({\bf a})$ is non-zero. This is because setting the variables $x_2,\ldots, x_{n-1}$ according to ${\bf a}'$ restricts the polynomial $R'$ to a non-zero linear polynomial, which must be non-zero at least one of the extensions of ${\bf a}'$ to a point ${\bf a}\in \{0,1\}^n_k.$

    This shows that $R'$ (and hence $R$) is non-zero at at least $\binom{n-2d}{k-d}$ many points in $\{0,1\}^n_k,$ proving the inductive claim.
\end{proof}

\subsubsection{A useful corollary}
\label{sec:usefulcorollary}

We will use \Cref{lem:sampling} and \Cref{lem:DLSZ-slice} in the form of the following corollary.

\begin{corollary}
    \label{cor:samplingDLSZ}
    Fix a degree parameter $d.$ Let $k$ be an even positive integer such that $k\geq d,$ and $R\in \mathcal{P}_d(\{0,1\}^{2k},G)$ be such that $R$ computes a non-zero function on the slice $\{0,1\}^{2k}_k$. Let $h:[2k]\rightarrow [k]$ be a uniformly random $2$-to-$1$ function and define $\mathsf{C} = C_{0^{2k},h}$ (the notation is from \Cref{defn:random-embedding}). Then if $R|_{\mathsf{C}}$ denotes the natural restriction of $R$ to the subcube $\mathsf{C}$ as a polynomial in $k$ variables, we have
    \[
    \Pr_{\mathsf{C}}[{\text{$R|_{\mathsf{C}}$ vanishes on all of $\{0,1\}^k_{k/2}$}}] \leq \bigO_d\left(\frac{1}{k^{\eta}}\right)
    \]
    for some absolute constant $\eta > 0.$
\end{corollary}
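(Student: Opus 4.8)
The plan is to combine the two main technical lemmas of this subsection, \Cref{lem:sampling} and \Cref{lem:DLSZ-slice}, essentially directly. First I would apply \Cref{lem:DLSZ-slice} to the polynomial $R\in\mathcal{P}_d(\{0,1\}^{2k},G)$ with ambient dimension $n=2k$ and slice weight $k$: since $R$ computes a non-zero function on $\{0,1\}^{2k}_k$ and $d\leq k=\min\{k,2k-k\}$, the lemma tells us that the set $S:=\{{\bf a}\in\{0,1\}^{2k}_k\ |\ R({\bf a})\neq 0\}$ has size at least $\binom{2k-2d}{k-d}$. A standard binomial estimate shows that $\binom{2k-2d}{k-d}/\binom{2k}{k}$ is bounded below by an absolute constant depending only on $d$ (indeed, $\binom{2k-2d}{k-d}=\binom{2k}{k}\cdot\prod_{j=0}^{d-1}\frac{(k-j)^2}{(2k-j)(2k-2d+1+j)}\geq \binom{2k}{k}\cdot 4^{-d}\cdot(1-o_k(1))$), so the density $\sigma:=|S|/\binom{2k}{k}$ satisfies $\sigma\geq c_d$ for some constant $c_d=\Omega(4^{-d})>0$, at least once $k$ is larger than some constant depending on $d$ (for smaller $k$ the corollary is vacuous by taking the hidden constant large enough).

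Next I would invoke \Cref{lem:sampling} for this set $S\subseteq\{0,1\}^{2k}_k$ and the random subcube $\mathsf{C}=C_{0^{2k},h}$ with $h$ a uniformly random $2$-to-$1$ map $[2k]\to[k]$ — this is exactly the distribution on subcubes considered in \Cref{lem:sampling}. The lemma gives an absolute constant $\eta>0$ such that, except with probability $O(1/k^\eta)$, we have $\left|\,|S\cap\mathsf{C}|/\binom{k}{k/2}-\sigma\,\right|<1/k^\eta$. In particular, whenever $k$ is large enough that $1/k^\eta < c_d$ (which again we may assume since otherwise the statement is vacuous), on this good event $|S\cap\mathsf{C}|/\binom{k}{k/2} \geq \sigma - 1/k^\eta > 0$, so $S\cap\mathsf{C}\neq\emptyset$. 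By definition of $S$, $S\cap\mathsf{C}\neq\emptyset$ means there is a point of $\{0,1\}^{2k}_k$ lying in the subcube $\mathsf{C}$ at which $R$ is non-zero. The final step is the bookkeeping observation that a point of $\mathsf{C}=C_{0^{2k},h}$ lies on the slice $\{0,1\}^{2k}_k$ precisely when the corresponding ${\bf y}\in\{0,1\}^k$ has Hamming weight $k/2$ (since $h$ is $2$-to-$1$, the image $x({\bf y})$ has weight $2|{\bf y}|$), and $R(x({\bf y}))=R|_{\mathsf{C}}({\bf y})$ by the definition of restriction in \Cref{defn:random-embedding}. Hence on the good event there exists ${\bf y}\in\{0,1\}^k_{k/2}$ with $R|_{\mathsf{C}}({\bf y})\neq 0$, i.e. $R|_{\mathsf{C}}$ does \emph{not} vanish on all of $\{0,1\}^k_{k/2}$. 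Taking the complement, $\Pr_{\mathsf{C}}[R|_{\mathsf{C}}\text{ vanishes on all of }\{0,1\}^k_{k/2}]\leq O(1/k^\eta)$, where the hidden constant may absorb the dependence on $d$ coming from the threshold on $k$.

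I do not expect any genuine obstacle here: the two lemmas were precisely engineered to slot together in this way, and the only things to be careful about are (i) checking the hypotheses of \Cref{lem:DLSZ-slice} hold (the degree constraint $d\leq\min\{k,2k-k\}=k$ is given), (ii) the constant lower bound on $\binom{2k-2d}{k-d}/\binom{2k}{k}$, which is a routine ratio-of-binomials computation, and (iii) handling small $k$ by noting the conclusion is trivially satisfiable by enlarging the implicit constant in $O_d(1/k^\eta)$. The only mild subtlety worth a sentence in the writeup is the identification of "${\bf y}$ of weight $k/2$" with "$x({\bf y})$ on the middle slice", which follows immediately because $h$ being exactly $2$-to-$1$ forces $|x({\bf y})|=2|{\bf y}|$; this is where the $2$-to-$1$ hypothesis (as opposed to an arbitrary hash) is used.
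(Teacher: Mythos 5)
Your proposal is correct and follows essentially the same route as the paper's proof: define $S$ as the non-vanishing set on the slice, lower-bound its density via \Cref{lem:DLSZ-slice} (the paper writes $\binom{2k-2d}{k-d} = \binom{2k}{k}/2^{\bigO(d)}$, matching your ratio estimate), and then apply \Cref{lem:sampling} to conclude $S\cap\mathsf{C}\neq\emptyset$ with probability $1-\bigO_d(1/k^\eta)$. The extra bookkeeping you supply (the identification of weight-$k/2$ points of $\{0,1\}^k$ with middle-slice points of $\mathsf{C}$, and absorbing small $k$ into the constant) is exactly what the paper leaves implicit.
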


\begin{proof}
    Let $S \subseteq \{0,1\}^{2k}_k$ denote the set of points of $\{0,1\}^{2k}_k$ where $R$  does not vanish. By \Cref{lem:DLSZ-slice}, we know that $|S|\geq \binom{2k-2d}{k-d} = \frac{1}{2^{\bigO(d)}}\cdot \binom{2k}{k}.$ Note that $R|_{\mathsf{C}}$ vanishes on $\{0,1\}^{k}_{k/2}$ exactly when $S\cap \mathsf{C} = \emptyset.$ By \Cref{lem:sampling}, the probability of the latter event is at most $\bigO_d\left(\frac{1}{k^{\eta}}\right).$ 
\end{proof}

\begin{remark}
    \label{rem:slice-cor}
    We remark that the above corollary is false under the weaker assumption that $R$ is just a non-zero polynomial. A simple example is given by the linear polynomial $\sum_{i=1}^n x_i$ in the setting of $G = \F_2$.\footnote{In the linear case,~\cite{ABPSS24-ECCC} showed that this is essentially the only `bad' example. However, for degrees $2$ and higher, there are many more such examples. This is what makes \Cref{cor:samplingDLSZ} more challenging to prove in our setting.} However, the assumption that $R$ is non-zero on $\{0,1\}^{2k}_k$ eliminates this example. Interestingly, this is exactly the condition we need in the analysis of the error-reduction algorithm (\Cref{thm:approx-oracles-list-decoding}) below (the reason is essentially in~\Cref{obs:Cb-is-random} below).
\end{remark}

\subsection{The error-reduction algorithm (\Cref{thm:approx-oracles-list-decoding})}

As mentioned above, the proof now closely follows the proof of the analogous theorem in~\cite{ABPSS24-ECCC}, which in turn was based on ideas from \cite{STV-list-decoding}.
 
\subsubsection{Preliminaries and useful observations}
\label{sec:prelimslistcorrection}
In this subsection, we give an overview of the algorithms $\mathcal{A}_{1}^{f}$ and $\Psi_{1},\ldots, \Psi_{L'}$ as mentioned in \Cref{thm:approx-oracles-list-decoding}. 

\paragraph{}We first describe a combinatorial construction from \cite{ABPSS24-ECCC} that will be useful in our local list correctors. Given an embedding of a subcube $\mathsf{C}$ and a point $\mathbf{b}$, we would like to find a subcube $\mathsf{C}'$ such that $\mathsf{C}$ and $\mathbf{b}$ are contained in $\mathsf{C}'$. For completeness, we state the definition and observations from \cite{ABPSS24-ECCC} here.

\begin{definition}[Subcube spanned by $\mathsf{C}$ and $\mathbf{b}$]\label{def:bigger-subcube}
\cite[Definition 9]{ABPSS24-ECCC}. Let $\mathsf{C} = C_{\mathbf{a}, h}$ be an embedding of a subcube of dimension $k$ (see \Cref{defn:random-embedding}). For any point $\mathbf{b} \in \Boo^{n}$, let $\mathbf{v} := \mathbf{a} \oplus \mathbf{b}$. Pick a uniformly random permutation $\sigma:[2k]\rightarrow [2k]$. Define a hash function $h' : [n] \to [2k]$ as follows: For all $i \in [n]$,
\begin{align*}
    h'(i) = \begin{cases}
        \sigma(j), & \text{if } \, h(i) = j \, \text{ and } \, v_{i} = 0 \\
        \sigma(j+k), & \text{if } \, h(i) = j \, \text{ and } \, v_{i} = 1.
    \end{cases}
\end{align*}
For every $\mathbf{z} \in \Boo^{2k}$, $x(\mathbf{z})$ is defined as follows:
\begin{align*}
    x(\mathbf{z})_{i} = z_{h'(i)} \oplus a_{i}.
\end{align*}
$\mathsf{C}^{\mathbf{b}}$ is the set of points $x(\mathbf{z})$ for all $\mathbf{z} \in \Boo^{2k}$, i.e. $\mathsf{C}^{\mathbf{b}} := \setcond{x(\mathbf{z})}{\mathbf{z} \in \Boo^{2k}}$.\\
\end{definition}

\noindent
Since $h'$ refines the partition induced by $h$, $\mathsf{C} \subset \mathsf{C}^{\mathbf{b}}$. Also define $\mathbf{w} \in \Boo^{2k}$ as follows: for $j \in [k]$, $w_{\sigma(j)} = 0$ and $w_{\sigma(j+k)} = 1$. Then $x(\mathbf{w}) = \mathbf{b}$, meaning $\mathbf{b} \in \mathsf{C}^{\mathbf{b}}$. Next, we make a couple of observations that will be useful while analyzing the correctness probability of our local list correctors.

\begin{observation}\label{obs:Cb-is-random}
\cite[Observation 5.2]{ABPSS24-ECCC}. Let $\mathbf{a}$ be sampled uniformly from $\Boo^{n}$ and $h: [n] \to [k]$ be a uniformly random function. For a uniformly random $\mathbf{b} \sim \Boo^{n}$, $\mathsf{C}^{\mathbf{b}}$ as defined above is a random embedding of a subcube of dimension $2k$ (as defined in \Cref{sec:prelims}). Note that $\mathbf{b} = x(\mathbf{w})$ for some $\mathbf{w}$ of Hamming weight exactly $k.$
\end{observation}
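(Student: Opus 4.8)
The plan is to show that the pair $(\mathbf{a}, h')$ produced by \Cref{def:bigger-subcube} has exactly the law of a random anchor point together with an \emph{independent} uniformly random hash function $[n]\to[2k]$; since $\mathsf{C}^{\mathbf{b}} = C_{\mathbf{a},h'}$ by construction, this is precisely what ``$\mathsf{C}^{\mathbf{b}}$ is a random embedding of a subcube of dimension $2k$'' means. The anchor point is literally the uniformly random $\mathbf{a}$ we started with, and it appears in the definition of $h'$ only through the auxiliary string $\mathbf{v} = \mathbf{a}\oplus\mathbf{b}$, so the whole argument reduces to tracking the joint distribution of $\mathbf{a}$ and $h'$.

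First I would record the key independence fact: since $\mathbf{b}$ is uniform on $\{0,1\}^n$ and independent of $(\mathbf{a},h)$, the string $\mathbf{v} = \mathbf{a}\oplus\mathbf{b}$ is again uniform on $\{0,1\}^n$ and independent of $(\mathbf{a},h)$ (for each fixed value of $\mathbf{a}$ the string $\mathbf{v}$ ranges uniformly, and its law does not depend on that value). Next, writing $g(i) := h(i) + k\cdot v_i \in \{1,\dots,2k\}$ (so $g(i)=h(i)$ if $v_i=0$ and $g(i)=h(i)+k$ if $v_i=1$), observe that for each $i$ the pair $(h(i),v_i)$ is uniform on $[k]\times\{0,1\}$, that these pairs are mutually independent across $i\in[n]$, and that the whole collection is independent of $\mathbf{a}$; under the bijection $[k]\times\{0,1\}\cong[2k]$ this says exactly that $g$ is a uniformly random function $[n]\to[2k]$ independent of $\mathbf{a}$. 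Finally, $h' = \sigma\circ g$ where $\sigma$ is a uniformly random permutation of $[2k]$ independent of everything else; composing a uniformly random function with any permutation (random or fixed) again yields a uniformly random function, and independence from $\mathbf{a}$ is preserved, so $h'$ is a uniformly random function $[n]\to[2k]$ independent of $\mathbf{a}$, as required.

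For the final sentence of the observation I would simply unwind the definitions. The vector $\mathbf{w}$ defined by $w_{\sigma(j)}=0$ and $w_{\sigma(j+k)}=1$ for $j\in[k]$ has exactly $k$ coordinates equal to $1$ (the indices $\sigma(1),\dots,\sigma(2k)$ exhaust $[2k]$), so $|\mathbf{w}| = k$; and $x(\mathbf{w})_i = w_{h'(i)}\oplus a_i$ equals $w_{\sigma(j)}\oplus a_i = a_i = b_i$ when $h(i)=j,\,v_i=0$, and equals $w_{\sigma(j+k)}\oplus a_i = 1\oplus a_i = b_i$ when $h(i)=j,\,v_i=1$, so $x(\mathbf{w}) = \mathbf{b}$. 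There is no genuine obstacle here: the only point needing a moment's care is the independence bookkeeping --- that $\mathbf{v}$ is uniform and independent of $(\mathbf{a},h)$, hence that $g$ (and then $h'$) is a truly uniform hash independent of the retained anchor $\mathbf{a}$ --- and everything else is a direct translation of \Cref{def:bigger-subcube}. As a sanity check one may note that $\sigma$ is not even needed for this particular statement ($g$ alone already does the job); it is included to control the conditional distribution of $\mathsf{C}$ given $\mathsf{C}^{\mathbf{b}}$, which is what powers \Cref{cor:samplingDLSZ}.
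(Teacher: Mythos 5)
Your proof is correct; the paper itself states this observation as a citation to prior work and gives no proof, and your argument (showing $\mathbf{v}$ is uniform and independent of $(\mathbf{a},h)$, hence that $h'=\sigma\circ g$ is a uniform hash $[n]\to[2k]$ independent of $\mathbf{a}$, then unwinding the definition of $\mathbf{w}$) is exactly the standard one. No gaps.
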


\begin{observation}\label{obs:C-is-random}
Assume that $\mathbf{a}, \mathbf{b}\in \{0,1\}^n$ and $h:[n]\rightarrow [k]$ are chosen independently and uniformly at random and define $\mathsf{C}$ and $\mathsf{C}^{\mathbf{b}}$ as above. Conditioned on the choice of the $2k$-dimensional cube $\mathsf{C}^{\mathbf{b}}$ (which we identify with $\{0,1\}^{2k}$), we may define the distribution of $\mathsf{C}$ (which is a subcube of dimension $k$ in $\mathsf{C}^{\mathbf{b}}$) as follows. We sample a random map $\rho: [2k] \to [k]$ that is $2$-to-$1$ (i.e. for each $j \in [k]$, $|\rho^{-1}(\set{j})|$ is of size exactly $2$) and identify the variables in each pair. More formally, we set $\mathsf{C} = C_{0^{2k},\rho}$ following the notation in \Cref{defn:random-embedding}. 
%For every $\mathbf{u} \in \Boo^{k}$, we define $y(\mathbf{u}) \in \Boo^{2k}$ as follows:
%\begin{align*}
 %   y(\mathbf{u})_{i} = u_{\rho(i)}, \quad \quad \text{ for all } \; i \in [2k]
%\end{align*}
 %For all $\mathbf{u} \in \Boo^{k}$, $|y(\mathbf{u})| = 2 |\mathbf{u}|$.
\end{observation}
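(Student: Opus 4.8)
The statement is a claim about the conditional law of $\mathsf{C}$ given $\mathsf{C}^{\mathbf{b}}$, so the plan is: (i) make explicit, inside the coordinatization $\mathbf{z}\mapsto x(\mathbf{z})$ of $\mathsf{C}^{\mathbf{b}}$ from \Cref{def:bigger-subcube}, which points of $\{0,1\}^{2k}$ are the points of $\mathsf{C}$; (ii) show that the $2$-to-$1$ identifying map thereby obtained is uniformly random once $\mathsf{C}^{\mathbf{b}}$ is fixed. For step (i), let $\pi:[2k]\to[k]$ be the fixed $2$-to-$1$ map with $\pi(j)=\pi(j+k)=j$, and set $\rho:=\pi\circ\sigma^{-1}$, a $2$-to-$1$ map $[2k]\to[k]$. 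From the formula for $h'$ in \Cref{def:bigger-subcube} one checks directly that $h(i)=\rho(h'(i))$ for every $i\in[n]$: when $v_i=0$ we have $\sigma^{-1}(h'(i))=h(i)\in[k]$ and $\pi(h(i))=h(i)$; when $v_i=1$ we have $\sigma^{-1}(h'(i))=h(i)+k$ and $\pi(h(i)+k)=h(i)$. Consequently, for $\mathbf{y}\in\{0,1\}^k$, setting $z_\ell:=y_{\rho(\ell)}$ gives $x(\mathbf{z})_i=z_{h'(i)}\oplus a_i=y_{\rho(h'(i))}\oplus a_i=y_{h(i)}\oplus a_i=x(\mathbf{y})_i$; so, read in $\mathbf{z}$-coordinates, $\mathsf{C}$ is exactly $\{(y_{\rho(1)},\dots,y_{\rho(2k)}):\mathbf{y}\in\{0,1\}^k\}=C_{0^{2k},\rho}$. (Here we use that $h'$ is onto, which is precisely the assumption that $\mathsf{C}^{\mathbf{b}}$ is a genuine $2k$-dimensional subcube, cf.\ \Cref{obs:Cb-is-random}.) It remains to determine the conditional law of $\rho$ given $(\mathbf{a},h')$.

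\textbf{Conditional uniformity of $\sigma$.} The key point is that the map $(\mathbf{a},h,\mathbf{b},\sigma)\mapsto(\mathbf{a},h',\sigma)$, with $h'$ constructed from $(h,\mathbf{a}\oplus\mathbf{b},\sigma)$ as in \Cref{def:bigger-subcube}, is a bijection. It is injective: given $(\mathbf{a},h',\sigma)$, the defining equations force $v_i=1$ exactly when $\sigma^{-1}(h'(i))>k$, and then $h(i)$ and $\mathbf{b}=\mathbf{a}\oplus\mathbf{v}$ are determined. Both domain and codomain have cardinality $2^{2n}k^n(2k)!$ (using $(2k)^n=2^nk^n$), so injectivity yields bijectivity. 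Since $(\mathbf{a},h,\mathbf{b},\sigma)$ is uniform on its domain, its image $(\mathbf{a},h',\sigma)$ is uniform on the codomain; in particular, conditioned on the first two coordinates $(\mathbf{a},h')$ — equivalently, on $\mathsf{C}^{\mathbf{b}}$ together with its coordinatization — the permutation $\sigma$ is uniform on the symmetric group $S_{2k}$. (If one insists $\mathsf{C}^{\mathbf{b}}$ be $2k$-dimensional, simply restrict to the sub-event that $h'$ is surjective; conditioning on a sub-event of the $(\mathbf{a},h')$-marginal does not change the conditional law of $\sigma$.)

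\textbf{Pushing forward and concluding.} Finally, for uniform $\sigma\in S_{2k}$ the map $\rho=\pi\circ\sigma^{-1}$ is uniform over $2$-to-$1$ maps $[2k]\to[k]$: for a fixed target $\rho$, the permutations $\sigma$ with $\pi\circ\sigma^{-1}=\rho$ are exactly those whose inverse sends each pair $\rho^{-1}(j)$ bijectively onto $\{j,j+k\}$, and there are exactly $2^k$ of these, independent of $\rho$. Chaining the three steps: conditioned on $\mathsf{C}^{\mathbf{b}}$ (identified with $\{0,1\}^{2k}$), the subcube $\mathsf{C}$ equals $C_{0^{2k},\rho}$ for $\rho$ a uniformly random $2$-to-$1$ map $[2k]\to[k]$ obtained by randomly pairing the $2k$ coordinates and identifying each pair, which is exactly the asserted description.

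\textbf{Main obstacle.} The only genuinely delicate point is the conditional-uniformity step: one must be sure that no information about $\sigma$ (hence about $\rho$) is leaked by conditioning on the embedding $(\mathbf{a},h')$ of $\mathsf{C}^{\mathbf{b}}$. This is exactly what the bijection above — together with the numerical coincidence $(2k)^n=2^nk^n$ of domain and codomain sizes — is designed to rule out; once that is in place, everything else is routine bookkeeping (the identification of $\mathsf{C}$ inside $\mathsf{C}^{\mathbf{b}}$ and the $2^k$-to-$1$ fibers of $\sigma\mapsto\pi\circ\sigma^{-1}$). A secondary nuisance, the possibility that $h'$ is not surjective and $\mathsf{C}^{\mathbf{b}}$ collapses to lower dimension, is handled by simply conditioning on the event that $\mathsf{C}^{\mathbf{b}}$ has full dimension $2k$, which is the only regime used in \Cref{obs:Cb-is-random} and \Cref{thm:approx-oracles-list-decoding}.
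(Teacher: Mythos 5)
Your proposal is correct. The paper states this as an observation without proof, and your argument supplies the natural justification it implicitly relies on: the change of coordinates $\rho=\pi\circ\sigma^{-1}$ identifying $\mathsf{C}$ with $C_{0^{2k},\rho}$ inside the $\mathbf{z}$-coordinatization of $\mathsf{C}^{\mathbf{b}}$, and the bijection $(\mathbf{a},h,\mathbf{b},\sigma)\mapsto(\mathbf{a},h',\sigma)$ (with the counting identity $(2k)^n=2^nk^n$) showing $\sigma$ remains uniform after conditioning on the coordinatized cube, whence $\rho$ is a uniform $2$-to-$1$ map. Your handling of the degenerate case where $h'$ is not surjective is also the right one.
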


Finally, we will use the following \emph{non-local} list decoding algorithm from \cite{ABPSS24-ECCC}.

\begin{theorem}[\cite{ABPSS24-ECCC}, Theorem A.2]
\label{thm:non-local-list}
Fix any Abelian group $G$ and degree parameter $d.$ There is a $\poly(2^{n^{d+1}})$-time algorithm that, given oracle access to a function $f:\{0,1\}^n\rightarrow G$ produces a list of all polynomials $P\in \mathcal{P}_d$ such that $\delta(f,P) < 1/2^{d}$.
\end{theorem}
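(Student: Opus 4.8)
The plan is to give a brute-force algorithm that first reads all of $f$ and then solves a purely combinatorial/algebraic problem, where the only real work is handling an arbitrary (possibly infinite) Abelian group $G$ by integer linear algebra rather than field operations. Write $m := \binom{n}{\le d}$ for the number of multilinear monomials of degree $\le d$. First I would query $f$ at all $2^n$ points (cost $2^n \le \poly(2^{n^{d+1}})$), so the task becomes: given the truth table of $f$, list all $Q\in\mathcal{P}_d$ with $\delta(f,Q)<2^{-d}$. The algorithm enumerates every subset $S\subseteq\{0,1\}^n$ with $|S|\le m+1$ (there are $\binom{2^n}{\le m+1}\le 2^{O(nm)}=2^{O(n^{d+1})}$ of them for constant $d$). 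For each $S$ it forms the integer matrix $M_S$ with rows indexed by points of $S$, columns by degree-$\le d$ monomials, entry $\mathbf{x}^A$ at $(\mathbf{x},A)$, and computes its Smith normal form over $\mathbb{Z}$; call $S$ \emph{integrally interpolating} if $M_S$ has rank $m$ and all $m$ invariant factors equal $1$. For such $S$, writing $M_S=U\,[I_m;0]\,V$ with $U,V$ unimodular, the system $M_S\alpha=f|_S$ (over $G$, unknown $\alpha\in G^m$) is either inconsistent — check whether $(U^{-1}f|_S)_i=0$ for $i>m$, and if not discard $S$ — or has the unique solution $\alpha=V^{-1}\bigl((U^{-1}f|_S)_{[1..m]}\bigr)$, which uses only $\mathbb{Z}$-scalings and additions of elements of $G$, the operations allowed by \Cref{rem:algos}. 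Let $Q_S\in\mathcal{P}_d$ have coefficient vector $\alpha$; evaluate $Q_S$ everywhere, and if $\delta(f,Q_S)<2^{-d}$ add $Q_S$ to the output list (deduplicating). Output the accumulated list.

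Soundness is immediate since every polynomial output has been checked. For completeness, fix $Q\in\mathcal{P}_d$ with $\delta(f,Q)<2^{-d}$ and let $A_Q=\{\mathbf{x}: f(\mathbf{x})=Q(\mathbf{x})\}$, so $|A_Q|>(1-2^{-d})2^n$. Applying the DLSZ lemma (\Cref{thm:basic}, item 2) with coefficient group $\mathbb{Q}$, and with $\mathbb{F}_p$ for every prime $p$, no nonzero degree-$\le d$ multilinear polynomial over any of these fields vanishes on all of $A_Q$ (a nonzero one vanishes on at most $(1-2^{-d})2^n<|A_Q|$ points); hence the columns of $M_{A_Q}$ are independent over $\mathbb{Q}$ and over every $\mathbb{F}_p$, which forces all invariant factors of $M_{A_Q}$ to be $1$, i.e.\ the rows $\{(\mathbf{x}^A)_A:\mathbf{x}\in A_Q\}$ generate $\mathbb{Z}^m$. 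A generating set of $\mathbb{Z}^m$ contains a generating subset of size at most $m+1$ (by the Forster--Swan bound, or an elementary extraction argument), so some $S^\ast\subseteq A_Q$ with $|S^\ast|\le m+1$ is integrally interpolating. When the algorithm processes $S=S^\ast$ it recovers the unique degree-$\le d$ $G$-polynomial agreeing with $f$ on $S^\ast$; since $Q$ restricted to $S^\ast$ equals $f|_{S^\ast}$ and the solution is unique, $Q_{S^\ast}=Q$, and $Q$ passes the distance check and is output. For the running time: $2^{O(n^{d+1})}$ subsets are considered, each processed in $\poly(2^n)$ time (forming $M_S$ and its Smith form: $\poly(n^d)$ with integer entries of bit-length $\poly(n^d)$; solving over $G$: $\poly(n^d)$ group operations; evaluating $Q_S$: $2^n\cdot\poly(n^d)$; deduplication: $\poly(2^n)$ per new entry), for a total of $2^{O(n^{d+1})}=\poly(2^{n^{d+1}})$.

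The step I expect to be the main obstacle is precisely the recovery over a general Abelian group: one cannot run Gaussian elimination over $G$ (no division) and cannot search $G$ (it may be infinite such as $\mathbb{Q}$ or $\mathbb{R}$), so the argument must exhibit, \emph{inside} the agreement region, a \emph{small} subset from which the coefficients of $Q$ are recoverable by purely integral operations. This is exactly why the completeness proof needs the sharper form of DLSZ over all prime fields (to conclude that the relevant integer matrix has unit invariant factors, not merely full rational rank) together with the Diophantine-linear-algebra bookkeeping via the Smith normal form — the same flavor of reasoning used in the proof of \Cref{thm:weight-interpolating} (cf.\ \Cref{lemma:schriver-integral}). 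Everything else (enumeration, distance checks) is routine. Finally, the degenerate case $d\ge n$ is trivial: then $\mathcal{P}_d$ is all functions and the only codeword within distance $<2^{-d}\le 2^{-n}$ of $f$ is $f$ itself, which the same algorithm outputs.
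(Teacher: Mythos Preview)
First, note that the present paper does not actually prove this statement: it is quoted from \cite{ABPSS24-ECCC} and used as a black box, so there is no ``paper's own proof'' here to compare against. I therefore evaluate your argument on its own.

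Your soundness, running-time accounting, and the integral linear-algebra step are fine: applying DLSZ (\Cref{thm:basic}) over $\mathbb{Q}$ and over every $\mathbb{F}_p$ does force all invariant factors of $M_{A_Q}$ to be $1$, so its rows generate $\mathbb{Z}^m$. The gap is the extraction step. It is \emph{not true} that every generating set of $\mathbb{Z}^m$ contains a generating subset of size at most $m+1$. Already for $m=1$, the set $\{6,10,15\}$ generates $\mathbb{Z}$ (their gcd is $1$) but no two of them do (the pairwise gcds are $2,3,5$); more generally, taking $a_i=\prod_{j\ne i}p_j$ for distinct primes $p_1,\dots,p_k$ gives a generating set of $\mathbb{Z}$ from which no proper subset generates, so the required subset can be forced to have arbitrary size. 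The Forster--Swan theorem only asserts that $\mathbb{Z}^m$ \emph{admits} a generating set of size $m+\dim\mathbb{Z}=m+1$; it says nothing about \emph{selecting} those generators from a prescribed set, and Bass stable-range reductions replace generators by integer combinations rather than by a subset. Your rows do have special $0/1$ structure (for instance the $\emptyset$-column is identically $1$), so the extraction claim may well hold in this particular setting, but you have not argued it, and the blanket appeal to Forster--Swan does not suffice. As written, completeness is not established, and closing this gap while staying within the $\poly(2^{n^{d+1}})$ budget requires a genuine argument specific to monomial-evaluation matrices (or a different enumeration that sidesteps the issue altogether).
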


\subsubsection{Overview of the algorithms}
In this subsection, we give an overview of our local list correction algorithm for $\mathcal{P}_{d}(\Boo^{n}, G)$. As mentioned before, our algorithm is mostly similar to the local list correction algorithm for \cite[Section 5.2]{ABPSS24-ECCC}, except for some changes in the parameters to handle degree $d$ polynomials instead of degree $1$ polynomials. Nevertheless, we present the overview and the algorithm for the sake of completeness.

Similar to \cite{STV-list-decoding}, the local list correction algorithm has two key steps. The first step is to produce oracles that approximate the polynomials in the list and the second step is to apply the unique local corrector from \Cref{sec:deg-1-decoding} on each of the approximating oracles. In this subsection, we describe the first step i.e. to produce oracles such that each polynomial in the list is approximated sufficiently well by an oracle.

For the overview below, let $f(x_{1},\ldots,x_{n})$ be the input function and $\mathsf{List}_{\varepsilon}^{f}$ denotes the set as described earlier.
\begin{enumerate}
    \item \textbf{First step (advice):} We construct a randomized algorithm $\mathcal{A}_{1}^{f}$ that makes oracle queries to the input function $f$ and produces a list of oracles such that with high probability (over the randomness of $\mathcal{A}_{1}^{f}$), for every polynomial $P \in \mathsf{List}_{\varepsilon}^{f}$, there exists an oracle that is $1/(10 \cdot 2^{d+1})$-close\footnote{This step is essentially an error reduction because we start with a function $f$ that is $(1/2^{d} - \varepsilon)$-close to $P$ and $\mathcal{A}_{1}^{f}$ produces a list of oracles such that one of them is $1/(10 \cdot 2{d})$-close to $P$.} to $P$ (this is within the unique decoding radius of $\mathcal{P}_{d}(\Boo^{n}, G)$). We give an overview of the algorithm $\mathcal{A}_{1}^{f}$ below:
    \begin{itemize}
        \item Choose a random subcube $\mathsf{C}$ of dimension $k$ (which is a polynomial in the list size and from the previous section we know that the list size is a constant dependent on $\varepsilon$). Then find all the degree $d$ polynomials that are $(1/2^{d} - \varepsilon/2)$-close to $f$ on the subcube $\mathsf{C}$. Repeat this step a few times. Let's call this set of polynomials (union over all the repetitions) as $T$. Since $\mathsf{C}$ is a random subcube, with high probability, restriction of every polynomial in $\mathsf{List}_{\varepsilon}^{f}$ will be in the set $T$. The restriction of each $P \in \mathsf{List}_{\varepsilon}^{f}$ to $\mathsf{C}$ (which is in $T$) will be \emph{advice} for $P$ in the second step. 

        \item The algorithm $\mathcal{A}_{1}^{f}$ also samples a random permutation on $\sigma$ on $2k$ variables. The oracles in the next step will use this permutation while locally list correcting (it is the same permutation for every oracle). For each polynomial $Q$ in $T$, the algorithm $\mathcal{A}_{1}$ produces an oracle $\Psi[\mathsf{C}, \sigma, Q]$ with the polynomial's evaluation on the cube $\mathsf{C}$ as advice.
    \end{itemize}

    \item \textbf{Second step (approximation):} Suppose we want to locally list correct $f$ at a point $\mathbf{b} \in \Boo^{n}$. Each algorithm $\Psi[\mathsf{C}, \sigma, Q]$ creates a subcube $\mathsf{C}'$ of dimension $2k$ that is spanned by $\mathsf{C}$ and $\mathbf{b}$. Then the oracle computes all degree $d$ polynomials that have distance at most $(1/2^{d} - \varepsilon/2)$ from $f$ on the subcube $\mathsf{C}'$ and uses the advice on the subcube $\mathsf{C}$ to filter out $P(\mathbf{b})$, where $P|_{\mathsf{C}} = Q$.
\end{enumerate}

\subsubsection{Formal description of the algorithms}
In this subsection, we describe the algorithms that will prove \Cref{thm:approx-oracles-list-decoding}.
The algorithms in this subsection are the same as in \cite[Section 5.2]{ABPSS24-ECCC} with minor modifications in the parameters to make it work for the class $\mathcal{P}_{d}$. We state the algorithms here for the sake of completeness.

\paragraph{Notation:} Let $\mathsf{C}$ be a $k$-dimensional subcube of $\{0,1\}^n$ as defined in \Cref{defn:random-embedding}. Let $Q: \{0,1\}^k \to G$  a polynomial in $\mathcal{P}_{d}(\Boo^{k}, \, G)$ where the polynomial $Q$ is a function on the subcube $\mathsf{C}$. We will use $\Psi[\mathsf{C},\sigma, Q]$ to denote a \emph{deterministic} algorithm that has the description of the subcube $\mathsf{C}$, a permutation $\sigma:[2k]\rightarrow [2k],$ and evaluation of $Q$ on $\mathsf{C}$ hardwired inside it\footnote{In the final algorithm, $\mathsf{C}$ will be a random subcube of dimension $\bigO_\varepsilon(1)$ and $Q$ with high probability be equal to $P|_{\mathsf{C}}$, for some $P \in \mathsf{List}_{\varepsilon}^{f}$}.

\Cref{algo:list-decoding} is a randomized algorithm that outputs the descriptions of the deterministic oracles and \Cref{algo:high-agreement} describes the oracles themselves. 

\begin{algobox}
\begin{algorithm}[H]
\caption{Approximating Algorithm $\Psi[\mathsf{C},\sigma, Q]$}
\label{algo:high-agreement}
\DontPrintSemicolon

\KwIn{Oracle access to the function $f$, a point $\mathbf{b} \in \Boo^{n}$}
\vspace{3mm}

Let $\mathsf{C}'$ be a subcube spanned by $\mathsf{C}$ and $\mathbf{b}$ using $\sigma$  \tcp*{see \Cref{def:bigger-subcube}}
Let $\mathbf{w} \in \Boo^{2k}$ such that $x(\mathbf{w}) \in \mathsf{C}'$ and $x(\mathbf{w}) = \mathbf{b}$ \tcp*{$|\mathbf{w}| = k$}
\vspace{2mm}
Query $f$ on the subcube $\mathsf{C}'$ \tcp*{Number of queries is $2^{2k}$}
\vspace{2mm}
Find all polynomials $R_{1},\ldots, R_{L''} \in \mathcal{P}_{d}(\Boo^{2k}, \, G)$ that are $\paren{\frac{1}{2^{d}} - \frac{\varepsilon}{2}}$-close to $f|_{\mathsf{C'}}$ \tcp*{using \Cref{thm:non-local-list}} \tcp*{$L'' \leq L(\varepsilon/2)$} \label{line:brute-force}
\vspace{2mm}
\If{there exists an $i \in [L'']$ such that $R_{i}|_{\mathsf{C}} = Q$}{
pick any such $i$ and \Return{$R_{i}(\mathbf{w})$}
}
\Else{
\Return{$0$}\tcp*{An arbitrary value}
}

\end{algorithm}
\end{algobox}

Now we describe the randomized \Cref{algo:list-decoding} that returns the descriptions of the deterministic oracles.

\begin{algobox}
\begin{algorithm}[H]
\caption{Algorithm $\mathcal{A}_{1}$}
\label{algo:list-decoding}

\DontPrintSemicolon

\KwIn{Oracle access to the function $f$}
\vspace{3mm}

Choose $k \leftarrow B_d\cdot (L(\varepsilon/2)/\varepsilon)^c$  \tcp*{$B_d$ and $c$ chosen below}
Set $\ell \leftarrow \log L(\varepsilon)$\;
$T \leftarrow \emptyset$\;
\Repeat{$\ell$ times}{
Sample $\mathbf{a} \sim U_{n}$ and a random hash function $h: [n] \to [k]$ \tcp*{the first source of randomness}
Construct the subcube $\mathsf{C} := C_{\mathbf{a}, h}$ \tcp*{see \Cref{defn:random-embedding}}
\vspace{2mm}
Query $f$ on the subcube $\mathsf{C}$ \tcp*{Number of queries is $2^{k}$}
\vspace{2mm}
Find all polynomials $Q_{1},\ldots,Q_{L'} \in \mathcal{P}_{d}(\Boo^{k}, \, G)$ that are $\paren{\frac{1}{2^{d}} - \frac{\varepsilon}{2}}$-close to $f|_{\mathsf{C}}$ \tcp*{using \Cref{thm:non-local-list}} \label{line:brute-force-2}
Pick a uniformly random permutation $\sigma:[2k]\rightarrow [2k]$ \tcp*{the second source of randomness}
$T \leftarrow T \cup \set{(\mathsf{C},\sigma,Q_{1}),\ldots,(\mathsf{C},\sigma,Q_{L'})}$ \tcp*{$L' \leq L(\varepsilon/2)$}
}
\vspace{2mm}
\Return{$\Psi[\mathsf{C},\sigma, Q]$ for all $(\mathsf{C},\sigma, Q) \in T$} \tcp*{Size of $T$ is $\leq \ell L'$}

\end{algorithm}
\end{algobox}

\subsection{Analysis of the algorithms}
In this subsection, we analyze \Cref{algo:high-agreement} and \Cref{algo:list-decoding}.

\paragraph{Query complexity:}\Cref{algo:list-decoding} makes $2^{k} = \exp(B_d \cdot \poly(L(\varepsilon/2))$ queries to $f$ and returns the description of $\ell L' = L(\varepsilon/2) \log L(\varepsilon)$ oracles. Each oracle (see \Cref{algo:high-agreement}) makes $2^{2k} =\exp(B_d \cdot \poly(L(\varepsilon/2))$ queries to $f$. Hence the total number of queries to $f$ is $\exp(\bigO_d(\poly(L(\varepsilon/2)))$.

\paragraph{Correctness:}We want to show that with probability $\geq 3/4$, for every polynomial $P \in \mathsf{List}_{\varepsilon}^{f}$, there exists an output oracle $\Psi[\mathsf{C},\sigma, Q]$ that is $(1/(10 \cdot 2^{d+1}))$-close to $P$. We prove this in the following steps.
\begin{enumerate}[$\blacktriangleright$]
    \item In a single iteration for \Cref{algo:list-decoding}, the following holds: For every polynomial $P \in \mathsf{List}_{\varepsilon}^{f}$, with probability at least $9/10$, there exists a $1/(10 \cdot 2^{d+1})$-close approximating oracle $\Psi[\mathsf{C}, \sigma, Q_{j}]$.
    We prove this is in \Cref{lemma:list-correction-error}.
    \item As we have $\ell$ independent iterations, the probability that there is no $1/(10 \cdot 2^{d+1})$-close approximating oracle for $P$ is at most $1/10^{\ell}$.  By a union bound for all polynomials $P \in \mathsf{List}_{\varepsilon}^{f}$, we get the desired correctness probability in \Cref{thm:approx-oracles-list-decoding}.
    \item Since each iteration produces a list of size at most $L(\varepsilon/2),$ overall we obtain a list of size $\bigO(L(\varepsilon/2)\cdot \log L(\varepsilon))$ as claimed.
\end{enumerate}
We start by proving \Cref{lemma:list-correction-error}, which is the primary lemma for the correctness of our local list correctors.\\

\noindent
\begin{thmbox}
\begin{lemma}[Correctness of Local List Correction]\label{lemma:list-correction-error}
The following holds as long as the constants $B_d$ (depending on $d$) and $c$ in \Cref{algo:list-decoding} are chosen to be large enough. Fix any polynomial $P \in \mathsf{List}_{\varepsilon}^{f}$.  In each iteration of the loop in \Cref{algo:list-decoding}, the probability (over the randomness of the algorithm) that there does not exist a $1 \leq j \leq L'$ such that $\delta(\Psi[\mathsf{C},\sigma, Q_{j}], \, P) \leq 1/(10 \cdot 2^{d+1})$ is at most $1/10$.
\end{lemma}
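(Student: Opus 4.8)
\textbf{Proof plan for \Cref{lemma:list-correction-error}.} The plan is to follow the analysis of the analogous lemma in~\cite{ABPSS24-ECCC}, tracking where the degree-$d$ generalization requires the new tools developed in \Cref{sec:locallistcorrectionnew}. Fix $P\in \mathsf{List}^f_\varepsilon$. First I would argue that the random subcube $\mathsf{C} = C_{\mathbf{a},h}$ of dimension $k$ is a good sampler: by \Cref{lemma:sampling-subcube}, with probability at least $19/20$ (taking $k$ large enough compared to $\varepsilon$, which is why the exponent $c$ and the constant $B_d$ must be large), we have $\delta(f|_{\mathsf{C}}, P|_{\mathsf{C}}) \leq 1/2^d - \varepsilon/2$. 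Hence the restriction $Q := P|_{\mathsf{C}}$ appears among the polynomials $Q_1,\ldots,Q_{L'}$ found on \Cref{line:brute-force-2} of \Cref{algo:list-decoding} (using \Cref{thm:non-local-list}), so one of the output oracles is $\Psi[\mathsf{C},\sigma,Q]$ with $Q = P|_{\mathsf{C}}$. It then suffices to show that, conditioned on this good event, $\delta(\Psi[\mathsf{C},\sigma,Q], P) \leq 1/(10\cdot 2^{d+1})$ with probability at least $19/20$.

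To bound $\delta(\Psi[\mathsf{C},\sigma,Q], P)$, I would fix a uniformly random target point $\mathbf{b}\sim\{0,1\}^n$ and bound the probability that $\Psi[\mathsf{C},\sigma,Q](\mathbf{b})\neq P(\mathbf{b})$. On input $\mathbf{b}$, the algorithm builds the $2k$-dimensional subcube $\mathsf{C}' = \mathsf{C}^{\mathbf{b}}$ spanned by $\mathsf{C}$ and $\mathbf{b}$ (\Cref{def:bigger-subcube}). By \Cref{obs:Cb-is-random}, $\mathsf{C}'$ is a uniformly random $2k$-dimensional subcube with $\mathbf{b} = x(\mathbf{w})$ for some $\mathbf{w}$ of Hamming weight exactly $k$; so by \Cref{lemma:sampling-subcube} again, with high probability $P|_{\mathsf{C}'}$ is $(1/2^d - \varepsilon/2)$-close to $f|_{\mathsf{C}'}$ and hence appears in the list $R_1,\ldots,R_{L''}$ on \Cref{line:brute-force} of \Cref{algo:high-agreement}. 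The algorithm then returns $R_i(\mathbf{w})$ for some $R_i$ with $R_i|_{\mathsf{C}} = Q = P|_{\mathsf{C}}$. The only way this can fail to equal $P(\mathbf{b}) = P|_{\mathsf{C}'}(\mathbf{w})$ is if there is some $R_i\neq P|_{\mathsf{C}'}$ in the list with $R_i|_{\mathsf{C}} = P|_{\mathsf{C}'}|_{\mathsf{C}}$ (agreement on the sub-subcube $\mathsf{C}$) but $R_i(\mathbf{w})\neq P|_{\mathsf{C}'}(\mathbf{w})$.

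This is exactly where the degree-$d$ difficulty lies, and where \Cref{cor:samplingDLSZ} is invoked. The key observation (\Cref{obs:C-is-random}) is that, conditioned on $\mathsf{C}'$, the sub-subcube $\mathsf{C}$ is distributed as $C_{0^{2k},\rho}$ for a uniformly random $2$-to-$1$ map $\rho:[2k]\to[k]$, i.e. exactly the random process of \Cref{cor:samplingDLSZ}; moreover $\mathbf{w}\in\{0,1\}^{2k}$ has Hamming weight exactly $k$, so $\mathbf{w}$ lies on the middle slice. For each of the at most $L'' \leq L(\varepsilon/2)$ polynomials $R$ in the list with $R\neq P|_{\mathsf{C}'}$, consider the difference $R_{\mathrm{diff}} := R - P|_{\mathsf{C}'} \in \mathcal{P}_d(\{0,1\}^{2k},G)$. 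Since $\mathbf{w}$ is on the middle slice and (conditioned on the earlier good events) this difference is a \emph{specific nonzero} polynomial, I would want to say it is nonzero at $\mathbf{w}$, hence nonzero on the middle slice $\{0,1\}^{2k}_k$; then by \Cref{cor:samplingDLSZ} the probability over the choice of $\rho$ that $R_{\mathrm{diff}}$ vanishes on all of $\{0,1\}^k_{k/2}$ (which contains the point $\mathbf{y}$ such that $x(\mathbf{y}) = \mathbf{b}$ under $\mathsf{C} \subseteq \mathsf{C}'$, and contains $0^k$ giving the pinned-down constraint) is $\bigO_d(1/k^\eta)$. Actually the relevant bad event is more precisely that $R_{\mathrm{diff}}$ restricted to $\mathsf{C}$ agrees with zero on $\mathsf{C}$ while $R_{\mathrm{diff}}(\mathbf{w})\neq 0$; I would phrase this as ``$R_{\mathrm{diff}}$ is nonzero on the middle slice but vanishes on the slice $\{0,1\}^k_{k/2}$ of the random sub-subcube'', which is precisely the event bounded by $\bigO_d(1/k^\eta)$ in \Cref{cor:samplingDLSZ}. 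Taking a union bound over the at most $L(\varepsilon/2)$ choices of $R$, and over the $\bigO(1)$ earlier failure events (each controlled by choosing $k$ polynomially large in $L(\varepsilon/2)/\varepsilon$), the total failure probability over the choice of $\mathbf{b}$ and the internal randomness is at most, say, $1/1000$. Finally, by Markov's inequality applied to the (random, over $\mathsf{C}$ and $\sigma$) quantity $\delta(\Psi[\mathsf{C},\sigma,Q],P)$, the probability that this distance exceeds $1/(10\cdot 2^{d+1})$ is small, completing the proof.

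\textbf{The main obstacle} is making precise the reduction from ``$\Psi$ errs at $\mathbf{b}$'' to ``a difference polynomial is nonzero on the middle slice but vanishes on a random induced slice,'' so that \Cref{cor:samplingDLSZ} applies cleanly: one must verify that the candidate polynomials $R$ in the list on $\mathsf{C}'$ genuinely differ from $P|_{\mathsf{C}'}$ as polynomials (not just as functions differing off the middle slice — this is where it matters that any two distinct degree-$d$ polynomials differ on $\geq 2^{-d}$ fraction of $\{0,1\}^{2k}$ by \Cref{thm:basic}, and where the condition that $\mathbf{w}$ has Hamming weight exactly $k$ is used to guarantee the difference is actually nonzero on the middle slice), and that the conditioning (on the good sampling events and on $\mathsf{C}'$) does not disturb the distribution of the random $2$-to-$1$ map $\rho$ defining $\mathsf{C}$ inside $\mathsf{C}'$. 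Getting these independence/conditioning bookkeeping details right, and correctly propagating the polynomially-large choice of $k = B_d(L(\varepsilon/2)/\varepsilon)^c$ through all the union bounds, is the bulk of the work; the rest mirrors~\cite{ABPSS24-ECCC} line by line.
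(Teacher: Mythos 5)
Your proposal is correct and follows essentially the same route as the paper: decompose the failure into the three events (restriction of $P$ missing from the list on $\mathsf{C}$, missing from the list on $\mathsf{C}'$, and a collision on $\mathsf{C}$ between two list polynomials differing at $\mathbf{w}$), control the first two via \Cref{lemma:sampling-subcube} together with \Cref{obs:Cb-is-random}, control the third via \Cref{obs:C-is-random} and \Cref{cor:samplingDLSZ} (using that $\mathbf{w}$ has weight exactly $k$ so the difference polynomial is non-vanishing on the middle slice), and finish with a union bound and Markov's inequality. The only cosmetic difference is that you union over the $L''$ differences $R - P|_{\mathsf{C}'}$ while the paper unions over all $L''^2$ pairs $(R_{i_1},R_{i_2})$ differing at a weight-$k$ point; both work with the stated choice of $k$.
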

\end{thmbox}
\begin{proof}
Fix an iteration of the loop in \Cref{algo:list-decoding}. Let $\mathcal{E}_{P}$ denote the event that there does not exist a $j$ such that $\delta(\Psi[\mathsf{C},\sigma, Q_{j}], \, P) \leq 1/(10 \cdot 2^{d+1})$. We want to upper bound the probability of the ``bad'' event $\mathcal{E}_P$ by $1/10$. Recall that the sources of randomness in \Cref{algo:list-decoding} are point $\mathbf{a}$, hash function $h$, and permutation $\sigma$. We will show that 
\begin{equation}
\label{eq:EP}
    \E_{\mathbf{a},h,\sigma}[\min_{j} \, \delta(\Psi[\mathsf{C},\sigma, Q_{j}], \, P)] \; = \; \E_{\mathbf{a},h,\sigma}[\min_{j} \, \Pr_{\mathbf{b}}[\Psi[\mathsf{C},\sigma, Q_{j}](\mathbf{b}) \, \neq \, P(\mathbf{b})]] \; \leq \; \dfrac{1}{100 \cdot 2^{d+1}},
\end{equation}
from which \Cref{lemma:list-correction-error} follows via an application of Markov's inequality.

Define the following auxiliary events, depending on the choice of $\mathbf{a},h$ and $\sigma$, along with the choice of a random input $\mathbf{b}.$
\begin{enumerate}
    \item Event $\mathcal{E}_{1,P}$ (only depends on $\mathbf{a},h$): In the current iteration of the loop in \Cref{algo:list-decoding}, the algorithm does not find a polynomial $Q_{j}$ such that $Q_{j} = P|_{\mathsf{C}}$.

    \item Event $\mathcal{E}_{2,P}$ (depends on $\mathbf{a}, h, \sigma, \mathbf{b}$): For the triples $(\mathsf{C},\sigma,Q)$ added to $T$ in this iteration of \Cref{algo:list-decoding}, the corresponding oracle $\Psi[\mathsf{C},\sigma,Q]$ is such that when we run this oracle on input $\mathbf{b}$, there does not exist a polynomial $R_{i}$ such that $R_{i} = P|_{\mathsf{C'}}$. Note that this event only depends on the choice of the cube $\mathsf{C}, \sigma$ and $\mathbf{b}$ but not on the specific polynomial $Q$. Hence, the event is exactly the same for each triple $(\mathsf{C},\sigma,Q)$ in $T.$ In particular, we may fix any one such triple.

    \item Event $\mathcal{E}_{3,P}$ (depends on $\mathbf{a}, h, \sigma, \mathbf{b}$): For the triples $(\mathsf{C},\sigma,Q)$ added to $T$ in this iteration of \Cref{algo:list-decoding}, the corresponding oracle $\Psi[\mathsf{C},\sigma,Q]$ is such that when we run this oracle on input $\mathbf{b}$, there exist two polynomials $R_{i_1}$ and $R_{i_2}$ such that  $R_{i_1}(\mathbf{w})\neq R_{i_2}(\mathbf{w})$ but $R_{i_1}|_\mathsf{C} = R_{i_2}|_\mathsf{C}$. Here $\mathbf{w}$ is, as defined in \Cref{algo:high-agreement}, the point in $\{0,1\}^{2k}$ of Hamming weight $k$ such that $x(\mathbf{w}) \in \mathsf{C}'$ and $x(\mathbf{w}) = \mathbf{b}.$ As for the event $\mathcal{E}_{2,P}$, we may fix a triple $(\mathsf{C},\sigma,Q)\in T$ while analyzing this event.
\end{enumerate}

%We make a useful observation regarding the events $\mathcal{E}_{2,P}$ and $\mathcal{E}_{3,P}.$ Both events only depend on the choice of the cube $\mathsf{C}$ and not on the specific polynomial $Q$. Hence, the event is exactly the same for each triple $(\mathsf{C},\sigma,Q)$ in $T.$ In particular, we may fix any one such pair.

%It is not difficult to see that if none of the events $\mathcal{E}_{1,P}, \mathcal{E}_{2,P}$ or $\mathcal{E}_{3,P}$ occur, then the event $\mathcal{E}_P$ does not occur either. We will argue this below.

\snote{Commented out one line here. }

We will need the following two claims that show that any of the aforementioned events occur with a small probability.
\begin{claim}\label{claim:l-restriction-far}
 $\Pr_{\mathbf{a},h}[\mathcal{E}_{1,P}], \Pr_{\mathbf{a},h,\sigma,\mathbf{b}}[\mathcal{E}_{2,P}] \leq 1/(10000\cdot 2^{d+1}).$
\end{claim}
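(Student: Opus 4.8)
The plan is to bound the two probabilities separately, though both reduce to the same kind of sampling argument using the machinery built in \Cref{subsubsec:sampling} (via \Cref{lemma:sampling-subcube} and \Cref{cor:samplingDLSZ}). For the event $\mathcal{E}_{1,P}$: fix $P\in\mathsf{List}_{\varepsilon}^f$, so $\delta(f,P)\le \frac{1}{2^d}-\varepsilon$. First I would observe that since $\mathsf{C}=C_{\mathbf{a},h}$ is a random $k$-dimensional subcube with each point marginally uniform in $\{0,1\}^n$, the sampling lemma for random subcubes (\Cref{lemma:sampling-subcube}) applied to the set $T=\{\mathbf{x}: f(\mathbf{x})\ne P(\mathbf{x})\}$ of density $\mu\le \frac{1}{2^d}-\varepsilon$ gives that, except with probability at most $1/(20000\cdot 2^{d+1})$ over $\mathbf{a},h$, we have $\delta(f|_{\mathsf{C}},P|_{\mathsf{C}})\le \frac{1}{2^d}-\frac{\varepsilon}{2}$; this needs $k\ge \mathrm{poly}(1/\varepsilon)\cdot \log(1/\varepsilon)$, which holds by the choice $k=B_d\cdot(L(\varepsilon/2)/\varepsilon)^c$ for $c$ large enough. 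When this good event happens, $P|_{\mathsf{C}}$ is one of the polynomials $Q_1,\dots,Q_{L'}$ found by the brute-force list decoder on Line~\ref{line:brute-force-2} of \Cref{algo:list-decoding} (by definition of that step, since $L'\le L(\varepsilon/2)$ by the combinatorial bound \Cref{thm:comblistdegd}). Hence $\Pr_{\mathbf{a},h}[\mathcal{E}_{1,P}]\le 1/(20000\cdot 2^{d+1})\le 1/(10000\cdot 2^{d+1})$.

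For the event $\mathcal{E}_{2,P}$: here $\mathsf{C}'=\mathsf{C}^{\mathbf{b}}$, and by \Cref{obs:Cb-is-random}, when $\mathbf{a},h,\sigma$ and $\mathbf{b}$ are all chosen uniformly at random, $\mathsf{C}^{\mathbf{b}}$ is distributed exactly as a random $2k$-dimensional subcube of $\{0,1\}^n$ in the sense of \Cref{sec:prelims}, with each point marginally uniform. So the same argument as above applies verbatim: apply \Cref{lemma:sampling-subcube} to the error set of $f$ versus $P$, now with dimension $2k$, to conclude that except with probability at most $1/(10000\cdot 2^{d+1})$ over $\mathbf{a},h,\sigma,\mathbf{b}$, we have $\delta(f|_{\mathsf{C}'},P|_{\mathsf{C}'})\le \frac{1}{2^d}-\frac{\varepsilon}{2}$, and in that case $P|_{\mathsf{C}'}$ is in the list $R_1,\dots,R_{L''}$ produced by the brute-force decoder on Line~\ref{line:brute-force} of \Cref{algo:high-agreement}. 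This gives $\Pr_{\mathbf{a},h,\sigma,\mathbf{b}}[\mathcal{E}_{2,P}]\le 1/(10000\cdot 2^{d+1})$.

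The one point requiring a little care — and the place where I expect the argument to be most delicate — is matching up the parameter regime: the sampling lemma \Cref{lemma:sampling-subcube} requires $k$ (resp.\ $2k$) to be at least $\frac{A}{\varepsilon'^4\eta'^2}\log(1/\varepsilon'\eta')$ where $\varepsilon'=\varepsilon/2$ is the slack we are willing to lose in the distance and $\eta'=\Theta(1/(2^d))$ is the failure probability we are targeting; so I would make sure to state explicitly that $c$ and $B_d$ in \Cref{algo:list-decoding} are chosen large enough (depending on $d$) so that $k=B_d(L(\varepsilon/2)/\varepsilon)^c$ comfortably exceeds this threshold — which it does since $L(\varepsilon/2)\ge 1$ and $c\ge 5$ say. (Note that $\mathcal{E}_{3,P}$, which concerns the \emph{collapse} of two distinct polynomials on $\mathsf{C}$ and genuinely needs \Cref{cor:samplingDLSZ} and \Cref{obs:C-is-random}, is handled in a separate claim, not here.) With these two bounds in hand the proof of the claim is complete.
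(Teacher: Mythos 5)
Your proposal is correct and follows essentially the same route as the paper's proof: both bound $\Pr[\mathcal{E}_{1,P}]$ by applying the sampling lemma for random subcubes (\Cref{lemma:sampling-subcube}) to the disagreement set of $f$ and $P$, and both handle $\mathcal{E}_{2,P}$ by invoking \Cref{obs:Cb-is-random} to view $\mathsf{C}^{\mathbf{b}}$ as a uniformly random $2k$-dimensional subcube and repeating the same argument. Your explicit attention to the parameter regime (that $k$ must exceed the threshold in \Cref{lemma:sampling-subcube} with failure probability $\eta'=\Theta(1/2^{d})$, which forces the constant to depend on $d$) is exactly the point the paper absorbs into the choice of $B_d$ and $c$.
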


\begin{claim}\label{claim:E3P}
$\Pr_{\mathbf{a},h,\sigma, \mathbf{b}}[\mathcal{E}_{3,P}]\leq 1/(10000\cdot 2^{d+1}).$
\end{claim}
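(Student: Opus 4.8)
\textbf{Proof plan for \Cref{claim:E3P}.} The plan is to bound $\Pr[\mathcal{E}_{3,P}]$ by showing that the ``bad'' event --- that two polynomials $R_{i_1}, R_{i_2}$ of degree $d$ over the $2k$-dimensional cube $\mathsf{C}'$ disagree at the point $\mathbf{w}$ (of Hamming weight $k$) yet agree identically on the $k$-dimensional subcube $\mathsf{C}$ --- is exactly of the form controlled by \Cref{cor:samplingDLSZ}. First I would recall the relevant distributional facts: by \Cref{obs:Cb-is-random}, when $\mathbf{a},h,\mathbf{b}$ are uniform, $\mathsf{C}' = \mathsf{C}^{\mathbf{b}}$ is a uniformly random $2k$-dimensional subcube of $\{0,1\}^n$, and the point $\mathbf{w}\in\{0,1\}^{2k}$ corresponding to $\mathbf{b}$ has Hamming weight exactly $k$; moreover, by \Cref{obs:C-is-random}, conditioned on $\mathsf{C}'$ (identified with $\{0,1\}^{2k}$), the subcube $\mathsf{C}$ is distributed as $C_{0^{2k},\rho}$ for a uniformly random $2$-to-$1$ map $\rho:[2k]\to[k]$. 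This is precisely the random process analyzed in \Cref{cor:samplingDLSZ}.

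The key step is to fix the randomness of $\mathsf{C}'$ (equivalently, condition on $\mathsf{C}'$ and on which $\mathbf{z}=\mathbf{w}\in\{0,1\}^{2k}_k$ corresponds to $\mathbf{b}$), and argue over the remaining randomness of $\rho$ (i.e. of $\mathsf{C}$ inside $\mathsf{C}'$). Having conditioned on $\mathsf{C}'$, the set $\mathcal{R}$ of polynomials in $\mathcal{P}_d(\{0,1\}^{2k},G)$ that are $(1/2^d-\varepsilon/2)$-close to $f|_{\mathsf{C}'}$ is determined, and $|\mathcal{R}|\le L(\varepsilon/2)$. For any pair $R_{i_1}\ne R_{i_2}$ in $\mathcal{R}$ with $R_{i_1}(\mathbf{w})\ne R_{i_2}(\mathbf{w})$, set $R := R_{i_1}-R_{i_2}$. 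Since $\mathbf{w}\in\{0,1\}^{2k}_k$ and $R(\mathbf{w})\ne 0$, the polynomial $R$ is a degree-$d$ polynomial that does not vanish on the Hamming slice $\{0,1\}^{2k}_k$. The event that $R_{i_1}|_{\mathsf{C}} = R_{i_2}|_{\mathsf{C}}$ is exactly the event that $R|_{\mathsf{C}}\equiv 0$, i.e. that $R|_{\mathsf{C}}$ vanishes on all of $\{0,1\}^k_{k/2}$ (here I would note that $R|_{\mathsf{C}}$, as a polynomial in $k$ variables obtained by identifying variables in pairs, is determined by its values on the middle slice together with the structure of $\mathsf{C}$, and more carefully that vanishing of $R$ on all of $\mathsf{C}$ is equivalent to $R|_{\mathsf C}$ vanishing on the full cube $\{0,1\}^k$; but since $R|_{\mathsf C}$ has degree $d\le k/2$, vanishing on $\{0,1\}^k_{k/2}$ suffices by \Cref{lem:DLSZ-slice}). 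By \Cref{cor:samplingDLSZ}, this probability is at most $\bigO_d(1/k^\eta)$ for an absolute constant $\eta>0$.

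Finally I would take a union bound over the at most $\binom{L(\varepsilon/2)}{2}\le L(\varepsilon/2)^2$ pairs $(i_1,i_2)$, so that conditioned on any fixed $\mathsf{C}'$, $\Pr_{\rho}[\mathcal{E}_{3,P}] \le \bigO_d(L(\varepsilon/2)^2/k^\eta)$. Averaging over the choice of $\mathsf{C}'$ gives the same bound unconditionally. Recalling that $k = B_d\cdot(L(\varepsilon/2)/\varepsilon)^c$, I would choose the constant $c$ large enough (say $c > 4/\eta$) and $B_d$ large enough (as a function of $d$) so that $\bigO_d(L(\varepsilon/2)^2/k^\eta) \le 1/(10000\cdot 2^{d+1})$, which is possible since $L(\varepsilon/2)^2/k^\eta \le L(\varepsilon/2)^2/(B_d^\eta L(\varepsilon/2)^{c\eta}) \to 0$. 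This yields the claim. The main obstacle I anticipate is the bookkeeping in the second step: verifying cleanly that ``$R_{i_1}|_{\mathsf{C}} = R_{i_2}|_{\mathsf{C}}$'' translates to ``$R|_{\mathsf{C}}$ vanishes on the middle slice of $\{0,1\}^k$'' (rather than on all of $\{0,1\}^k$), and that the hypothesis $R(\mathbf{w})\ne 0$ with $|\mathbf{w}|=k$ is exactly the non-vanishing-on-the-slice hypothesis needed to invoke \Cref{cor:samplingDLSZ} --- in other words, making sure the conditioning and the identification of $\mathbf{w}$ with a slice point are handled correctly so that \Cref{obs:Cb-is-random} and \Cref{obs:C-is-random} apply verbatim.
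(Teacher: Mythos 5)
Your proposal is correct and follows essentially the same route as the paper: condition on $\mathsf{C}'$, form the difference $R = R_{i_1}-R_{i_2}$ (non-vanishing on $\{0,1\}^{2k}_k$ since $R(\mathbf{w})\neq 0$ with $|\mathbf{w}|=k$), invoke \Cref{obs:C-is-random} and \Cref{cor:samplingDLSZ} to get the $\bigO_d(1/k^\eta)$ bound, union bound over $L(\varepsilon/2)^2$ pairs, and take $k$ large. The only nitpick is that ``$R|_{\mathsf{C}}\equiv 0$'' need not be \emph{equivalent} to vanishing on the middle slice of $\{0,1\}^k$, but since the former event is contained in the latter, the inequality you need goes in the right direction and the argument stands.
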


Let us proceed with the proof of \Cref{lemma:list-correction-error} assuming the above two claims. We first show that if $\mathcal{E}_{P}$ occurs, then at least one of the auxiliary events occurs. This implies that it is sufficient to upper bound the probability of the auxiliary events occurring and use a union bound.

For $\mathbf{a}, h$ such that the event $\mathcal{E}_{1,P}$ does not occur, we can fix a $j^* \in [L']$ such that $P|_{\mathsf{C}} = Q_{j^*}.$  Thus, we have
\begin{equation}
\label{eq:E1P}
\E_{\mathbf{a},h,\sigma}[\min_{j} \, \Pr_{\mathbf{b}}[\Psi[\mathsf{C},\sigma,Q_j](\mathbf{b}) \, \neq \, P(\mathbf{b})]] \; \leq \; \Pr_{\mathbf{a},h}[\mathcal{E}_{1,P}] \, + \, \E_{\mathbf{a},h,\sigma}[\mathbf{1}_{\neg \mathcal{E}_{1,P}}\cdot \Pr_{\mathbf{b}}[\Psi[\mathsf{C},\sigma,Q_{j^*}](\mathbf{b}) \, \neq \, P(\mathbf{b}) ]]
\end{equation}
Fix any $\mathbf{a}, h$ such that the event $\mathcal{E}_{1,P}$ does not occur. Further, if the event $\mathcal{E}_{2,P}$ does not occur, then there is an $i^*\leq L''$ such that $P|_{\mathsf{C}'} = R_{i^*}.$ In particular, $R_{i^*}|_\mathsf{C} = P|_{\mathsf{C}} = Q_{j^*}.$ 

Finally, if event $\mathcal{E}_{3,P}$ also does not occur, then there is no $i\neq i^*$ such that $R_{i^*}(\mathbf{w}) \neq R_i(\mathbf{w})$ but $R_{i}|_{\mathsf{C}} = R_{i^*}|_\mathsf{C}.$ In particular, the only possible output of the algorithm $\Psi[\mathsf{C},\sigma,Q_{j^*}]$ on input $\mathbf{w}$ is $R_{i^*}(\mathbf{w}) = P(x(\mathbf{w})) = P(\mathbf{b}).$

We have thus shown that 
\[
\E_{\mathbf{a},h,\sigma}[\mathbf{1}_{\neg \mathcal{E}_{1,P}}\cdot \Pr_{\mathbf{b}}[\Psi[\mathsf{C},\sigma,Q_{j^*}](\mathbf{b})\neq P(\mathbf{b}) ]] \leq 
\Pr_{\mathbf{a},h,\sigma,\mathbf{b}}[\mathcal{E}_{2,P}\vee \mathcal{E}_{3,P}] \leq \Pr_{\mathbf{a},h,\sigma,\mathbf{b}}[\mathcal{E}_{2,P}] + \Pr_{\mathbf{a},h,\sigma,\mathbf{b}}[\mathcal{E}_{3,P}].
\]
Plugging the above into \Cref{eq:E1P}, we get
\begin{equation}
    \label{eq:E1P2P3P}
    \E_{\mathbf{a},h,\sigma}[\min_{j}\Pr_{\mathbf{b}}[\Psi[\mathsf{C},\sigma,Q_j](\mathbf{b})\neq P(\mathbf{b})] \; \leq \; \Pr_{\mathbf{a},h}[\mathcal{E}_{1,P}] \, + \, \Pr_{\mathbf{a},h,\sigma,\mathbf{b}}[\mathcal{E}_{2,P}] \, + \, \Pr_{\mathbf{a},h,\sigma,\mathbf{b}}[\mathcal{E}_{3,P}].
\end{equation}
Using \Cref{claim:l-restriction-far} and \Cref{claim:E3P}, we get,
\begin{align*}
    \E_{\mathbf{a},h,\sigma}[\min_{j}\Pr_{\mathbf{b}}[\Psi[\mathsf{C},\sigma,Q_j](\mathbf{b})\neq P(\mathbf{b})] \; \leq \; \dfrac{3}{10000\cdot 2^{d+1}} \, \leq \, \dfrac{1}{100 \cdot 2^{d+1}}
\end{align*}
So now it remains to prove \Cref{claim:l-restriction-far} and \Cref{claim:E3P}. We start with \Cref{claim:l-restriction-far} which essentially follows from \Cref{lemma:sampling-subcube}.
\begin{adjustwidth}{1.5cm}{0cm}
\begin{proof}[Proof of \Cref{claim:l-restriction-far}]

Recall that $\delta(P,f) \leq (1/2^{d} - \varepsilon).$ Equivalently, the set of points $T$ where $f$ and $P$ differ has density at most $(1/2^{d} -\varepsilon)$ in $\{0,1\}^n.$ For a cube $\mathsf{C}$, the non-existence of $Q_{j}$ such that $Q_{j} = P|_{\mathsf{C}}$ is equivalent to $\delta(P|_{\mathsf{C}}, f|_{\mathsf{C}}) > (1/2^{d} - \varepsilon/2)$.

For a random $\mathbf{a}$ and a random $h$, the subcube $\mathsf{C}$ is a random subcube. Using \Cref{lemma:sampling-subcube} for the subset $T$ as mentioned above, we get that for $k\geq 1/\varepsilon^5$,
\begin{align*}
    \Pr_{\mathsf{C}}\brac{\delta(P|_{\mathsf{C}}, f|_{\mathsf{C}}) > \frac{1}{2^{d}}-\frac{\varepsilon}{2}} \leq \dfrac{1}{10000\cdot 2^{d+1}}.
\end{align*}
Here, we are assuming that $B_d$ and $c$ are large enough so that $k$ as defined in \Cref{algo:list-decoding} satisfies the hypothesis of \Cref{lemma:sampling-subcube}. Hence $\Pr[\mathcal{E}_{1,P}] \leq 1/(10000\cdot 2^{d+1})$.

Using \Cref{obs:Cb-is-random}, we know that for a random $\mathbf{a}, h$ and a random permutation $\sigma$, the subcube $\mathsf{C'}$ is a random subcube of dimension $2k$ in $\{0,1\}^n$ as required in \Cref{lemma:sampling-subcube}. Proceeding as above, we get the stated upper bound on $\Pr[\mathcal{E}_{2,P}]$. 
\end{proof}
\end{adjustwidth}

Now it remains to prove \Cref{claim:E3P}, which we prove next.

\begin{adjustwidth}{1.5cm}{0cm}
\begin{proof}[Proof of \Cref{claim:E3P}]
We start by conditioning the choice of the subcube $\mathsf{C}'.$ This fixes the polynomials $R_{1},\ldots,R_{L''}$ obtained in Line 4 of \Cref{algo:high-agreement}. Fix any two distinct polynomials $R_{i_{1}}, R_{i_{2}}: \Boo^{2k} \to G$ in this list that differ at \emph{at least one point} in $\{0,1\}^{2k}$ of Hamming weight $k$ (in particular, this includes pairs of polynomials that differ at the point $\mathbf{w}$ such that $x(\mathbf{w}) = \mathbf{b}$).

We want to upper bound the probability of the event that $R_{i_{1}}|_{\mathsf{C}} = R_{i_{2}}|_{\mathsf{C}}$. In the end, we use a union bound over the number of all possible pairs $(R_{i_{1}}, R_{i_{2}})$.

Define the polynomial $R := R_{i_{1}} - R_{i_{2}}$, where $R: \Boo^{2k} \to G$ is a non-zero degree 
$d$ polynomial, defined on the subcube $\mathsf{C}'$. We want to upper bound the probability that $R|_{\mathsf{C}}$ is identically zero polynomial. Using \Cref{obs:C-is-random} and \Cref{cor:samplingDLSZ}, we see that the probability of this is at most $\bigO_d(1/k^{\eta})$ for some absolute constant $\eta > 0.$ For $k$ as defined in \Cref{algo:list-decoding} where $B_d$ is large enough (depending on $d$) and $c$ is a large enough absolute constant, we get a probability of at most $1/(10000\cdot 2^{d+1}\cdot L(\varepsilon/2)^2).$ 

Since $L''\leq L(\varepsilon/2),$ a union bound over all such pairs $R_{i_1}, R_{i_2}$ yields the bound stated in the claim.
\end{proof}
\end{adjustwidth}

As discussed above, we have proved \Cref{claim:l-restriction-far} and \Cref{claim:E3P} and substituting them in \Cref{eq:E1P2P3P}, we get the desired bound, and this concludes the correctness of the local list correction algorithm.
\end{proof}

\subsection{Local list corrector}

Let us now see how \Cref{thm:approx-oracles-list-decoding} implies \Cref{thm:listdecoding}. Let us first recall \Cref{thm:listdecoding}.

\listdecoding*

\begin{proof}[Proof of \Cref{thm:listdecoding}]
We first run the algorithm given by \Cref{thm:approx-oracles-list-decoding} and it outputs $\psi_{1},\ldots, \psi_{L'}$ for $L'\leq \exp(\bigO_d(1/\varepsilon)^{\bigO(d)})$ (by \Cref{thm:comblistdegd}). Next we run our local correction algorithm for $\mathcal{P}_{d}$ (see \Cref{thm:uniquedegd} and \Cref{sec:deg-1-decoding}) with $\psi_{1}, \ldots, \psi_{L'}$ as oracles, and these algorithms will be $\phi_{1}, \ldots, \phi_{L'}$. This completes the description of the local list correction algorithm $\mathcal{A}^{f}$ for $\mathcal{P}_{d}$, and the bound on correctness probability follows from the correctness probability of \Cref{thm:uniquedegd} and \Cref{thm:approx-oracles-list-decoding}.

The algorithm $\mathcal{A}_{1}$ makes $\bigO_{\varepsilon}(1)$ queries to $f$ as stated in \Cref{thm:approx-oracles-list-decoding}, and then each $\phi_{i}$ makes $\bigO_{\varepsilon}(1) \cdot \Tilde{\bigO}(\log n) = \Tilde{\bigO}_\varepsilon(\log n)$ queries to $f$.
\end{proof}

\bibliographystyle{alpha}
\bibliography{references}

%\appendix

% \medskip

% \printbibliography[
% heading=bibintoc,
% title={References}
% ] %Prints the entire bibliography with the title "References"

\addtocontents{toc}{\protect\setcounter{tocdepth}{1}}
	
\end{document}